\newif\ifllncs  
\newif\iffull   
\def\shownotes{1}
    \newcommand{\authnote}[2]{\textcolor{red}{\textsf{#1 }\textcolor{blue}{ #2}}}
    \newcommand{\authnote}[2]{}
\newcommand{\yilei}[1]{{\authnote{Yilei}{#1}}}
\newtheorem{theorem}{Theorem}
\newtheorem{lemma}[theorem]{Lemma}
\newtheorem{definition}[theorem]{Definition}
\newtheorem{remark}[theorem]{Remark}
\newtheorem{corollary}[theorem]{Corollary}
\newcommand{\bra}[1]{\langle#1|}
\newcommand{\ket}[1]{|#1\rangle}
\newcommand{\bk}[2]{\langle#1|#2\rangle}
\newcommand{\kb}[2]{ \ket{#1}\bra{#2} }
\newcommand{\td}{\delta}
\newcommand{\reg}[1]{\mathsf{#1}  }
\newcommand{\poly}{{\sf poly}}
\newcommand{\negl}{{\sf negl}}
\newcommand{\GAP}{\mathsf{Gap}}
\newcommand{\DGS}{\mathsf{DGS}}
\newcommand{\QDGS}{\mathsf{\ket{DGS}}}
\newcommand{\SVP}{\mathsf{SVP}}
\newcommand{\SIVP}{\mathsf{SIVP}}
\newcommand{\CVP}{\mathsf{CVP}}
\newcommand{\EDCP}{\mathsf{EDCP}}
\newcommand{\LWE}{\mathsf{LWE}}
\newcommand{\QLWE}{\mathsf{S\ket{LWE}}}
\newcommand{\LWEstate}{\mathsf{C\ket{LWE}}}
\newcommand{\la}{\leftarrow}
\newcommand{\dist}{ \mathrm{dist} }
\newcommand{\sgn}{ \mathrm{sgn} }
\newcommand{\Lattice}{\mathcal{L}}
\newcommand{\rd}[1]{\left \lfloor #1 \right \rceil}
\newcommand{\upperrounding}[1]{\left \lceil #1 \right \rceil}
\newcommand{\lowerrounding}[1]{\left \lfloor #1 \right \rfloor}
\newcommand{\ipd}[2]{\left \langle {#1}, {#2} \right \rangle}
\newcommand{\mat}[1] { \mathbf{#1} }		
\newcommand{\ary}[1] { \mathbf{#1} }		
\newcommand{\C}{\mathbb{C}}
\newcommand{\N}{\mathbb{N}}
\newcommand{\R}{\mathbb{R}}
\newcommand{\set}[1]{ \left\{ #1 \right\}  }   
\newcommand{\abs}[1]{ \left| #1 \right|  }   
\newcommand{\tr}{\operatorname{tr}}
\newcommand{\Z}{\mathbb{Z}}
\newcommand{\QFT}{\mathsf{QFT}}
\newcommand{\DFT}{\mathsf{DFT}}
\newcommand{\cB}{\mathcal{B}}
\begin{document}

\title{LWE with Quantum Amplitudes: Algorithm, Hardness, and Oblivious Sampling}

\ifllncs
\titlerunning{LWE with Quantum Amplitudes}

\fi

\iffull

\author{
Yilei Chen\thanks{IIIS, Tsinghua University; Shanghai Artificial Intelligence Laboratory; and Shanghai Qi Zhi Institute. \texttt{chenyilei@mail.tsinghua.edu.cn}. \texttt{chenyilei.ra@gmail.com}. Supported by Tsinghua University startup funding. }
\and Zihan Hu\thanks{EPFL \texttt{zihan.hu@epfl.ch}. The work was initiated while Zihan Hu was in IIIS, Tsinghua University.}
\and Qipeng Liu\thanks{UC San Diego \texttt{qipengliu0@gmail.com}. The work was initiated while Qipeng Liu was a postdoc researcher in Simons Insititute, supported in part by the Simons Institute for the Theory of Computing, through a Quantum Postdoctoral Fellowship and by the DARPA SIEVE-VESPA grant No.HR00112020023. }
\and Han Luo\thanks{IIIS, Tsinghua University \texttt{luohan23@mails.tsinghua.edu.cn}.}
\and Yaxin Tu\thanks{Department of Computer Science, Princeton University \texttt{yaxin.tu@princeton.edu}. The work was initiated while Yaxin Tu was in IIIS, Tsinghua University. }
}

\date{\today}

\else

\author{Anonymous submission to Eurocrypt 2025
}
\fi

\maketitle

\begin{abstract}
The learning with errors problem (LWE) is one of the most important building blocks for post-quantum cryptography. To better understand the quantum hardness of LWE, it is crucial to explore quantum variants of LWE. To this end, Chen, Liu, and Zhandry [Eurocrypt 2022] defined $\QLWE$ and $\LWEstate$ problems by encoding the error of LWE samples into quantum amplitudes, and showed efficient quantum algorithms for a few interesting amplitudes. However, algorithms or hardness results of the most interesting amplitude, Gaussian, were not addressed before.

In this paper, we show new algorithms, hardness results and applications for $\QLWE$ and $\LWEstate$ with real Gaussian, Gaussian with linear or quadratic phase terms, and other related amplitudes. 
Let $n$ be the dimension of LWE samples. Our main results are
\begin{enumerate}
\item There is a $2^{\widetilde{O}(\sqrt{n})}$-time algorithm for $\QLWE$ with Gaussian amplitude with \emph{known} phase, given $2^{\widetilde{O}(\sqrt{n})}$ many quantum samples. The algorithm is modified from Kuperberg's sieve, and in fact works for more general amplitudes as long as the amplitudes and phases are completely \emph{known}.
\item There is a polynomial time quantum algorithm for solving $\QLWE$ and $\LWEstate$ for Gaussian with quadratic phase amplitudes, where the sample complexity is as small as $\tilde{O}(n)$. As an application, we give a quantum oblivious LWE sampler where the core quantum sampler requires only quasi-linear sample complexity. This improves upon the previous oblivious LWE sampler given by  Debris-Alazard, Fallahpour, Stehl\'{e} [STOC 2024], whose core quantum sampler requires $\tilde{O}(nr)$ sample complexity, where $r$ is the standard deviation of the error.
\item There exist polynomial time quantum reductions from standard LWE or worst-case GapSVP to $\QLWE$ with Gaussian amplitude with small \emph{unknown} phase, and arbitrarily many samples. Compared to the first two items, the appearance of the unknown phase term places a barrier in designing efficient quantum algorithm for solving standard LWE via $\QLWE$. 
\end{enumerate} 
\end{abstract}

\iffull
\newpage
\tableofcontents
\newpage
\fi

\iffull
\setcounter{page}{1}
\else
\fi

\section{Introduction}\label{sec:intro}

The learning with errors problem asks to learn a secret vector given many noisy linear samples.
\begin{definition}[Learning with errors (LWE)~\cite{DBLP:journals/jacm/Regev09}]\label{def:LWE}
Let $n$, $m$, $q$ be positive integers. 
Let $\ary{s} \in \Z_q^n$ be a secret vector. 
The search LWE problem $\LWE_{n,m,q,\alpha}$ asks to find the secret $\ary{s}$ given access to an oracle that outputs $\ary{a}_i$, $\ipd{\ary{s}}{\ary{a}_i} +e_i \pmod q$ on its $i^{th}$ query, for $i = 1, \cdots, m$. Here each $\ary{a}_i$ is a uniformly random vector in $\mathbb{Z}_q^n$, and each error term $e_i$ is sampled from the Gaussian distribution over $\Z$ with standard deviation $\alpha q/\sqrt{2\pi}$. 
\end{definition}

The LWE problem is extremely versatile, leading to advanced encryption schemes such as fully homomorphic encryptions~\cite{DBLP:conf/stoc/Gentry09,DBLP:conf/focs/BrakerskiV11,DBLP:conf/focs/Mahadev18a}. 
It is also shown by Regev to be quantumly as hard as the approximate short vector problems for all lattices~\cite{DBLP:journals/jacm/Regev09}. 
LWE and lattice problems in general (e.g.~\cite{DBLP:conf/ants/HoffsteinPS98,DBLP:journals/jacm/Regev09}) are also popular candidates for the NIST post-quantum cryptography standardization, due to their conjectured hardness against quantum computers. 
In fact, the fastest quantum and classical algorithms for LWE all run in $2^{\Omega(n)}$ time. 
However, the conjectured quantum hardness of lattice problem is still lacking solid evidences. Finding quantum algorithms for lattice problems has therefore been a major open problem in the area of quantum computation and cryptography in the past decade.

One way of exploring the quantum power for solving LWE is to consider the quantum variants of LWE, by encoding quantum states into the LWE problem (henceforth, we refer to the original LWE problem as ``classical LWE'' or ``standard LWE'' to distinct them from the quantum variant mentioned below). 
To this end, Chen, Liu, and Zhandry~\cite{DBLP:conf/eurocrypt/ChenLZ22} define the following quantum variants of LWE:

\begin{definition}[Solve $\ket{\mathsf{LWE}}$, $\QLWE$]\label{def:QLWEproblem}
    Let $n$, $m$, $q$ be positive integers. Let $f$ be a function from $\Z_q$ to $\C$.
    Let $\ary{s} \in \mathbb{Z}_q^n$ be a secret vector. 
    The problem $\QLWE_{n,m,q,f}$ asks to find $\ary{s}$ given access to an oracle that outputs independent samples $\ary{a}_i$, $\sum_{e_i\in\Z_q} f(e_i) \ket{ \ipd{\ary{s}}{\ary{a}_i} + e_i \bmod q}$ on its $i^{th}$ query, for $i = 1, \cdots, m$. Here each $\ary{a}_i$ is a uniformly random vector in $\mathbb{Z}_q^n$. 
\end{definition}

\begin{definition}[Construct $\ket{\mathsf{LWE}}$ states, $\LWEstate$]\label{def:LWEstateproblem}
Let $n$, $m$, $q$ be positive integers. Let $f$ be a function from $\Z_q$ to $\C$. 
The problem of constructing LWE states $\LWEstate_{n,m,q,f}$ asks to construct a  quantum state of the form 
$$\sum_{\ary{s}\in\Z_q^n} \bigotimes_{i = 1}^m\left( \sum_{e_i\in\Z_q} f(e_i) \ket{ \ipd{\ary{s}}{\ary{a}_i} + e_i \bmod q}\right),$$ given the input $\set{ \ary{a}_i }_{i = 1, ..., m}$ where each $\ary{a}_i$ is a uniformly random vector in $\mathbb{Z}_q^n$.
\end{definition}

Note that if there is a quantum algorithm that solves $\QLWE_{n,m,q,f}$ without collapsing the input state, then there is a quantum algorithm that solves $\LWEstate_{n,m,q,f}$ for the same $n, m, q, f$. So in the introduction we will say more about $\QLWE$ since it is more fundamental than $\LWEstate$. We will mention $\LWEstate$ when $\LWEstate$ is necessary for the application. 

Chen, Liu, and Zhandry~\cite{DBLP:conf/eurocrypt/ChenLZ22} then show a quantum filtering technique to solve $\QLWE$ when the DFT of the error amplitude is non-negligible everywhere over $\Z_q$.
Two interesting error amplitudes covered by their result are the bounded uniform amplitude and Laplacian amplitude. In particular, the classical LWE problem with bounded uniform error distribution is proven to be as hard as worst-case lattice problems~\cite{DBLP:conf/eurocrypt/DottlingM13,DBLP:conf/crypto/MicciancioP13}. Thus, their work gives a strong indication that $\QLWE$ is easier than classical LWE for certain error distributions.

However, the most interesting amplitude, Gaussian, was not addressed in~\cite{DBLP:conf/eurocrypt/ChenLZ22}.
For classical LWE, Gaussian distribution is the default error distribution since Regev~\cite{DBLP:journals/jacm/Regev09} shows a quantum reduction from worst-case lattice problems to LWE with Gaussian error, given \emph{arbitrarily} many LWE samples. 
Interestingly, it was implicitly shown in~\cite{DBLP:conf/asiacrypt/StehleSTX09} and~\cite{DBLP:conf/pkc/BrakerskiKSW18} that $\QLWE$ with Gaussian amplitude is as hard as standard LWE when the number of samples is \emph{very small}. But if we are given \emph{arbitrarily} many samples, is the $\QLWE$ problem with Gaussian amplitude still hard?

Not only did understanding $\QLWE$ with Gaussian amplitude shed light on our knowledge of the standard LWE, but it was also considered a bedrock on which interesting quantum protocols can be based, especially unclonable cryptography. The idea was first shown by Zhandry~\cite{DBLP:conf/eurocrypt/Zhandry19a} for a potential approach to construct a very powerful cryptographic primitive called quantum lightning, while its concrete and secure instantiation still remains unknown. After that, Khesin, Lu, and Shor proposed another lightning construction~\cite{khesin2022publicly} based on $\QLWE$ but later was broken by Liu, Montgomery, and Zhandry~\cite{liu2023another}. Poremba~\cite{DBLP:conf/innovations/Poremba23}, and Ananth, Poremba, and Vaikuntanathan~\cite{ananth2023revocable} build certifiable deletion based on $\QLWE$, basing on certain conjectured security. If $\QLWE$ with Gaussian amplitude was not secure, then all schemes mentioned before may not even have semantic security, let alone its unclonability. 

In addition, recently, Debris{-}Alazard, Fallahpour, and Stehl{\'e}~\cite{debris2024quantum} showed that solving the $\LWEstate$ problem for proper amplitudes implies a quantum \emph{oblivious LWE sampler}. Roughly speaking, an oblivious LWE sampler is able to take as input the public matrix $\mat{A}$, output an LWE sample $\mat A^T\ary{s}+\ary{e} \bmod q$ without knowing the secret $\ary{s}$. If an oblivious LWE sampler exists, it refutes the typical knowledge assumption for LWE used in several lattice-based SNARK constructions (e.g.~\cite{gennaro2018lattice}, see~\cite{debris2024quantum} for more discussions) which assume that anyone who generates an LWE sample must know the LWE secret. While we don't know of any classical oblivious LWE sampler, Debris{-}Alazard et al. showed that solving $\LWEstate$ is perfect for the task of oblivious LWE sampling, since if we are able to generate a state like $\sum_{\ary{s}\in\Z_q^n} \bigotimes_{i = 1}^m\left( \sum_{e_i\in\Z_q} f(e_i) \ket{ \ipd{\ary{s}}{\ary{a}_i} + e_i \bmod q}\right)$, and then measure it, we will get an LWE sample $\mat{A}^T\ary{s}+\ary{e}\bmod q$ with a random and unknown secret $\ary{s}$. Debris{-}Alazard et al. then design an amplitude $f$ whose norm is Gaussian, and whose phase is cleverly chosen, so that the algorithm of quantum unambiguous measurement~\cite{chefles1998optimum} is able to solve $\LWEstate$ for such an amplitude with relatively few samples.   

In sum, the motivations of this work are addressing the following questions:
\begin{center}
{\em  What amplitudes are useful for $\QLWE$ or $\LWEstate$ towards solving standard LWE or improving quantum applications like oblivious LWE sampling? }
\end{center}
\begin{center}
{\em  Can we show algorithms or hardness for solving $\QLWE$ or $\LWEstate$ for Gaussian and other related amplitudes? }
\end{center}

\subsection{Main results}

In this paper, we show new quantum algorithms, hardness results, and applications for $\QLWE$ and $\LWEstate$ with Gaussian and other related amplitudes. A summary of the results is given in \Cref{table:summary}. 
\iffull
\begin{table}
\footnotesize
    \centering
    \begin{tabular}{cccc}
        \hline
        {\small\bf Error Amplitude} &
        {\small\bf \# Samples} &{\small\bf Algorithm or Hardness} & {\small\bf Reference} \\
        \hline
        {Gaussian} &
        { $\tilde{O}(n)$ } &
        {As hard as LWE or approx-GapSVP} & 
        {\cite{DBLP:conf/asiacrypt/StehleSTX09, DBLP:conf/pkc/BrakerskiKSW18}} \\
        {Gaussian} &
        { $2^{\tilde{O}(\sqrt{n})}$ } &
        { $2^{\tilde{O}(\sqrt{n})}$-time quantum algorithm } & 
        { This work \S\ref{sec:kuper} } \\
        {Complex Gaussian} &
        { $\tilde{O}(n)$ } &
        { $\poly(n)$-time quantum algorithm } & 
        { This work \S\ref{sec:osampling} } \\
        {Gaussian with unknown phase} &
        { Arbitrary } &
        { As hard as LWE or approx-GapSVP } & 
        { This work \S\ref{sec:EDCP},\ref{sec:regev} } \\
        {Known with non-negl DFT} &
        { $\poly(n)$ } &
        { $\poly(n)$-time quantum algorithm } & 
        { \cite{DBLP:conf/eurocrypt/ChenLZ22, debris2024quantum} } \\
        \hline
    \end{tabular}
    \caption{Algorithm and hardness of $\QLWE$ for various error amplitudes}\label{table:summary}
\normalsize
\end{table}
\fi
\ifllncs
\begin{table}[t]
    \scriptsize
    \centering
    \begin{tabular}{cccc}
        \hline
        {\bf Error Amplitude} &
        {\bf \# Samples} &{\bf Algorithm or Hardness} & {\bf Reference} \\
        \hline
        {Gaussian} &
        { $\tilde{O}(n)$ } &
        {As hard as LWE or approx-GapSVP} & 
        {\cite{DBLP:conf/asiacrypt/StehleSTX09, DBLP:conf/pkc/BrakerskiKSW18}} \\
        {Gaussian} &
        { $2^{\tilde{O}(\sqrt{n})}$ } &
        { $2^{\tilde{O}(\sqrt{n})}$-time quantum algorithm } & 
        { This work \S\ref{sec:kuper} } \\
        {Complex Gaussian} &
        { $O(n)$ } &
        { $\poly(n)$-time quantum algorithm } & 
        { This work \S\ref{sec:osampling} } \\
        {Gaussian with unknown phase} &
        { Arbitrary } &
        { As hard as LWE or approx-GapSVP } & 
        { This work \S\ref{sec:EDCP},\ref{sec:regev} } \\
        {Known with non-negl DFT} &
        { $\poly(n)$ } &
        { $\poly(n)$-time quantum algorithm } & 
        { \cite{DBLP:conf/eurocrypt/ChenLZ22, debris2024quantum} } \\
        \hline
    \end{tabular}
    \caption{Algorithm or hardness of $\QLWE$ for various error amplitudes}\label{table:summary}
\normalsize
\end{table}
\fi

Before introducing our new algorithms for solving $\QLWE$, let us take a step back and review the existing methods of solving $\QLWE$ for a general amplitude $f$. Fix some amplitude $f:\Z_q \to \C$ and suppose for now that $f$ is centered at $0$. Then each sample of $\QLWE_{n, m, q, f}$ is given by 
\begin{equation}\label{eqn:defphifc}
\ary{a}\in\Z_q^n, ~~~ \ket{ \phi_{f, \ipd{\ary{a}}{\ary{s}}} } := \sum_{x\in \Z_q} f(x) \ket{ x+\ipd{\ary{a}}{\ary{s}} \bmod q }. 
\end{equation}

The state $\ket{ \phi_{f, \ipd{\ary{a}}{\ary{s}}} }$ can be seen as a shift of $\ket{ \phi_{f, 0} }$ where the center of the state is shifted from $0$ to $\ipd{\ary{a}}{\ary{s}}$. A general approach of solving $\QLWE$ is then: try to design a quantum algorithm that predicts the center $\ipd{\ary{a}}{\ary{s}}$ of each state $\ket{ \phi_{f, \ipd{\ary{a}}{\ary{s}}} }$ correctly, and if we predict the centers of $n$ states correctly, use Gaussian elimination to find out $\ary{s}$. 

However, a difficulty faced by the general approach is: for a ``typical'' error amplitude, like Gaussian of some width $r\in (\sqrt{n}, q/\sqrt{n})$, i.e., for $\rho_r(x) := \exp\left( -\pi \frac{x^2}{r^2} \right)$, the overlap between $\ket{ \phi_{\rho_r, 0} }$ and $\ket{ \phi_{\rho_r, 1} }$ is quite large, which means it is inherently hard to distinguish states with adjacent shifts with high confidence. The quantum algorithms used in \cite{DBLP:conf/eurocrypt/ChenLZ22, debris2024quantum} have to use sophisticated measurements depending on $f$ to guarantee that once the center $c$ of $\ket{ \phi_{f, c} }$ is predicted correctly for some samples, we know it is correct. But the probability of predicting correctly in the algorithms used in \cite{DBLP:conf/eurocrypt/ChenLZ22, debris2024quantum} is proportional to the square of the minimum of $|\DFT_q(f)|$ and is in general very low (the success probability of \cite{debris2024quantum} is better than \cite{DBLP:conf/eurocrypt/ChenLZ22} in its dependency on $q$). Although the success probability is non-negligible for certain $f$, it is exponentially small for the typical Gaussian amplitude. For example, in \cite{debris2024quantum}, the required sample complexity is $\frac{n\cdot \omega(log n)}{q\cdot \min_{x\in\Z_q}|\DFT_q(f)(x) |^2}$. By taking 
$\min_{x\in\Z_q} |\DFT_q(\rho_{r})(x)|^2 \approx  \frac{1}{\frac{q}{\sqrt{2}r} } \cdot \rho_{\frac{q}{\sqrt{2}r}}\left(\frac{q}{2}\right) \approx \frac{\sqrt{2}r}{q }\cdot e^{-\pi \frac{r^2}{2}}$, 
the sample complexity for solving $\QLWE$ for Gaussian amplitude $\rho_r$ is greater than $O(n/r)\cdot e^{\pi \frac{r^2}{2}}$, which is exponential in $n$ when $r>\sqrt{n}$ (when $r$ is in $O(n^{0.5 - c})$ for any $c>0$, the classical Arora-Ge algorithm~\cite{DBLP:conf/icalp/AroraG11} has solved classical LWE with error distribution $\rho_r$ in subexponential time and sample complexity already).
Therefore \cite{DBLP:conf/eurocrypt/ChenLZ22, debris2024quantum} are not able to provide subexponential time algorithms for solving $\QLWE$ for the Gaussian amplitude $\rho_r$ when $r>\sqrt{n}$.

\paragraph{Subexponential time algorithm for general amplitudes. }

Our first result is a sub-exponential time quantum algorithm for solving $\QLWE$ with Gaussian amplitude, given sub-exponentially many $\QLWE$ samples. In fact, it works for a more general amplitude $f$ as long as two points in the discrete Fourier transform of $f$ are more than subexponentially small (the DFT of Gaussian certainly has two non-negligible points). The algorithm combines Kuperberg's sieve \cite{DBLP:journals/siamcomp/Kuperberg05} and quantum rejection sampling~\cite{ozols2013quantum}.

\begin{theorem}[\Cref{thm:subexp_alg_gen}, informal]\label{thm:subexp_alg_intro}
    Let $f:\Z\to \C$ be a known, normalized error amplitude function for $\QLWE$ such that for the $\DFT_q$ of $f$, denoted by $g$, there exists two distinct values $j_1, j_2\in \Z_q$ such that $\gcd(j_1 - j_2, q) = 1$ and   $|g(j_1)|, |g(j_2)|\geq 2^{-\sqrt{n}\log q}$, and $g(j_1), g(j_2)$ are computable in time $2^{\Theta(\sqrt{n}\log q)}$.
    Then there exists a quantum algorithm that, given $m = 2^{\Theta(\sqrt{n}\log q)}$ samples of $\QLWE$ with amplitude $f$, finds the secret within a time complexity of $2^{\Theta(\sqrt{n}\log q)}$.
\end{theorem}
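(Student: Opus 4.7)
}

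My plan is to reduce each $\QLWE$ sample to a ``phase qubit'' over $\Z_q$ whose hidden shift is (an invertible multiple of) $\ipd{\ary{a}}{\ary{s}}$, and then feed the resulting phase qubits into Kuperberg's sieve for abelian hidden shift over $\Z_q^n$. The overall complexity is governed by (i) the inverse squared loss in the rejection step, which is at most $2^{O(\sqrt{n}\log q)}$ by the assumption on $|g(j_1)|,|g(j_2)|$, and (ii) the $2^{O(\sqrt{n\log q})}$ cost of Kuperberg-type sieving in dimension $n$ over $\Z_q$.

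\textbf{Step 1: turn each sample into a Fourier-side phase state.} Given a sample $\ket{\phi_{f,\ipd{\ary{a}}{\ary{s}}}}=\sum_{x\in\Z_q}f(x)\ket{x+\ipd{\ary{a}}{\ary{s}}\bmod q}$, I apply $\QFT_q$ to obtain $\sum_{y\in\Z_q} g(y)\,\omega_q^{y\ipd{\ary{a}}{\ary{s}}}\ket{y}$, where $g=\DFT_q(f)$. All the information about $\ary{s}$ now sits in the phase.

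\textbf{Step 2: coherent rejection sampling onto $\{j_1,j_2\}$.} Since $g(j_1),g(j_2)$ are computable and have magnitude at least $\mu:=2^{-\sqrt{n}\log q}$, I apply quantum rejection sampling (\`a la Ozols--Roetteler--Roland) to flatten the amplitudes on $j_1,j_2$ and to zero out all other frequencies. With probability $\Theta(\mu^2)=2^{-O(\sqrt{n}\log q)}$, the flag qubit accepts and I am left with
\[
\frac{1}{\sqrt{2}}\bigl(e^{i\theta_1}\omega_q^{j_1\ipd{\ary{a}}{\ary{s}}}\ket{j_1}+e^{i\theta_2}\omega_q^{j_2\ipd{\ary{a}}{\ary{s}}}\ket{j_2}\bigr),
\]
where $\theta_1,\theta_2$ are known phases of $g(j_1),g(j_2)$ and can be corrected. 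Relabeling $\ket{j_1}\mapsto\ket{0}$, $\ket{j_2}\mapsto\ket{1}$ and pulling out an overall phase yields $\frac{1}{\sqrt{2}}(\ket{0}+\omega_q^{(j_2-j_1)\ipd{\ary{a}}{\ary{s}}}\ket{1})$. Because $\gcd(j_2-j_1,q)=1$, the map $\ary{a}\mapsto (j_2-j_1)\ary{a}\bmod q$ is a bijection on $\Z_q^n$, so the resulting state is statistically the same as the canonical Kuperberg phase state $\frac{1}{\sqrt{2}}(\ket{0}+\omega_q^{\ipd{\ary{a}'}{\ary{s}}}\ket{1})$ for a fresh uniform $\ary{a}'\in\Z_q^n$.

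\textbf{Step 3: Kuperberg's sieve over $\Z_q^n$.} Given $2^{O(\sqrt{n\log q})}$ such phase qubits paired with their labels $\ary{a}'$, I invoke the Kuperberg sieve for the abelian hidden shift problem in $\Z_q^n$: iteratively combine pairs of phase qubits (controlled additions and partial measurements) to produce new phase qubits whose $\ary{a}'$-labels lie in smaller and smaller sub-moduli, until I can read off each coordinate of $\ary{s}$ directly. Standard analysis of Kuperberg's sieve in this group gives running time $2^{O(\sqrt{n\log q})}$ and needs $2^{O(\sqrt{n\log q})}$ input phase qubits. Multiplying by the $\mu^{-2}=2^{O(\sqrt{n}\log q)}$ overhead of Step 2 gives the stated sample and time bounds.

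\textbf{Main obstacle.} The delicate point is not any single step but matching the parameters. The rejection-sampling overhead and the Kuperberg sample complexity must both be absorbed into a common $2^{\tilde O(\sqrt n\log q)}$ budget, which is exactly why the hypothesis demands $|g(j_1)|,|g(j_2)|\ge 2^{-\sqrt{n}\log q}$ rather than something smaller; any weaker lower bound on these two Fourier coefficients would break the balance. A secondary subtlety is verifying that after the coprimality relabeling the distribution over $\ary{a}'$ is genuinely uniform and independent across samples, so that Kuperberg's sieve can be applied as a black box; this is immediate once one observes that the $\QLWE$ samples are i.i.d.\ and multiplication by $(j_2-j_1)^{-1}\bmod q$ is a bijection of $\Z_q^n$.
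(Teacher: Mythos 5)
Your proposal is correct and follows essentially the same approach as the paper: apply $\QFT_q$ to move into the Fourier domain, quantum-reject onto the two frequencies $j_1,j_2$ (paying a $2^{O(\sqrt{n}\log q)}$ overhead controlled by the lower bound on $|g(j_1)|,|g(j_2)|$), correct the known phases, relabel using the coprimality of $j_2-j_1$ with $q$ to obtain clean DCP phase states with fresh uniform labels, and feed them to Kuperberg's sieve. This matches the paper's five-step proof of \Cref{thm:subexp_alg_gen} (including the rejection probability $M$, the relabeling unitary, and the use of \Cref{thm:kuperberg_sieve}) step for step.
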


\paragraph{Polynomial time algorithm for complex Gaussian amplitudes. }
Our second result shows when the amplitude $f$ is Gaussian with quadratic imaginary phase (henceforth complex Gaussian), i.e.,  $f_{r, t}(x) = \exp\left( -\pi \left( \frac{1}{r^2} + \frac{i}{t} \right)x^2\right)$ for some $r\in\R$ and $t\in \N^+$, there are $\poly(n, \log q)$ time quantum algorithms for solving $\QLWE$ and $\LWEstate$. 

Our algorithm makes use of an interesting property of the complex Gaussian function $f_{r, t}(x) = \exp\left( -\pi \left( \frac{1}{r^2} + \frac{i}{t} \right)x^2\right)$: when $r>t\cdot \Omega(\sqrt{n})$, the overlap between $\ket{ \phi_{f_{r, t}, c_1} }$ and $\ket{ \phi_{f_{r, t}, c_2} }$ (following the notation in Eqn.~\eqref{eqn:defphifc}) is negligible whenever $c_1 \not\equiv c_2 \pmod t$. In other words, given a state $\ket{ \phi_{f_{r, t}, c} }$, it should in principle be easy to extract the value of $c \bmod t$. Indeed, we show when $r>\tilde{O}(n)\cdot t$, if we simply measure the least significant bits of $\ket{ \phi_{f_{r, t}, c} }$ by the purely imaginary Gaussian basis $\left\{\ket{\psi_d}\right\}_{d\in \Z_{t}}$, where
\iffull
\[
    \ket{\psi_d} = \sum_{x\in \Z_{t}} e^{-\frac{\pi i(x - d)^2}{t}}\ket{x},
\]
\fi
\ifllncs
$\ket{\psi_d} = \sum_{x\in \Z_{t}} e^{-\frac{\pi i(x - d)^2}{t}}\ket{x},$
\fi
and output the measurement result $d\in \Z_{t}$, then $d = c\bmod t$ with probability $1 - O(1/n)$, which finds the center $c\bmod t$ of $\ket{\phi_{f_{r,t},c}}$ with high probability. Utilizing this center finding procedure, we can solve a variant of $\QLWE$ 
where $q = q_1q_2$ with coprime factors $q_1,q_2$, such that the first half of samples can be viewed as from $\QLWE_{n, \tilde{O}(n), q_1, f_{r,q_1}}$, while the second half can be viewed as samples from $\QLWE_{n, \tilde{O}(n), q_2, f_{r,q_2}}$ with the same secret $\ary{s}\bmod q$. Then we employ Guess-then-Gaussian-Elimination on the two halves to recover $\ary{s}\bmod q_1$ and $\ary{s}\bmod q_2$, and finally use the Chinese Remainder Theorem to recover the whole secret in $\Z_q$. 

The idea can be generalized to the case where $q$ has more than two factors, and it directly works for solving $\LWEstate$. Our main theorem is
\begin{theorem}\label{thm:CLWEintro}
    Let $n, m, q, \ell$ be positive integers and $r$ be a real number. Suppose that $m = 2\ell n \cdot \omega(\log n)$, $q$ is a composite number satisfying $q = q_1q_2\cdots q_\ell$ where $q_1, q_2, \cdots, q_\ell$ are coprime, $r$ satisfies $\frac{q}{\sqrt{n}} > r > 30n\log n\cdot \max{\set{q_1, q_2, \cdots, q_\ell}}$. 
    
    There exists a quantum algorithm running in time $\poly(n, \ell, \log q)$ that, takes input $\mat{A} \gets \mathcal{U}(\Z_q^{n \times m})$, outputs a state $\rho$ such that the trace distance between $\rho$ and $\phi := \kb{\phi}{\phi}$ is negligible, where 
    \[
        \ket{\phi} = \sum_{\ary{s} \in \Z_q^n}\sum_{\ary{x} \in \Z^m}\rho_r(\ary{x})e^{-\pi i \sum_{j = 1}^{\ell} \|\ary x_j\|^2/q_j} \ket{(\mat A^T\ary{s} + \ary{x}) \bmod q}
    \]
    is a quantum LWE state with Gaussian amplitude and quadratic phase terms, here we write $\ary x$ into $\ell$ blocks $\ary x = (\ary x_1, \ary x_2, \cdots, \ary x_\ell)$ with $\ary x_j\in \Z^{m/\ell}, j = 1, 2, \cdots, \ell$.
\end{theorem}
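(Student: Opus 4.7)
The plan is to construct $\ket{\phi}$ by a standard ``prepare, entangle, uncompute'' template and then use the hypothesis $r>30n\log n\cdot \max\{q_1,\dots,q_\ell\}$ to make the uncomputation step work coherently. I would first initialize $\frac{1}{\sqrt{q^n}}\sum_{\ary{s}\in\Z_q^n}\ket{\ary{s}}\ket{0^m}$, compute $\mat{A}^T\ary{s}\bmod q$ into the second register, and then, for every block $j\in[\ell]$ and every coordinate $i$ in that block, add into the $i$-th cell a freshly prepared complex-Gaussian ancilla $\sum_{x\in\Z}\rho_r(x)\,e^{-\pi i x^2/q_j}\ket{x}$, truncated to an interval of radius $\widetilde{O}(r)\ll q$ so that truncation error is negligible (this is where $r<q/\sqrt{n}$ enters). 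Discarding the ancillae yields the intermediate state $\sum_{\ary{s},\ary{x}}\rho_r(\ary{x})\,e^{-\pi i\sum_j\|\ary x_j\|^2/q_j}\,\ket{\ary{s}}\,\ket{(\mat A^T\ary{s}+\ary{x})\bmod q}$, so only the leading $\ket{\ary{s}}$ register remains to be uncomputed.

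To uncompute $\ary{s}$ coherently I would apply the center-finding procedure from the introduction on a block-by-block basis, combined with CRT. Fix a block $j$. By the Chinese Remainder Theorem each register entry mod $q$ determines its residue mod $q_j$, and the corresponding ``virtual'' register is exactly a complex Gaussian $f_{r,q_j}$-state centered at $\ipd{\ary a_i}{\ary s}\bmod q_j$. Applying the unitary version of the purely-imaginary Gaussian basis measurement (the map $\ket{\psi_d}\mapsto\ket{d}$) writes $\ipd{\ary a_i}{\ary s}\bmod q_j$ into a fresh ancilla with high fidelity, since the hypothesis $r>30n\log n\cdot q_j$ matches the center-finding lemma's hypothesis. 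Performing this on all $m/\ell=2n\cdot\omega(\log n)$ coordinates of block $j$ and then running a reversible Gaussian elimination on the system $\mat A_j^T\ary s\equiv\ary d_j\pmod{q_j}$ writes $\ary s\bmod q_j$ into yet another ancilla; repeating over every $j$ and combining via a reversible CRT circuit produces $\ary s\bmod q$. Subtracting from the $\ket{\ary s}$ register and then reversing every ancilla-producing step cleans every workspace, leaving $\ket{0}\otimes\ket{\phi}$ up to negligible trace-distance error.

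The main obstacle is the quantitative fidelity analysis of the uncomputation pipeline. Since the per-coordinate success probability of the center-finding measurement is only $1-O(1/n)$, the naive product over the $2n\cdot\omega(\log n)$ coordinates of a block does \emph{not} give a state close to the ``all-correct'' branch. The fix is that Gaussian elimination only needs $n$ correct centers out of $m/\ell$: I would argue via a Chernoff bound that, in the coherent picture, the projection onto the subspace where at least $n$ coordinates per block are correct \emph{and} the corresponding columns of $\mat A_j$ span $\Z_{q_j}^n$ has norm $1-\negl(n)$, using the fact that for uniformly random $\mat A_j$ the spanning condition holds with overwhelming probability. A reversible decoder with rank detection then recovers $\ary s\bmod q_j$ uniformly across this good subspace, which is what is required to clear the $\ket{\ary s}$ register coherently. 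Two secondary details to verify are that Gaussian elimination mod composite $q_j$ is handled correctly (by restricting to coprime prime-power factors or by using Smith normal form) and that every ancilla is truly reset so that the final state is a pure tensor $\ket{0}\otimes\ket{\phi}$ as claimed; both are routine but necessary to complete the proof.
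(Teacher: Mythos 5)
Your high-level template matches the paper's proof of \Cref{thm:CLWE}: prepare $\sum_{\ary s}\ket{\ary s}\ket{\psi_{\ary s}}$ with $\ket{\psi_{\ary s}}$ carrying the complex Gaussian amplitude, split the $m$ columns into $\ell$ blocks, apply the center-finding measurement per block to learn $\mat A_j^T\ary s\bmod q_j$, run Gaussian elimination to recover $\ary s\bmod q_j$, combine by CRT, subtract from the $\ket{\ary s}$ register, and uncompute the workspace. You also correctly flag the main quantitative issue, namely that a per-coordinate center-finding success of only $1-O(1/n)$ does \emph{not} make the all-correct branch dominant across $m/\ell = 2n\cdot\omega(\log n)$ coordinates.

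However, the fix you propose for this issue is the step that does not go through as written, and it is precisely the part the paper handles carefully (the ``Guess-then-Gaussian-Elimination'' algorithm $\mathcal{B}$ in \Cref{lem:recover_s_mod_q_j}). You argue by Chernoff that at least $n$ of the $m/\ell$ centers are correct and that the corresponding columns of $\mat A_j$ span, and then invoke a ``reversible decoder with rank detection'' that recovers $\ary s\bmod q_j$ on this good subspace. But knowing that $\geq n$ coordinates are correct does not tell you \emph{which} ones: with error rate $\Theta(1/n)$ over $\Theta(n\log n)$ coordinates, the expected number of corrupted entries is $\omega(\log n)$, so running Gaussian elimination on an arbitrary set of $n$ equations fails with constant probability, and ``rank detection'' only tests invertibility of $\mat A_j$, not correctness of $\tilde{\ary y}$. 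Finding a size-$n$ subset of uncorrupted equations is a syndrome-free decoding problem, and your proposal gives no procedure for it. The paper's resolution is different and more explicit: it partitions the $m/\ell$ columns into $\omega(\log n)$ disjoint groups of size $2n$, runs Gaussian elimination on the first half of the groups, and reserves the second half for a matching-rate verification test. The point is that a whole group of $2n$ coordinates is error-free with probability $(1-1/n)^{2n} > 0.1$, so some group produces the true $\ary s\bmod q_j$, and the verification step (where a wrong $\ary s'$ matches only about half of the reserved entries, detected by Chernoff) filters out the false positives with overwhelming probability. Replacing your ``rank-detection decoder'' with this try-and-verify loop, made reversible and followed by the deferred-measurement/gentle-measurement argument you already sketch, closes the gap; the rest of your proposal (ancilla preparation, CRT, uncomputation, and the remark about Gaussian elimination over composite $\Z_{q_j}$ via Smith normal form, which the paper sidesteps by citing that a random $n\times 2n$ matrix is full-rank mod $q_j$ with overwhelming probability) aligns with the paper's argument.
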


\paragraph{Application to oblivious LWE sampling. }

As mentioned in \cite{debris2024quantum}, to build a quantum oblivious LWE sampler via solving $\LWEstate$, the key is to choose the amplitude $f$ where the real part is Gaussian, and the imaginary part is suitable so that an efficient quantum algorithm can find the center $c$ of $\ket{ \phi_{f, c} }$ with non-negligible probability. In \cite{debris2024quantum}, $f$ is chosen to be \emph{Gaussian with half phase}: $f(x) = \exp\left( -\pi \frac{x^2}{r^2} \right) \cdot \sgn^+(x)$, where $\sgn^+(x)$ outputs $1$ when $x\geq 0$, $-1$ when $x<0$. They are able to solve $\LWEstate_{n, m, q, f}$ for such $f$ with sample complexity $m\in \tilde{O}(nr)$ and with prime modulus $q$. Once they get an oblivious LWE sample with parameter $(n, m, q, r/q)$ (here $r/\sqrt{2\pi}$ is the standard deviation of the LWE error term), they can throw away additional samples, and apply the classical modulus switching transformation \cite{DBLP:conf/stoc/BrakerskiLPRS13} to get oblivious LWE samples with parameter $(n, m', q', r'/q')$ for certain $m'\leq m$, $q'<q$, $r'>r$ (we omit more detailed restrictions in the introduction).

Using the complex Gaussian amplitude, we can directly use \Cref{thm:CLWEintro} with $\ell\in O(1)$ to reduce the sample complexity of $\LWEstate$ from $\tilde{O}(nr)$ to $\tilde{O}(n)$. Here is the main theorem for the core quantum component of our oblivious LWE sampler:
\begin{theorem}[Informal version of \Cref{coro:OSAMP}]\label{coro:OSAMPintro}
    Let $n, m, q, \ell, r$ satisfy the same conditions as in \Cref{thm:CLWEintro}. Assume the quantum hardness of $\LWE_{n, m, q, r/(\sqrt{2}q)}$, there exists a witness-oblivious quantum sampler for $\LWE_{n, m, q, r/(\sqrt{2}q)}$.
\end{theorem}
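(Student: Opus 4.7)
The plan is to turn the state-preparation procedure of \Cref{thm:CLWEintro} into a witness-oblivious LWE sampler simply by measuring the output state in the computational basis, and then to argue both that the resulting classical distribution is statistically close to the genuine LWE distribution of parameter $\alpha = r/(\sqrt{2}q)$, and that no efficient algorithm can extract a valid secret from the sampler's view unless $\LWE_{n,m,q,r/(\sqrt{2}q)}$ is broken. Concretely, on input $\mat{A}\gets\mathcal{U}(\Z_q^{n\times m})$ the sampler runs the poly-time circuit of \Cref{thm:CLWEintro} to obtain a state $\rho$ negligibly close to $\kb{\phi}{\phi}$, measures the output register to obtain $\ary{b}\in\Z_q^m$, and outputs the classical pair $(\mat{A},\ary{b})$; all ancillary registers and the implicit superposition over $\ary{s}$ are discarded.

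The first main step is to analyze the marginal distribution of $\ary{b}$. The measurement probability is $|\alpha_{\ary b}|^2$ with
\[
\alpha_{\ary b}\;=\;\sum_{\ary s\in\Z_q^n}\sum_{\ary x\in\Z^m:\,\mat A^T\ary s+\ary x\equiv\ary b\bmod q}\rho_r(\ary x)\,e^{-\pi i\sum_{j=1}^{\ell}\|\ary x_j\|^2/q_j}.
\]
Because $r$ is well above the smoothing parameter of $q\Z^m$ (which is controlled by $\max_j q_j$, and the hypothesis $r>30n\log n\cdot \max_j q_j$ gives the required slack via a standard Banaszczyk / Poisson-summation bound), the inner sum over lifts $\ary x$ of a fixed coset is, up to negligible error, dominated by the Gaussian mass of that coset. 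Squaring the amplitude then yields $|\alpha_{\ary b}|^2\propto \rho_{r/\sqrt2}(\ary x^\ast)$ for the near-unique minimum-norm representative $\ary x^\ast$, conditioned on the secret $\ary s$ associated with that coset; the quadratic phase on different cosets (indexed by distinct $\ary s$'s) destructively interferes exactly as in the center-finding analysis underlying \Cref{thm:CLWEintro}, so the cross terms in $|\alpha_{\ary b}|^2$ between different $\ary s$-contributions sum to $\negl(n)$. Combined with the fact that each coset of $\mat A^T(\Z_q^n)+q\Z^m$ gets near-equal Gaussian mass, this certifies that the joint distribution of $(\mat A,\ary b)$ is within $\negl(n)$ statistical distance of $\mat A\ir\Z_q^{n\times m}$, $\ary s\ir\Z_q^n$, $\ary b=\mat A^T\ary s+\ary x\bmod q$ with $\ary x$ drawn from the discrete Gaussian of width $r/\sqrt2$ on $\Z^m$, which is exactly $\LWE_{n,m,q,r/(\sqrt{2}q)}$.

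The second step is witness-obliviousness. The sampler performs a unitary circuit followed by a single terminal computational-basis measurement, so its entire classical view at the end consists of the input $\mat A$ and the measurement outcome $\ary b$; no register containing $\ary s$ is ever materialised, since $\ary s$ appears only under a sum inside $\ket\phi$ and is collapsed away by the measurement on the $\ket{\mat A^T\ary s+\ary x\bmod q}$ register. Consequently, any PPT (or QPT) extractor that, given the sampler's transcript, outputs a pair $(\ary s,\ary x)$ with $\ary b\equiv \mat A^T\ary s+\ary x\bmod q$ and $\|\ary x\|$ appropriately small would solve search $\LWE_{n,m,q,r/(\sqrt{2}q)}$ on the freshly generated instance, contradicting the assumed quantum hardness. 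A short hybrid then upgrades this to the usual game-based witness-obliviousness definition.

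The step I expect to be the main obstacle is the precise distributional analysis: showing that the quadratic phase $e^{-\pi i\sum_j\|\ary x_j\|^2/q_j}$, rather than smearing the output into a non-LWE distribution, actually cleans up the interference between different $\ary s$-cosets so that $|\alpha_{\ary b}|^2$ factorises into the product of the coset-uniform secret distribution and the squared Gaussian error. This will likely require invoking the CRT decomposition $q=\prod_j q_j$ to split the phase into independent per-block contributions and then applying a Gaussian Poisson-summation identity on each block, mirroring the center-finding calculation used in the proof of \Cref{thm:CLWEintro}; the parameter slack $r > 30n\log n\cdot \max_j q_j$ is precisely what makes all the resulting error terms negligible. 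Everything else (state preparation, measurement, and the reduction from witness extraction to search LWE) is essentially immediate given \Cref{thm:CLWEintro} and the $\LWE$ hardness assumption.
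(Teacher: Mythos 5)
The paper's proof is a single line: apply \Cref{thm:oblivious_to_CLWE} (Theorem~2 of~\cite{debris2024quantum}) to the $\LWEstate$ solver of \Cref{thm:CLWE}, after observing that the complex-Gaussian amplitude $f(x)=\rho_r(x)\,e^{-\pi i x^2/q_j}$ satisfies $|f|^2\propto D_{\Z,\alpha q}$ with $\alpha q = r/\sqrt 2$. You instead re-derive the content of that black-box theorem from scratch, which is a genuinely different route: you directly analyze the squared amplitude $|\alpha_{\ary b}|^2$, argue near-orthogonality of the $\ket{\psi_{\ary s}}$ branches via Poisson summation, and then separately argue witness-obliviousness by reducing extraction to search $\LWE$. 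This is instructive and largely correct, but it buys you nothing the citation does not, and it exposes you to a subtlety that \cite{debris2024quantum}'s Theorem~2 is designed to encapsulate.

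The subtlety is in your witness-obliviousness paragraph. Your claim that ``no register containing $\ary s$ is ever materialised'' is not literally true for the sampler built from \Cref{thm:CLWE}: that procedure first prepares $\ket{\phi_0}=\sum_{\ary s}\ket{\ary s}\otimes\ket{\psi_{\ary s}}$, \emph{coherently computes} $\ary s$ from $\ket{\psi_{\ary s}}$ (via the CRT-and-center-finding routine) into the explicit first register, and only then uncomputes it. So at an intermediate stage a perfectly good $\ary s$ register exists. The actual argument in~\cite{debris2024quantum} is not ``$\ary s$ never appears,'' but that a \emph{valid} extractor, by definition, must leave the sampler's state undisturbed up to negligible trace distance; reading the intermediate $\ary s$ register would collapse the superposition over $\ary s$ and yield a final state far from $\ket\phi$, so it is disallowed, and an extractor that only sees the final $(\mat A,\ary b)$ is exactly solving search $\LWE$. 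Your phrase ``a short hybrid then upgrades this to the usual game-based definition'' is where this content would have to live; as written it papers over the step that is the real crux of the cited theorem. The distributional half of your argument (orthogonality of the $\ary s$-branches, $\rho_r^2\propto\rho_{r/\sqrt2}$, hence $\alpha=r/(\sqrt2 q)$) is correct, and the parameter bookkeeping checks out, but again this is precisely what \Cref{thm:oblivious_to_CLWE} already gives for free.
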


Once we get an oblivious LWE sampler for some composite modulus $q$, we can also use modulus switching to get an oblivious LWE sampler for certain prime modulus $q'<q$. The detailed parameters are given in \Cref{coro:modulusswitch} in the main body.

In \Cref{table:summaryOLWE} we summarize the known poly-time algorithms for solving $\LWEstate$ and oblivious LWE sampling. 
\iffull
In \Cref{fig:summary} we provide some examples of interesting amplitudes addressed in our paper or previous papers. 
\fi
\ifllncs
In \Cref{fig:summary} (see \Cref{sec:figure_omit}) we provide some examples of interesting amplitudes addressed in our paper or previous papers.
\fi
\begin{table}[t]
\footnotesize
    \centering
    \begin{tabular}{cccc}
        \hline
        {\small\bf Error Amplitude} &
        {\small\bf \# Samples} &{\small\bf Algorithm} & {\small\bf Reference} \\
        \hline
        {Bounded uniform} &
        { $O(n r q^4 )$ } &
        { Quantum filtering } & 
        { \cite{DBLP:conf/eurocrypt/ChenLZ22} } \\
        {Gaussian with half phase} &
        { $\tilde{O}(nr)$ } &
        { Unambiguous measurement } & 
        { \cite{debris2024quantum} } \\
        {Complex Gaussian} &
        { $\tilde{O}(n)$ } &
        { Imaginary Measurement and CRT } & 
        { This work \S\ref{sec:osampling} } \\
        \hline
    \end{tabular}
    \caption{Algorithm and sample complexity for $\LWEstate$ and Oblivious LWE sampling. Here $n$ is the dimension of the LWE secret, $r$ is the standard deviation of the error distribution, $q$ is the modulus. All quantum algorithms in this table run in $\poly(n)$ time.  }\label{table:summaryOLWE}
\normalsize
\end{table}

\iffull
\begin{figure}[h]
\centering
\includegraphics[scale=0.24]{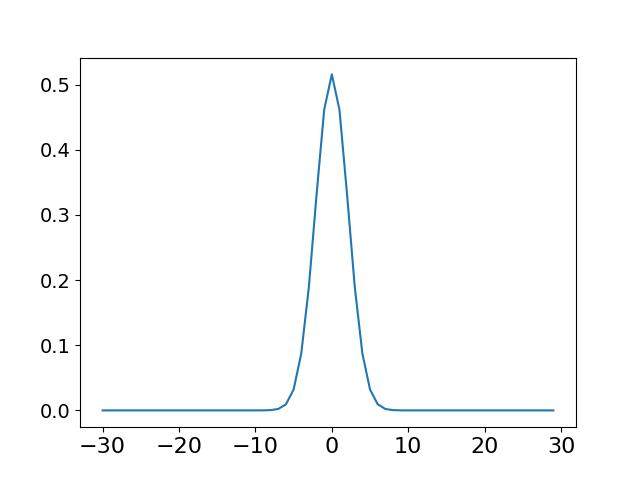}
\includegraphics[scale=0.24]{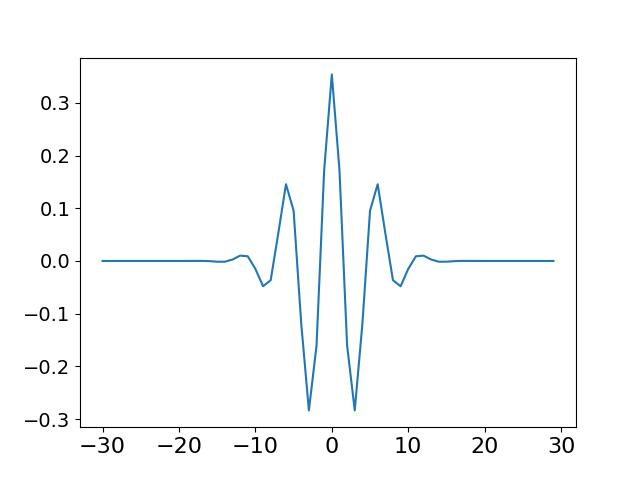}
\includegraphics[scale=0.24]{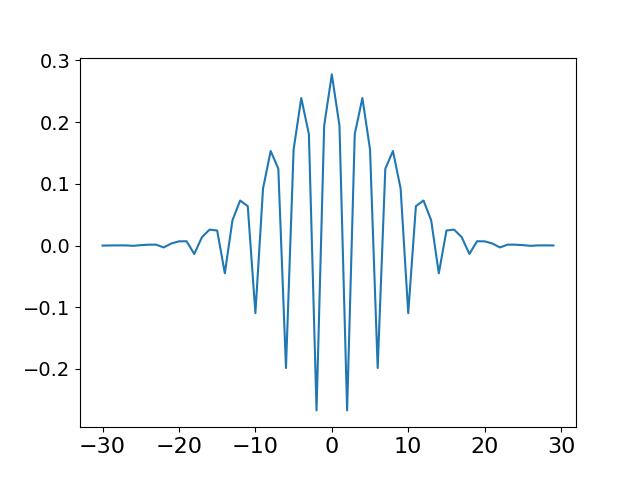}
\includegraphics[scale=0.24]{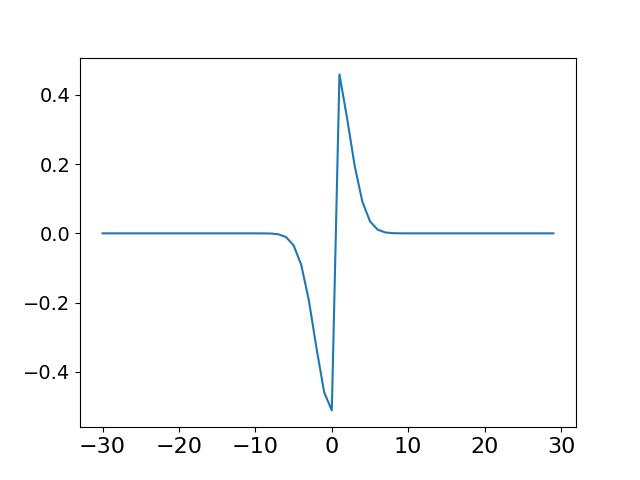}
\includegraphics[scale=0.24]{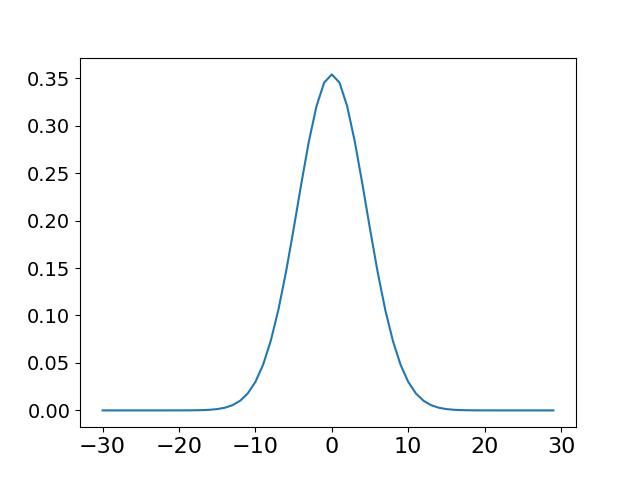}
\includegraphics[scale=0.24]{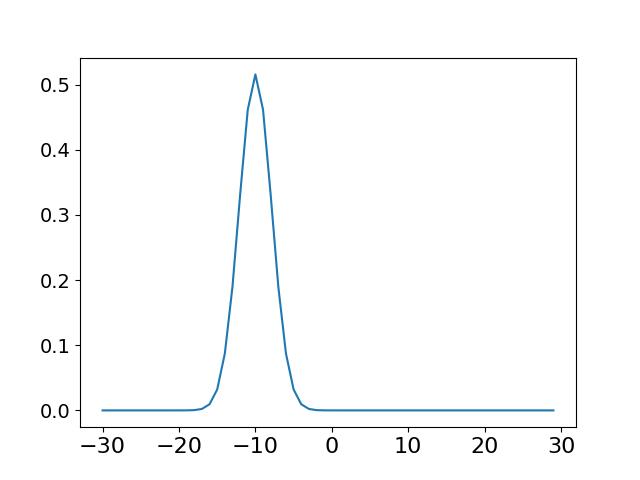}
\includegraphics[scale=0.24]{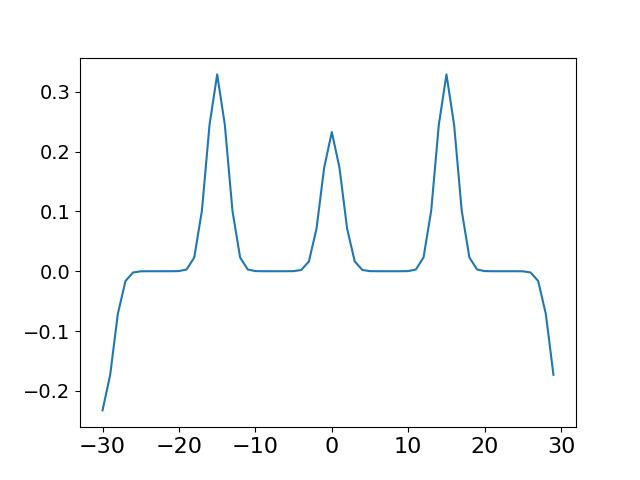}
\includegraphics[scale=0.24]{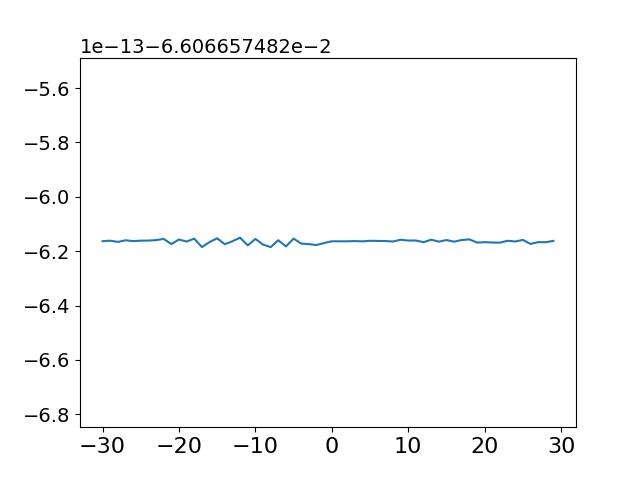}
\caption{ Interesting $\QLWE$ error amplitudes (top) and their DFTs (bottom). 
All pictures are depicting the real parts of the functions. The $x$-axis is the input (from $-30$ to $29$, all examples are given over $\Z_{60}$). The $y$-axis is the amplitude. 
Four pictures on the top from left to right are: (1) Gaussian, where our sub-exponential algorithm applies; (2) Gaussian with imaginary linear phase, where our reductions apply when the phase (or the center of the DFT) is unknown; (3) Gaussian with imaginary quadratic phase, where our oblivious LWE sampler uses; (4) Gaussian where the phase changes in the middle, where the oblivious LWE sampler in \cite{debris2024quantum} uses.  }\label{fig:summary}
\end{figure}
\fi


\paragraph{Hardness results. }
Given our new quantum algorithms for solving $\QLWE$ for Gaussian and complex Gaussian amplitudes, readers may wonder whether they lead to algorithmic advantages for standard LWE. In fact, the complex Gaussian function was introduced by Chen \cite{chen2024quantum} in a failed attempt of solving standard LWE. We haven't been able to fix the problem in \cite{chen2024quantum}. Instead, we extract the intuition from \cite{chen2024quantum} and show that complex Gaussian is at least suitable for solving $\QLWE$, $\LWEstate$, and reduce the quantum sample complexity for oblivious LWE sampling. We are not able to show that solving $\QLWE$, $\LWEstate$ with complex Gaussian amplitude implies solving standard LWE. 

How about the real Gaussian amplitude? Whether \Cref{thm:subexp_alg_intro} leads to sub-exponential time quantum algorithms for standard LWE?
To address this question, let us recall the quantum reduction from GapSVP and SIVP to classical LWE with Gaussian error distribution due to Regev~\cite{DBLP:journals/jacm/Regev09}. This reduction works even given \emph{arbitrarily many} classical LWE samples. At first glance, one would conjecture that the reduction can be modified to a quantum reduction from worst-case lattice problems to $\QLWE$ with Gaussian amplitude. 

However, we are only able to modify Regev's reduction into a quantum reduction from worst-case lattice problem to $\QLWE$ with Gaussian amplitude with some small but \emph{unknown} phase, with arbitrarily many quantum samples. Before stating our main theorem, let us first introduce the definition of $\QLWE$ with phase.

\begin{definition}[$\QLWE^{\sf phase}$]\label{def:SLWEphase}
    Let $n, m, q$ be LWE parameters. Define the following components: (1) an amplitude function $f: \mathsf{supp}(f)\to \R_{\ge 0}$ where $\mathsf{supp}(f) = \{x/Q: -q/2 < x/Q \leq q/2, x \in \Z\}$ for some integer $Q$; (2) a mapping $\theta: \mathsf{supp}(\theta)\to \R$ where $\mathsf{supp}(\theta)$ is a subset of high-dimensional integer vectors; (3) a distribution $D_{\theta}$ over the set $\mathsf{supp}(\theta)$.
    
    We say a quantum algorithm is capable of solving $\QLWE_{n, m, q, f, \theta, D_{\theta}}^{\mathsf{phase}}$, if for any hidden vector $\ary s\in \Z_q^n$, when provided with $m$ samples of 
    \[
        \ary a\gets \mathcal{U}(\Z_q^n), \quad \ary y\gets D_{\theta}, \quad \sum_{e\in \mathsf{supp}(f)} f(e) \exp(2\pi {\rm i}\cdot e\theta(\ary{y})) \ket{(\ipd{\ary{a}}{\ary{s}} + e)\bmod q},
    \]
    the algorithm outputs $\ary s$ with probability at least $1 - 2^{-\Omega(n)}$.
\end{definition}

On the first pass of the definition, readers can think of the integer vector $\ary{y}$ as some auxiliary information. The phase term $\theta(\ary{y})$ is a function of $\ary{y}$. As it is defined, the function $\theta$ may or may not be efficiently computable
(it is not efficiently computable in our result). So we can think of $\QLWE^{\sf phase}$ as a variant of $\QLWE$ with a phase term in the amplitude. 

\begin{theorem}[\Cref{thm:Regev_MainThm}, informal]\label{thm:Regev_MainThm_intro}
    Let $q = q(n) > 10n$ be an integer of at most $\poly(n)$ bits, $\alpha \in (0, \frac{1}{5\sqrt{n}})$ such that $\alpha q > 2\sqrt{n}$. 
    Suppose there exists a quantum algorithm that solves $\QLWE^{\sf phase}$ where the amplitude $f(e) := \exp\left( -\pi \frac{ e^2 }{ (\alpha q)^2} \right)$ and the phase $\theta$ is not efficiently computable, with $m = 2^{o(n)}$ samples and in time complexity $T$. Then there exists a quantum algorithm that solves $\GAP\SVP_{\gamma}$ and $\SIVP_\gamma$, where $\gamma\in \tilde{O}(n/\alpha)$, in time $\poly(n, m, T)$. 
\end{theorem}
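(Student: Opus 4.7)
The plan is to mirror Regev's worst-case quantum reduction from $\GAP\SVP_\gamma$ and $\SIVP_\gamma$ to $\LWE$~\cite{DBLP:journals/jacm/Regev09}, but with the $\QLWE^{\sf phase}$ oracle playing the role of the classical LWE oracle. Recall that Regev's reduction has two main ingredients: (i) an iterative quantum step that, given discrete Gaussian samples on the dual lattice $\Lambda^*$ at width $r$ and a BDD oracle for $\Lambda$ at distance $\alpha q/(r\sqrt{2})$, produces discrete Gaussian samples at a smaller width; and (ii) the construction of the BDD oracle from an LWE oracle. Iterating~(i) pushes the Gaussian width down to roughly the smoothing parameter of $\Lambda^*$, after which standard reductions yield $\GAP\SVP_\gamma$ and $\SIVP_\gamma$ solutions with $\gamma \in \tilde{O}(n/\alpha)$. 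I would therefore focus entirely on building a BDD solver from the $\QLWE^{\sf phase}$ oracle.

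For a BDD target $\ary{t}\in\R^n$ with $\dist(\ary{t},\Lambda) \leq \alpha q/\sqrt{2}$, I would prepare $m$ quantum states of the form required by $\QLWE^{\sf phase}$ whose common LWE secret $\ary{s}$ encodes the coordinates of $\ary{t}$'s closest lattice vector in a fixed basis of $\Lambda$. To prepare a single sample, start from a Gaussian superposition on a shifted coset of $\Lambda$, fold modulo $q\Lambda$, and apply the lattice-adapted QFT as in Regev's BDD-to-LWE construction, but retain the resulting superposition rather than measuring it. By tracking the Fourier image of the Gaussian shifted by $\ary{t}$, each output sample should decompose, up to a global phase, as
\[
    \sum_{e} f(e)\, \exp\!\left(2\pi{\rm i}\cdot e\cdot\theta(\ary{y})\right) \ket{\ipd{\ary{a}}{\ary{s}} + e \bmod q},
\]
where $\ary{a}$ is the coset label read off from the preparation, $\ary{y}$ comprises the sample-specific randomness (e.g., the discrete Gaussian draw from $\Lambda^*$ and the fundamental-domain representative chosen during folding), and the phase $\theta(\ary{y})$ absorbs the Fourier phase introduced by the shift $\ary{t}$.

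The main obstacle will be to verify that the phase factor is indeed of the $\QLWE^{\sf phase}$ form in \Cref{def:SLWEphase}: linear in $e$ with a coefficient depending only on $\ary{y}$ and not on $e$ or $\ary{s}$, and with $(\ary{a},\ary{y})$ jointly distributed so that $\ary{a}$ looks uniform over $\Z_q^n$. Concretely, $\theta(\ary{y})$ should involve quantities such as inner products with the displacement $\ary{t}-\CVP_\Lambda(\ary{t})$, which are not efficiently computable without already solving the BDD instance; this is precisely why we must appeal to the ``unknown phase'' variant of $\QLWE$ rather than plain $\QLWE$. Any cross terms involving the secret arising from the Fourier computation should combine into a phase depending only on $\ipd{\ary{a}}{\ary{s}}$, which is fixed per sample and can be discarded as a global phase. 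Assuming this characterization goes through, invoking the hypothesized $\QLWE^{\sf phase}$ algorithm (with success probability $1-2^{-\Omega(n)}$) on the prepared samples recovers $\ary{s}$, hence the BDD solution. Plugging this BDD oracle into Regev's iterative quantum step then produces discrete Gaussian samples at widths small enough to solve $\GAP\SVP_\gamma$ and $\SIVP_\gamma$ in total time $\poly(n, m, T)$, as required.
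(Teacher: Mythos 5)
Your high-level plan---quantize Regev's iterative reduction by keeping the discrete Gaussian samples in superposition, use a QFT to preserve coherence, and accept that the induced phase is uncomputable because it encodes the unknown BDD displacement---matches the paper's approach, and your observation that the phase must be linear in $e$ with a $\ary{y}$-dependent coefficient is the right target. However, two load-bearing technical issues are missing. First, the \emph{unknown Gaussian width}: the $\QLWE^{\sf phase}$ state produced by this construction has amplitude $\rho_t$ with $t=\sqrt{\sigma^2 + r^2\|\ary{x}'\|^2}$, where $\ary{x}' = \ary{x}-\kappa_{\Lattice^*}(\ary{x})$ is exactly the unknown BDD displacement. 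Since $\QLWE^{\sf phase}$ is defined for a \emph{fixed} amplitude $f = \rho_{\sqrt{2}\alpha q}$, one cannot simply feed these samples to the assumed solver; the paper enumerates $O(m)$ candidate widths $\sigma_i$ so that at least one choice yields samples within trace distance $\tfrac12$ of the target, then verifies the returned secret against independently measured classical LWE samples (\Cref{lemma:SLWE2CVPmodq}). A fixed $\sigma$ gives no guarantee. Second, \emph{coherent reuse of the Gaussian states}: Regev's iterative step makes a \emph{quantum} query to the $\CVP$ oracle, but here the $\CVP$ oracle is built from the $\ket{D_{\Lattice,r}}$ states themselves, so any intermediate measurement (of $\ary{a}$, of $\ary{y}$, or inside the solver) collapses and consumes them. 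The paper defers all measurements to obtain a unitary $U$, invokes the gentle measurement principle to argue $U$ writes $\kappa_{\Lattice^*}(\ary{x})$ correctly up to $2^{-\Omega(n)}$ error, computes the answer, and applies $U^\dagger$ to restore the Gaussian states with only exponentially small disturbance (\Cref{thm:DGStoCVP}); without this the state budget blows up across the $\poly(n)$ iterations.

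As a smaller point, your preparation sketch misdescribes the mechanics in a way that obscures where the work happens. The paper does not shift the Gaussian by the target and fold mod $q\Lattice$: it starts from the \emph{unshifted} $\ket{D_{\Lattice,r}}\otimes\sum_e\rho_\sigma(e)\ket{e}$, measures $\ary{a}=\Lattice^{-1}\ary{v}\bmod q$ to isolate a coset $q\Lattice+\Lattice\ary{a}$, coherently adds $\ipd{\ary{x}}{\ary{v}}$ to the error register, and only then applies $\QFT_R$ to the $\ary{v}$-register; the phase $\theta(\ary{y}) \propto \ipd{\ary{x}'}{\ary{z}(\ary{y})}$ then drops out of a Poisson summation applied to the joint Gaussian (\Cref{lem:qftstate_rewrite}). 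There is also no QFT in Regev's classical BDD-to-LWE map---you appear to conflate it with the separate DGS-from-$\CVP$ quantum step.
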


The informal statement above omits the distribution of the unknown phase term $\theta(\ary{y})$ and some other details. All those details can be found in the statement of \Cref{thm:Regev_MainThm}. 
Morally, \Cref{thm:Regev_MainThm} says there is a quantum reduction from worst-case lattice problems to $\QLWE$ with Gaussian amplitude with unknown phase, with arbitrarily many samples. Let us remark that the distribution of the unknown phase is \emph{known} to be Gaussian with \emph{small} width. 

We also provide a quantum reduction directly from classical LWE to $\QLWE^{\sf phase}$ in \Cref{thm:EDCP_MainThm}. 
This reduction goes through the (extrapolated) dihedral coset problem, originally used in \cite{DBLP:journals/siamcomp/Regev04,DBLP:conf/pkc/BrakerskiKSW18}. 
It achieves slightly worse parameters compared to \Cref{thm:Regev_MainThm}, but is much simpler to describe. 
For more details we refer the readers to \Cref{sec:EDCP}.
Although the result appears to be qualitatively similar to \Cref{thm:Regev_MainThm}, as it also says $\QLWE$ with Gaussian amplitude with small unknown phase is as hard as classical LWE; the algorithm used in the reduction is very different, therefore it might offer a different approach for potential improvements. 

Compared to the reduction of \cite{DBLP:conf/pkc/BrakerskiKSW18}, while their reduction can be seen as converting classical LWE to $\QLWE$ samples with non-negligible probability of failure (meaning that the amplitude of $\QLWE$ samples does not follow an expected shape; the event of failure is not efficiently detectable), our reduction converts classical LWE to $\QLWE$ samples which always have a Gaussian amplitude, but the phase has a small unknown term. Strictly speaking, our result is incomparable with the result in \cite{DBLP:conf/pkc/BrakerskiKSW18}. But we hope that maybe some quantum algorithms in future can handle $\QLWE$ samples with small unknown phase better than $\QLWE$ samples with failure. 

\paragraph{Is the unknown phase an inherent barrier?}
Overall, we show subexponential time quantum algorithm for $\QLWE$ with known amplitudes. We also reduce worst-case lattice problems, or classical LWE, to $\QLWE$ with unknown phase. Readers may wonder whether the \emph{unknown phase} would be an inherent barrier for getting a subexponential quantum algorithm for solving standard LWE. We think it is a wonderful question for future work. On one hand, the unknown phase appeared in our reduction is not completely random (if the unknown phase were completely random then the quantum LWE samples are as useless as classical LWE samples) -- we know they are small and follow Gaussian distributions, so there are some hope of further utilizing the unknown phases (such as trying to use the filtering technique in \cite{DBLP:conf/eurocrypt/ChenLZ22} to guess the phase).  On the other hand, all our attempts of utilizing the unknown phase failed. The reasons of failures are rather technical and entangled with the details of our algorithms, so we refer readers to the main body for discussions therein. We think it is worth to understand whether the unknown phase is an inherent obstacle towards solving LWE in quantum subexponential time, or there is a hope of finding innovative methods in handling the unknown phase appeared in our reductions.


\paragraph{Organization.}
The rest of this paper is organized as follows. In \Cref{sec:prelim} we provide the background of quantum computation and lattice problems. In \Cref{sec:kuper} we provide our sub-exponential time quantum algorithm for $\QLWE$ with completely known amplitude. In \Cref{sec:osampling} we provide our polynomial time quantum algorithm for $\QLWE$ with complex Gaussian amplitudes, and the application to oblivious LWE sampling. In \Cref{sec:EDCP} we provide the reduction from classical LWE to $\QLWE^{\sf phase}$ via extrapolated DCP. In \Cref{sec:regev} we provide the reduction from worst-case lattice problem to $\QLWE^{\sf phase}$ via quantizing Regev's reduction. We choose to present the reduction from classical LWE to $\QLWE^{\sf phase}$ via extrapolated DCP first, since this reduction is relatively easier to follow. 
\Cref{sec:kuper}, \Cref{sec:osampling}, \Cref{sec:EDCP}, \Cref{sec:regev} are all in fact self-contained and independent, containing their own overview if necessary, so readers can start from any section without reading the others.

\section{Preliminaries}\label{sec:prelim}

\paragraph{Notations and terminology.} 
Let $\C, \R, \R_{\ge 0}, \Z, \N, \N^+$ be the set of complex numbers, real numbers, non-negative real numbers, integers,  natural numbers (non-negative integers), and positive integers. 
Denote $\Z/q\Z$ by $\Z_q$.
By default we represent the elements of $\Z_q$ by elements in $(-q/2, q/2 ]\cap \Z$.
For any integer $q\geq 2$, let $\omega_q = e^{2 \pi {\rm i} / q}$ denote the primitive $q$-th root of unity. 
The rounding operation $\rd{a}: \R \to \Z$ rounds a real number $a$ to its nearest integer (if $a\in \Z+0.5$, we round it to $a+0.5$).
For positive integer $q$ the rounding operation $\rd{a}_q: \R\to q\Z$ rounds a real number $a$ to its nearest integer which is a multiple of $q$. 
For $n\in\N^+$, let $[n] := \set{1, 2, \cdots, n}$. 

A vector in $\R^n$ (represented in column form by default) is written as a bold lower-case letter, e.g. $\ary{v}$. For a vector $\ary{v}$, the $i^{th}$ component of $\ary{v}$ will be denoted by $v_i$. A matrix is written as a bold capital letter, e.g. $\mat{A}$. The $i^{th}$ column vector of $\mat{A}$ is denoted $\ary{a}_i$. 

For $x \in \R$ and $q \in \N^+$, let $x \bmod q$ be the unique real number $z \in (-q/2, q/2 ]$ such that $x - z$ is a multiple of $q$. For a vector $\ary{v} \in \R^n$, we do $\bmod$ coordinate-wise, i.e. the $i^{\text{th}}$ coordinate of $\ary{v} \bmod q$ is given by $v_i \bmod q$. To avoid ambiguity, we give $\bmod$ lower precedence than addition/subtraction. For example, $\ary{a} + \ary{b} \bmod q$ means $(\ary{a} + \ary{b}) \bmod q$.

The length of a vector is the $\ell_p$-norm $\|\ary{v}\|_p := (\sum v_i^p)^{1/p}$, or the infinity norm given by its largest entry $\|\ary v\|_{\infty} := \max_i\{|v_i|\}$. The $\ell_p$ norm of a matrix is the norm of its longest column: $\|\mat{A}\|_p := \max_i \|\ary{a}_i\|_p$. 
By default we use $\ell_2$-norm unless explicitly mentioned. 
Let $\ary{x}\in\C^n$, then $\|\ary{x}\|_\infty\leq \|\ary{x}\|_2 \leq \|\ary{x}\|_1$.
Let $B^n_p$ denote the open unit ball in $\R^n$ in the $\ell_p$ norm.

When a variable $v$ is drawn uniformly random from the set $S$ we denote as $v\la \mathcal U(S)$. 
When a function $f$ is applied on a set $S$, it means $f(S) := \sum_{x\in S}f(x)$. 
\begin{definition}[Statistical distance]
    For two distributions over $\R^n$ with probability density functions $f_1$ and $f_2$, we define the statistical distance between them as 
    $$D(f_1, f_2) = \frac{1}{2}\int_{\R^n}|f_1(\ary{x}) - f_2(\ary{x})| d\ary{x}.$$
\end{definition}

We say two distributions (respectively, quantum states) are $\epsilon$-close to each other if their statistical distance (respectively, trace distance by default) is at most $\epsilon$. We say two pure (unnormalized) states $\ket{\phi}$ and $\ket{\psi}$ are $\epsilon$-close in $\ell_2$ distance if $\|\ket{\phi} - \ket{\psi}\| \le \epsilon \max(\|\ket{\phi}\|, \|\ket{\psi}\|)$.

\paragraph{Fourier transform. }
The Fourier transform of a function $h: \R^n \to \C$ is defined to be
\begin{equation*}
    \hat{h}(\ary{w}) = \int_{\R^n} h(\ary{x}) \exp( -2\pi {\rm i} \ipd{\ary x}{\ary w} ) d\ary{x}.
\end{equation*}

Define the convolution of two functions as $f * g(\ary{y}) = \int_{\R^n} f(\ary{x})g(\ary{y}-\ary{x}) d\ary{x}$. Then $\widehat{f * g} = \hat{f}\cdot \hat{g}$ and $\widehat{f \cdot g} = \hat{f} * \hat{g}$. 

We recall some formulas about Fourier transform (cf.~\cite[P.100,~Proposition~2.2.11]{grafakos2008classical}). 
If $h$ is defined by $h(\ary x) = g(\ary x+\ary v)$ for some function $g: \R^n \to \C$ and vector $\ary v\in\R^n$, then
\iffull
\begin{equation}\label{eqn:Fourier_shift1}
    \hat{h}(\ary w) = \hat{g}(\ary w)\cdot \exp( 2\pi i \ipd{\ary v}{\ary w} ).
\end{equation}
\fi
\ifllncs
$\hat{h}(\ary w) = \hat{g}(\ary w)\cdot \exp( 2\pi i \ipd{\ary v}{\ary w} ).$
\fi
If $h$ is defined by $h(\ary x) = g(\ary x) \exp( 2\pi i \ipd{\ary x}{\ary v} )$ for some function $g: \R^n \to \C$ and vector $\ary v\in\R^n$, then
\iffull
\begin{equation}\label{eqn:Fourier_shift2}
    \hat{h}(\ary w) = \hat{g}(\ary w - \ary v).
\end{equation}
\fi
\ifllncs
$\hat{h}(\ary w) = \hat{g}(\ary w - \ary v).$
\fi

\iffull
As a corollary of Eqns.~\eqref{eqn:Fourier_shift1} and~\eqref{eqn:Fourier_shift2}, 
if $h$ is defined by $h(\ary x) = f(\ary x +\ary v) \exp( 2\pi i \ipd{\ary x}{\ary z} )$ for some function $f: \R^n \to \C$ and vectors $\ary v$, $\ary z\in\R^n$, 
then we define $g(\ary x) := f(\ary x +\ary v)$, so $h(\ary x) = g(\ary x) \exp( 2\pi i \ipd{\ary x}{\ary z} )$. 
Therefore $\hat{g}(\ary w) = \hat{f}(\ary w)\cdot \exp( 2\pi i \ipd{\ary v}{\ary w} )$, and
\[
    \hat{h}(\ary w) = \hat{g}(\ary w - \ary z) = \hat{f}(\ary w - \ary z)\cdot \exp( 2\pi i \ipd{\ary v}{\ary w - \ary z} ).
\]
\fi

\subsection{Lattices}\label{sec:lattice}

An $n$-dimensional lattice $\Lattice$ of rank $k\leq n$ is a discrete additive subgroup of $\R^n$. Given $k$ linearly independent basis vectors $\mat{B} =\set{\ary{b}_1, \cdots, \ary{b}_k\in \R^n}$, the lattice generated by $\mat{B}$ is 
\[ \Lattice(\mat{B}) = \Lattice(\ary{b}_1, \cdots, \ary{b}_k) = \set{  \sum\limits_{i=1}^{k} x_i\cdot \ary{b}_i , x_i\in \Z }.\] 
By default we work with full-rank lattices unless explicitly mentioned. 

The minimum distance $\lambda_1(\Lattice)$ of a lattice $\Lattice$ is the length (in the $\ell_2$ norm by default) of its shortest nonzero vector: $\lambda_1(\Lattice) = \min_{\ary{x}\in \Lattice\setminus{\set{\ary{0}}}} \|\ary{x}\|$. More generally, the $i^{th}$ successive minimum $\lambda_i(\Lattice)$ is the smallest radius $r$ such that $\Lattice$ contains $i$ linearly independent vectors of norm at most $r$. We write $\lambda_1^p$ as the minimum distance in the $\ell_p$ norm.

For a point $\ary y\in \R^n$, its distance to $\Lattice$ is given by $\text{dist}(\ary y, \Lattice) = \min_{\ary x\in \Lattice}\{\|\ary y - \ary x\|\}$. 
Define ``the ball around lattice" as $B_{\Lattice}(r) = \{\ary x\in \R^n: \text{dist}(\ary x, \Lattice) < r\}$. For $\ary y\in B_{\Lattice}(\lambda_1(\Lattice) / 2)$, the (unique) closest vector to $\ary y$ in $\Lattice$ is given by $\kappa_{\Lattice}(\ary y) = \mathrm{argmin}_{\ary x\in \Lattice}\{\|\ary y - \ary x\|\}$. For convenience, we omit the $\lambda_1(\Lattice) / 2$ term and define $B_{\Lattice}$ as $B_{\Lattice}(\lambda_1(\Lattice) / 2)$, over which $\kappa_{\Lattice}$ is uniquely defined.

The dual of a lattice $\Lattice\in\R^n$ is defined as
\[ \Lattice^*:= \set{  \ary{y}\in \R^n: \ipd{\ary{y}}{\ary{x}}\in\Z \text{ for all } \ary{x}\in \Lattice  }. \]
If $\mat{B}$ is a basis of a full-rank lattice $\Lattice$, then $\mat{B}^{-T}$ is a basis of $\Lattice^*$. The determinant of a full-rank lattice $\Lattice(\mat{B})$ is $\det(\Lattice(\mat{B})) = |\det(\mat{B})|$.

\begin{lemma}[Poisson Summation Formula]\label{lemma:poisson1}
    For any lattice $\Lattice$ and any Schwartz function $f: \R^n \to \C$, we have $f(\Lattice) = \det(\Lattice^*)\hat{f}(\Lattice^*)$. 
\end{lemma}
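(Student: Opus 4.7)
The plan is to reduce the general lattice case to the classical Poisson summation on $\Z^n$ by a change of basis, and then establish the $\Z^n$ version via Fourier series of the periodization of $f$.

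First, I would write $\Lattice = \mat{B}\Z^n$ for some basis matrix $\mat{B}$ of the full-rank lattice, so $\det(\Lattice) = |\det \mat B|$ and $\Lattice^* = \mat{B}^{-T}\Z^n$. Define $g: \R^n \to \C$ by $g(\ary y) := f(\mat B \ary y)$; this is still Schwartz. Then $f(\Lattice) = \sum_{\ary m \in \Z^n} f(\mat B \ary m) = g(\Z^n)$, so it suffices to prove Poisson for $\Z^n$ applied to $g$ and then unwind the change of variables. A short computation using the substitution $\ary x = \mat B \ary y$ gives
\[
    \hat{g}(\ary w) = \int_{\R^n} f(\mat B \ary y) e^{-2\pi {\rm i} \ipd{\ary y}{\ary w}} d\ary y = \frac{1}{|\det \mat B|}\, \hat{f}\bigl(\mat B^{-T}\ary w\bigr),
\]
so $\hat{g}(\Z^n) = \frac{1}{|\det \mat B|}\hat{f}(\mat B^{-T}\Z^n) = \det(\Lattice^*)\,\hat{f}(\Lattice^*)$, where I use $\det(\Lattice^*) = 1/\det(\Lattice)$.

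It therefore remains to prove the $\Z^n$-case: $g(\Z^n) = \hat{g}(\Z^n)$. For this, consider the periodization $G(\ary x) := \sum_{\ary m \in \Z^n} g(\ary x + \ary m)$. Because $g$ is Schwartz, this sum converges absolutely and uniformly on compact sets (together with all its derivatives), so $G$ is a smooth $\Z^n$-periodic function on $\R^n/\Z^n$. Expand $G$ as a Fourier series $G(\ary x) = \sum_{\ary k \in \Z^n} c_{\ary k}\, e^{2\pi {\rm i}\ipd{\ary k}{\ary x}}$ where the Fourier coefficients are
\[
    c_{\ary k} = \int_{[0,1)^n} G(\ary x)\, e^{-2\pi {\rm i}\ipd{\ary k}{\ary x}} d\ary x = \int_{\R^n} g(\ary x)\, e^{-2\pi {\rm i}\ipd{\ary k}{\ary x}} d\ary x = \hat{g}(\ary k),
\]
after unfolding the sum over $\ary m$ using absolute convergence. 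The smoothness of $G$ (again from Schwartz decay) ensures its Fourier series converges pointwise to $G$, so evaluating at $\ary x = \ary 0$ yields $\sum_{\ary m} g(\ary m) = G(\ary 0) = \sum_{\ary k} \hat{g}(\ary k)$, which is the desired identity.

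No step is really an obstacle since $f$ is Schwartz; the only subtlety to be careful about is justifying the interchange of sum and integral in the computation of $c_{\ary k}$ and the pointwise convergence of the Fourier series of $G$ at $\ary 0$, both of which follow immediately from the rapid decay of $g$ and $\hat g$. The result is essentially a textbook derivation, and the novelty in our use of it lies only in recording the $\det(\Lattice^*)$ factor coming from the change of basis.
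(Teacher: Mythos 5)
Your proof is correct and is the standard textbook derivation of the Poisson summation formula: change of basis to reduce to $\Z^n$, then periodize and compare Fourier coefficients. The paper states this lemma as a well-known preliminary fact and does not prove it, so there is no paper proof to compare against; your argument is a complete and valid justification of the cited result.
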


\paragraph{Gaussians and lattices.}
For any $s > 0$, define the Gaussian function on $\R^n$ with parameter $s$ following the convention in~\cite{MicciancioRegev07}
\[
    \forall \ary{x}\in \R^n, ~\rho_{s}(\ary{x}) = \exp(-\pi \|\ary{x}\|^2/s^2).
\]
For any $\ary{c}\in\R^n$, define $\rho_{s, \ary{c}}(\ary{x}):= \rho_{s}(\ary{x} - \ary{c})$. 
The subscripts $s$ and $\ary{c}$ are taken to be 1 and $\ary{0}$ (respectively) when omitted.
Note that the standard deviation of $\rho_s$ is $s/\sqrt{2\pi}$. 
The Fourier transform for Gaussian satisfies $\hat{\rho_s} = s^n \rho_{1/s}$.
From Poisson summation formula we have
$\rho_s(\Lattice) = s^n\cdot \det(\Lattice^*)\cdot \rho_{1/s}(\Lattice^*)$. 

For a full-rank, symmetric, positive definite $n\times n$ matrix $\mat \Sigma$, define the Gaussian function on $\R^n$ with parameter $\sqrt{\mat \Sigma}$ following the convention in \cite{DBLP:conf/eurocrypt/MicciancioP12}
\[
    \forall \ary{x}\in \R^n, ~
	  \rho_{\sqrt{\mat\Sigma}}(\ary{x}) = \exp(-\pi\cdot \ary{x}^T\mat\Sigma^{-1}\ary x).
\]

\iffull
For any $\ary{c}\in\R^n$, real $s > 0$, and $n$-dimensional lattice $\Lattice$, define the discrete Gaussian distribution $D_{\Lattice+\ary{c}, s}$ as
\[ \forall \ary{x}\in \Lattice+\ary{c}, ~ D_{\Lattice+\ary{c}, s}(\ary{x}) = \frac{\rho_{s}(\ary{x})}{\rho_{s}(\Lattice+\ary{c})}.\]
Similarly, for a full-rank, symmetric, positive definite $n\times n$ matrix $\mat \Sigma$, define the discrete Gaussian distribution $D_{\Lattice+\ary{c}, \sqrt{\mat \Sigma}}$ as
\[ \forall \ary{x}\in \Lattice+\ary{c}, ~ D_{\Lattice+\ary{c}, \sqrt{\mat \Sigma}}(\ary{x}) = \frac{\rho_{\sqrt{\mat \Sigma}}(\ary{x})}{\rho_{\sqrt{\mat \Sigma}}(\Lattice+\ary{c})}.\]
\fi

\ifllncs
For any $\ary{c}\in\R^n$ and $n$-dimensional lattice $\Lattice$, let $s$ to be a positive real or a full-rank, symmetric, positive definite $n\times n$ matrix, define the discrete Gaussian distribution $D_{\Lattice+\ary{c}, s}$ as
\[ \forall \ary{x}\in \Lattice+\ary{c}, ~ D_{\Lattice+\ary{c}, s}(\ary{x}) = \frac{\rho_{s}(\ary{x})}{\rho_{s}(\Lattice+\ary{c})}.\]
\fi

\iffull
The following Gaussian tail bound over lattices is due to Banaszczyk.
\begin{lemma}[Lemma~1.5~of~\cite{banaszczyk1993new}]\label{lemma:Bana93}
    For any $n$-dimensional lattice $\Lattice$, and $r\geq \frac{1}{\sqrt{2\pi}}$, $\ary{c}\in\R^n$,
    \begin{equation}
        \begin{split}
            &\rho( \Lattice \setminus B^n(r\sqrt{n}) ) < \left( r\sqrt{2\pi e}\cdot e^{-\pi r^2}\right)^n\rho(\Lattice),  \\
            &\rho( (\Lattice -\ary{c}) \setminus  B^n( r\sqrt{n} ) )< 2\left( r\sqrt{2\pi e}\cdot e^{-\pi r^2}\right)^n\rho(\Lattice).
        \end{split}
    \end{equation}
\end{lemma}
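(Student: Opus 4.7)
The plan is to prove this by the classical exponential-reweighting trick combined with the Poisson summation formula (\Cref{lemma:poisson1}). The starting observation is that for any point $\ary{x}$ with $\|\ary{x}\|\geq r\sqrt{n}$ and any $c\in(0,1)$, we have the pointwise inequality
\[
    e^{-\pi\|\ary{x}\|^2} \;\leq\; e^{-\pi c r^2 n}\cdot e^{-\pi(1-c)\|\ary{x}\|^2},
\]
since multiplying the right-hand exponent by the nonnegative quantity $c(\|\ary{x}\|^2-r^2n)$ only increases it. Summing over the lattice points outside the ball and trivially extending the sum to all of $\Lattice$ yields
\[
    \rho(\Lattice\setminus B^n(r\sqrt{n})) \;\leq\; e^{-\pi c r^2 n}\cdot \rho_{1/\sqrt{1-c}}(\Lattice).
\]

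For the first inequality, I would then apply Poisson summation to rewrite $\rho_{1/\sqrt{1-c}}(\Lattice) = (1-c)^{-n/2}\det(\Lattice^*)\cdot \rho_{\sqrt{1-c}}(\Lattice^*)$ and use $\rho_{\sqrt{1-c}}(\ary{y})\leq \rho(\ary{y})$ termwise (because $\sqrt{1-c}<1$ shrinks the Gaussian), so $\rho_{\sqrt{1-c}}(\Lattice^*)\leq \rho(\Lattice^*)$. Combining with Poisson applied to $\Lattice$ without shift ($\rho(\Lattice)=\det(\Lattice^*)\rho(\Lattice^*)$) gives $\rho_{1/\sqrt{1-c}}(\Lattice)\leq (1-c)^{-n/2}\rho(\Lattice)$. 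The final step is to minimize $e^{-\pi c r^2 n}(1-c)^{-n/2}$ in $c$; setting the derivative to zero produces $1-c = 1/(2\pi r^2)$, which lies in $(0,1)$ exactly when $r\geq 1/\sqrt{2\pi}$, and plugging back yields the bound $(r\sqrt{2\pi e}\cdot e^{-\pi r^2})^n\rho(\Lattice)$.

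For the shifted version, the same reweighting step gives $\rho((\Lattice-\ary{c})\setminus B^n(r\sqrt{n}))\leq e^{-\pi c r^2 n}\rho_{1/\sqrt{1-c}}(\Lattice-\ary{c})$. The subtlety is that Poisson summation applied to a shift introduces character phases:
\[
    \rho_{1/\sqrt{1-c}}(\Lattice-\ary{c}) \;=\; (1-c)^{-n/2}\det(\Lattice^*)\sum_{\ary{y}\in\Lattice^*}\rho_{\sqrt{1-c}}(\ary{y})\,e^{2\pi {\rm i}\ipd{\ary{y}}{\ary{c}}},
\]
so one must pass through the triangle inequality, bounding the phased sum in absolute value by the corresponding unphased sum. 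This costs at worst a benign factor and recovers the same exponential dependence on $r$; the stated factor of $2$ absorbs this handling. With the same optimization of $c$ as before, we conclude the second bound.

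The main obstacle is the shifted case: one has to be careful that the Poisson sum on $\Lattice-\ary{c}$ is reduced to a sum of nonnegative Gaussian weights on $\Lattice^*$, rather than the oscillating complex exponentials that appear naturally. Everything else is a routine optimization in a single parameter $c$, controlled by the condition $r\geq 1/\sqrt{2\pi}$ which is exactly the threshold at which the optimizing $c$ becomes admissible.
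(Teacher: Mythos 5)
The paper itself does not prove \Cref{lemma:Bana93}; it is imported verbatim from Banaszczyk, so there is no in-paper proof to compare against. Your proposal, however, is a correct and complete proof, and it is the standard modern argument: an exponential reweighting (Chernoff-type) bound on the tail, a single application of \Cref{lemma:poisson1} to transfer the reweighted Gaussian sum to the dual, termwise monotonicity of $\rho_s$ in $s$, and a one-parameter optimization that fixes $1-c = 1/(2\pi r^2)$, admissible precisely when $r \geq 1/\sqrt{2\pi}$. All of these steps check out, including the algebra $e^{-\pi c r^2}(1-c)^{-1/2} = r\sqrt{2\pi e}\,e^{-\pi r^2}$ at the optimum.

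One point worth tightening: your handling of the shifted case is more cautious than it needs to be. After the shifted Poisson step, the character phases $e^{2\pi {\rm i}\ipd{\ary{y}}{\ary{c}}}$ have modulus one, so the triangle inequality (or the equivalent observation that only the cosine terms survive, each bounded by one) gives exactly
\[
\rho_{1/\sqrt{1-c}}(\Lattice - \ary{c}) \;\leq\; \rho_{1/\sqrt{1-c}}(\Lattice)
\]
with \emph{no} extra multiplicative loss at all. Your phrase ``costs at worst a benign factor'' is therefore misleading: the argument incurs no factor, and it actually proves the shifted bound \emph{without} the stated factor of $2$, i.e.\ $\rho\bigl((\Lattice - \ary{c})\setminus B^n(r\sqrt{n})\bigr) < \bigl(r\sqrt{2\pi e}\,e^{-\pi r^2}\bigr)^n \rho(\Lattice)$, which of course implies the weaker inequality as stated in the lemma. (The factor of $2$ in Banaszczyk's original formulation stems from a different route to the shifted case; your route simply does not need it.) This is a cosmetic issue in the write-up, not a gap.
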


As a direct corollary, for $r > \frac{C}{\sqrt{2\pi}}\alpha\sqrt{n}$ with $C > 1$ be a constant, we have that
\[\rho_\alpha( (\Lattice -\ary{c}) \setminus B^n(r) )< 2^{-\Omega(n)}\rho_\alpha(\Lattice).\]

\begin{lemma}[Lemma~2.10~\cite{banaszczyk1995inequalities}]\label{lemma:Bana95}
For any $n$-dimensional lattice $L$, $\ary{c}\in\R^n$, $r>0$, one has 
\[ \rho( (L - \ary{c})\setminus r \cB^n_\infty) < \left( 2n \cdot e^{-\pi r^2}\right) \rho(L).\]
\end{lemma}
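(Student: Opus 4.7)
The plan is to exploit the decomposition $\R^n \setminus r\cB^n_\infty = \bigcup_{i=1}^n \bigl(\{\ary{x} : x_i \geq r\}\cup \{\ary{x} : x_i \leq -r\}\bigr)$, which writes the complement of the $\ell_\infty$-ball as a union of exactly $2n$ closed half-spaces. By the union bound it then suffices to prove that for each unit vector $\ary{u} \in \{\pm \ary{e}_1,\ldots,\pm \ary{e}_n\}$ and half-space $H_{\ary{u}} := \{\ary{x} : \ipd{\ary{u}}{\ary{x}}\geq r\}$, one has $\rho\bigl((L-\ary{c})\cap H_{\ary{u}}\bigr) \leq e^{-\pi r^2}\rho(L)$. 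Summing the $2n$ such bounds yields exactly the claimed factor $2n\cdot e^{-\pi r^2}$.

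For a fixed unit $\ary{u}$ I would apply a Chernoff-style shifting trick. For any parameter $t>0$, inside $H_{\ary{u}}$ the indicator is pointwise dominated by $\exp(2\pi t(\ipd{\ary{u}}{\ary{x}}-r))$, so extending the sum from $H_{\ary{u}}\cap(L-\ary{c})$ to all of $L-\ary{c}$ (all summands are positive) and then completing the square gives
\[
\rho\bigl((L-\ary{c})\cap H_{\ary{u}}\bigr) \leq e^{-2\pi t r}\sum_{\ary{x}\in L-\ary{c}}\exp\bigl(-\pi\|\ary{x}\|^2+2\pi t\ipd{\ary{u}}{\ary{x}}\bigr) = e^{-2\pi t r+\pi t^2}\,\rho\bigl(L-\ary{c}-t\ary{u}\bigr),
\]
where I used $\|\ary{u}\|=1$. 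Choosing $t=r$ collapses the prefactor to $e^{-\pi r^2}$, so the whole half-space bound reduces to the shift-monotonicity statement $\rho(L-\ary{c}-r\ary{u})\leq \rho(L)$. This in turn I would derive from the Poisson summation formula (\Cref{lemma:poisson1}) together with $\hat{\rho}=\rho$: writing $\rho(L+\ary{w})=\det(L^*)\sum_{\ary{y}\in L^*}\rho(\ary{y})e^{2\pi i\ipd{\ary{w}}{\ary{y}}}$ and pairing $\ary{y}$ with $-\ary{y}$ turns the exponential sum into a nonnegative cosine sum, which is bounded by $\det(L^*)\rho(L^*)=\rho(L)$ after another application of Poisson summation at $\ary{w}=\ary{0}$.

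The main obstacle is precisely the shift-monotonicity step. While the inequality $\rho(L+\ary{w})\leq \rho(L)$ is intuitive, it is genuinely a discrete statement about lattice Gaussian masses and is not provable by any direct termwise comparison; the Poisson dualization is essential. One subtlety to keep track of is that the cosine-sum identity only yields a bound on $|\rho(L+\ary{w})|$, and to conclude the one-sided inequality $\rho(L+\ary{w})\leq \rho(L)$ one uses that the primal sum is manifestly a sum of positive reals, so absolute value and value agree. Once shift-monotonicity is in hand, the half-space union bound and the Chernoff optimization collapse to a few lines, and the proof is complete.
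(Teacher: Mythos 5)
Your argument is correct and is essentially the proof that the cited Banaszczyk (1995) reference uses; the paper itself states the lemma without proof, so there is no in-paper argument to compare against. The three ingredients you isolate --- the $2n$-half-space decomposition of the complement of $r\cB^n_\infty$, the exponential-tilting/Chernoff bound with the optimizing choice $t=r$, and the shift-monotonicity $\rho(L+\ary{w})\le\rho(L)$ obtained from Poisson summation --- are exactly the standard ones, and your observation that the one-sided inequality follows because the primal sum is a sum of positive reals (so absolute value equals value) is the right way to close the Poisson step.
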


\begin{lemma}[Claim~8.1~\cite{DBLP:conf/stoc/0001S17}]\label{lemma:gaussianintsum}
    For any $n\geq 1$, $s>0$,    \[  s^n(1+2e^{-\pi s^2})^n\leq \rho_s(\Z^n)\leq s^n(1+(2+1/s)e^{-\pi s^2})^n. \]
\end{lemma}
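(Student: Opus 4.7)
The plan is to reduce both bounds to a one-dimensional estimate via Poisson summation and the tensor structure of the Gaussian, and then prove the one-dimensional estimate by a direct integral comparison. Because $\rho_{1/s}$ factorizes across coordinates, $\rho_{1/s}(\Z^n) = (\rho_{1/s}(\Z))^n$. Combined with Poisson summation (\Cref{lemma:poisson1}) and $\hat{\rho_s} = s^n \rho_{1/s}$, together with $(\Z^n)^* = \Z^n$ having unit covolume, this gives $\rho_s(\Z^n) = s^n (\rho_{1/s}(\Z))^n$. Therefore both sides of the desired inequality reduce to the one-dimensional statement
\[
    1 + 2e^{-\pi s^2} \;\le\; \rho_{1/s}(\Z) \;\le\; 1 + (2 + 1/s)\, e^{-\pi s^2},
\]
where $\rho_{1/s}(\Z) = 1 + 2\sum_{k \ge 1} e^{-\pi k^2 s^2}$.

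The lower bound is immediate: keep only the $k=\pm 1$ pair, which contributes exactly $2e^{-\pi s^2}$, and discard the remaining nonnegative terms. For the upper bound, after isolating the $k=0$ and $k=\pm 1$ contributions it suffices to show
\[
    \sum_{k \ge 2} e^{-\pi k^2 s^2} \;\le\; \tfrac{1}{2s}\, e^{-\pi s^2}.
\]
Since $x \mapsto e^{-\pi x^2 s^2}$ is nonincreasing on $[0,\infty)$, the usual integral test $f(k) \le \int_{k-1}^{k} f(x)\, dx$ yields $\sum_{k \ge 2} e^{-\pi k^2 s^2} \le \int_{1}^{\infty} e^{-\pi x^2 s^2}\, dx$, and the substitution $y = x s$ rewrites the right-hand side as $\tfrac{1}{s} \int_{s}^{\infty} e^{-\pi y^2}\, dy$. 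Thus the whole upper bound collapses to establishing the uniform Gaussian-tail inequality
\[
    \int_{s}^{\infty} e^{-\pi y^2}\, dy \;\le\; \tfrac{1}{2}\, e^{-\pi s^2}, \qquad s \ge 0.
\]

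I expect this last inequality to be the main technical step. I would prove it by setting $g(s) := \tfrac{1}{2}\,e^{-\pi s^2} - \int_s^\infty e^{-\pi y^2}\,dy$, observing $g(0)=0$ and $g'(s) = e^{-\pi s^2}(1 - \pi s)$. So $g$ is nondecreasing on $[0, 1/\pi]$, starting from $0$, hence nonnegative there; on $[1/\pi, \infty)$ it is decreasing, but a coarse Mills-type bound $\int_s^\infty e^{-\pi y^2}\,dy \le \tfrac{1}{2\pi s}\,e^{-\pi s^2}$ (from $y^2 - s^2 \ge 2s(y-s)$ for $y\ge s$) shows $g(s) \ge \bigl(\tfrac{1}{2} - \tfrac{1}{2\pi s}\bigr) e^{-\pi s^2} \ge 0$ once $s \ge 1/\pi$. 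The obstacle here is precisely that the naive Mills bound alone is too weak for small $s$ (where $1/(2\pi s) > 1/2$), while the target bound $\tfrac{1}{2} e^{-\pi s^2}$ is tight exactly at $s=0$; bridging the two regimes by the monotonicity of $g$ is what makes the upper bound hold uniformly in $s$, and raising the resulting $1$-dimensional inequalities to the $n$-th power completes the proof.
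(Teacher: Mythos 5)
Your proof is correct. The paper states this lemma only by citation (to Claim~8.1 of Aggarwal--Stephens-Davidowitz) and does not reproduce an argument, so there is no in-paper proof to compare against. Your derivation --- Poisson summation combined with the tensor structure to reduce to the one-dimensional bound $1 + 2e^{-\pi s^2} \le \rho_{1/s}(\Z) \le 1 + (2+1/s)e^{-\pi s^2}$, then the integral test with the substitution $y = xs$, and finally the uniform Gaussian-tail inequality $\int_s^\infty e^{-\pi y^2}\,dy \le \tfrac12 e^{-\pi s^2}$ handled by monotonicity of $g$ near $s=0$ and the Mills-type bound for $s \ge 1/\pi$ --- is complete, and each step checks out (in particular, $g(0)=0$, $g'(s) = e^{-\pi s^2}(1-\pi s)$, and the Mills bound from $y^2 - s^2 \ge 2s(y-s)$ are all verified). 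Your closing observation that neither the monotonicity argument nor the Mills bound suffices alone, and that the two regimes must be stitched at $s = 1/\pi$, is exactly the right thing to notice about why the constant $\tfrac12$ is attainable uniformly in $s$.
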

\fi

\paragraph{Smoothing parameter. } 
We recall the definition of smoothing parameter for Gaussian over lattices and some useful facts.
\begin{definition}[Smoothing parameter~\cite{MicciancioRegev07}]
    For any lattice $\Lattice$ and positive real $\epsilon > 0$, the smoothing parameter $\eta_\epsilon(\Lattice)$ is the smallest real $s > 0$ such that $\rho_{1/s}(\Lattice^*\setminus\{\ary{0}\}) \leq \epsilon$.
\end{definition}

\iffull
\begin{lemma}[Lemma~2.12~\cite{DBLP:journals/jacm/Regev09}]\label{lemma:smoothingMR07}
    For any $n$-dimensional lattice $\Lattice$, and any real $\epsilon>0$, 
    \[ \eta_\epsilon(\Lattice)\leq \lambda_n(\Lattice) \cdot \sqrt{\ln(2n(1+1/\epsilon))/\pi} . \]
\end{lemma}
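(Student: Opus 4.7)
The plan is to follow the classical two-step route: first pass from $\Lattice$ to a sublattice generated by short linearly independent vectors, and then bound the Gaussian mass on the dual of that sublattice by a Banaszczyk-style tail argument calibrated so that $s := \lambda_n(\Lattice)\cdot\sqrt{\ln(2n(1+1/\epsilon))/\pi}$ is exactly the threshold at which the sum drops below $\epsilon$.

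First I would pick $n$ linearly independent lattice vectors $\ary{v}_1,\ldots,\ary{v}_n\in\Lattice$ with $\|\ary{v}_i\|=\lambda_i(\Lattice)\leq\lambda_n(\Lattice)$, and let $\Lattice'\subseteq\Lattice$ be the full-rank sublattice they generate. Since $\Lattice'\subseteq\Lattice$ implies $\Lattice^*\subseteq(\Lattice')^*$, the Gaussian sum can only grow when we pass to the dual of the sublattice, and hence $\rho_{1/s}(\Lattice^*\setminus\{\ary{0}\})\leq\rho_{1/s}((\Lattice')^*\setminus\{\ary{0}\})$. This reduces the problem to bounding $\eta_\epsilon(\Lattice')$, and it suffices to show that for the chosen $s$ the right-hand side is at most $\epsilon$.

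Next I would analyze $(\Lattice')^*$ via the dual basis. Writing $B=[\ary{v}_1|\cdots|\ary{v}_n]$, the dual basis is $D=B(B^TB)^{-1}$ with columns $\ary{d}_1,\ldots,\ary{d}_n$, and the Gram–Schmidt vectors of $D$ taken in reverse order satisfy $\|\widetilde{\ary{d}_{n+1-i}}\|=1/\|\widetilde{\ary{v}_i}\|\geq 1/\lambda_n(\Lattice)$. Any nonzero $\ary{y}\in(\Lattice')^*$ admits a unique expansion $\ary{y}=\sum_i c_i\ary{d}_i$ with $c_i\in\Z$, and looking at the largest index $k$ with $c_k\neq 0$ one gets that the $k$-th ``Gram–Schmidt coordinate'' of $\ary{y}$ is at least $1/\lambda_n(\Lattice)$ in absolute value. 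Partitioning $(\Lattice')^*\setminus\{\ary{0}\}$ according to this largest nonzero index, each of the $n$ classes is a coset-like family lying outside a strip of half-width $1/\lambda_n(\Lattice)$ in the corresponding Gram–Schmidt direction, and a one-dimensional Gaussian tail estimate together with a marginalization (using that $\rho$ factors across the orthogonal decomposition given by Gram–Schmidt) bounds the Gaussian mass of each class by $2\exp(-\pi(s/\lambda_n(\Lattice))^2)\cdot(1+o(1))$. Summing over the $n$ classes and plugging in the chosen $s$ yields the bound $n\cdot 2\exp(-\ln(2n(1+1/\epsilon)))\leq\epsilon/(1+1/\epsilon)\leq\epsilon$, as required.

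The main obstacle I expect is keeping the constants crisp at the tail-bound step: the naive one-dimensional Gaussian tail estimate gives a clean $\exp(-\pi(s/\lambda_n)^2)$ decay only for a single coset, so to get exactly the factor $2n(1+1/\epsilon)$ inside the logarithm one must correctly handle (i) the doubling from summing over $\pm c_k$, (ii) the $n$-fold union bound over which Gram–Schmidt coordinate witnesses the lower bound, and (iii) the sum over all remaining integer coefficients $c_1,\ldots,c_{k-1}$, which contributes the multiplicative $(1+1/\epsilon)$-type correction via the geometric series in $c_k$. These are routine manipulations analogous to the ones used in Banaszczyk's tail lemma (Lemma 1.5 above), and no new idea is needed beyond careful bookkeeping.
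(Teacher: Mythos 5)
The paper only cites this lemma from Regev (ultimately Micciancio--Regev, Lemma~3.3) and does not reprove it, so there is no paper proof to compare against; your outline follows that classical argument in its overall structure (pass to a full-rank sublattice $\Lattice'$ spanned by short independent vectors, use $\Lattice^*\subseteq(\Lattice')^*$, move to an orthogonal Gram--Schmidt frame for the dual basis with axes of length $\geq 1/\lambda_n$, and partition $(\Lattice')^*\setminus\{0\}$ by a pivot coordinate that is forced to be a nonzero integer). That much is the right idea.

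However, the pivot step is internally inconsistent as written. You Gram--Schmidt $D$ \emph{in reverse order} (which is what makes the dual-basis norm identity hold), but then select the \emph{largest} index $k$ with $c_k\ne 0$ and claim the corresponding Gram--Schmidt coordinate of $\ary{y}=\sum_ic_i\ary{d}_i$ is a nonzero integer. With reverse-order Gram--Schmidt the pivot coordinate at the largest index is $c_k$ plus a non-integer combination of $c_1,\ldots,c_{k-1}$, so it need not be bounded away from zero and the strip argument fails. The two standard fixes are: take the \emph{smallest} nonzero index (then $c_1,\ldots,c_{k-1}$ vanish and the pivot coordinate is exactly $c_k\in\Z\setminus\{0\}$), or keep "largest" but Gram--Schmidt $D$ in \emph{forward} order. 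In the forward-order case the exact dual Gram--Schmidt identity no longer applies, but Cauchy--Schwarz still gives the needed norm lower bound: $\ary{v}_i\perp\ary{d}_1,\ldots,\ary{d}_{i-1}$ and $\ipd{\ary{d}_i}{\ary{v}_i}=1$, so $\ipd{\widetilde{\ary{d}_i}}{\ary{v}_i}=1$ and hence $\|\widetilde{\ary{d}_i}\|\ge 1/\|\ary{v}_i\|\ge 1/\lambda_n$. A second, milder issue is the per-class bound: after marginalizing the remaining free integer coefficients the $n$ classes contribute $S,\ S(1+S),\ \ldots,\ S(1+S)^{n-1}$ with $S:=\sum_{m\ne 0}\exp(-\pi s^2m^2/\lambda_n^2)$, so the total is $(1+S)^n-1$ rather than $n\cdot 2\exp(-\pi(s/\lambda_n)^2)(1+o(1))$. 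Checking $(1+S)^n-1\le\epsilon$ at the threshold $s=\lambda_n\sqrt{\ln(2n(1+1/\epsilon))/\pi}$ does work, but it needs a slightly sharper tail estimate on $S$ (e.g.\ $m^2\ge 3m-2$ rather than $m^2\ge m$) together with $\epsilon/(1+\epsilon)\le\ln(1+\epsilon)$, and the slack you call "$1+o(1)$" is really an $O(1)$ factor that must be shown to be absorbed. So no new idea is missing, but both the pivot convention and the final accounting must be corrected before this is a proof.
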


\begin{lemma}[Claim~2.13~\cite{DBLP:journals/jacm/Regev09}]\label{lemma:smoothingRegev09}
    For any $n$-dimensional lattice $\Lattice$, and any real $\epsilon>0$, 
    \[ \eta_\epsilon(\Lattice)\ge \sqrt{\frac{\ln 1/\epsilon}{\pi}}\frac{1}{\lambda_1(\Lattice^*)} . \]
\end{lemma}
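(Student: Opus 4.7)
The plan is to obtain the lower bound by exhibiting a single dual lattice vector whose Gaussian mass alone already forces $s$ to be large. By definition, $\eta_\epsilon(\Lattice)$ is the smallest positive $s$ such that $\rho_{1/s}(\Lattice^* \setminus \{\ary{0}\}) \le \epsilon$, so it suffices to show that if $s < \sqrt{\ln(1/\epsilon)/\pi}\,/\,\lambda_1(\Lattice^*)$, then the mass of $\Lattice^*\setminus\{\ary{0}\}$ under $\rho_{1/s}$ already exceeds $\epsilon$ on a single term.

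Concretely, first I would let $\ary{v}^* \in \Lattice^* \setminus \{\ary{0}\}$ be any shortest nonzero vector of the dual lattice, so that $\|\ary{v}^*\| = \lambda_1(\Lattice^*)$. Since $\ary{v}^*$ is one of the summands defining $\rho_{1/s}(\Lattice^* \setminus \{\ary{0}\})$ and $\rho_{1/s}$ is non-negative, we have the trivial lower bound
\[
\rho_{1/s}(\Lattice^* \setminus \{\ary{0}\}) \;\ge\; \rho_{1/s}(\ary{v}^*) \;=\; \exp\!\bigl(-\pi s^2 \lambda_1(\Lattice^*)^2\bigr),
\]
using the definition $\rho_{1/s}(\ary{x}) = \exp(-\pi \|\ary{x}\|^2 s^2)$.

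Then I would simply invert this relation. For any $s$ satisfying $\rho_{1/s}(\Lattice^* \setminus \{\ary{0}\}) \le \epsilon$, the bound above forces $\exp(-\pi s^2 \lambda_1(\Lattice^*)^2) \le \epsilon$, equivalently
\[
s \;\ge\; \sqrt{\frac{\ln(1/\epsilon)}{\pi}} \cdot \frac{1}{\lambda_1(\Lattice^*)}.
\]
Taking the infimum over all such $s$ yields the stated lower bound on $\eta_\epsilon(\Lattice)$.

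There is essentially no obstacle here: the argument uses only the definition of the smoothing parameter, the definition of the Gaussian function $\rho_{1/s}$, and non-negativity of the summands. The only minor subtlety is to note that the infimum in the definition of $\eta_\epsilon$ is actually attained (or arbitrarily well approximated from above by a continuity/monotonicity argument in $s$), so the inequality passes from ``all admissible $s$'' to $\eta_\epsilon(\Lattice)$ itself. No Poisson summation or smoothing machinery is needed; the bound is the weakest possible one-term lower bound on the dual theta series.
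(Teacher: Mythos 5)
Your proof is correct, and it is the standard one-term argument for this bound (the paper cites the claim from Regev's paper without reproducing the proof). You correctly unpack the convention $\rho_{1/s}(\ary{x}) = \exp(-\pi s^2 \|\ary{x}\|^2)$, drop to the single shortest dual vector, and invert; the monotonicity-in-$s$ remark at the end is the right way to handle passing from ``all admissible $s$'' to $\eta_\epsilon(\Lattice)$.

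One cosmetic observation: since $\ary{v}^*$ and $-\ary{v}^*$ are both nonzero dual vectors of the same length, you could keep both terms and get the marginally stronger bound $\eta_\epsilon(\Lattice) \ge \sqrt{\ln(2/\epsilon)/\pi}\,/\,\lambda_1(\Lattice^*)$; this is unnecessary here since the lemma as stated asks only for the single-term version, but it is worth noticing that your one-term lower bound is not tight even at the level of this crude argument.
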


\begin{lemma}[Claim~3.8~of~\cite{DBLP:journals/jacm/Regev09}]\label{lemma:regevclaim3.8}
    For any $n$-dimensional lattice $\Lattice$, $\ary{c}\in\R^n$, $\epsilon>0$, and $r\geq \eta_\epsilon(\Lattice)$ 
    \[ \rho_{r}(\Lattice+\ary{c})\in r^n\det(\Lattice^*)(1\pm \epsilon) . \]
\end{lemma}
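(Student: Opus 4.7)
The plan is to prove this by applying the Poisson summation formula to the shifted Gaussian on the lattice $\Lattice$, and then to argue that the only non-negligible contribution on the Fourier side comes from $\ary{w} = \ary{0}$, with all remaining terms controlled by the smoothing parameter condition.

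First I would rewrite $\rho_r(\Lattice + \ary c)$ as $g(\Lattice)$, where $g(\ary x) := \rho_r(\ary x + \ary c) = \rho_{r,-\ary c}(\ary x)$, so that Poisson summation (\Cref{lemma:poisson1}) applies directly to $g$ on the unshifted lattice $\Lattice$. This gives
\[
\rho_r(\Lattice + \ary c) \;=\; g(\Lattice) \;=\; \det(\Lattice^*)\,\hat{g}(\Lattice^*).
\]
Next I would compute $\hat{g}$ by combining two facts recalled in the preliminaries: the Fourier shift identity $\hat{g}(\ary w) = \hat{\rho_r}(\ary w)\cdot \exp(2\pi {\rm i}\ipd{\ary c}{\ary w})$ (the version of \eqref{eqn:Fourier_shift1} with $\ary v = \ary c$), and the self-duality of Gaussians, $\hat{\rho_r} = r^n \rho_{1/r}$. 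Substituting yields
\[
\rho_r(\Lattice + \ary c) \;=\; r^n \det(\Lattice^*) \sum_{\ary w \in \Lattice^*} \rho_{1/r}(\ary w)\,\expb{2\pi {\rm i} \ipd{\ary c}{\ary w}}.
\]

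The $\ary w = \ary 0$ term contributes exactly $r^n \det(\Lattice^*)$, since $\rho_{1/r}(\ary 0) = 1$. The main step is then to bound the remaining sum by $\epsilon$ in absolute value. Using the triangle inequality and $|\expb{2\pi {\rm i} \ipd{\ary c}{\ary w}}| = 1$,
\[
\Bigl|\sum_{\ary w \in \Lattice^* \setminus \{\ary 0\}} \rho_{1/r}(\ary w)\,\expb{2\pi {\rm i} \ipd{\ary c}{\ary w}}\Bigr|
\;\le\; \rho_{1/r}(\Lattice^* \setminus \{\ary 0\}).
\]
Now the hypothesis $r \ge \eta_\epsilon(\Lattice)$ and the monotonicity of $\rho_{1/r}(\ary w)$ in $r$ (for fixed $\ary w \ne \ary 0$) imply $\rho_{1/r}(\Lattice^* \setminus \{\ary 0\}) \le \rho_{1/\eta_\epsilon(\Lattice)}(\Lattice^* \setminus \{\ary 0\}) \le \epsilon$, which is exactly the defining property of the smoothing parameter.

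Putting the two contributions together gives $\rho_r(\Lattice + \ary c) \in r^n \det(\Lattice^*)(1 \pm \epsilon)$, as claimed. I do not expect a real obstacle here: the only minor subtlety is making sure to apply the Fourier shift to the correct side (shifting the \emph{lattice} corresponds to a phase on the \emph{Fourier} side), and keeping track of signs, but once the shift is absorbed into $g$ the argument is a direct application of Poisson summation plus the smoothing-parameter bound.
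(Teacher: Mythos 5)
Your proof is correct and is exactly the standard Poisson-summation argument used in Regev's original Claim 3.8 (the paper cites this lemma from \cite{DBLP:journals/jacm/Regev09} without reproducing a proof). Absorbing the shift into the function, applying \Cref{lemma:poisson1}, using the shift identity and $\hat{\rho_r}=r^n\rho_{1/r}$, isolating the $\ary w=\ary 0$ term, and bounding the tail by $\rho_{1/r}(\Lattice^*\setminus\{\ary 0\})\le\epsilon$ via monotonicity in $r$ and the definition of $\eta_\epsilon$ is precisely the intended route, with no gaps.
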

\fi

\paragraph{$q$-ary lattices. }
Given $n < m\in\N$ and a modulus $q\geq 2$, for $\mat{A}\in\Z_q^{n\times m}$, define $q$-ary lattices as
\[
    \begin{split}
    \Lattice_q(\mat{A}) =& \set{ \ary{x}\in\Z^m: \exists \ary{s}\in\Z^n \text{ such that }\ary{x} = \mat{A}^T\cdot \ary{s} + q\Z^m }; \\
    \Lattice_q^\perp(\mat{A}) =& \set{ \ary{x}\in\Z^m: \mat{A}\cdot \ary{x} = \ary{0} \pmod{q} }.
    \end{split}
\]
Those two lattices are dual of each other up to a factor of $q$, i.e., $\Lattice_q(\mat{A}) = q\cdot \Lattice_q^\perp(\mat{A})^*$.

\iffull
\begin{lemma}\label{lemma:linfty}
    Let $q \ge 2, m \ge 2n\log_2 q$, then for all but at most $q^{-0.16n}$ fraction of $\mat{A}\in \Z_q^{n\times m}$, we have
    \[ \lambda_1^\infty(\Lattice_q(\mat{A})) \ge \frac{q}{4}.   \]
\end{lemma}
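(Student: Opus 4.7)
The plan is to apply a union bound over candidate nonzero ``secrets'' $\ary{s} \in \Z_q^n$. First I would observe that any nonzero $\ary{x} \in \Lattice_q(\mat{A})$ with $\|\ary{x}\|_\infty < q/4$ must equal $\mat{A}^T \ary{s} \bmod q$ (using the symmetric representative in $(-q/2,q/2]^m$) for some nonzero $\ary{s} \in \Z_q^n$: since $\|\ary{x}\|_\infty < q/2$, $\ary{x}$ is the unique representative of its coset modulo $q$, and $\ary{s} \equiv 0$ would force $\ary{x} = \ary{0}$. Hence it suffices to bound
\[
\Pr_{\mat{A}}\!\left[\exists\, \ary{s} \in \Z_q^n \setminus \{\ary{0}\} : \|\mat{A}^T \ary{s} \bmod q\|_\infty < q/4\right].
\]

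For a fixed nonzero $\ary{s}$, set $d := \gcd(s_1, \dots, s_n, q) < q$ and $k := q/d \ge 2$. A short computation (using that $\gcd(s_1/d, \dots, s_n/d, q/d) = 1$ together with the uniformity of $\ary{a}_i$) shows that $\ipd{\ary{a}_i}{\ary{s}} \bmod q$ is uniformly distributed on the cyclic subgroup $d\Z/q\Z$ of size $k$, independently across $i = 1, \dots, m$. The number of elements of $d\Z/q\Z$ whose representative in $(-q/2, q/2]$ lies in $(-q/4, q/4)$ is exactly $2\lceil k/4\rceil - 1$ (equal to $k/2 - 1$ when $4 \mid k$ and $2\lfloor k/4\rfloor + 1$ otherwise). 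A case analysis on $k \bmod 4$ yields the uniform bound
\[
\frac{2\lceil k/4 \rceil - 1}{k} \;\le\; \frac{3}{5} \quad \text{for all integers } k \ge 2,
\]
with equality at $k = 5$. Consequently, for each nonzero $\ary{s}$ the event $\|\mat{A}^T \ary{s} \bmod q\|_\infty < q/4$ holds with probability at most $(3/5)^m$.

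Taking a union bound over fewer than $q^n$ nonzero secrets and plugging in $m \ge 2n\log_2 q$ gives
\[
\Pr_{\mat{A}}\!\left[\lambda_1^\infty(\Lattice_q(\mat{A})) < q/4\right] \;\le\; q^n \cdot (3/5)^m \;\le\; q^{n(1 + 2\log_2(3/5))} \;=\; q^{-0.474\ldots\, n} \;\le\; q^{-0.16 n},
\]
as claimed.

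The main obstacle is obtaining the sharp per-coordinate bound of $3/5$. The crude estimate $(k+1)/(2k) \le 3/4$ valid for all $k \ge 2$ would produce $q^n (3/4)^{2n\log_2 q} = q^{n(1 + 2\log_2(3/4))} = q^{0.17\ldots\, n}$, whose exponent is \emph{positive}, so the union bound is useless. The subtlety is that the worst case is not $k = 2$ (where only $0$ is short, giving fraction $1/2$) but $k = 5$, where $0, \pm q/5$ all lie in $(-q/4, q/4)$; the constant $3/5$ is exactly the threshold that makes $1 + 2\log_2(3/5) < 0$, so the union bound decays exponentially in $n$.
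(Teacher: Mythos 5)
Your proof is correct and takes a genuinely different route. The paper fixes a nonzero coordinate $s_1$ of $\ary{s}$, conditions on $a_2,\dots,a_n$, and randomizes only over $a_1$: the value $\ipd{\ary{a}}{\ary{s}}\bmod q$ is then uniform on a \emph{coset} $v+\gcd(s_1,q)\Z/q\Z$, and the worst-case fraction of a coset landing in $(-q/4,q/4)$ is $2/3$ (attained at $k=3$, e.g.\ $q=15$, $s_1=5$, $v=2$). You instead randomize over all of $\ary{a}_i$, so the image is the full \emph{subgroup} $d\Z/q\Z$ with $d=\gcd(s_1,\dots,s_n,q)$, which always contains $0$; this forces a symmetric count $2\lceil k/4\rceil-1$ and gives the sharper per-coordinate bound $3/5$, tight at $k=5$. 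Both constants comfortably beat the threshold $2^{-1/2}\approx 0.707$ that would make $1+2\log_2(\cdot)<0$, so the union bound over $q^n$ secrets closes either way: the paper's $2/3$ gives exponent roughly $-0.17n$, yours gives roughly $-0.47n$, and both are $\le -0.16n$. The subgroup formulation you use is the cleaner abstraction (no case analysis on which $s_i$ is nonzero, and the bound is genuinely tight over $k\ge 2$), while the paper's one-coordinate conditioning is a quicker argument that sacrifices a constant. Your closing remark about the naive bound $(k+1)/(2k)\le 3/4$ being unusable is a nice observation, though it slightly overstates the difficulty: the paper's $2/3$ already suffices without the full symmetry argument you invoke.
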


\begin{proof}
The lemma is proven when $q$ is a prime in~\cite[Lemma~5.3]{DBLP:conf/stoc/GentryPV08}. Here we extend the proof to a general $q$.
 
For any fixed non-zero $\ary{s}\in\Z^n_q$, wlog assuming $s_1$ is a non-zero entry of $\ary{s}$. 
Then for any $\ary{a}\in\Z_q^n$, $y := \ipd{\ary{a}}{\ary{s}}\mod q $ can be written as $y = s_1 a_1 + v\mod q$ for some $v\in\Z_q$. We observe that for any $q\in\N$, for any $v\in\Z_q$, for any non-zero $s_1\in\Z_q$, 
$$  \Pr_{a_1\in\Z_q}[~ s_1 a_1 +v \bmod q \in(-q/4, q/4)\cap \Z ~] \leq 2/3, $$
where the equality holds when $q \in 3^k \cdot\N$ for some $k\geq 1$, $s_1\in (q/3)\cdot \Z/q\Z$, $s_1\neq 0$, and for some $v\in\Z_q$ (for example, when $q = 15$, $s_1 = 5$, and $v = 2$).

Therefore, over the randomness of $\mat{A}\in \Z_q^{n\times m}$, the probability that $\mat{A}^T \ary{s} = \ary{y} \bmod q$ for some $\ary{y}\in \Z^m$ such that $\|\ary{y}\|_{\infty}<q/4$ is at most $(2/3)^m \leq (3/2)^{-2n\log_2 q} \leq q^{-1.16n}$. Applying a union bound over all $\ary{s}\in\Z_q^n$ completes the proof of~\Cref{lemma:linfty}.
\end{proof}
\paragraph{Lattice problems.}

We have formally defined the LWE, $\QLWE$, $\LWEstate$, and $\QLWE^{\sf phase}$ problems in the introduction. Now let us recall the definitions for other lattice problems.

The shortest vector problem (SVP) asks to find a lattice vector of length $\lambda_1$. 
More generally, let $\gamma(n)\geq 1$ be an approximation factor, we consider the approximation version of SVP and its close variants. 

\begin{definition}[Approximate SVP]
    Given a basis $\mat{B}$ of an $n$-dimensional lattice $\Lattice$, the $\SVP_\gamma$ problem asks to output a non-zero lattice vector $\mat{B}\ary{x}$, $\ary{x}\in\Z^n\setminus{\set{\ary{0}}}$, such that $\|\mat{B}\ary{x}\|\leq \gamma(n)\cdot \lambda_1(\Lattice)$.
\end{definition}

\begin{definition}[GapSVP]
    Given a basis $\mat{B}$ of an $n$-dimensional lattice $\Lattice$ and a number $d>0$, the $\GAP\SVP_\gamma$ problem asks to decide whether $\lambda_1(\Lattice)\leq d$ or $\lambda_1(\Lattice)> d\cdot\gamma(n)$.
\end{definition}

\begin{definition}[Shortest independent vector problem (SIVP)]
    Given a basis $\mat{B}$ of an $n$-dimensional lattice $\Lattice$, the $\SIVP_\gamma$ problem asks to output a set of $n$ linearly independent vectors of length at most $\gamma(n)\cdot \lambda_n(\Lattice)$.
\end{definition}

\begin{definition}[Discrete Gaussian Sampling Problem (DGS)]\label{def:DGS}
    Given a basis $\mat{B}$ of an $n$-dimensional lattice $\Lattice$ and a parameter $s>0$, the $\DGS_s$ problem asks to output a vector whose distribution is statistically close to $D_{\Lattice, s}$.  
\end{definition}

\begin{definition}[Quantum Discrete Gaussian Sampling Problem ($\QDGS$)]\label{def:QDGS}
    Given a basis $\mat{B}$ of an $n$-dimensional lattice $\Lattice$ and a parameter $s>0$, the $\QDGS_s$ problem asks to output a quantum state that is $2^{-\Omega(n)}$-close to $\sum_{\ary{v}\in \Lattice} \rho_{s}(\ary{v})\ket{\ary{v}}$ in trace distance.
\end{definition}

If a quantum algorithm solves $\QDGS_s$, then it immediately solves $\DGS_{s/\sqrt{2}}$ by simply measuring the quantum state. Let us also recall the relationships among $\DGS$,  $\GAP\SVP$, and $\SIVP$. 

\begin{lemma}[Lemma~3.20 of \cite{DBLP:journals/jacm/Regev09}]\label{lemma:GapSVP2DGS}
    For any $\gamma = \gamma(n)\geq 1$, there exists a polynomial time reduction from $\GAP\SVP_{100\sqrt{n}\gamma}$ for $\Lattice$ to $\DGS_{\sqrt{n}\gamma/\lambda_1(\Lattice^*)}$ for $\Lattice^*$. 
\end{lemma}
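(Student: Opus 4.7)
The plan is to invoke the $\DGS_s$ oracle on the dual lattice $\Lattice^*$ with parameter $s=\sqrt n\,\gamma/\lambda_1(\Lattice^*)$ polynomially many times, producing i.i.d.\ short dual vectors $\ary v_1,\dots,\ary v_N\la D_{\Lattice^*,s}$, and then turn these samples into a polynomial-time $\GAP\SVP_{100\sqrt n\gamma}$ decision procedure for $\Lattice$. The parameter $s$ is stated in terms of $\lambda_1(\Lattice^*)$, but only the oracle needs to evaluate this quantity; from the reduction's viewpoint $s$ merely names the target distribution that the oracle is promised to sample from.

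First I would use Banaszczyk's tail bound (Lemma 1.5 of the excerpt) to argue that $\|\ary v_i\|\le s\sqrt n=n\gamma/\lambda_1(\Lattice^*)$ except with probability $2^{-\Omega(n)}$, and that among $N=\poly(n)$ such samples one can, with overwhelming probability, extract $n$ linearly independent vectors. This subset certifies $\lambda_n(\Lattice^*)\le n\gamma/\lambda_1(\Lattice^*)$, which combined with Banaszczyk's transference $1\le \lambda_1(\Lattice)\cdot\lambda_n(\Lattice^*)\le n$ is the combinatorial engine generating the $\sqrt n\gamma$ separation between the two cases.

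Second, I would exploit the $n$ short dual samples to search for a short primal vector. The full-rank sublattice $\Lattice'\subseteq \Lattice^*$ they span has dual $(\Lattice')^{*}\supseteq\Lattice$ with explicit basis $\mat V^{-T}$, where $\mat V$ stacks the chosen samples as columns. In the \textsc{yes} case any $\ary u\in \Lattice$ with $\|\ary u\|\le d$ lies in $(\Lattice')^{*}$ and has integer coordinates in the basis $\mat V^{-T}$ of magnitude polynomially bounded via $\|\mat V\|$, reducing the decision to a $\poly(n)$-time enumeration over a small, explicit set of candidate coordinate tuples; in the \textsc{no} case no such small-coordinate combination of columns of $\mat V^{-T}$ can be a vector of $\Lattice$, by the transference bound above.

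The main technical obstacle I expect is converting the column-wise length bound on $\mat V$ into enumeration-quality control on $\mat V^{-T}$: a worst-case column bound on $\mat V$ does not automatically entail a small condition number for $\mat V$. I anticipate handling this either by oversampling and greedily choosing a well-conditioned subset of $n$ samples, or by arguing concentration of the Gram matrix $\mat V^T\mat V$ when $s$ is safely above $\eta_\epsilon(\Lattice^*)$, so that the discrete Gaussian behaves like a continuous one with covariance proportional to $s^2\mat I$. The $\sqrt n$ factor in the approximation ratio $100\sqrt n\gamma$ then traces cleanly to the Banaszczyk tail bound, $\gamma$ to the oracle parameter, and the constant $100$ absorbs the polynomial overhead of this conditioning step together with the slack in Banaszczyk's transference inequality.
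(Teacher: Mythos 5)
The paper does not prove this lemma; it is quoted verbatim from Regev's work. So there is no in-paper proof to compare against, and I assess your proposal on its own merits.

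There is a genuine gap in the second step. You reduce the decision to ``enumerate the integer coordinates of a putative short $\ary{u}\in\Lattice$ in the basis $\mat{V}^{-T}$ dual to the sampled $\mat{V}$.'' Even if each coordinate $\lvert(\mat{V}^{T}\ary{u})_i\rvert$ were bounded by some $M=\poly(n)$, the candidate set has size $(2M+1)^{n}$, which is exponential in $n$; a per-coordinate polynomial bound does not give a polynomial-size enumeration. Worse, the per-coordinate bound you invoke, $\lvert\ipd{\ary{v}_i}{\ary{u}}\rvert\le\|\ary{v}_i\|\cdot\|\ary{u}\|\lesssim s\sqrt{n}\cdot d = n\gamma d/\lambda_1(\Lattice^*)$, is not controlled by the $\GAP\SVP$ promise at all: the promise constrains $\lambda_1(\Lattice)$, and transference relates $\lambda_1(\Lattice)$ only to $\lambda_n(\Lattice^*)$, not to $\lambda_1(\Lattice^*)$. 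Concretely, take $\Lattice=\Lattice^*=\Z^n$ and a YES instance with $d=1$: then $\lambda_1(\Lattice^*)=1$, the oracle parameter is $s=\sqrt{n}\gamma$, the samples have norm $\Theta(n\gamma)$, so the coordinates of the short vector $\ary{e}_1$ in your dual basis are of size $\Theta(n\gamma)$, and your enumeration has size $\Theta((n\gamma)^n)$. The ``conditioning of $\mat{V}$'' issue you flag is real but secondary; the scheme is already broken in the best-conditioned case. The root problem is that samples from $D_{\Lattice^*,s}$ are not in general short relative to $1/\|\ary{u}\|$, so they do not give usable low-height coordinates for a primal short vector.

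Your approach is also structurally different from Regev's Lemma~3.20, which does not attempt to locate a short primal vector at all. Regev's reduction uses the dual Gaussian samples $\ary{c}_1,\dots,\ary{c}_W$ only through the Aharonov--Regev style estimator $g(\ary{z})=\frac{1}{W}\sum_i\cos\!\left(2\pi\ipd{\ary{c}_i}{\ary{z}}\right)$, which (by Poisson summation) approximates the periodic Gaussian $\rho_{1/r}(\Lattice-\ary{z})/\rho_{1/r}(\Lattice)$, and it makes the GapSVP decision purely from the value of this estimator at suitably chosen test points $\ary{z}$. That route is decision-only, avoids any enumeration, and its soundness degrades gracefully when the oracle misbehaves (which must be tolerated because the reduction cannot know $\lambda_1$ in advance); your route is search-based and has no such tolerance. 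If you want to keep a search flavor, the transference/linearly-independent-samples observation in your first step is closer to the SIVP reduction (Lemma~3.17 in Regev), but it cannot be converted into a $\GAP\SVP$ decision in the way you describe.
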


\begin{lemma}[Lemma~3.17 of \cite{DBLP:journals/jacm/Regev09}]\label{lemma:SIVP2DGS}
    For any $\gamma>\omega(\sqrt{\log n})$, there exists a polynomial time reduction from $\SIVP_{2\sqrt{n}\gamma}$ for $\Lattice$ to $\DGS_{\gamma\lambda_n(\Lattice)}$ for $\Lattice$. 
\end{lemma}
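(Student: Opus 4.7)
The plan is to query the $\DGS_{\gamma\lambda_n(\Lattice)}$ oracle $\mathrm{poly}(n)$ times and extract $n$ short, linearly independent lattice vectors from the resulting samples. The setup is easy: since $\gamma = \omega(\sqrt{\log n})$, \Cref{lemma:smoothingMR07} gives $s := \gamma\lambda_n(\Lattice) > \eta_\epsilon(\Lattice)$ for some negligible $\epsilon$, placing us comfortably in the ``smooth'' regime for the discrete Gaussian $D_{\Lattice, s}$. Applying Banaszczyk's tail bound (\Cref{lemma:Bana93}) with $r=1$, any fixed sample $\ary{v}\sim D_{\Lattice, s}$ satisfies $\|\ary{v}\|\le \sqrt{n}\,s = \sqrt{n}\gamma\lambda_n(\Lattice)$ except with probability $2^{-\Omega(n)}$; a union bound over $\mathrm{poly}(n)$ samples keeps the failure probability negligible, and the slack factor of $2$ in the target bound $2\sqrt{n}\gamma\lambda_n(\Lattice)$ absorbs any looser estimates below. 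Note that $\lambda_n(\Lattice)$ itself is not known exactly, but the standard trick of running the reduction on $O(\log(\mathrm{poly}(n)))$ guesses of $\lambda_n(\Lattice)$ (obtained for instance from an approximate SIVP pre-processing, or by binary search over a polynomially bounded range) handles this transparently.

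The nontrivial step is showing that $\mathrm{poly}(n)$ samples suffice to obtain $n$ linearly independent vectors. I would proceed inductively: after collecting samples that span some subspace $V$ of dimension $k<n$, argue that the next sample lies in $\Lattice\setminus V$ with probability bounded away from $0$ (a constant such as $1/3$ is plenty). Since $\Lattice$ has $n$ linearly independent vectors of length at most $\lambda_n(\Lattice)$, we may pick $\ary{u}\in\Lattice\setminus V$ with $\|\ary{u}\|\le \lambda_n(\Lattice) = s/\gamma$. The three sets $\Lattice\cap V$, $(\Lattice\cap V)+\ary{u}$, and $(\Lattice\cap V)-\ary{u}$ are pairwise disjoint subsets of $\Lattice$, and the latter two lie entirely outside $V$. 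A shifting argument, combined with $\rho_s(\ary{u})\ge e^{-\pi/\gamma^2} = 1-o(1)$ and the smoothness statement \Cref{lemma:regevclaim3.8} applied to the sublattice $\Lattice\cap V$, shows $\rho_s\bigl((\Lattice\cap V)\pm\ary{u}\bigr)\ge(1-o(1))\,\rho_s(\Lattice\cap V)$, and hence $D_{\Lattice, s}(V)\le 1/2 + o(1)$. A Chernoff bound then guarantees $n$ linearly independent vectors after $O(n\log n)$ queries with overwhelming probability, from which a Gaussian-elimination pass extracts the desired $n$-tuple.

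The main technical obstacle I anticipate is making the comparison $\rho_s\bigl((\Lattice\cap V)+\ary{u}\bigr)\ge (1-o(1))\rho_s(\Lattice\cap V)$ fully rigorous, because $\Lattice\cap V$ has strictly smaller rank than $\Lattice$ and $\ary{u}\notin V$. The cleanest route is to split $\ary{u}=\ary{u}_\parallel+\ary{u}_\perp$ with $\ary{u}_\parallel\in V$ and $\ary{u}_\perp\in V^\perp$, use the Pythagorean identity to factor $\rho_s(\ary{w}+\ary{u}) = \rho_s(\ary{w}+\ary{u}_\parallel)\cdot \rho_s(\ary{u}_\perp)$ for $\ary{w}\in V$, and then apply \Cref{lemma:smoothingMR07} and \Cref{lemma:regevclaim3.8} to the lower-rank sublattice $\Lattice\cap V$, noting that $s = \gamma\lambda_n(\Lattice)\ge \eta_\epsilon(\Lattice\cap V)$ since the successive minima of $\Lattice\cap V$ are all dominated by $\lambda_n(\Lattice)$. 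This yields $\rho_s\bigl((\Lattice\cap V)+\ary{u}_\parallel\bigr)\in(1\pm\epsilon)\rho_s(\Lattice\cap V)$, after which the $\rho_s(\ary{u}_\perp)\ge e^{-\pi\|\ary{u}\|^2/s^2}\ge 1-o(1)$ factor closes the argument. Once this smoothing step is in place, the rest of the reduction is routine bookkeeping: output the $n$ linearly independent samples produced as the $\SIVP_{2\sqrt{n}\gamma}$ witness.
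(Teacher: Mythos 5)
Your overall architecture -- query $\DGS_{s}$ for $s = \gamma\lambda_n(\Lattice)$ polynomially many times, use Banaszczyk to keep samples inside $B^n(\sqrt{n}s)$, and argue inductively that each fresh sample escapes the span $V$ of the earlier ones with constant probability -- is the right one, and the factor-$2$ slack and the $\lambda_n$-guessing aside are both fine. The step that does not close is the one you yourself flagged as the obstacle: you justify $s \ge \eta_\epsilon(\Lattice\cap V)$ by asserting that ``the successive minima of $\Lattice\cap V$ are all dominated by $\lambda_n(\Lattice)$,'' but that is false. Take $\Lattice = \Z^2$ and $V = \mathrm{span}_\R\bigl((1,T)\bigr)$ for a large integer $T$; then $\Lattice\cap V = \Z\cdot(1,T)$ is a rank-$1$ primitive sublattice with $\lambda_1(\Lattice\cap V) = \sqrt{1+T^2}$, which dwarfs $\lambda_2(\Z^2)=1$. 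Even using only what you actually know -- that $V$ is spanned by lattice vectors of length at most $\sqrt{n}s$ -- gives $\lambda_k(\Lattice\cap V)\le\sqrt{n}s$, and plugging this into \Cref{lemma:smoothingMR07} yields $\eta_\epsilon(\Lattice\cap V) = O(s\cdot n\sqrt{\log(1/\epsilon)})$, which is far above $s$. So \Cref{lemma:regevclaim3.8} cannot be applied to $\Lattice\cap V$ at width $s$, and the desired inequality $\rho_s\bigl((\Lattice\cap V)+\ary{u}_\parallel\bigr)\ge(1-o(1))\rho_s(\Lattice\cap V)$ does not follow.

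The fix, and what Regev's argument effectively does, is to avoid invoking any smoothing of the sublattice $\Lattice\cap V$ at all. Keep your $\ary{u}\in\Lattice\setminus V$ with $\|\ary{u}\|\le\lambda_n(\Lattice)=s/\gamma$ and the observation that $\Lattice\cap V$, $(\Lattice\cap V)+\ary{u}$, and $(\Lattice\cap V)-\ary{u}$ are pairwise disjoint subsets of $\Lattice$ (the last two because $\ary{u}\notin V$ and $2\ary{u}\notin V$). Now, instead of projecting onto $V^\perp$, use the parallelogram identity $\|\ary{w}+\ary{u}\|^2 + \|\ary{w}-\ary{u}\|^2 = 2\|\ary{w}\|^2 + 2\|\ary{u}\|^2$, which gives the pointwise identity $\rho_s(\ary{w}+\ary{u})\cdot\rho_s(\ary{w}-\ary{u}) = \rho_s(\ary{w})^2\rho_s(\ary{u})^2$, and hence by AM--GM $\rho_s(\ary{w}+\ary{u}) + \rho_s(\ary{w}-\ary{u}) \ge 2\rho_s(\ary{w})\rho_s(\ary{u})$. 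Summing over $\ary{w}\in\Lattice\cap V$ gives $\rho_s\bigl((\Lattice\cap V)+\ary{u}\bigr) + \rho_s\bigl((\Lattice\cap V)-\ary{u}\bigr) \ge 2\rho_s(\ary{u})\rho_s(\Lattice\cap V)$, so $\rho_s(\Lattice)\ge\bigl(1+2\rho_s(\ary{u})\bigr)\rho_s(\Lattice\cap V)$ and $D_{\Lattice,s}(V)\le 1/\bigl(1+2\rho_s(\ary{u})\bigr)$. Since $\rho_s(\ary{u}) = e^{-\pi\|\ary{u}\|^2/s^2}\ge e^{-\pi/\gamma^2}$ is bounded below by a constant (and in fact $1-o(1)$ for $\gamma=\omega(\sqrt{\log n})$), this escape probability is bounded away from $1$ for \emph{every} proper subspace $V$ simultaneously, with no structural assumption on $\Lattice\cap V$ whatsoever. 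The Chernoff-plus-union-bound bookkeeping you describe afterwards then goes through unchanged.
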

\fi

\subsection{Quantum computation}

We assume readers are familiar with the basic concepts of quantum computation. All the quantum background we need in this paper are available in standard textbooks of quantum computation, e.g., \cite{DBLP:books/daglib/0046438}. When writing a quantum state such as $\sum_{x \in S } f(x)\ket{x}$, we typically omit the normalization factor except when needed. 

The trace distance between two quantum states $\rho$ and $\sigma$ is defined as $\td(\rho, \sigma) := \frac{1}{2}\tr|\rho-\sigma|$.
Note that when $\rho$ and $\sigma$ commute they are diagonal in the same basis,
\iffull 
\[ \rho = \sum_{i} r_i \ket{i}\bra{i}, ~~~ \sigma = \sum_{i} s_i \ket{i}\bra{i},   \]
\fi
\ifllncs
$\rho = \sum_{i} r_i \ket{i}\bra{i}, \sigma = \sum_{i} s_i \ket{i}\bra{i},$
\fi
for some orthonormal basis $\ket{i}$, then 
$\td(\rho, \sigma) = \frac{1}{2}\tr\left|\sum_i(r_i - s_i) \ket{i}\bra{i}\right| = \frac{1}{2}\sum_i|r_i - s_i|$.

The trace distance is preserved under unitary transformations, and is contractive under trace-preserving operations.

\begin{lemma}\label{lemma:difftotd}
    Let $\ket{\phi}$, $\ket{\psi}$ be un-normalized vectors s.t. $\|\ket{\phi}\|\geq \mu$ and $\|\ket{\phi} - \ket{\psi}\|\leq \epsilon$. Then 
    \[ \td\left( \frac{1}{\|\ket{\phi}\|}\ket{\phi}, \frac{1}{\|\ket{\psi}\|}\ket{\psi} \right) = \sqrt{1 - \left( \frac{|\bk{\phi}{\psi}|}{\|\ket{\phi}\|\|\ket{\psi}\|} \right)^2}\leq O\left(\sqrt{\frac{\epsilon}{\mu}}\right) .\]
\end{lemma}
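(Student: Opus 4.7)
The plan is to establish the stated equality and then the inequality separately. The equality is a standard pure-state trace distance identity: write $\ket{a} = \ket{\phi}/\|\ket{\phi}\|$ and $\ket{b} = \ket{\psi}/\|\ket{\psi}\|$, decompose $\ket{b} = \alpha\ket{a} + \beta\ket{a^\perp}$ with $|\alpha|^2 + |\beta|^2 = 1$, and observe that $\ket{a}\bra{a} - \ket{b}\bra{b}$ is supported on the two-dimensional space $\mathrm{span}\{\ket{a},\ket{a^\perp}\}$. In that basis the difference is a $2\times 2$ Hermitian traceless matrix whose eigenvalues are $\pm|\beta| = \pm\sqrt{1-|\alpha|^2}$. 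Since $|\alpha| = |\langle a | b\rangle|$, taking $\frac{1}{2}$ of the trace norm yields $\td = \sqrt{1 - |\langle a | b\rangle|^2}$, which is exactly the first equality.

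For the inequality, the plan is to lower bound the normalized overlap $c := |\bk{\phi}{\psi}|/(\|\ket\phi\|\|\ket\psi\|)$. Expand
\[
    \epsilon^2 \geq \|\ket\phi - \ket\psi\|^2 = \|\ket\phi\|^2 + \|\ket\psi\|^2 - 2\,\mathrm{Re}\,\bk{\phi}{\psi},
\]
and apply AM-GM ($\|\ket\phi\|^2 + \|\ket\psi\|^2 \geq 2\|\ket\phi\|\|\ket\psi\|$) to conclude $|\bk{\phi}{\psi}| \geq \mathrm{Re}\,\bk{\phi}{\psi} \geq \|\ket\phi\|\|\ket\psi\| - \epsilon^2/2$. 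Dividing through, $c \geq 1 - \epsilon^2/(2\|\ket\phi\|\|\ket\psi\|)$, and therefore
\[
    \td^2 = 1 - c^2 \leq 2(1-c) \leq \frac{\epsilon^2}{\|\ket\phi\|\|\ket\psi\|}.
\]

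It then remains to lower bound $\|\ket\phi\|\|\ket\psi\|$ in terms of $\mu$. By hypothesis $\|\ket\phi\| \geq \mu$, and by the triangle inequality $\|\ket\psi\| \geq \|\ket\phi\| - \epsilon \geq \mu - \epsilon$. I would split into two regimes: if $\epsilon \leq \mu/2$, then $\|\ket\psi\| \geq \mu/2$, giving $\td \leq \sqrt{2}\,\epsilon/\mu$, which is dominated by $\sqrt{2}\sqrt{\epsilon/\mu}$ because $\epsilon/\mu \leq 1$; if instead $\epsilon > \mu/2$, the trivial bound $\td \leq 1 \leq \sqrt{2}\sqrt{\epsilon/\mu}$ takes over. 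Combining the two regimes yields $\td = O(\sqrt{\epsilon/\mu})$.

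The only mild obstacle is the case analysis on $\epsilon$ versus $\mu$: the clean quadratic bound $\td = O(\epsilon/\mu)$ is actually available whenever $\epsilon$ is noticeably smaller than $\mu$, but breaks down as $\|\ket\psi\|$ can in principle shrink to $0$ when $\epsilon \approx \mu$, which is why the weaker $O(\sqrt{\epsilon/\mu})$ form is stated so that the bound remains meaningful (and trivially correct) across the whole range of $\epsilon$.
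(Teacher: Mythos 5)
Your proof is correct and complete. The paper states \Cref{lemma:difftotd} without proof in its preliminaries, so there is no reference argument to compare against; your argument is a clean, self-contained one. The first half (the traceless $2\times 2$ reduction giving $\td = \sqrt{1-|\alpha|^2}$) is the standard derivation of the pure-state trace-distance formula, and the second half correctly chains $\epsilon^2 \geq \|\ket\phi\|^2 + \|\ket\psi\|^2 - 2\,\mathrm{Re}\,\bk{\phi}{\psi}$ through AM--GM to get $\td^2 \leq \epsilon^2/(\|\ket\phi\|\|\ket\psi\|)$, and the two-regime split (against the possibility that $\|\ket\psi\|$ degenerates when $\epsilon$ is comparable to $\mu$) is exactly the device needed to land on $O(\sqrt{\epsilon/\mu})$ uniformly. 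One small presentational point: in the regime $\epsilon \leq \mu/2$ you actually prove the stronger $\td \leq \sqrt{2}\,\epsilon/\mu$, so it is worth flagging explicitly that the $\sqrt{\epsilon/\mu}$ form only dominates because you then have $\epsilon/\mu \leq 1/2$; the sentence as written says $\epsilon/\mu \leq 1$, which is true but slightly obscures that the hypothesis of the case gives you the tighter $1/2$. Also, one should tacitly assume $\ket\psi \neq 0$ (so the normalized state is defined); this is automatic when $\epsilon < \mu$, and when $\epsilon \geq \mu$ the claimed bound is $\Omega(1)$ and the statement is vacuously handled by $\td \leq 1$, as you note.
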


\begin{lemma}[Gentle measurement~\cite{Winter_1999}]\label{lemma:gentle_measurement}
Let $\rho$ be a quantum state and let $(\mat \Pi, \mat I - \mat \Pi)$ be a two-outcome projective measurement such that $\tr\left(\mat \Pi \rho\right) \ge 1 - \epsilon$. Let $\rho' = \frac{\mat \Pi \rho \mat \Pi}{\tr\left(\mat \Pi \rho\right)}$ be the state after applying the measurement and post-selecting on getting the first outcome. Then $\td(\rho, \rho') \le 2\sqrt{\epsilon}$. 
\end{lemma}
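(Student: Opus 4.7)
The plan is to prove the pure-state case first and then extend to arbitrary density matrices by decomposing $\rho$ into its eigenbasis and combining the per-component estimates via convexity of the trace distance, with a small extra term coming from the rescaling by $\tr(\mat\Pi\rho)$ in the definition of $\rho'$.

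\textbf{Pure case.} Suppose first that $\rho = \kb{\psi}{\psi}$. Writing $\ket{\psi} = \mat\Pi\ket{\psi} + (\mat I - \mat\Pi)\ket{\psi}$ and using the hypothesis $\tr(\mat\Pi\rho) = \|\mat\Pi\ket{\psi}\|^2 \ge 1 - \epsilon$, the normalized post-measurement state is $\ket{\psi'} = \mat\Pi\ket{\psi} / \|\mat\Pi\ket{\psi}\|$, so the overlap is $|\bk{\psi}{\psi'}| = \|\mat\Pi\ket{\psi}\| \ge \sqrt{1-\epsilon}$. Combining with the standard identity $\td(\kb{\psi}{\psi}, \kb{\psi'}{\psi'}) = \sqrt{1 - |\bk{\psi}{\psi'}|^2}$ (which is essentially \Cref{lemma:difftotd}) already yields $\td(\rho, \rho') \le \sqrt{\epsilon}$ in the pure-state case.

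\textbf{Mixed case.} Diagonalize $\rho = \sum_i p_i \kb{\psi_i}{\psi_i}$ and set $t_i := \|\mat\Pi\ket{\psi_i}\|^2$, $\ket{\psi_i'} := \mat\Pi\ket{\psi_i}/\sqrt{t_i}$, and $t := \tr(\mat\Pi\rho) = \sum_i p_i t_i \ge 1-\epsilon$. Introduce the auxiliary ensemble $\tilde\rho := \sum_i p_i \kb{\psi_i'}{\psi_i'}$, which uses the original weights $p_i$ rather than the Bayesian-updated weights $q_i := p_i t_i / t$ appearing in $\rho' = \sum_i q_i \kb{\psi_i'}{\psi_i'}$. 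By convexity of trace distance, the pure-state bound, and concavity of $\sqrt{\,\cdot\,}$ (Jensen),
\[
\td(\rho, \tilde\rho) \;\le\; \sum_i p_i \sqrt{1 - t_i} \;\le\; \sqrt{\textstyle\sum_i p_i (1-t_i)} \;=\; \sqrt{1-t} \;\le\; \sqrt{\epsilon}.
\]
It remains to estimate the reweighting contribution $\td(\tilde\rho, \rho') \le \tfrac{1}{2}\sum_i |p_i - q_i|$. Using $|t - t_i| \le (1-t) + (1-t_i)$ termwise gives $\sum_i p_i|t - t_i| \le 2(1-t) \le 2\epsilon$, hence $\td(\tilde\rho, \rho') \le \epsilon/t \le \epsilon/(1-\epsilon)$. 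Combining both bounds via the triangle inequality yields $\td(\rho, \rho') \le \sqrt{\epsilon} + \epsilon/(1-\epsilon)$, which is at most $2\sqrt{\epsilon}$ (the statement being vacuous for $\epsilon$ not small enough for this to improve over $\td \le 1$).

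\textbf{Main obstacle.} The only nontrivial step is the reweighting estimate in Step~3: one must avoid the naive bound $\td(\tilde\rho,\rho') \le \sqrt{\epsilon}$ that would cost another square root and instead exploit the fact that $\sum_i p_i(t-t_i) = 0$ so that the $L^1$ deviation of the weights is only of order $\epsilon$, not $\sqrt{\epsilon}$. Everything else is a routine application of convexity of trace distance and the fidelity--trace-distance identity for pure states already recorded in \Cref{lemma:difftotd}.
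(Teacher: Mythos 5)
The paper does not prove this lemma; it only states it and cites Winter~\cite{Winter_1999}. So there is no ``paper's own proof'' to compare against. Judged on its own merits, your proof is correct and self-contained, and it is a reasonable elementary alternative to the usual textbook argument. The standard route (Winter, and e.g.\ Wilde's textbook) goes through Uhlmann's theorem: one purifies $\rho$, observes that $(\sqrt{\mat\Pi}\otimes\mat I)\ket{\phi^\rho}$ purifies the subnormalized post-measurement state, deduces a fidelity lower bound $F\ge 1-\epsilon$, and finishes with a Fuchs--van de Graaf-type inequality. Your proof avoids Uhlmann entirely and works directly with the eigendecomposition, convexity of trace distance, the pure-state fidelity/trace-distance identity (your \Cref{lemma:difftotd}), and an $L^1$ reweighting estimate. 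That buys you elementariness; the fidelity route buys you slightly tighter constants and generalizes immediately to non-projective POVM elements $0\le\mat\Lambda\le\mat I$ (where your identity $\bk{\psi}{\psi'}=\|\mat\Pi\ket\psi\|$ would need $\mat\Pi$ replaced by $\sqrt{\mat\Lambda}$ and a bit more care). For the lemma as stated, with $\mat\Pi$ a projector, your argument suffices.

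Two small points you should tidy up. First, if some eigenvector $\ket{\psi_i}$ with $p_i>0$ satisfies $t_i=\|\mat\Pi\ket{\psi_i}\|^2=0$, then $\ket{\psi_i'}$ is undefined; set it to an arbitrary unit vector (the term contributes $q_i=0$ to $\rho'$, and $\sqrt{1-t_i}=1$ still upper-bounds the pure-state trace distance trivially), so both the convexity step and the reweighting step go through unchanged. Second, you should make explicit that the reweighting step uses $\sum_i|p_i-q_i|=\frac1t\sum_ip_i|t-t_i|$, which relies on $q_i=p_it_i/t$; writing the identity out avoids any ambiguity about where the $1/t$ comes from. With those two sentences added, the argument is complete, and your case split at the end ($\epsilon\le0.382$ handled by the estimate, $\epsilon>1/4$ handled by vacuity) correctly covers all $\epsilon\in(0,1)$.
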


\iffull
We need the following lemma about the trace distance between discrete Gaussian states.
\begin{lemma}\label{lemma:dist_between_gaussian_state}
When $q > 2\sqrt{n}\max(\beta_1, \beta_2)$ and $R \ge \frac{2\sqrt{n}}{\min(\beta_1, \beta_2)}$, the trace distance between \[\ket{\phi_1} = \sum_{e \in \Z_{qR}/R}\rho_{\beta_1}(e)\ket{e} \qquad \text{and} \qquad \ket{\phi_2} = \sum_{e \in \Z_{qR}/R}\rho_{\beta_2}(e)\ket{e}\] is at most $\sqrt{\frac{(\beta_1 - \beta_2)^2}{\beta_1^2 + \beta_2^2}}(1 + 2^{-\Omega(n)})$.
\end{lemma}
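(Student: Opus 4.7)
Since $\ket{\phi_1}$ and $\ket{\phi_2}$ are pure states, the trace distance between their normalizations is
\[
\td\!\left(\tfrac{\ket{\phi_1}}{\|\ket{\phi_1}\|},\tfrac{\ket{\phi_2}}{\|\ket{\phi_2}\|}\right)
= \sqrt{1-\frac{|\langle\phi_1|\phi_2\rangle|^2}{\|\ket{\phi_1}\|^2\,\|\ket{\phi_2}\|^2}},
\]
so the plan is to estimate the three Gaussian sums $\langle\phi_1|\phi_2\rangle,\;\|\ket{\phi_1}\|^2,\;\|\ket{\phi_2}\|^2$ and plug in. Using $\rho_a\cdot\rho_b=\rho_\gamma$ with $\gamma=ab/\sqrt{a^2+b^2}$, we have $\rho_{\beta_1}\rho_{\beta_2}=\rho_{\gamma}$ where $\gamma=\beta_1\beta_2/\sqrt{\beta_1^2+\beta_2^2}$, and $\rho_{\beta_i}^2=\rho_{\beta_i/\sqrt 2}$. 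Hence the three quantities we need are Gaussian sums with parameters $\gamma,\ \beta_1/\sqrt 2,\ \beta_2/\sqrt 2$ evaluated over the finite set $\Z_{qR}/R=\frac{1}{R}\Z\cap(-q/2,q/2]$.

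Next I would approximate each such finite sum by the continuous integral $\int_\R \rho_s(x)\,dx = s$ divided by the step size $1/R$, i.e.\ by $sR$. This requires two estimates. First, the truncation error from restricting $\frac{1}{R}\Z$ to the interval $(-q/2,q/2]$: since $q>2\sqrt n\max(\beta_1,\beta_2)$, Banaszczyk's tail bound (\Cref{lemma:Bana93}) applied to the rescaled lattice $\frac{1}{R}\Z$ shows the tail is a $2^{-\Omega(n)}$ fraction of the full lattice sum for each of the three parameters (note $\gamma\le\min(\beta_1,\beta_2)\le\max(\beta_1,\beta_2)$, so the tail condition holds for $\gamma$ too). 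Second, the discretization error from replacing the integral over $\R$ by the sum over $\frac{1}{R}\Z$: since $R\ge 2\sqrt n/\min(\beta_1,\beta_2)$, each of the parameters $\gamma,\beta_1/\sqrt 2,\beta_2/\sqrt 2$ exceeds the smoothing parameter $\eta_{2^{-\Omega(n)}}(\frac{1}{R}\Z)=\Theta(\sqrt n)/R$ (using \Cref{lemma:smoothingMR07} or the Poisson summation bound from \Cref{lemma:regevclaim3.8}), so
\[
\sum_{e\in\frac{1}{R}\Z}\rho_s(e)\;=\;sR\cdot(1\pm 2^{-\Omega(n)}).
\]
Combining the two yields $\sum_{e\in\Z_{qR}/R}\rho_s(e)=sR\cdot(1\pm 2^{-\Omega(n)})$ for $s\in\{\gamma,\beta_1/\sqrt 2,\beta_2/\sqrt 2\}$.

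Substituting,
\[
\frac{|\langle\phi_1|\phi_2\rangle|^2}{\|\ket{\phi_1}\|^2\,\|\ket{\phi_2}\|^2}
\;=\;\frac{(\gamma R)^2}{(\beta_1 R/\sqrt 2)(\beta_2 R/\sqrt 2)}\cdot(1\pm 2^{-\Omega(n)})
\;=\;\frac{2\beta_1\beta_2}{\beta_1^2+\beta_2^2}\cdot(1\pm 2^{-\Omega(n)}),
\]
so
\[
\td \;\le\; \sqrt{1-\frac{2\beta_1\beta_2}{\beta_1^2+\beta_2^2}(1-2^{-\Omega(n)})}
\;=\;\sqrt{\frac{(\beta_1-\beta_2)^2}{\beta_1^2+\beta_2^2}}\,(1+2^{-\Omega(n)}),
\]
after absorbing the multiplicative slack into the $(1+2^{-\Omega(n)})$ factor via $\sqrt{1+x}\le 1+x$.

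The only delicate step is the smoothing-parameter/Poisson argument for the lattice $\frac{1}{R}\Z$: the dual is $R\Z$, and we need $\rho_{1/s}(R\Z\setminus\{0\})\le 2^{-\Omega(n)}$, which follows from $sR\ge 2\sqrt n$ by a standard Banaszczyk-style bound. Everything else is routine algebraic manipulation of Gaussian sums, so this will be the main place to be careful with constants.
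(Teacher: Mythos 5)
Your proof is correct and takes essentially the same route as the paper's: express the trace distance via the inner-product overlap, combine the Gaussians using $\rho_{\beta_1}\rho_{\beta_2}=\rho_{\beta_1\beta_2/\sqrt{\beta_1^2+\beta_2^2}}$ and $\rho_{\beta_i}^2=\rho_{\beta_i/\sqrt2}$, and evaluate the three resulting sums over $\Z_{qR}/R$ up to $(1\pm 2^{-\Omega(n)})$ using Banaszczyk's tail bound for the truncation and Poisson summation (the smoothing condition $Rs\ge 2\sqrt n$ in disguise) for the discretization. The paper performs the same two estimates in sequence inside a single chain of equalities; your separation into an explicit ``truncation error'' and ``discretization error'' is just a restatement of that argument.
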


\begin{proof}
\iffull
\begin{align*}
\frac{\bk{\phi_1}{\phi_2}}{\|\ket{\phi_1}\|\|\ket{\phi_2}\|} =& \sum_{e \in \Z_{qR}/R}\rho_{\frac{\beta_1\beta_2}{\sqrt{\beta_1^2 + \beta_2^2}}}(e) \left/ \sqrt{\sum_{e \in \Z_{qR}/R}\rho_{\beta_1/\sqrt{2}}(e)\sum_{e \in \Z_{qR}/R}\rho_{\beta_2/\sqrt{2}}(e)} \right. \\
=& \sum_{e \in \Z/R}\rho_{\frac{\beta_1\beta_2}{\sqrt{\beta_1^2 + \beta_2^2}}}(e) \left/ \sqrt{\sum_{e \in \Z/R}\rho_{\beta_1/\sqrt{2}}(e)\sum_{e \in \Z/R}\rho_{\beta_2/\sqrt{2}}(e)}\right. (1 + 2^{-\Omega(n)}) \\
=& \frac{\beta_1\beta_2}{\sqrt{\beta_1^2 + \beta_2^2}} \cdot \frac{\sqrt{2}}{\sqrt{\beta_1\beta_2}} \sum_{e \in R\Z}\rho_{\frac{\sqrt{\beta_1^2 + \beta_2^2}}{\beta_1\beta_2}}(e) \left/ \sqrt{\sum_{e \in R\Z}\rho_{\sqrt{2}/\beta_1}(e)\sum_{e \in R\Z}\rho_{\sqrt{2}/\beta_2}(e)}\right. (1 + 2^{-\Omega(n)})\\
=& \frac{\sqrt{2\beta_1\beta_2}}{\sqrt{\beta_1^2 + \beta_2^2}}(1 + 2^{-\Omega(n)}),
\end{align*}
\fi

\ifllncs
\begin{align*}
&~\frac{\bk{\phi_1}{\phi_2}}{\|\ket{\phi_1}\|\|\ket{\phi_2}\|} \\
=&~ \sum_{e \in \Z_{qR}/R}\rho_{\frac{\beta_1\beta_2}{\sqrt{\beta_1^2 + \beta_2^2}}}(e) \left/ \sqrt{\sum_{e \in \Z_{qR}/R}\rho_{\beta_1/\sqrt{2}}(e)\sum_{e \in \Z_{qR}/R}\rho_{\beta_2/\sqrt{2}}(e)} \right. \\
=&~ \sum_{e \in \Z/R}\rho_{\frac{\beta_1\beta_2}{\sqrt{\beta_1^2 + \beta_2^2}}}(e) \left/ \sqrt{\sum_{e \in \Z/R}\rho_{\beta_1/\sqrt{2}}(e)\sum_{e \in \Z/R}\rho_{\beta_2/\sqrt{2}}(e)}\right. (1 + 2^{-\Omega(n)}) \\
=&~ \frac{\sqrt{2\beta_1\beta_2}}{\sqrt{\beta_1^2 + \beta_2^2}}\sum_{e \in R\Z}\rho_{\frac{\sqrt{\beta_1^2 + \beta_2^2}}{\beta_1\beta_2}}(e) \left/ \sqrt{\sum_{e \in R\Z}\rho_{\sqrt{2}/\beta_1}(e)\sum_{e \in R\Z}\rho_{\sqrt{2}/\beta_2}(e)}\right. (1 + 2^{-\Omega(n)})\\
=&~ \frac{\sqrt{2\beta_1\beta_2}}{\sqrt{\beta_1^2 + \beta_2^2}}(1 + 2^{-\Omega(n)}),
\end{align*}
\fi
where we use the Poisson summation formula and Banaszczyk's tail bound.

So their trace distance is at most
\[
\sqrt{1 - \left(\frac{\left|\bk{\phi_1}{\phi_2}\right|}{\|\ket{\phi_1}\|\|\ket{\phi_2}\|}\right)^2} \le \sqrt{\frac{\left(\beta_1 - \beta_2\right)^2}{\beta_1^2 + \beta_2^2}}(1 + 2^{-\Omega(n)}). \qedhere
\]
\end{proof}
\fi

We use the following quantum algorithms:

\begin{lemma}[Quantum Fourier Transform (QFT)~\cite{DBLP:Kitaev95}]
	Let $q\geq 2$ be an integer. The following unitary operator $\QFT_q$ can be implemented by $\poly(\log q)$ elementary quantum gates. When $\QFT_q$ is applied on a quantum state $\ket{\phi} := \sum_{x\in\Z_q} f(x)\ket{x}$, we have  
    \iffull
	\[\QFT_q \ket{\phi} = \sum_{y\in\Z_q} \sum_{x\in\Z_q} \frac{1}{\sqrt{q}}\cdot \omega_q^{xy}\cdot f(x)  \ket{y}. \]
    \fi
    \ifllncs
    $\QFT_q \ket{\phi} = \sum_{y\in\Z_q} \sum_{x\in\Z_q} \frac{1}{\sqrt{q}}\cdot \omega_q^{xy}\cdot f(x)  \ket{y}.$
    \fi
\end{lemma}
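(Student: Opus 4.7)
The plan is to build the circuit in two stages: first handle the power-of-two case by the standard Cooley--Tukey recursion, then reduce the general case to phase estimation, following Kitaev~\cite{DBLP:Kitaev95}. Throughout, note that the defining action is equivalent to specifying a unitary $U_q$ whose matrix entries are $(U_q)_{y,x} = \frac{1}{\sqrt{q}}\omega_q^{xy}$; unitarity is a straightforward check using $\sum_{x\in\Z_q}\omega_q^{x(y-y')}=q\cdot \mathbf{1}[y=y']$, so the only task is efficient implementability and correctness on a generic input $\ket{\phi}=\sum_x f(x)\ket{x}$.

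For $q = 2^n$, I would give the textbook recursive construction: apply a Hadamard to the most significant qubit, then controlled-phase gates of angle $2\pi/2^k$ for $k=2,\dots,n$ from the lower qubits into that qubit, recurse on the remaining $n-1$ qubits, and finally reverse the qubit order with $\lfloor n/2\rfloor$ SWAPs. By expanding $xy \bmod 2^n$ in binary and tracking which product terms each gate contributes, one verifies that the output amplitude of $\ket{y}$ is $\frac{1}{\sqrt{q}}\omega_q^{xy}f(x)$ summed over $x$, which is exactly the stated action. The gate count is $O(n^2)=O(\log^2 q)$, well within $\poly(\log q)$.

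For an arbitrary $q\geq 2$, I would invoke Kitaev's phase estimation trick. Let $N=2^n$ for $n=\Theta(\log q + \log(1/\varepsilon))$ and consider the unitary $V$ acting on $\Z_q$ by $V\ket{x}=\omega_q^{x}\ket{x}$, which is diagonal and trivially implementable; moreover $V^k$ for integer $k$ is equally easy. Running $n$-bit phase estimation on $V$ with eigenvector $\ket{x}$ produces an $n$-bit register that, after the $N$-point QFT (handled by the first stage) and rounding, yields $\lfloor Nx/q\rceil$ up to $\varepsilon$-error. Composing this with the inverse procedure, one obtains a unitary that implements $U_q$ to within $\ell_2$-error $\varepsilon$ on every computational basis state, and hence by linearity on any $\ket{\phi}=\sum_x f(x)\ket{x}$. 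Choosing $\varepsilon=2^{-\Omega(\log q)}$ keeps the gate count $\poly(\log q)$ and makes the approximation error negligible for our purposes; strictly speaking one should state the lemma as implementing a unitary within negligible operator-norm distance of the exact $\QFT_q$, which is standard in the literature and sufficient for all later applications in this paper.

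The main technical obstacle is the reduction to power-of-two QFT when $q$ is not a power of two, because the naive controlled-phase decomposition does not directly apply: phases $\omega_q^{xy}$ are not products of phases on individual qubit pairs of the two registers. The Kitaev phase-estimation route circumvents this by ``renting'' the prime factor $q$ through the diagonal oracle $V$, reducing everything to a power-of-two QFT plus $O(\log q)$ calls to a $\poly(\log q)$-size circuit. The only delicate point is the error analysis: one must verify that the rounding error in phase estimation translates into an $O(\varepsilon)$ trace-distance error on arbitrary inputs, which follows from a triangle inequality combined with the contractivity of trace distance under the final measurement-free post-processing. I would refer to~\cite{DBLP:Kitaev95} for this analysis rather than reproduce it.
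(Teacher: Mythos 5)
The paper itself gives no proof of this lemma---it is stated in the preliminaries and cited to Kitaev~\cite{DBLP:Kitaev95}---so there is no proof of the paper's own to compare against. Your high-level plan follows the same reference: the power-of-two case via the Cooley--Tukey circuit is described correctly, and for general $q$ one indeed needs a Kitaev-style phase-estimation reduction, with the correct caveat that the resulting circuit implements $\QFT_q$ only to within negligible operator-norm error.

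The phase-estimation step for general $q$, however, has a concrete gap. Phase estimation of the diagonal unitary $V\ket{x}=\omega_q^{x}\ket{x}$ with eigenvector $\ket{x}$ in the system register leaves that register \emph{unchanged} while writing (approximately) $\lfloor Nx/q\rceil$ into a fresh $n$-qubit ancilla; this is a reversible classical computation of a function of $x$, and it does not rotate $\ket{x}$ into any superposition. ``Composing this with the inverse procedure'' does not become the Fourier transform: composed with its own inverse it is the identity, and composed with the arithmetic map $\lfloor Nx/q\rceil\mapsto x$ it is merely a permutation, not a change of basis. What is actually needed---and is the crux of Kitaev's construction---is a two-register scheme: (a) apply the controlled phase $\ket{x}\ket{y}\mapsto\omega_q^{xy}\ket{x}\ket{y}$ to an ancilla prepared in (approximate) uniform superposition over $\Z_q$, producing $\ket{x}\otimes\QFT_q\ket{x}$; and then (b) \emph{erase} the first register by phase-estimating the cyclic shift $S\ket{y}=\ket{y+1\bmod q}$ on the second register, using the fact that $\QFT_q\ket{x}$ is an eigenvector of $S$ with eigenvalue $\omega_q^{-x}$. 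It is step~(b), phase estimation of the shift rather than of $V$, that does the work; your sketch phase-estimates the wrong operator and omits the erasure entirely, so the central idea of the construction is missing. Deferring the precision analysis to~\cite{DBLP:Kitaev95} is reasonable, but the skeleton as written would not assemble into a correct circuit.
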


We use the following lemma~\cite{ozols2013quantum} to change the amplitude of a state.
\begin{lemma}[Quantum rejection sampling]\label{lemma:quantumstatepre}
Let $f: D\to \C$ be a normalized amplitude of some quantum state $\ket{\phi_f}:=\sum_{x\in D} f(x)\ket{x}$. 
Let $\gamma: D\to [0,1]$ be a polynomial time computable function. 
There is a quantum algorithm that takes as input $\ket{\phi_f}$, outputs a state $\sum_{x\in D} \frac{1}{\sqrt{M}}\gamma(x) f(x) \ket{x}$ with probability $M$ where $M = \sum_{x\in D}\gamma^2(x) |f(x)|^2$.
\end{lemma}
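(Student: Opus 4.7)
The plan is to carry out the standard quantum rejection sampling construction: introduce an ancilla qubit and rotate it conditionally on the register value so that the amplitude gets multiplied by $\gamma(x)$ in the ``accept'' branch, then measure the ancilla and post-select.

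First I would append a single ancilla qubit initialized to $\ket{0}$, giving the state $\sum_{x\in D} f(x)\ket{x}\ket{0}$. Since $\gamma$ is polynomial-time computable and takes values in $[0,1]$, I can compute $\gamma(x)$ into a small auxiliary register to sufficient precision and then perform a controlled $Y$-rotation on the ancilla by angle $2\arccos(\gamma(x))$, followed by uncomputing the auxiliary register. This implements the isometry
\[
    \ket{x}\ket{0} \mapsto \ket{x}\bigl(\gamma(x)\ket{0} + \sqrt{1-\gamma(x)^2}\,\ket{1}\bigr),
\]
so the overall state becomes
\[
    \sum_{x\in D} f(x)\gamma(x)\,\ket{x}\ket{0} \;+\; \sum_{x\in D} f(x)\sqrt{1-\gamma(x)^2}\,\ket{x}\ket{1}.
\]

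Next I would measure the ancilla in the computational basis. By Born's rule, the probability of observing $0$ equals $\sum_{x\in D}|f(x)|^2 \gamma(x)^2 = M$, and conditioned on this outcome the remaining register is in the normalized state $\frac{1}{\sqrt{M}}\sum_{x\in D}\gamma(x)f(x)\ket{x}$, exactly as required.

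The only subtlety, which I would address but not dwell on, is the finite-precision implementation of the controlled rotation: since $\gamma(x)$ is computable in polynomial time, we can approximate it to $p$ bits in $\poly(p)$ time, and the induced error in the output state is $2^{-\Omega(p)}$ in $\ell_2$ distance, so choosing $p = \poly(\log|D|)$ makes the deviation from the ideal state negligible. This is the main (and essentially only) technical point; everything else is direct linear algebra.
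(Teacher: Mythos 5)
Your proof is correct and is the standard controlled-rotation argument for quantum rejection sampling. The paper itself does not give a proof of this lemma --- it cites \cite{ozols2013quantum} for it --- and your construction (ancilla rotation by angle $2\arccos(\gamma(x))$, measure, post-select on $0$) is precisely the argument found there, including the finite-precision remark.
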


In this paper we are interested in preparing the Gaussian state $\ket{ \sigma_{n, R} }:= \sum_{\ary{x}\in\Z^n\cap B^n(R\sqrt{n}) } \rho_R(\ary{x}) \ket{\ary{x}}$ for some radius $R\leq 2^{\poly(n)}$. 
Given \Cref{lemma:Bana93}, there is a $2^{-\Omega(n)}$ mass in the tail of $\rho_R(\ary{x})$ outside $B^n(R\sqrt{n})$. 
This means for any integer $P\in (2R\sqrt{n}, 2^{\poly(n)})$, $\ket{ \sigma_{n, R} }$ is $2^{-\Omega(n)}$ close to $\sum_{\ary{x}\in\Z_P^n } \rho_R(\ary{x}) \ket{\ary{x}}$. 
This also means we can prepare $\ket{ \sigma_{n, R} }$ by generating $n$ independent samples of one-dimensional Gaussian state $\ket{ \sigma_{1, R} }$, which can be done efficiently using~\cite{grover2002creating}.

\begin{lemma}[Gaussian state preparation]\label{lemma:Gaussianstateprep}
    Let $n, R\in \N$ such that $1\leq R\leq 2^{n^c}$ for some constant $c\geq 0$. 
    Then we can create a $\poly(n)$ size unitary $U$ that maps $\ket{\ary{0}}$ to a state within trace distance $2^{-\Omega(n)}$ from $\ket{\sigma_{n, R}}$ with $2n\upperrounding{n^c\cdot\log n}$ qubits. 
\end{lemma}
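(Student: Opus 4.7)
The plan is to factor the $n$-dimensional preparation as $n$ independent one-dimensional preparations and invoke Grover--Rudolph on each factor. Set $L := \upperrounding{n^c \log n}$ and $P := 2^L$, so that the hypothesis $R \le 2^{n^c}$ yields $P \gg 2R\sqrt{n}$. Each coordinate will be stored in an $L$-qubit register indexed by $\Z_P \cong (-P/2, P/2] \cap \Z$, which consumes $nL$ data qubits.

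First I would invoke Banaszczyk's tail bound (Lemma~\ref{lemma:Bana93}): the $\ell_2$ mass of $\rho_R$ supported on $\Z^n \setminus B^n(R\sqrt{n})$ is a $2^{-\Omega(n)}$ fraction of the total mass, and the same bound shows the periodized state $\ket{\sigma'} := \sum_{\ary{x}\in \Z_P^n}\rho_R(\ary x)\ket{\ary x}$ is $2^{-\Omega(n)}$-close to $\ket{\sigma_{n,R}}$ in $\ell_2$ after normalization, hence in trace distance via Lemma~\ref{lemma:difftotd}. Since $\ket{\sigma'}$ factorizes as $\bigotimes_{i=1}^n \ket{\sigma_{1,R,P}}$ with $\ket{\sigma_{1,R,P}} := \sum_{x \in \Z_P}\rho_R(x)\ket{x}$, it suffices to prepare $n$ copies of the one-dimensional discrete truncated Gaussian.

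Next I would produce each one-dimensional factor using the Grover--Rudolph construction~\cite{grover2002creating}. The hypothesis of that construction is efficient computation of partial CDFs of the target distribution, which is satisfied here because the Gaussian CDF can be evaluated via $\mathrm{erf}$ to $2^{-\poly(n)}$ precision in $\poly(L)$ time. The construction then builds $\ket{\sigma_{1,R,P}}$ through a cascade of $L$ adaptively controlled rotations, whose rotation angles are computed into an $L$-qubit ancilla register and uncomputed after use. Summing the $nL$ data qubits and $nL$ ancilla qubits gives the claimed bound of $2n\upperrounding{n^c \log n}$.

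The only delicate point, where I would concentrate the analysis, is the precision accounting. Each rotation must be implemented to precision $2^{-\Omega(n)}/(nL)$ so that the triangle inequality over all $n L$ gates compounds to an $\ell_2$ error of only $2^{-\Omega(n)}$ across the full tensor product; then combining with the $2^{-\Omega(n)}$ truncation error between $\Z^n \cap B^n(R\sqrt{n})$ and $\Z_P^n$ via Lemma~\ref{lemma:difftotd} yields the final $2^{-\Omega(n)}$ trace-distance bound. Because the normalization $\|\ket{\sigma_{1,R,P}}\|$ can be as small as $\min(1,R)^{1/2}$ up to a sub-exponential factor, one has to bound \emph{relative} rather than absolute amplitude errors throughout, but this is routine given the $\poly(n)$ bits of working precision afforded by the ancilla register.
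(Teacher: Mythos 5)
Your proposal matches the paper's argument step for step: truncate to $\Z_P^n$ via Banaszczyk's tail bound (Lemma~\ref{lemma:Bana93}), observe the resulting state factorizes into $n$ one-dimensional Gaussians, and prepare each factor with the Grover--Rudolph construction~\cite{grover2002creating}. The paper presents this only as a brief remark preceding the lemma; your version supplies the precision and qubit accounting that the paper leaves implicit, but the route is the same.
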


\section{Quantum Sub-exponential Time Algorithm for $\QLWE$}\label{sec:kuper}

In this section, we provide a quantum sub-exponential time algorithm designed to solve $\QLWE$ instances with specific amplitudes. More precisely, we consider scenarios where the discrete Fourier transform of these amplitudes has at least $2^{-\sqrt{n}\log q}$ mass on two distinct points. Our approach is built upon two key steps: (1) generate a DCP state for each quantum $\QLWE$ sample, resulting in a sub-exponential collection of DCP states; (2) employ the Kuperberg sieve technique \cite{DBLP:journals/siamcomp/Kuperberg05} on the DCP states to successfully recover the secret vector. Formally, we state the main theorem of this section as follows:

\begin{theorem}[Main theorem]\label{thm:subexp_alg_gen}
    For any efficiently computable normalized amplitude function $f:\Z\to \C$ such that there exists two distinct points $j_1$ and $j_2$ from $\Z_q$ with ${\rm gcd} (j_1 - j_2, q) = 1$ and $|\DFT_q(f)(j_1)|$ and $|\DFT_q(f)(j_2)|$ are both greater than $2^{-\sqrt{n}\log q}$ ($\DFT_q(f)$ is the discrete Fourier transform of $f$, defined as $\DFT_q(f)(j) = \frac{1}{\sqrt q}\sum_{e\in \Z} f(e)\omega_q^{je}$ for $j\in \Z_q$), there exists a quantum algorithm that, given $\ell = 2^{\Theta(\sqrt{n}\log q)}$ samples of vector $\ary a\gets \mathcal{U}(\Z_q^n)$ and $\QLWE$ state of form
    \[
        \QLWE = \sum_{e \in \Z} f(e) \ket{\ipd{\ary a}{\ary s} + e \bmod q},
    \]
    finds the secret vector $\ary s\in \Z_q^n$ within a time complexity of $2^{\Theta(\sqrt{n}\log q)}$.
\end{theorem}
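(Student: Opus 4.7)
The plan is to reduce each $\QLWE$ sample to a standard dihedral coset state, then run a Kuperberg-style sieve on the resulting collection of DCP states to recover $\ary s$.

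First, I would apply $\QFT_q$ to a given $\QLWE$ sample. A short computation shows the state becomes $\sum_{y\in\Z_q} \DFT_q(f)(y)\,\omega_q^{y\ipd{\ary a}{\ary s}}\ket{y}$, so all secret-dependent information sits in the phase while $\DFT_q(f)$ is a known, sample-independent amplitude. Next I would apply the quantum rejection sampling procedure of \Cref{lemma:quantumstatepre} with the filter $\gamma(y) = c/|\DFT_q(f)(y)|$ on $y\in\{j_1,j_2\}$ and $\gamma(y)=0$ elsewhere, taking $c=\min(|\DFT_q(f)(j_1)|,|\DFT_q(f)(j_2)|)\ge 2^{-\sqrt{n}\log q}$. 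Conditioned on success (which occurs with probability $M=2c^2\ge 2\cdot 2^{-2\sqrt{n}\log q}$), the post-measurement state is proportional to $\omega_q^{j_1\ipd{\ary a}{\ary s}}\ket{j_1}+e^{i(\arg \DFT_q(f)(j_2)-\arg \DFT_q(f)(j_1))}\omega_q^{j_2\ipd{\ary a}{\ary s}}\ket{j_2}$. Because $f$ (hence $\DFT_q(f)(j_1),\DFT_q(f)(j_2)$) is efficiently computable, I can kill the known phase with a controlled phase gate and relabel $j_1\mapsto 0$, $j_2\mapsto 1$, leaving (up to a global phase) the clean DCP state $\tfrac{1}{\sqrt 2}(\ket{0}+\omega_q^{(j_2-j_1)\ipd{\ary a}{\ary s}}\ket{1})$.

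Since $\gcd(j_2-j_1,q)=1$, letting $t=(j_2-j_1)\bmod q$ and replacing $\ary a$ by $\ary a'=t\ary a\bmod q$ (which is still uniform on $\Z_q^n$) rewrites each surviving sample as $(\ary a',\tfrac{1}{\sqrt 2}(\ket{0}+\omega_q^{\ipd{\ary a'}{\ary s}}\ket{1}))$. This is exactly the dihedral coset problem with uniformly random labels and a single hidden vector $\ary s\in\Z_q^n$. At this point I would run Kuperberg's sieve over $\Z_q^n$ \cite{DBLP:journals/siamcomp/Kuperberg05}: combine DCP states pairwise via CNOT and measurement to produce fresh DCP states whose label vectors are signed sums of the original ones, iteratively engineering labels that hit $\ary e_j$ (the $j$-th standard basis vector) mod $q$ so that a final $\QFT_q$ recovers $s_j$. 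The standard analysis of the sieve extended to $\Z_q^n$ yields all $n$ coordinates of $\ary s$ using $2^{O(\sqrt{n\log q})}$ DCP states and time.

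Combining the two steps, producing $2^{O(\sqrt{n\log q})}$ DCP states requires $2^{O(\sqrt{n\log q})}\cdot M^{-1}=2^{O(\sqrt{n}\log q)}$ calls to rejection sampling (the rejection overhead dominates, matching the theorem's stated $2^{\Theta(\sqrt{n}\log q)}$ budget in both samples and time). The main obstacle I expect is the sieve step over $\Z_q^n$ for general composite $q$: the cleanest version of Kuperberg works over cyclic groups, so I would need to either invoke the multi-dimensional/Ihle\-Johansson style generalization or recover $\ary s$ coordinate-by-coordinate via CRT (leveraging that $t$ is a unit so invertibility is preserved on each $\Z_q$-component), being careful that the $\Theta(\sqrt{n\log q})$ exponent does not blow up to $\Theta(n)$ when recovering all $n$ coordinates. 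The rest of the analysis—bounding trace-distance errors from rejection sampling, the tail of $f$, and the gentle-measurement step—is routine given \Cref{lemma:quantumstatepre} and \Cref{lemma:gentle_measurement}.
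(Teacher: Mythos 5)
Your proposal matches the paper's proof essentially step for step: apply $\QFT_q$, filter down to $\{j_1,j_2\}$ via quantum rejection sampling with the same $\gamma$, rotate away the known phase and relabel to get the DCP state $\ket{0}+\omega_q^{\ipd{(j_2-j_1)\ary a}{\ary s}}\ket{1}$ with the known uniform label $(j_2-j_1)\ary a$, then run Kuperberg's sieve over $\Z_q^n$. Your accounting of the rejection-sampling overhead $M^{-1}\approx 2^{2\sqrt n\log q}$ dominating the sieve cost is also the same observation the paper uses to justify the $2^{\Theta(\sqrt n\log q)}$ bound.
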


Note that when $q$ is prime, then ${\rm gcd} (j_1 - j_2, q) = 1$ is always satisfied. So the condition ${\rm gcd} (j_1 - j_2, q) = 1$ is only needed to be taken care of when $q$ is composite.

To prove \Cref{thm:subexp_alg_gen}, we begin by introducing the Kuperberg sieve algorithm \cite{DBLP:journals/siamcomp/Kuperberg05}.

\begin{lemma}[Kuperberg sieve]\label{thm:kuperberg_sieve}
    Let $\ary s\in \Z_q^n$ be a secret vector. There exists a quantum algorithm that given $\ell^* = 2^{\Theta(\sqrt{n}\log q)}$ samples of 
    \[
        \ary a\gets \mathcal{U}(\Z_q^n), \quad
        \ket{\psi_{\ary a}} = \ket{0} + \omega_q^{\ipd{\ary a}{\ary s}}\ket{1},
    \]
    finds out the secret vector $\ary s$ in time $2^{\Theta(\sqrt{n}\log q)}$.
\end{lemma}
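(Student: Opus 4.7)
The plan is to invoke a generalized Kuperberg sieve in the finite abelian group $G = \Z_q^n$ of order $|G| = q^n$, treating each input sample as a phase state labelled by $\ary a \in G$ and encoding the hidden shift $\ary s$ through the character $\omega_q^{\ipd{\ary a}{\ary s}}$.

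First I would recall the single combining step that underlies the sieve. Given two labelled samples $\ket{0} + \omega_q^{\ipd{\ary a}{\ary s}}\ket{1}$ and $\ket{0} + \omega_q^{\ipd{\ary b}{\ary s}}\ket{1}$, apply a CNOT from the first qubit to the second and measure the second. With equal probability the outcome yields a single-qubit sample whose effective label is $\ary a + \ary b$ or $\ary a - \ary b$ (up to a global phase that does not matter). Thus the label behaves like a sum-or-difference walk on $G$, and a careful bucketing of samples lets us steer it toward any target label.

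Next I would run the sieve. Write each coordinate of $\ary a \in \Z_q^n$ in its $q$-ary digit expansion, yielding a representation of total bit-length $\Theta(n \log q)$ (when $q$ has small prime factors, I would first decompose $\Z_q^n$ coordinatewise via the Chinese Remainder Theorem and run the sieve on each prime-power part). In each round, group the surviving samples into buckets whose labels agree on a fixed block of $\Theta(\sqrt{n\log q})$ of these digits, and collide pairs within a bucket so that the resulting label is zero on the block with probability at least $1/2$. Iterating for $\Theta(\sqrt{n\log q})$ many rounds drives the label support down to a single coordinate, leaving states of the form $\ket{0} + \omega_q^{c\, s_k}\ket{1}$ for a known nonzero $c \in \Z_q$. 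The standard analysis (Kuperberg, and the abelian-group variants due to Regev and others) shows that $\ell^* = 2^{\Theta(\sqrt{n \log q})}$ initial samples suffice to keep a growing-but-sub-exponential population of samples alive through every round, with total time $2^{\Theta(\sqrt{n \log q})}$.

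Finally, from the surviving samples I would recover each coordinate $s_k$ by a QFT over $\Z_q$ applied to a short register of phase states of label $c\,\ary e_k$, which is exactly the one-dimensional Kuperberg endgame. Iterating over $k = 1, \ldots, n$ gives the full secret $\ary s$ within the claimed time budget. The main obstacle I anticipate is the bookkeeping needed to guarantee that the combining step can reliably zero out a target block even when $q$ is composite and the digit filtration is not a clean power-of-two chain; this is precisely where one uses the CRT decomposition and, at the top level, the $\gcd(j_1 - j_2, q) = 1$ hypothesis of the surrounding theorem. Since this lemma takes idealised phase samples as input, however, I would cite the generalised Kuperberg analysis as a black box rather than re-derive its amplitude bookkeeping.
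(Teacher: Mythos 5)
Your proposal takes essentially the same route as the paper: the paper states this lemma without proof and defers entirely to Kuperberg's analysis \cite{DBLP:journals/siamcomp/Kuperberg05}, and your sketch of the CNOT-and-measure combining step and the bucketed digit sieve is a correct high-level account of that algorithm. Two small remarks: your exponent $\sqrt{n\log q}$ is actually tighter than the paper's stated $\sqrt{n}\log q$ (both serve as valid upper bounds for the lemma, so nothing breaks), and the hypothesis $\gcd(j_1-j_2,q)=1$ you lean on belongs to the surrounding theorem, where it guarantees that the labels $\ary{a}$ are uniform in $\Z_q^n$ --- inside this lemma that uniformity is already assumed, so the sieve itself needs no coprimality or CRT argument beyond what Kuperberg's generalization to arbitrary finite abelian groups already supplies.
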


Now we present the proof of \Cref{thm:subexp_alg_gen} here.

\begin{proof}[Proof of \Cref{thm:subexp_alg_gen}]
    Suppose we have $\ell = \ell^* \cdot 2^{4\sqrt{n}\log q} = 2^{\Theta(\sqrt{n}\log q)}$ instances of $\QLWE$ states. Our quantum algorithm proceeds as follows:
    \begin{enumerate}
        \item[1.] Compute $\DFT_q(f)(j)$ for each $j \in \Z_q$ and find two distinct points $j_1$ and $j_2$ from $\Z_q$ with ${\rm gcd} (j_1 - j_2, q) = 1$ and $|\DFT_q(f)(j_1)|$ and $|\DFT_q(f)(j_2)|$ are both greater than $2^{-\sqrt{n}\log q}$. If it fails, abort.
        \item[2.] Apply QFT to any $\QLWE$ state, resulting in the state
        \[
            \QFT_q\cdot\QLWE = \frac{1}{\sqrt q}\sum_{j\in \Z_q}\sum_{e \in \Z} f(e) \omega_q^{j(\ipd{\ary a}{\ary s} + e)} \ket{j} = \sum_{j\in \Z_q}\omega_q^{\ipd{j\cdot \ary a}{\ary s}}\DFT_q(f)(j)\ket{j},
        \]
        \item[3.] 
        Define $\gamma(j): \Z_q\to [0, 1]$ as
        \[
            \gamma(j) = \frac{\min\{|\DFT_q(f)(j_1)|, |\DFT_q(f)(j_2)|\}}{|\DFT_q(f)(j)|} \text{ for } j = j_1, j_2
        \]
        and $\gamma(j) = 0$ otherwise, apply quantum rejection sampling (\Cref{lemma:quantumstatepre}) to obtain the state
        \begin{equation}\label{eqn:sub_exp_temp1}
            \frac{\DFT_q(f)(j_1)}{|\DFT_q(f)(j_1)|}\omega_q^{\ipd{j_1 \cdot \ary a}{\ary s}}\ket{j_1} + \frac{\DFT_q(f)(j_2)}{|\DFT_q(f)(j_2)|}\omega_q^{\ipd{j_2\cdot \ary a}{\ary s}}\ket{j_2},
        \end{equation}
        with probability
        \[
            M = \sum_{j\in \Z_q}\gamma^2(j)|\DFT(f)(j)|^2 = 2(\min\{|\DFT_q(f)(j_1)|, |\DFT_q(f)(j_2)|\})^2 > 2^{-2\sqrt{n}\log q}.
        \]
        \item[4.] For the states that have been successfully transformed to \Cref{eqn:sub_exp_temp1}, apply a unitary operation
        \[
            U: \ket{j_1} \to \frac{\overline{\DFT_q(f)(j_1)}}{|\DFT_q(f)(j_1)|}\ket{0}, \ket{j_2} \to \frac{\overline{\DFT_q(f)(j_2)}}{|\DFT_q(f)(j_2)|}\ket{1},
        \]
        this results in the state
        \[
            \ket{\psi_{(j_2 - j_1)\ary a}} = \ket{0} + \omega_q^{\ipd{(j_2 - j_1)\ary a}{\ary s}}\ket{1},
        \]
        where $(j_2 - j_1)\ary a$ is a known vector in $\Z_q^n$ that is uniformly random by the assumption that ${\rm gcd}(j_1 - j_2, q) = 1$.
        \item[5.] Select $\ell^*$ such states obtained in step 3 and apply the Kuperberg sieve algorithm to recover the secret vector $\ary s\in \Z_q^n$.
    \end{enumerate}
    It is evident that step 1 runs in time $O(\poly(n) q \log q) \le 2^{O(\sqrt{n}\log q)} $ and transforming any $\QLWE$ state to a DCP-like state requires a time complexity of $2^{O(\sqrt{n}\log q)}$. Therefore, the run time of our quantum algorithm is constrained by both the quantity of $\QLWE$ states and the application of the Kuperberg sieve, which both exhibit a complexity of $2^{\Theta(\sqrt{n}\log q)}$. In step 2, the count of states successfully transformed to \Cref{eqn:sub_exp_temp1} will be at least $M^2\ell = \ell^*$ with a probability exponentially close to 1. This concludes the proof.
\end{proof}

\begin{corollary}\label{cor:subexp_alg}
    Suppose $m, n, q$ are LWE parameters. There exists a quantum algorithm that, given $2^{\Theta(\sqrt{n}\log q)}$ samples of vector $\ary a\gets \mathcal{U}(\Z_q^n)$ and $\QLWE$ state of form
    \[
        \QLWE = \sum_{e \in \Z} \rho_{\sigma}(e) \exp(2\pi {\rm i}\cdot ce /q) \ket{\ipd{\ary a}{\ary s} + e \bmod q},
    \]
    where the Gaussian width $\sigma$ satisfies $\sigma = \Omega(\sqrt{n}), \sigma \le q$, and $c$ is an arbitrary known number that can be different for different samples, solves the secret vector $\ary s\in \Z_q^n$ within a time complexity of $2^{\Theta(\sqrt{n}\log q)}$.
\end{corollary}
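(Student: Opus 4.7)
The plan is to reduce to \Cref{thm:subexp_alg_gen} by peeling off the known linear phase on each sample and then invoking the theorem with a trivial choice of Fourier points. Since the phase parameter $c$ is \emph{known} (and possibly different across samples), I would first apply, to each $\QLWE$ sample, the single-qubit-style diagonal unitary
\[
    U_c : \ket{z} \mapsto \exp(-2\pi {\rm i}\cdot cz/q)\ket{z}, \quad z \in \Z_q,
\]
which is implementable in $\poly(\log q)$ time. Assuming $c\in \Z$ (which we may enforce by rounding at negligible cost given $\sigma\le q$), a direct computation shows that for each $e\in\Z$, writing $z=\ipd{\ary a}{\ary s}+e\bmod q$, the phase $\exp(2\pi{\rm i}\cdot ce/q)\exp(-2\pi{\rm i}\cdot cz/q)$ collapses to the sample-dependent global phase $\exp(-2\pi{\rm i}\cdot c\ipd{\ary a}{\ary s}/q)$, so that
\[
    U_c\cdot \QLWE \;=\; \exp(-2\pi{\rm i}\cdot c\ipd{\ary a}{\ary s}/q)\sum_{e\in\Z}\rho_\sigma(e)\ket{\ipd{\ary a}{\ary s}+e\bmod q}.
\]
Thus every sample is converted, up to an irrelevant global phase, to a pure real-Gaussian $\QLWE$ sample with the \emph{same} amplitude function $f=\rho_\sigma/\|\rho_\sigma(\Z)\|$, independent of $c$.

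Next I would verify that this common amplitude $\rho_\sigma$ satisfies the hypothesis of \Cref{thm:subexp_alg_gen}. Take $j_1=0$ and $j_2=1$; then $\gcd(j_1-j_2,q)=\gcd(-1,q)=1$ unconditionally, so the coprimality condition is automatic (this bypasses any worry about composite $q$). For the magnitudes, the Poisson summation formula together with the standard Gaussian transform identity $\widehat{\rho_\sigma}=\sigma\rho_{1/\sigma}$ gives, up to the usual $2^{-\Omega(n)}$ tail error from \Cref{lemma:Bana93},
\[
    \DFT_q(\rho_\sigma)(j) \;\approx\; \frac{\sigma}{\sqrt{q}\cdot \|\rho_\sigma(\Z)\|}\cdot \rho_{q/\sigma}(j\bmod q),
\]
with the representative of $j\bmod q$ taken in $(-q/2,q/2]$. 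For $j_1=0$ this is $\Theta(\sigma/\sqrt{q})\ge \Omega(\sqrt{n}/\sqrt{q})$, and for $j_2=1$ we multiply by an extra factor of $\exp(-\pi\sigma^2/q^2)\ge e^{-\pi}$ since $\sigma\le q$. Both values are therefore at least $2^{-O(\log q)}$, which comfortably exceeds the threshold $2^{-\sqrt{n}\log q}$ required by \Cref{thm:subexp_alg_gen}, and both are efficiently computable.

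Finally I would feed the post-$U_c$ samples into the algorithm of \Cref{thm:subexp_alg_gen} with $f=\rho_\sigma$, $j_1=0$, $j_2=1$. That algorithm in turn produces the DCP-like states $\ket{0}+\omega_q^{\ipd{-\ary a}{\ary s}}\ket{1}$ (with $\ary a$ uniform in $\Z_q^n$) and applies Kuperberg's sieve (\Cref{thm:kuperberg_sieve}) to recover $\ary s$ in time $2^{\Theta(\sqrt{n}\log q)}$ from $2^{\Theta(\sqrt{n}\log q)}$ many such samples. The main thing to be careful about — essentially the only nontrivial point — is that $c$ can vary across samples, but since $U_c$ depends only on the \emph{known} $c$ of a single sample and since the resulting pure-Gaussian amplitude no longer depends on $c$, the reduction proceeds sample-by-sample and then feeds a uniform stream into \Cref{thm:subexp_alg_gen}. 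The only edge case worth a brief check is that the infinite sum over $e\in\Z$ in the stated $\QLWE$ state contributes only $2^{-\Omega(n)}$ aliasing error when wrapping around $\Z_q$, using $\sigma\le q$ and the Banaszczyk tail bound, which is absorbed into the overall failure probability.
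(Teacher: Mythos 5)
Your high-level plan — compute the DFT of the error amplitude, find two Fourier indices with non-negligible mass and coprime difference, and invoke \Cref{thm:subexp_alg_gen} — is exactly the paper's plan, so the route is the same. There is, however, a genuine gap in your execution.

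The paper keeps the known phase \emph{inside} the amplitude function, taking $f(e)=\frac{1}{\sqrt N}\rho_\sigma(e)\exp(2\pi{\rm i}\cdot ce/q)$, and shows via Poisson summation that $|\DFT_q(f)(j)| \ge \sqrt{\sqrt 2/q}\cdot e^{-\pi}(1-2^{-\Omega(n)})$ at $j_1=\lowerrounding{-c}\bmod q$ and $j_2=(\lowerrounding{-c}+1)\bmod q$; since $j_2-j_1=1$, the coprimality condition is automatic, and \Cref{thm:subexp_alg_gen} applies without any preprocessing of the state. Your version instead peels the phase off with $U_c$ and then invokes the theorem with $j_1=0,j_2=1$, which is morally the same Fourier shift — \emph{but only when $c\in\Z$}. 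The claim that we ``may enforce $c\in\Z$ by rounding at negligible cost given $\sigma\le q$'' is false. Set $\tilde c=\rd{c}$ and $\delta=c-\tilde c$ with $|\delta|\le 1/2$; the overlap between the given state and the one with $c$ replaced by $\tilde c$ is, again by Poisson summation, approximately $\rho_{\sqrt 2/\sigma}(\delta/q)=\exp\left(-\pi\delta^2\sigma^2/(2q^2)\right)$. When $\sigma$ is close to $q$ — permitted by the hypothesis $\sigma\le q$ — this can be as small as $e^{-\pi/8}\approx 0.67$, i.e.\ a \emph{constant} trace-distance error per sample, which is fatal over the $2^{\Theta(\sqrt n\log q)}$ samples that Kuperberg's sieve consumes (and even when $\sigma\ll q$ the per-sample error is only polynomially small, not exponentially small). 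Relatedly, applying $U_c$ with the original non-integer $c$ does \emph{not} yield a single global phase: on the branch $z=\ipd{\ary a}{\ary s}+e-kq$ with $k\ne 0$ the factor $e^{2\pi{\rm i} ck}\ne 1$, and that branch carries non-negligible weight once $\sigma$ is comparable to $q$. The clean fix is not to peel at all: feed the phased amplitude $f(e)=\rho_\sigma(e)\exp(2\pi{\rm i}\cdot ce/q)$ directly into \Cref{thm:subexp_alg_gen} with $j_1,j_2$ near $-c\bmod q$ as the paper does; alternatively, after peeling $\rd{c}$ carry the residual phase $\exp(2\pi{\rm i}\cdot\delta e/q)$ into the amplitude $f$ and redo the DFT lower bound with $\delta$ present, observing that $j_1=0$ together with one of $j_2=\pm1$ still has $|j+\delta|\le 3/2$ and hence non-negligible mass.
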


\iffull
\begin{proof}
    Let us define
    \[
        N = \sum_{e\in \Z}\rho_{\sigma}^2(e) = \sum_{e\in \Z}\frac{\sigma}{\sqrt{2}}\rho_{\sqrt{2}/\sigma}(e) \approx \frac{\sigma}{\sqrt{2}},
    \]
    where the final approximation holds under the assumption that $\sigma = \Omega(n)$. In this case, the summation of $\rho_{\sqrt{2}/\sigma}$ is concentrated at $\rho_{\sqrt{2}/\sigma}(0)$, with exponentially small weight elsewhere.
    
    In the given problem scenario,
    \[
        f(e) = \frac{1}{\sqrt{N}}\rho_\sigma(e)\exp(2\pi {\rm i}\cdot ce /q),
    \]
    thus
    \[
        \begin{split}
            \DFT_q(f)(j) & = \frac{1}{\sqrt{qN}}\sum_{e\in \Z}\rho_\sigma(e)\exp(2\pi {\rm i}\cdot (j + c)e/q) \\
            & =_{(*)} \frac{1}{\sqrt{qN}}\sum_{e \in \Z} \sigma\rho_{1 / \sigma} \left(e - \frac{j + c}{q}\right) \\
            & \approx_{(**)} \frac{1}{\sqrt{qN}}\sigma\rho_{1 / \sigma} \left(\rd{\frac{j + c}{q}} - \frac{j + c}{q}\right) \\
            & = \frac{\sigma}{\sqrt{qN}} \rho_{q / \sigma}(j + c - \rd{j + c}_q),
        \end{split}
    \]
    here $(*)$ is from the Poisson summation formula, $(**)$ holds due to the assumption that $\sigma = \Omega(\sqrt{n})$. In this case, the summation $\sum_{j\in \Z_q}\rho_{1 / \sigma}\left(e - \frac{j + c}{q}\right)$ is concentrated at $\rho_{1 / \sigma}\left(\rd{\frac{j + c}{q}} - \frac{j + c}{q}\right)$, with exponentially small weight elsewhere.

    Define $j_1 = \lowerrounding{-c} \bmod q, j_2 = (\lowerrounding{-c} + 1) \bmod q$, we can establish that $|j + c - \rd{j + c}_q| \le 1$ holds for both $j = j_1$ and $j = j_2$. This implies that
    \[
        |\DFT_q(f)(j)| \ge \sqrt{\frac{\sqrt{2\sigma}}{q}} \rho_{q / \sigma}(1)(1 - 2^{-\Omega(n)}) \ge \sqrt{\frac{\sqrt{2}}{q}} e^{-\pi} (1 - 2^{-\Omega(n)}) \gg 2^{-\sqrt{n}\log q},
    \]
    for both $j = j_1, j_2$.
    
    As a result, we can deduce the validity of the original statement by straightforwardly applying \Cref{thm:subexp_alg_gen}.
\end{proof}
\fi
\ifllncs
The proof is deferred to \Cref{sec:Omit_sec5}.
\fi

\begin{remark}
    Readers may be curious about why our sub-exponential algorithm cannot handle $\QLWE^{\sf phase}$ instances with an unknown phase. Informally speaking, when the phase term of $\QLWE^{\sf phase}$ samples is unknown, we can only obtain a DCP-like state with varying weights in the superposition. Although the ratio of weights on $\ket{0}$ and $\ket{1}$ can be bounded by an inverse polynomial, this ratio tends to become extremely large during the sieving step of Kuperberg's algorithm. As a result, the final state collapses into either $\ket{0}$ or $\ket{1}$, and the information about $\ary s$ is entirely lost.
\end{remark}

\section{Complex Gaussian and Oblivious LWE Sampling}\label{sec:osampling}

In this section, we provide a quantum polynomial time algorithm that can obliviously sample LWE instances. Our approach follows the reduction from oblivious LWE sampling to the $\LWEstate$ problem introduced in \cite{debris2024quantum}. However, our method for solving the $\LWEstate$ problem is fundamentally different from the approach in \cite{debris2024quantum}; it employs complex Gaussian amplitudes, a technique first introduced in \cite{chen2024quantum}, along with a center-finding trick. Not only does our algorithm improve upon the sample complexity of the core quantum algorithm in \cite{debris2024quantum}, but it is also relatively straightforward and easy to comprehend.

\subsection{Center finding}

We first show an efficient quantum algorithm for finding the center $c \bmod t$ of a complex Gaussian state $\ket{\phi_c} = \sum_{x\in \Z}\rho_r(x)e^{-\frac{\pi i x^2}{t}}\ket{x + c}$.
Note that although the state $\ket{\phi_c}$ appears to be supported over $\Z$, it is within $2^{-\Omega(n)}$ trace distance from $\sum_{x\in [c-r\sqrt{n}, c+r\sqrt{n}]\cap \Z}\rho_r(x)e^{-\frac{\pi i x^2}{t}}\ket{x + c}$, so it can be stored within $O(\log |c| + \log(r\sqrt{n}))$ many qubits.

\begin{theorem}[Center finding]\label{thm:find_center}
    Let $n, t$ be positive integers, $r$ be a positive real with $r \ge 30 t n \log n$. Then there exists a polynomial time quantum algorithm that, given a complex Gaussian state
    \[
        \ket{\phi_c} = \sum_{x\in \Z}\rho_r(x)e^{-\frac{\pi i x^2}{t}}\ket{x + c}
    \]
    with unknown center $c\in \Z$, finds $c \bmod t$ with probability $1 - 1/n$.
\end{theorem}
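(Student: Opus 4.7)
The plan is to coherently decompose the computational basis via $\ket{y}\mapsto \ket{q}\ket{d}$ with $y = qt + d$ and $d = y\bmod t\in\{0,\dots,t-1\}$, then measure the $d$-register in the basis $\{\ket{\psi_{d^*}}\}_{d^*\in\Z_t}$. One verifies orthonormality of this basis by a geometric-series argument, and observes that the change-of-basis unitary $V = \frac{1}{\sqrt t}\sum_{d,x} e^{-\pi i(x-d)^2/t}\ket{x}\bra{d}$ decomposes via $(x-d)^2 = x^2 - 2xd + d^2$ as a chirp sandwich $D_1\cdot\mathsf{QFT}_t\cdot D_2$, so the measurement is efficient.

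The key algebraic simplification uses $e^{-2\pi i q(d-c)} = 1$ for integer $q,d,c$: the phase on $\ket{q,d}$ factorizes as $e^{-\pi i(qt+d-c)^2/t} = (-1)^{q^2 t}\cdot e^{-\pi i(d-c)^2/t}$, with the $q$-dependent sign disappearing under $|\cdot|^2$. Composing with the basis phase via $(d-d^*)^2 - (d-c)^2 = (c - d^*)(2d - d^* - c)$, the bracket $\langle q|\langle\psi_{d^*}|\phi_c\rangle$ reduces, up to an overall $d$-independent unit phase, to $\sum_{d\in\Z_t}\rho_r(qt + d - c)\,e^{2\pi i m d/t}$, where $m := (c - d^*)\bmod t$ and $m = 0$ iff $d^* = c\bmod t$.

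The outcome probability $P(d^*) = \frac{1}{t\,\|\ket{\phi_c}\|^2}\sum_q\bigl|\sum_d \rho_r(qt+d-c)e^{2\pi i m d/t}\bigr|^2$ is then analyzed by expanding the square, swapping $\sum_q$ with $\sum_{d,d'}$, and applying the identity $\rho_r(u)\rho_r(v) = \rho_{r/\sqrt 2}\bigl(\tfrac{u+v}{2}\bigr)\,\rho_{r\sqrt 2}(u-v)$. The inner Gaussian sum $\sum_q\rho_{r/\sqrt 2}\bigl(qt+\tfrac{d+d'}{2}-c\bigr)$ is handled by Poisson summation over $t\Z$: since $r \ge 30tn\log n \gg t$, only the $k=0$ dual-lattice term survives, giving $r/(t\sqrt 2)$ independent of the shift, with the rest bounded by $\exp(-\pi r^2/(2t^2)) = 2^{-\Omega(n^2)}$. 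Combined with $\|\ket{\phi_c}\|^2 \approx r/\sqrt 2$ by the same Poisson argument, $P(d^*)$ reduces up to $2^{-\Omega(n)}$ error to a fixed constant times the weighted Fej\'er-type sum $S_m := \sum_{\delta=-(t-1)}^{t-1}(t-|\delta|)\,\rho_{r\sqrt 2}(\delta)\,e^{-2\pi i m\delta/t}$.

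The final step invokes the identity $\sum_\delta (t-|\delta|)e^{-2\pi i m\delta/t} = \sin^2(\pi m)/\sin^2(\pi m/t)$, which equals $t^2$ for $m = 0$ and vanishes for $m\in\{1,\dots,t-1\}$. For $m = 0$ this yields $P(c\bmod t) \ge 1 - O(t^2/r^2)$; for $m\neq 0$ only the quadratic Taylor correction $\rho_{r\sqrt 2}(\delta) - 1 = -\pi\delta^2/(2r^2) + O(\delta^4/r^4)$ contributes, and the resulting weighted second moment equals $\tfrac{\pi}{2r^2}|F_t''(2\pi m/t)| = O\bigl(t^4/(r^2\min(m,t-m)^2)\bigr)$ via $\sin(\pi m/t) \ge 2\min(m,t-m)/t$. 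After normalization $P(d^*) = O\bigl(t^2/(r^2\min(m,t-m)^2)\bigr)$, and summing over $m\neq 0$ using $\sum_m 1/\min(m,t-m)^2 = O(1)$ gives total failure $O(t^2/r^2) = O(1/(n^2\log^2 n)) \le 1/n$ by the hypothesis $r \ge 30tn\log n$. The main obstacle is the bookkeeping in the previous step: justifying that the non-$k=0$ Poisson tails, together with the truncation of $\ket{\phi_c}$ outside $|y-c|\le r\sqrt n$, contribute only $2^{-\Omega(n)}$ error, so the Fej\'er-kernel zeros at $\theta = 2\pi m/t$ genuinely survive the Gaussian weighting by $\rho_{r\sqrt 2}(\delta)$.
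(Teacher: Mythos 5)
Your proposal is correct, and while it shares the paper's setup — the same $\{\ket{\psi_{d}}\}$ basis, the same chirp--QFT--chirp implementation, and the same decomposition of the computational basis into high/low registers modulo $t$ — the analytic core is genuinely different. The paper lower-bounds $\Pr(d = c\bmod t)$ directly: restrict the summation index $k$ to $|k|\le r\log n/t$, replace the Cauchy--Schwarz ratio $(\sum_y\rho_r)^2/(t\sum_y\rho_r^2)$ by $\min_y\rho_r^2/\max_y\rho_r^2 \ge e^{-8\pi t\log n/r}$ on each slab, and let Banaszczyk's tail bound absorb the rest, giving $1 - O(t\log n/r)$. You instead expand $|\cdot|^2$ over the pair $(d,d')$, split $\rho_r(u)\rho_r(v)=\rho_{r/\sqrt2}(\tfrac{u+v}2)\rho_{r\sqrt2}(u-v)$, apply Poisson summation over $t\Z$ (where $r\gg t$ kills all nonzero dual terms), and arrive at the closed-form $P(d^*)\approx t^{-2}\sum_\delta(t-|\delta|)\rho_{r\sqrt2}(\delta)e^{-2\pi i m\delta/t}$ — a Gaussian-weighted Fej\'er kernel. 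The Fej\'er zeros at $m\ne 0$ then give the success probability $1-O(t^2/r^2)$, which is quadratically sharper in $t/r$ than the paper's bound (though both suffice under $r\ge 30tn\log n$). Two minor remarks: the $m\ne 0$ analysis you sketch is not actually needed for the theorem, since the total failure $\sum_{m\ne0}P = 1-P(c\bmod t)$ is already controlled by the $m=0$ case alone; and the Taylor truncation at $k=1$ for $m\ne0$ does dominate higher terms by $O(t^2/r^2)$ per order, so your sharper per-$d^*$ bound is also valid, but it requires that extra check which you only implicitly gesture at. The concern you raise in the final paragraph about the non-zero Poisson terms is indeed benign: the dual-lattice terms scale as $e^{-\pi r^2/(2t^2)} = 2^{-\Omega(n^2\log^2 n)}$, uniformly in the shift, and the same control handles the truncation of $\ket{\phi_c}$ to $|x|\le r\sqrt n$.
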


Before proving the theorem, we first observe that for any $c_1, c_2\in \Z$ with $c_1 \not \equiv c_2 \bmod t$, the complex Gaussian states $\ket{ \phi_{c_1} }$, $\ket{ \phi_{c_2} }$ are nearly orthogonal and, therefore, almost perfectly distinguishable. We prove this for the special case of $\ket{ \phi_{0} }$ v.s. $\ket{ \phi_{c} }$, and the proof generalizes to any pair of distinct centers $\ket{ \phi_{c_1} }$, $\ket{ \phi_{c_2} }$:
\begin{equation}
    \begin{split}
    \frac{\abs{\bk{ \phi_0 }{ \phi_c }}}{\|\ket{\phi_0}\|\|\ket{\phi_c}\|}
    & = \abs{\sum_{x\in\Z} \rho_r(x)\rho_r(x-c) e^{-\frac{\pi i (x - c)^2 - \pi i x^2}{t}}} \left/ \sum_{x\in\Z} \rho_r(x)^2 \right. \\
    & = \rho_{r / \sqrt{2}}(c/2)\abs{ \sum_{x\in\Z} \rho_{r/\sqrt{2}}(x-c/2) e^{\frac{2\pi i cx}{t}}} \left/ \sum_{x\in\Z} \rho_{r / \sqrt{2}}(x) \right. \\
    & =_{(*)} \rho_{r/\sqrt{2}}(c/2) \abs{ \sum_{y\in\Z}\rho_{\sqrt{2}/r}(y - c/t)e^{2\pi i(y - c/t)\cdot c/2}} \left/ \sum_{y\in\Z} \rho_{\sqrt{2} / r}(y) \right. \\
    & \approx_{(**)} \rho_{r/\sqrt{2}}(c/2) \rho_{\sqrt{2}/r}(\rd{c/t} - c/t),
    \end{split}
\end{equation}
where $(*)$ is from the Poisson summation formula, $(**)$ holds due to the assumption that $r/t \in \Omega(\sqrt{n})$. In this case, the summation $\sum_{y\in\Z}\rho_{\sqrt{2}/r}(y - c/t)$ (with some phase term that does not affect the norm) is concentrated at $\rho_{\sqrt{2}/r}(\rd{c/t} - c/t)$, and the summation $\sum_{y\in\Z}\rho_{\sqrt{2}/r}(y)$ is concentrated at $\rho_{\sqrt{2}/r}(0) = 1$, with exponentially small weight elsewhere.

Therefore, when $c\not \equiv 0\bmod t$, the overlap of $\ket{\phi_0}$ and $\ket{\phi_c}$ can be bounded by
\[
    \left|\frac{\bk{ \phi_0 }{ \phi_c }}{\|\ket{\phi_0}\|\|\ket{\phi_c}\|}\right| \le \rho_{r/\sqrt2}(c/2)\rho_{\sqrt2/r}(1/t) \left(1 + 2^{-\Omega(n)}\right) = 2^{-\Omega(n)}.
\]

Using a similar argument, for $c_1\not\equiv c_2\bmod t$, the overlap of $\ket{\phi_{c_1}}$ and $\ket{\phi_{c_2}}$ is bounded by $2^{-\Omega(n)}$. 
Keeping the above observation in mind, the algorithm of center finding is straightforward: measure the lower order bits (for modulus $t$) of $\ket{\phi_c}$ under the basis $\left\{\ket{\psi_d}\right\}_{d\in \Z_{t}}$, and output the measurement result $d\in \Z_{t}$, where
\[
    \ket{\psi_d} = \sum_{x\in \Z_{t}} e^{-\frac{\pi i(x - d)^2}{t}}\ket{x}.
\]
We'll first show that $\left\{\ket{\psi_d}\right\}_{d\in \Z_{t}}$ is a basis that can be efficiently prepared, and then show the measurement result $d$ equals the center $c$ with high probability. 

For $d, d'\in \Z_{t}$, the overlap of $\ket{\psi_d}$ and $\ket{\psi_{d'}}$ is
\[
    \begin{split}
        \frac{\bk{\psi_{d'}}{\psi_d}}{\|\ket{\psi_{d'}}\|\|\ket{\psi_d}\|} & = \frac{1}{t}\sum_{x\in \Z_{t}} e^{-\left(\frac{\pi i(x - d)^2}{t} - \frac{\pi i (x - d')^2}{t}\right)} \\
        & = \frac{1}{t}e^{\frac{\pi i (d'^2 - d^2)}{t}} \sum_{x\in \Z_{t}}e^{\frac{2\pi i(d - d')x}{t}} \\
        & = \begin{cases}
            1 & d=d'  \\
            0 & d\neq d'
        \end{cases}.
    \end{split}
\]

Therefore $\{\ket{\psi_d}\}_{d\in\Z_t}$ is indeed a basis. To show $\left\{\ket{\psi_d}\right\}_{d\in \Z_{t}}$ can be efficiently prepared, we use the following lemma:
\begin{lemma}
There is a $\poly(\log t)$ size quantum circuit that implements the unitary transformation $\sum_{d\in\Z_t} \ket{d} \bra{\psi_d}$. 
\end{lemma}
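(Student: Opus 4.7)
\medskip

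\noindent\textbf{Proof proposal.} The plan is to express the desired unitary as a QFT sandwiched between two classically computable diagonal phase gates. First I would expand the quadratic in the exponent of $\ket{\psi_d}$:
\begin{equation*}
    \ket{\psi_d} \;=\; \sum_{x\in\Z_t} e^{-\pi i(x-d)^2/t}\ket{x} \;=\; e^{-\pi i d^2/t}\sum_{x\in\Z_t} e^{-\pi i x^2/t}\,e^{2\pi i xd/t}\ket{x}.
\end{equation*}
The factor $e^{2\pi i xd/t}$ is exactly the phase produced by $\QFT_t$ acting on $\ket{d}$, so the state has the shape (diagonal phase in $x$) $\circ$ (QFT) $\circ$ (diagonal phase in $d$) applied to $\ket{d}$.

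Concretely, I would define the diagonal unitaries $E\ket{d}=e^{-\pi i d^2/t}\ket{d}$ and $D\ket{x}=e^{-\pi i x^2/t}\ket{x}$ and verify the identity
\begin{equation*}
    D \cdot \QFT_t \cdot E \,\ket{d} \;=\; \tfrac{1}{\sqrt{t}}\ket{\psi_d},
\end{equation*}
which shows that $U^\dagger := D \cdot \QFT_t \cdot E$ sends $\ket{d}$ to the normalized version of $\ket{\psi_d}$. Since $\{\ket{\psi_d}/\sqrt{t}\}_{d\in\Z_t}$ was shown to be an orthonormal basis, $U^\dagger$ is unitary, and the desired circuit is $U=E^\dagger\cdot \QFT_t^\dagger \cdot D^\dagger$.

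For the complexity count I would note: (i) $\QFT_t$ admits a $\poly(\log t)$-size implementation by Kitaev's construction; (ii) the diagonal phases $e^{-\pi i d^2/t}$ and $e^{-\pi i x^2/t}$ are determined by $d^2\bmod 2t$ and $x^2\bmod 2t$, which are computable classically in $\poly(\log t)$ bit operations, so $E$ and $D$ can each be realized by computing the integer $d^2\bmod 2t$ (resp.\ $x^2\bmod 2t$) into an ancilla register, applying a product of controlled phase rotations of angles $-\pi\cdot 2^j/t$ conditioned on the ancilla bits, and uncomputing. Hence both diagonal gates have $\poly(\log t)$ size, and the full circuit has size $\poly(\log t)$ as claimed.

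The only genuinely fiddly step is checking that the global phase $e^{-\pi i d^2/t}$ from the expansion is exactly cancelled by $E$ so that no leftover $d$-dependent phase spoils unitarity; this is a one-line calculation but is the one place where a sign or normalization mistake would invalidate the construction. Otherwise the lemma reduces to well-known subroutines (QFT plus diagonal phase gates with efficiently computable exponents), and I do not foresee a substantive obstacle.
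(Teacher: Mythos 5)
Your decomposition is exactly the one used in the paper: apply the diagonal phase $e^{-\pi i d^2/t}$, then $\QFT_t$, then the diagonal phase $e^{-\pi i x^2/t}$ (the paper calls these ``phase kick-back'' steps), and take the adjoint of the resulting unitary. Your write-up is slightly more explicit about normalization and about implementing the diagonal gates via computing $d^2\bmod 2t$ into an ancilla, but there is no substantive difference.
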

\begin{proof}
For any $d\in \Z_t$, we transform $\ket{d} \to \ket{\psi_d} = \sum_{x\in \Z_{t}} e^{-\frac{\pi i(x - d)^2}{t}}\ket{x}$ as follows:
\iffull
\begin{equation*}
 \ket{d} \mapsto_{(1)} e^{-\frac{\pi i d^2}{t} } \ket{d} 
 \mapsto_{QFT_{\Z_t}} \sum_{x\in\Z_t} e^{\frac{2\pi i xd}{t}} e^{-\frac{\pi i d^2}{t}} \ket{x} 
 \mapsto_{(2)}  \sum_{x\in\Z_t} e^{-\frac{\pi i x^2}{t}} e^{\frac{2\pi i xd}{t}} e^{-\frac{\pi i d^2}{t}} \ket{x} = \sum_{x\in \Z_{t}} e^{-\frac{\pi i(x - d)^2}{t}}\ket{x}, 
\end{equation*}
where (1), (2) use the phase kick-back trick.
\else
\begin{align*}
 \ket{d} &\mapsto_{(1)} e^{-\frac{\pi i d^2}{t} } \ket{d} 
 \mapsto_{QFT_{\Z_t}} \sum_{x\in\Z_t} e^{\frac{2\pi i xd}{t}} e^{-\frac{\pi i d^2}{t}} \ket{x}\\
 &\mapsto_{(2)}  \sum_{x\in\Z_t} e^{-\frac{\pi i x^2}{t}} e^{\frac{2\pi i xd}{t}} e^{-\frac{\pi i d^2}{t}} \ket{x} = \sum_{x\in \Z_{t}} e^{-\frac{\pi i(x - d)^2}{t}}\ket{x}, 
\end{align*}
where (1), (2) use the phase kick-back trick.
\fi
\end{proof}

We then calculate the probability of obtaining $c_1 := c \bmod t$ when measuring the lower order bits of $\ket{\phi_c}$ under the basis $\left\{\ket{\psi_d}\right\}_{d\in \Z_{t}}$. We write the term $x + c$ in the summation of $\ket{\phi_c}$ by $kt + y$, with summation on $k\in \Z$ and $y\in \Z_{t}$:
\[\ket{\phi_c} = \sum_{k \in \Z} \sum_{y \in \Z_t}\rho_r(kt + y - c)e^{-\frac{\pi i (kt + y - c)^2}{t}}\ket{k}_{\reg{A}}\ket{y}_{\reg{B}}\]
where registers $\reg{A}$ and $\reg{B}$ store the higher order bits and the lower order bits of $x + c$, respectively.

We observe that $\ket{\phi_c}$ can also be written in the following form:
\begin{align*}
\ket{\phi_c} &= \sum_{k \in \Z} \sum_{y \in \Z_t}\rho_r(kt + y - c)e^{-\frac{\pi i (kt + y - c)^2}{t} + \frac{\pi i (y - c_1)^2}{t}} \cdot e^{-\frac{\pi i (y - c_1)^2}{t}} \ket{k}_{\reg{A}}\ket{y}_{\reg{B}}\\
&= \sum_{k \in \Z} \sum_{y \in \Z_t}\rho_r(kt + y - c)e^{-\frac{\pi i \left(k^2t^2 + (y - c)^2\right)}{t} + \frac{\pi i (y - c_1)^2}{t}} \cdot e^{-\frac{\pi i (y - c_1)^2}{t}} \ket{k}_{\reg{A}}\ket{y}_{\reg{B}}\\
&= \sum_{k \in \Z}e^{-\pi i k^2t} \sum_{y \in \Z_t}\rho_r(kt + y - c)e^{2\pi i \frac{(c - c_1)y}{t} + \frac{\pi i (c_1^2 - c^2)}{t}} \cdot e^{-\frac{\pi i (y - c_1)^2}{t}} \ket{k}_{\reg{A}}\ket{y}_{\reg{B}}\\
&= \sum_{k \in \Z}e^{-\pi i k^2t + \frac{\pi i (c_1^2 - c^2)}{t}} \ket{k}_{\reg{A}}\sum_{y \in \Z_t}\rho_r(kt + y - c) e^{-\frac{\pi i (y - c_1)^2}{t}} \ket{y}_{\reg{B}}
\end{align*}

The probability of obtaining $c_1$ when measuring $\reg{B}$ in basis $\{\ket{\psi_d}\}_{d \in \Z_t}$ can be computed as first measuring $\reg{A}$ in computational basis and getting some outcome $k$, and then obtaining $c_1$ when measuring $\reg{B}$ in basis $\{\ket{\psi_d}\}_{d \in \Z_t}$:
\[
    \begin{split}
        \Pr(d = c_1) 
        & = \sum_{k\in \Z} \bigg(\sum_{y\in \Z_{t}} \rho_r(kt + y - c)\bigg)^2 \left/ t\sum_{x\in \Z}\rho_r(x)^2 \right. \\
        & = \sum_{k\in \Z}\frac{\sum_{y\in \Z_{t}} \rho_r(kt + y - c)^2}{\sum_{x\in \Z}\rho_r(x)^2} \cdot \frac{\left(\sum_{y\in \Z_{t}} \rho_r(kt + y - c)\right)^2}{t\sum_{y\in \Z_{t}} \rho_r(kt + y - c)^2}\\
        & =_{(1)} \sum_{k\in \Z}\frac{\sum_{y\in \Z_{t}} \rho_r(kt + y - c_1)^2}{\sum_{x\in \Z}\rho_r(x)^2} \cdot \frac{\left(\sum_{y\in \Z_{t}} \rho_r(kt + y - c_1)\right)^2}{t\sum_{y\in \Z_{t}} \rho_r(kt + y - c_1)^2}\\
        & \ge \sum_{-\frac{r\log n}{t} \leq k \leq \frac{r \log n}{t}, k\in \Z }\frac{\left(\sum_{y\in \Z_{t}} \rho_r(kt + y - c_1)\right)^2} {t\sum_{y\in \Z_{t}} \rho_r(kt + y - c_1)^2} \cdot \frac{\sum_{y\in \Z_{t}} \rho_r(kt + y - c_1)^2}{\sum_{x\in \Z}\rho_r(x)^2 }\\
        & \ge_{} \sum_{-\frac{r\log n}{t} \leq k \leq \frac{r \log n}{t}, k\in \Z}\frac{\min_{y \in \Z_{t}}\rho_r(kt + y - c_1)^2} {\max_{y \in \Z_{t}}\rho_r(kt + y - c_1)^2} \cdot \frac{\sum_{y\in \Z_{t}} \rho_r(kt + y - c_1)^2}{\sum_{x\in \Z}\rho_r(x)^2 }\\
        & \ge_{(2)} e^{-\frac{8\pi t \log n}{r}}\sum_{-\frac{r\log n}{t} \leq k \leq \frac{r \log n}{t}, k\in \Z} \cdot \frac{\sum_{y\in \Z_{t}} \rho_r(kt + y - c_1)^2}{\sum_{x\in \Z}\rho_r(x)^2 }\\
        & \ge_{(3)} 1 - \frac{8\pi t \log n}{r} - \negl(n)
    \end{split}
\]
where $(1)$ uses $c \equiv c_1 \pmod t$, $(2)$ uses for all $-\frac{r\log n}{t} \leq k \leq \frac{r \log n}{t}$, $g(k):= \frac{\min_{y \in \Z_{t}}\rho_r(kt + y - c_1)^2}{ \max_{y \in \Z_{t}}\rho_r(kt + y - c_1)^2 } \geq e^{-\frac{8\pi t \log n}{r}} $ (to see why, observe that when $k\geq 3$, $g(k)\ge \frac{\min_{y \in \Z_{t}}\rho_r(kt +t/2-c_1)^2}{ \rho_r(kt -t/2 - c_1)^2 }\geq e^{-\frac{8\pi k t^2}{r^2}}\geq e^{-\frac{8\pi t\log n}{r}} $, same when $k\leq -3$, when $|k|\leq 2$,  $g(k)\ge e^{-\frac{32\pi t^2}{r^2}}\geq e^{-\frac{8\pi t\log n}{r}} $  ); $(3)$ is due to \Cref{lemma:Bana95}.

By our choice of parameters, $1 - \frac{8\pi t \log n}{r} - \negl(n) \ge 1 - \frac{1}{n}$, which ends the proof of \Cref{thm:find_center}.

\subsection{Constructing quantum LWE states with Gaussian amplitudes}

In this subsection, we show polynomial time quantum algorithms for solving $\QLWE$, $\LWEstate$ problems where the amplitude is Gaussian with a specific choice of phase terms. Here we will state the theorem and algorithm for $\LWEstate$, the corresponding ones for $\QLWE$ are analogous.

\begin{theorem}\label{thm:CLWE}
    Let $n, m, q, \ell$ be positive integers and $r$ be a real number. Suppose that $m = 2\ell n \cdot \omega(\log n)$, $q$ is a composite number satisfying $q = q_1q_2\cdots q_\ell$ where $q_1, q_2, \cdots, q_\ell$ are coprime, $r$ satisfies $\frac{q}{\sqrt{n}} > r > 30n\log n\cdot \max{\set{q_1, q_2, \cdots, q_\ell}}$. 
    
    There exists a quantum algorithm running in time $\poly(n, \ell, \log q)$ that, takes input $\mat{A} \gets \mathcal{U}(\Z_q^{n \times m})$, outputs a state $\rho$ such that the trace distance between $\rho$ and $\phi := \kb{\phi}{\phi}$ is negligible, where 
    \[
        \ket{\phi} = \sum_{\ary{s} \in \Z_q^n}\sum_{\ary{x} \in \Z^m}\rho_r(\ary{x})e^{-\pi i \sum_{j = 1}^{\ell} \|\ary x_j\|^2/q_j} \ket{(\mat A^T\ary{s} + \ary{x}) \bmod q}
    \]
    is a quantum LWE state with Gaussian amplitude and quadratic phase terms, with probability $1 - \negl(n)$ over the randomness of $\mat{A}$. Here we write $\ary x$ into $\ell$ blocks $\ary x = (\ary x_1, \ary x_2, \cdots, \ary x_\ell)$ with $\ary x_j\in \Z^{m/\ell}, j = 1, 2, \cdots, \ell$.
\end{theorem}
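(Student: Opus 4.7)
The plan is to first prepare a ``witnessed'' version of $\ket{\phi}$ that carries $\ary{s}$ in an auxiliary register, and then coherently uncompute $\ary{s}$ by a secret-extraction subroutine that invokes the center-finding procedure of \Cref{thm:find_center} block by block and glues the answers together via CRT.

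\textbf{Witnessed state.} First I would prepare
\[
\ket{\Phi_0} \;=\; \frac{1}{\sqrt{q^n}}\sum_{\ary{s}\in \Z_q^n}\ket{\ary{s}}_{\reg{A}}\otimes \ket{\phi_{\ary{s}}}_{\reg{B}},
\qquad
\ket{\phi_{\ary{s}}}:=\sum_{\ary{x}\in \Z^m}\rho_r(\ary{x})\, e^{-\pi i \sum_j \|\ary{x}_j\|^2/q_j}\ket{(\mat{A}^{T}\ary{s}+\ary{x})\bmod q}.
\]
To do so, I put $\reg{A}$ in uniform superposition, use \Cref{lemma:Gaussianstateprep} to build $\sum_{\ary{x}}\rho_r(\ary{x})\ket{\ary{x}}$ on $\reg{B}$ (truncating at radius $r\sqrt{n}\cdot\poly(n)$, which costs only $2^{-\Omega(n)}$ in $\ell_2$-distance), apply the diagonal quadratic-phase unitary $\ket{\ary{x}}\mapsto e^{-\pi i \sum_j \|\ary{x}_j\|^2/q_j}\ket{\ary{x}}$ block-by-block, and finally coherently add $\mat{A}^T\ary{s}\bmod q$ from $\reg{A}$ into $\reg{B}$. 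All three steps run in time $\poly(n,m,\log q)$.

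\textbf{Secret extraction.} For every fixed $\ary{s}$, the $i$-th register in block $j$ holds $\sum_{x}\rho_r(x)e^{-\pi i x^2/q_j}\ket{(c_{i,j}+x)\bmod q}$ with $c_{i,j}=\ipd{\ary{a}_{i,j}}{\ary{s}}$. Because $q_j\mid q$, reading its low-order bits modulo $q_j$ yields exactly the state shape required by \Cref{thm:find_center} (the assumption $r>30n\log n\cdot q_j$ matches the hypothesis there). I would turn that measurement into a coherent isometry $V_{i,j}$ that writes its outcome into a fresh register $\reg{C}_{i,j}$; the branches in which $V_{i,j}$ records the correct residue $c_{i,j}\bmod q_j$ carry total weight at least $1-1/n$. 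Applying all $V_{i,j}$'s in parallel, I then apply a reversible classical post-processing unitary $U_{\mathrm{dec}}$ that, in each block $j$, runs a standard majority/consistency decoder (repeatedly sample $n$ residues, solve the linear system over $\Z_{q_j}$, and verify against the remaining residues) and finally assembles the per-block answers by CRT into a candidate $\ary{s}^{\star}$ in a register $\reg{D}$. The composite unitary $W:=U_{\mathrm{dec}}\cdot\prod_{i,j}V_{i,j}$ then satisfies
\[
W\,\ket{\phi_{\ary{s}}}_{\reg{B}}\ket{\ary{0}}_{\reg{C}\reg{D}}
\;=\;\ket{\xi_{\ary{s}}}_{\reg{B}\reg{C}}\otimes \ket{\ary{s}}_{\reg{D}}\;+\;\ket{\eta_{\ary{s}}},
\qquad \|\ket{\eta_{\ary{s}}}\|\le \negl(n),
\]
where the error bound comes from a Chernoff tail bound over the $m/\ell=2n\,\omega(\log n)$ samples in each block (forcing at least $n+O(1)$ correct residues whp), combined with the fact that a uniformly random $\mat{A}$ yields invertible $n\times n$ sub-matrices over each $\Z_{q_j}$ with overwhelming probability, and a union bound over the $\ell$ blocks.

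\textbf{Uncomputation and closing the error budget.} I subtract $\reg{D}$ coordinatewise modulo $q$ from $\reg{A}$, so that on the dominant branch $\reg{A}$ becomes $\ket{\ary{0}}$; then $W^{\dagger}$ returns $(\reg{C},\reg{D})$ to $\ket{\ary{0}}$ on the same branch. The off-dominant part has $\ell_2$-norm at most $\bigl(q^{-n}\sum_{\ary{s}}\|\ket{\eta_{\ary{s}}}\|^{2}\bigr)^{1/2}\le \negl(n)$, and unitaries preserve $\ell_2$-distance, so the final state is $\ket{\ary{0}}_{\reg{A}}\otimes \ket{\phi}_{\reg{B}}\otimes \ket{\ary{0}}_{\reg{C}\reg{D}}$ up to negligible $\ell_2$ error. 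Tracing out $\reg{A}\reg{C}\reg{D}$ and invoking \Cref{lemma:difftotd} converts this into the claimed negligible trace distance from $\kb{\phi}{\phi}$.

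\textbf{Main obstacle.} The delicate point is secret extraction. \Cref{thm:find_center} only provides per-sample success probability $1-O(1/n)$, which is too weak to invoke the gentle-measurement lemma across all $m=\Theta(n\log n)$ samples (the cumulative disturbance would be constant). The trick is that the witnessed state perfectly decorates distinct secrets by register $\reg{A}$, so the analysis can be done per-$\ary{s}$, paying only a Chernoff cost in $m/\ell$. The most careful work will be checking that $U_{\mathrm{dec}}$ can be implemented as a genuine unitary that succeeds on the bulk of $V_{i,j}$-outcomes simultaneously for every $\ary{s}\in\Z_q^n$, and, when $q_j$ is composite, that the linear-algebraic decoding over $\Z_{q_j}$ still goes through with overwhelming probability over $\mat{A}$.
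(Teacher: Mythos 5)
Your proposal follows the same high-level route as the paper: prepare the witnessed state $\sum_{\ary{s}}\ket{\ary{s}}_{\reg{A}}\ket{\phi_{\ary{s}}}_{\reg{B}}$, coherently run center-finding block-by-block, decode via Gaussian elimination plus a verification/Chernoff step and glue with CRT, then subtract and uncompute. You also correctly identify the main obstacle (per-coordinate success only $1-O(1/n)$).

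However, the error accounting in the uncomputation paragraph has a genuine gap. Your bound $\bigl(q^{-n}\sum_{\ary{s}}\|\ket{\eta_{\ary{s}}}\|^{2}\bigr)^{1/2}$ is the norm of the branch of $(I_{\reg{A}}\otimes W)\ket{\Phi_0}$ whose $\reg{D}$-register is wrong; it is indeed negligible because those components are tagged by the mutually orthogonal $\ket{\ary{s}}_{\reg{A}}$. The claim ``unitaries preserve $\ell_2$-distance, so the final state is $\ket{0}\ket{\phi}\ket{0}$ up to negligible error'' then silently assumes that $W^\dagger$ applied to the dominant branch $\ket{\xi_{\ary{s}}}\ket{\ary{s}}_{\reg{D}}$ returns exactly $\ket{\phi_{\ary{s}}}\ket{0}$. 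It does not: $W^\dagger(\ket{\xi_{\ary{s}}}\ket{\ary{s}}) = \ket{\phi_{\ary{s}}}\ket{0} - W^\dagger\ket{\eta_{\ary{s}}}$, so after subtracting $\reg{D}$ from $\reg{A}$ and applying $W^\dagger$ you are left with an additional residual $\frac{1}{\sqrt{q^n}}\ket{0}_{\reg{A}}\sum_{\ary{s}} W^\dagger\ket{\eta_{\ary{s}}}$. Crucially, this residual sits entirely in the $\reg{A}=0$ sector, so the per-$\ary{s}$ contributions are no longer tagged by an orthogonal label. The naive triangle inequality gives only $\sqrt{q^n}\cdot\max_{\ary{s}}\|\ket{\eta_{\ary{s}}}\|$, which is $\sqrt{q^n}\cdot\negl(n)$ and is not negligible when the decoder failure probability is merely $n^{-\omega(1)}$.

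To close the gap you need the additional observation that the vectors $\ket{\eta_{\ary{s}}}$ (equivalently, the $\ket{w_{\ary{s}}} := W\ket{\phi_{\ary{s}}}\ket{0}$) for distinct $\ary{s}$ are themselves nearly pairwise orthogonal in the $\reg{B}$ register, not just in $\reg{A}$. This follows because the $\reg{B}$-content of $\ket{w_{\ary{s}}}$ stays concentrated near $\mat{A}^{T}\ary{s}\bmod q$ (center-finding disturbs only the low-order bits), and by $\lambda_1^\infty(\Lattice_q(\mat{A}))\ge q/4$ together with $r<q/\sqrt{n}$ the supports of $\ket{w_{\ary{s}}}$ and $\ket{w_{\ary{s}'}}$ are $2^{-\Omega(n)}$-disjoint. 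With this in hand the cross-terms $\bk{\eta_{\ary{s}}}{\eta_{\ary{s}'}}$ are negligible and your bound does hold, but it is a substantive lemma that the write-up should state and prove rather than fold into ``unitaries preserve distance.''
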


The algorithm begin by constructing the state (within trace distance $2^{-\Omega(n)}$)
\[
    \ket{\phi_0} = \sum_{\ary{s} \in \Z_q^n}\ket{\ary s}\sum_{\ary{x} \in \Z^m}\rho_r(\ary{x})e^{-\pi i \sum_{j = 1}^{\ell} \|\ary x_j\|^2/q_j} \ket{(\mat A^T\ary{s} + \ary{x}) \bmod q},
\]
by combining the complex Gaussian state preparation procedure~\cite[Lemma 2.15]{chen2024quantum} with the tail bound in~\Cref{lemma:Bana93}, then recovering $\ary s\bmod q$ via the state in the second register
\[
    \ket{\psi_{\ary s}} = \sum_{\ary{x} \in \Z^m}\rho_r(\ary{x})e^{-\pi i \sum_{j = 1}^{\ell} \|\ary x_j\|^2/q_j} \ket{(\mat A^T\ary{s} + \ary{x}) \bmod q}.
\]
and subtract it in the first register to obtain $\ket{\phi}$. At a high level, the strategy of recovering $\ary{s}\bmod q$ is to divide $\mat{A}$ into $\ell$ groups $\mat A_1, \mat A_2, \cdots, \mat A_{\ell} \in \Z_q^{n \times m/\ell}$. We use the $j^{\text{th}}$ block of $m/\ell = 2n\cdot \omega(\log n)$ samples $(\mat A_j^{T}\ary s + \ary x_j)\bmod q$ to (coherently) compute $\ary s\bmod q_j$. Then by Chinese Remainder Theorem (CRT), one can (coherently) recover $\ary s\bmod q$ via $\ary s\bmod q_j, j = 1, 2, \cdots, \ell$ and subtract it in the first register. It suffices to prove the following lemma:

\begin{lemma}\label{lem:recover_s_mod_q_j}
    For any $j = 1, 2, \cdots, \ell$, given $\mat{A}_j \leftarrow {\cal{U}}(\Z_q^{n \times m/\ell})$ and the state
    \[
        \ket{\psi_{\ary s, j}} = \sum_{\ary{x}_j \in \Z^{m/\ell}}\rho_r(\ary{x}_j)e^{-\pi i \|\ary x_j\|^2/q_j} \ket{(\mat A_j^T\ary{s} + \ary{x}_j) \bmod q},
    \]
    one can recover $\ary s\bmod q_j$ in time $\poly(n)$ with probability $1 - \negl(n)$ over the randomness of $\mat{A}_j$ and measurements.
\end{lemma}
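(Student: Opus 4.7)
The key observation is that the input state tensor-factorizes across the $m/\ell$ columns of $\mat{A}_j$: writing $\ary{a}_i^{(j)}$ for the $i$-th column and $c_i := \langle \ary{a}_i^{(j)}, \ary{s}\rangle \bmod q$, we have
\[
\ket{\psi_{\ary{s},j}} \;=\; \bigotimes_{i=1}^{m/\ell} \sum_{x \in \Z} \rho_r(x)\, e^{-\pi i x^2/q_j}\, \ket{(c_i + x) \bmod q}.
\]
Because $r < q/\sqrt{n}$, a Gaussian tail bound shows that the mass wrapping around a single period of length $q$ is $\negl(n)$, so each tensor factor is $\negl(n)$-close to the complex Gaussian state of \Cref{thm:find_center} with center $c_i$ and $t = q_j$; moreover, since $q_j \mid q$, the ``low-order bits mod $q_j$'' subregister equals $(c_i + x) \bmod q_j$ regardless of the outer reduction modulo $q$. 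The parameter hypothesis $r \ge 30 q_j n \log n$ matches that theorem's requirement, so I would invoke \Cref{thm:find_center} on each factor in parallel to obtain outcomes $b_1,\ldots,b_{m/\ell} \in \Z_{q_j}$ with $b_i \equiv \langle \ary{a}_i^{(j)}, \ary{s}\rangle \pmod{q_j}$ independently with probability at least $1 - 1/n$.

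Reducing $\mat{A}_j$ modulo $q_j$ (which preserves uniformity since $q_j \mid q$) now turns the problem into solving a noisy linear system $(\mat{A}_j \bmod q_j)^T \ary{s} \equiv \ary{b} \pmod{q_j}$ in which, by a Chernoff bound, fewer than $4\omega(\log n)$ of the $2n \cdot \omega(\log n)$ equations are wrong, except with probability $\negl(n)$. I would handle the residual noise by a RANSAC-style routine: repeatedly draw a fresh random subset of $n + \Theta(\log n)$ equations, solve the resulting subsystem over the ring $\Z_{q_j}$ (e.g.\ via Hermite/Smith normal form), and verify each candidate $\ary{s}^\star$ by the fraction of \emph{all} $m/\ell$ equations it satisfies. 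The correct secret passes with fraction $\ge 1 - O(1/n)$, while any incorrect $\ary{s}^\star$ passes with fraction at most $1/p_{\min}(q_j) + \negl(n)$ over the residual randomness of the unused $\ary{a}_i^{(j)}$, giving a clean acceptance gap. Because a uniform $n + \Theta(\log n)$ subset is simultaneously noise-free (probability $\Omega(1)$, using $(1-1/n)^{n+O(\log n)} = \Omega(1)$) and full-rank modulo every prime factor of $q_j$ (probability $1 - \negl(n)$ by a union bound), $O(\log n)$ trials suffice to recover $\ary{s} \bmod q_j$ with probability $1 - \negl(n)$.

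The main delicacy is that $q_j$ need not be prime, so full-rank must be verified modulo every prime factor, with the worst case being very small primes such as $2$; this is precisely what forces the sample budget to $2n \cdot \omega(\log n)$ rather than $n + O(1)$. Otherwise the argument is routine, running in time $\poly(n, \ell, \log q)$. Finally, I would note that the entire procedure can be implemented coherently by a standard deferred-measurement argument: each per-coordinate center-finding measurement is replaced by a unitary that writes the outcome into a fresh register, the RANSAC trials and consistency checks are performed in superposition on these registers, and $\ary{s} \bmod q_j$ is copied out before uncomputing the per-coordinate outcomes. This coherent version is what the outer construction of \Cref{thm:CLWE} invokes, and it adds only $\negl(n)$ further trace error.
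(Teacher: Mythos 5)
Your overall route is the same as the paper's: factor the state across coordinates, apply \Cref{thm:find_center} per coordinate to get a linear system in $\Z_{q_j}$ with per-equation error rate $1/n$, and then solve it by a guess-and-verify scheme. The paper's classical subroutine partitions the $m/\ell$ equations into disjoint blocks of size $2n$, solving from one block and verifying against another; you instead run a RANSAC loop over random subsets of size $n+\Theta(\log n)$ and verify against the full system. Both are legitimate, and yours has the small advantage of being slightly more sample-frugal per trial, at the cost of some care around composite $q_j$.

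There are two quantitative gaps, one of which matters. First, with per-trial success probability only $\Omega(1)$ (the probability that a random subset of $n+\Theta(\log n)$ equations is simultaneously noise-free and full-rank), $O(\log n)$ trials gives residual failure probability $(1-\Omega(1))^{O(\log n)} = n^{-\Theta(1)}$, which is inverse polynomial, not $\negl(n)$; you need $\omega(\log n)$ trials (and the sample budget $m/\ell = 2n\cdot\omega(\log n)$ does support this — it is exactly what the paper uses). Second, the claim that an $n\times(n+\Theta(\log n))$ uniform matrix over $\Z_{q_j}$ is full rank modulo every prime divisor of $q_j$ ``with probability $1-\negl(n)$ by a union bound'' is false when $q_j$ is even: modulo $2$ the full-rank probability with $c\log n$ extra columns is only $1-\Theta(n^{-c\log 2})$. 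This is why the paper takes blocks of size $2n$, which drives the per-block rank-deficiency probability down to $2^{-\Omega(n)}$. Neither issue is fatal — fixing the trial count to $\omega(\log n)$ already absorbs the weaker per-trial rank guarantee — but the proposal as written does not reach the claimed $1-\negl(n)$. Finally, one small imprecision in the quantum part: the statement that each tensor factor is $\negl(n)$-close to the unwrapped state $\sum_x\rho_r(x)e^{-\pi ix^2/q_j}\ket{c_i+x}$ is not true for arbitrary $c_i\in\Z_q$ (it fails when $c_i$ is near the $\pm q/2$ cut); what actually carries the argument is your ``moreover'' clause that the low-order bits modulo $q_j$ are unaffected by the outer reduction modulo $q$, which is precisely the role played by the intermediate states $\ket{\psi'_{\ary s,j}}$ and $\ket{\psi''_{\ary s,j}}$ in the paper's proof, and you should lean on that observation alone rather than a false full-state closeness claim.
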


\begin{proof}
    First observe that, when looking at the lower order bits (for modulus $q_j$) of $\ket{\psi_{\ary s, j}}$, it shares the same density matrix with the state
    \[
        \ket{\psi_{\ary s, j}'} = \sum_{\ary{x}_j \in \Z^{m/\ell}}\rho_r(\ary{x}_j)e^{-\pi i \|\ary x_j\|^2/q_j} \ket{(\mat A_j^T\ary{s} \bmod q_j + \ary{x}_j) \bmod q}.
    \]
    Therefore, when applying the center-finding technique (\Cref{thm:find_center}, which only works on the lower order bits), the measurement result of $\ket{\psi_{\ary s, j}}$ and $\ket{\psi_{\ary s, j}'}$ shares the same distribution. 

    Notice that the absolute value of each entry of $\mat A_j^T\ary{s} \bmod q_j + \ary{x}_j$ is at most $q_j/2 + rn^{1/3} \le q/2$ with overwhelming probability (by \Cref{lemma:Bana95}), when $\ary x_j$ follows the distribution of $\rho_r^2$. Thus $\ket{\psi_{\ary s, j}'}$ is $\negl(n)$-close to the state
    \[
        \ket{\psi_{\ary s, j}''} = \sum_{\ary{x}_j \in \Z^{m/\ell}}\rho_r(\ary{x}_j)e^{-\pi i \|\ary x_j\|^2/q_j} \ket{\mat A_j^T\ary{s} \bmod q_j + \ary{x}_j}.
    \]

    

    Observe that $\ket{\psi_{\ary s, j}''}$ is exactly a tensor product of complex Gaussian state $\ket{\phi_c}$ defined in \Cref{thm:find_center} for $c$ being entries of $\mat A_j^T\ary{s} \bmod q_j$. Applying the center-finding technique (\Cref{thm:find_center}) coordinate by coordinate on the state $\ket{\psi_{\ary s, j}''}$, we can obtain a measurement outcome $\tilde{\ary{y}} \in \Z_{q_j}^{m/\ell}$, which is an approximation of $\mat A_j^T\ary{s} \bmod q_j$ such that each coordinate is independently correct with probability at least $1 - \frac{1}{n}$ over the measurements (we will call it $(1-\frac{1}{n})$-approximation of $\mat A_j^T\ary{s} \bmod q_j$).

    Now let's construct a (classical) polynomial time probabilistic algorithm $\cal{B}$, which uses Gaussian elimination along with a verification process to obtain $\ary s\bmod q_j$ given $\mat A_j \leftarrow {\cal{U}}(\Z_q^{n \times m/\ell})$ and $\tilde{\ary y}$, which is supposed to be an $(1 - \frac{1}{n})$-approximation of $\mat A_j^T \ary{s} \bmod q_j$, with overwhelming probability over the randomness of $\mat A_j$ and $\tilde{y}$. Concretely, the algorithm runs
    \begin{enumerate}
        \item Divide $\mat{A}_j$ into groups of $n$ by $2n$ matrices $\mat{A}_{j, 1}, \mat{A}_{j, 2}, \cdots, \mat{A}_{j, m/(2\ell n)} \in \Z_q^{n \times 2n}$, and write $\tilde{\ary{y}}$ into blocks $\tilde{\ary{y}} = (\tilde{\ary{y}}_1, \tilde{\ary{y}}_2, \cdots, \tilde{\ary{y}}_{m/(2\ell n)})$ with each block belonging to $\Z_{q_j}^{2  n}$;
        \item For each $i = 1, 2, \cdots, m/(4\ell n)$:
        \begin{enumerate}
            \item If $\mat{A}_{j, i}$ is not full-rank (with rank $n$), go to the next iteration;
            \item Solve the linear equations $\mat{A}_{j, i}^T\ary{s} = \tilde{\ary{y}}_i \bmod q_j$ using Gaussian elimination. If the linear equations are not solvable, go to the next iteration; otherwise obtain a solution $\ary s'\in \Z_{q_j}^{n}$;
            \item Compare $\ary y' = \mat A_{j, i + m/(4\ell n)}^T \ary s' \bmod q_j$ with $\tilde{\ary y}_{i + m/(4\ell n)}$. If there are at least $0.9$ fraction of entries are matched, then output $\ary s = \ary s'$; otherwise, go to the next iteration.
        \end{enumerate}
    \end{enumerate}

    By construction, $\cal{B}$ runs in polynomial time. Now let's show its correctness.

    First, the verification process in step 2(c) ensures that the probability of outputting a wrong answer is negligible. Namely, when a wrong answer $\ary s' \neq \ary s\bmod q_j$ is obtained, each entry of $\tilde{\ary y}_{i + m/(4\ell n)}$ equals to the corresponding row of $\mat A_{j, i + m/(4\ell n)}^T \ary s \bmod q_j$ with probability at least $1 - 1/n$. Since the matrix $\mat A_j$ is chosen uniformly at random, each row of $\mat A_{j, i + m/(4\ell n)}^T \ary s \bmod q_j$ and $\ary y' = \mat A_{j, i + m/(4\ell n)}^T \ary s' \bmod q_j$ are different with probability at least $\frac{1}{2}$. Therefore, each entry of $\ary y'$ and $\tilde{\ary y}_{i + m/(4\ell n)}$ are equal with probability at most $\frac{1}{2} + \frac{1}{n} < 0.6$. By Chernoff bound, this $\ary{s'} \ne \ary{s} \bmod q_j$ only passes 2(c) with negligible probability.

    Second, the probability that the algorithm can output the correct secret $\ary s\bmod q_j$ is overwhelming. Notice that a random $n$ by $2n$ matrix is full-rank in $\Z_{q_j}$ with overwhelming probability (see \cite[Claim 3.3]{chen2024quantum}), so the failure probability in step 2(a) is negligible. Observe that none of the entries in $\tilde{\ary y}_i$ is wrong with probability at least $\left(1 - \frac{1}{n}\right)^{2n} > 0.1$, over the randomness of measurement in the center-finding process. Thus, among the $m / (4\ell n) = \omega(\log n)$ linear equations $\mat A_{j, i}^T \ary s = \tilde{\ary y_i}\bmod q_j$ for $i = 1, 2, \cdots, m/(4\ell n)$, except with negligible probability, at least one of them consists of rows without any error, which leads to the correct $\ary{s}$ in step 2(b). This correct $\ary{s}$ will pass the verification process in step 2(c) with overwhelming probability because each entry of $\ary y'$ and $\tilde{\ary y}_{i + m/(4\ell n)}$ are equal with probability at least $1 - \frac{1}{n}$ in this case.

    \Cref{lem:recover_s_mod_q_j} follows from the fact that when applying the center-finding technique, the measurement result of $\ket{\psi_{\ary s, j}}$ and $\ket{\psi_{\ary s, j}''}$ are indistinguishable. 
\end{proof}






\subsection{Oblivious LWE sampling}

While constructing a quantum LWE state with a specific choice of phase terms may not seem interesting at first glance, the problem is actually closely related to a problem called oblivious LWE sampling for a vast range of LWE parametrizations.

Roughly speaking, in oblivious LWE sampling, the sampler is asked to sample from the LWE distribution $(\mat{A}, \mat{A}^T \ary{s}+\ary{e})$, while any extractor, observing the sampler (but not disturbing the sampler's state), cannot extract the secret $\ary{s}$. 

Here we abuse the notation $\LWE_{n, m, q, \alpha}$ to represent the distribution of valid LWE samples $\mat{A}, \ary{b} = \mat{A}^T\ary{s}+\ary{e}\bmod q$, where $\mat{A}\la U(\Z_q^{n\times m})$, $\ary{s}\la U(\Z_q^n)$, and $\ary{e}\la D_{\Z, \alpha q}^m$. 
Later we will use two variants of $\LWE_{n, m, q, \alpha}$. 
Let $\LWE_{n, m, q, \leq \alpha}$ denote the case where the error is sampled from $D_{\Z, \beta q}$ for some $\beta\leq \alpha$. 
Let $\LWE_{n, m, q, \alpha}(D_{\Z^n, s})$ denote  the distribution that produces LWE samples where the secret is sampled from $D_{\Z^n, s}$ for some $s>0$.

\begin{definition}[Witness-Oblivious Quantum Samplers, \cite{debris2024quantum}]
    Let $n, m, q, \alpha$ be LWE parameters following the same definition as in~\Cref{def:LWE}. A quantum polynomial time algorithm $\cal{S}$ is called a \emph{witness-oblivious quantum sampler} for $\LWE_{n, m, q, \alpha}$ if it has the following properties:
    \begin{enumerate}
        \item Given as input the security parameter $1^\lambda$, a matrix $\mat{A} \in \Z_q^{n \times m}$, and polynomial ancillas initialized to $\ket{0}$, $\cal{S}$ outputs a pair $(\mat{A}, \ary{b})$ where $\ary{b} \in \Z_q^m$ such that for a uniformly distributed $\mat{A}$, the distribution of ${\cal{S}}(1^{\lambda}, \mat{A}, \ket{0}^{\poly(\lambda)})$ is statistically indistinguishable with the distribution $\LWE_{n, m, q, \alpha}$;
        \item For any valid quantum polynomial-time extractor $\cal{E}$ for $\cal{S}$, we have that
        \iffull
        \[\Pr\left[\ary{s} = \ary{s'} \text{ and } \ary{e} = \ary{e'}: \substack{\mat{A} \leftarrow \Z_q^{n \times m}\\ \left(\left(\mat{A}, \ary{b} = \mat{A}^T\ary{s} + \ary{e}\right), \left(\ary{s'}, \ary{e'}\right)\right) \leftarrow \langle {\cal{S}}, {\cal{E}}\rangle\left(\tau_{\cal{S}} = \left(1^{\lambda}, \mat{A}, \ket{0}^{\poly(\lambda)}\right), \tau_{\cal{E}} = \ket{0}^{\poly(\lambda)}\right)}\right] \leq \negl(\lambda)\]
        \else
        \[\Pr\left[\ary{s} = \ary{s'} ,\ary{e} = \ary{e'}: \substack{\mat{A} \leftarrow \Z_q^{n \times m}\\ \left(\left(\mat{A}, \ary{b} = \mat{A}^T\ary{s} + \ary{e}\right), \left(\ary{s'}, \ary{e'}\right)\right) \leftarrow \langle {\cal{S}}, {\cal{E}}\rangle\left( \left(1^{\lambda}, \mat{A}, \ket{0}^{\poly(\lambda)}\right), \ket{0}^{\poly(\lambda)}\right)}\right] \leq \negl(\lambda)\]
        \fi
        where $\langle {\cal{S}}, {\cal{E}}\rangle(\tau_{\cal{S}}, \tau_{\cal{E}})$ denote the joint output of the sampler $\cal{S}$ and the extractor $\cal{E}$ on the input $\tau_{\cal{S}}$ and $\tau_{\cal{E}}$, and an extractor $\cal{E}$ is called \emph{valid} for the sampler $\cal{S}$ if and only if $\cal{E}$ does not change the state of the sampler $\cal{S}$ up to negligible trace distance when $\cal{S}$ and $\cal{E}$ are running together.
    \end{enumerate}
\end{definition}

\begin{theorem}[Theorem 2, \cite{debris2024quantum}]\label{thm:oblivious_to_CLWE}
    Let $n, m, q, \alpha$ be LWE parameters. Assume the quantum hardness of $\LWE_{n, m, q, \alpha}$. If a quantum polynomial algorithm $\cal{S}$ solves $\LWEstate_{n, m, q, f}$, where the amplitude $f$ satisfies $|f|^2\propto D_{\Z, \alpha q}$, then $\cal{S}$ followed by computational measurements is a witness-oblivious quantum sampler for $\LWE_{n, m, q, \alpha}$.
\end{theorem}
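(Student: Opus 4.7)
The plan is to verify the two requirements of a witness-oblivious quantum sampler in turn: (i) the output distribution of $\cal{S}$ followed by a computational measurement matches $\LWE_{n,m,q,\alpha}$; and (ii) no valid quantum polynomial time extractor can recover the hidden witness with non-negligible probability.

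First I would compute the output distribution. From the $\LWEstate$ state
$$\ket{\Psi_\mat{A}} \propto \sum_{\ary{s}\in\Z_q^n}\bigotimes_{i=1}^m \sum_{e_i\in\Z_q} f(e_i)\,\ket{\ipd{\ary{a}_i}{\ary{s}}+e_i\bmod q},$$
the probability of a computational-basis outcome $\ary{b}$ is proportional to $\bigl|\sum_{(\ary{s},\ary{e}):\mat{A}^T\ary{s}+\ary{e}=\ary{b}}\prod_i f(e_i)\bigr|^2$. For all but a $\negl(n)$ fraction of $\mat{A}$, \Cref{lemma:linfty} guarantees $\lambda_1^\infty(\Lattice_q(\mat{A}))=\Omega(q)$, so the Gaussian tail bound (\Cref{lemma:Bana93}) forces all but a negligible amount of the probability mass onto $\ary{b}$ admitting a unique small preimage $(\ary{s},\ary{e})$. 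The probability then factors as uniform in $\ary{s}$ times $\prod_i|f(e_i)|^2 \propto D_{\Z,\alpha q}(\ary{e})$, which matches $\LWE_{n,m,q,\alpha}$ up to negligible total variation distance.

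For witness obliviousness I would argue by contradiction: suppose a valid $\cal{E}$ outputs the true $(\ary{s},\ary{e})$ with non-negligible probability $\epsilon$. The validity condition says $\cal{E}$ perturbs $\cal{S}$'s state by at most $\negl(n)$ trace distance; combined with gentle measurement (\Cref{lemma:gentle_measurement}), I can defer $\cal{E}$ to after $\cal{S}$ has performed its final computational measurement without changing the joint distribution of $(\mat{A},\ary{b},\ary{s}',\ary{e}')$. After this deferral, $\cal{E}$'s input is essentially just the classical pair $(\mat{A},\ary{b})$ (the original $\ket{\Psi_\mat{A}}$ has collapsed to $\ket{\ary{b}}$, and any ancillas of $\cal{S}$ are efficiently reconstructible from $(\mat{A},\ary{b})$). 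Thus $\cal{E}$ reduces to a quantum polynomial time algorithm $\cal{B}$ whose input is $(\mat{A},\ary{b})$ and whose output satisfies $\mat{A}^T\ary{s}'+\ary{e}'=\ary{b}\bmod q$ with probability $\epsilon-\negl(n)$. Combining with part (i), the distribution fed to $\cal{B}$ is statistically indistinguishable from $\LWE_{n,m,q,\alpha}$, so $\cal{B}$ breaks $\LWE_{n,m,q,\alpha}$ on genuine challenges, contradicting the assumed hardness.

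The main obstacle I anticipate is rigorously arguing that $\cal{E}$ reduces to a function of the classical pair $(\mat{A},\ary{b})$. A priori $\cal{E}$ may entangle with $\cal{S}$'s workspace ancillas, and even after the output measurement there could be residual quantum state that $\cal{E}$ exploits. To handle this I would formalize the claim using an information-versus-disturbance argument: any ancillary quantum information beyond $\ary{b}$ that helps $\cal{E}$ would also produce a non-negligible disturbance detectable on $\cal{S}$'s state, contradicting validity. On the constructive side, I would need to show that the effective $\cal{B}$ can be efficiently implemented on a freshly prepared $(\mat{A}^*,\ary{b}^*)$ by running $\cal{E}$ on a reconstructed post-measurement state, a step that is straightforward when $\cal{S}$ is the canonical $\LWEstate$ sampler with no state left behind after the computational measurement.
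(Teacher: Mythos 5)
This theorem is cited from \cite{debris2024quantum} and the paper reproduces no proof of it, so there is no internal argument to compare against; I will assess your proposal on its own terms.

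Your part~(i) is fine: with $m \ge 2n\log q$, \Cref{lemma:linfty} makes $\lambda_1^\infty(\Lattice_q(\mat{A}))$ linear in $q$ for almost all $\mat{A}$, the Banaszczyk tail forces the mass of $\ket{\Psi_\mat{A}}$ onto strings $\ary{b}$ with a unique small preimage $(\ary{s},\ary{e})$, and there the cross terms in $\bigl|\sum_{\ary{s}}\prod_i f(b_i-\ipd{\ary{a}_i}{\ary{s}})\bigr|^2$ vanish, so the measured distribution factorizes as uniform in $\ary{s}$ times $\prod_i|f(e_i)|^2\propto D_{\Z,\alpha q}(\ary{e})$. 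Your part~(ii) has the right overall structure (validity $\Rightarrow$ $\cal{E}$ reduces to a QPT algorithm on $(\mat{A},\ary{b})$ $\Rightarrow$ LWE solver) and correctly identifies that the crux is showing $\cal{E}$ can be collapsed to a function of $(\mat{A},\ary{b})$.

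The gap is in how you try to close that crux. Gentle measurement (\Cref{lemma:gentle_measurement}) and the deferred-measurement principle do not give you the commutation you want: $\cal{E}$ acts on $\cal{S}$'s register \emph{before} the computational-basis measurement, in a basis in which $\ket{\Psi_\mat{A}}$ is far from diagonal, so ``deferring $\cal{E}$ past the measurement'' is not a permissible rewriting and gentle measurement does not license it. The tool you actually want is the pure-marginal-implies-product fact: validity says the reduced state on $\cal{S}$'s register after the interaction is $\negl$-close to the \emph{pure} state $\ket{\Psi_\mat{A}}$; since a marginal that is close to pure forces the joint (purified) state on $\cal{S}\otimes\cal{E}$ to be $O(\sqrt{\negl})$-close to a product $\ket{\Psi_\mat{A}}\otimes\ket{\phi_E}$, the extractor's quantum state at that point is essentially independent of $\cal{S}$'s register, and hence of the forthcoming outcome $\ary{b}$. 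Moreover $\ket{\phi_E}$ is itself efficiently reproducible (run $\cal{E}$ on a freshly generated $\ket{\Psi_\mat{A}}$, which the reduction can prepare because it has the $\LWEstate$ solver), so after $\cal{E}$ also observes the publicly released $\ary{b}$, its whole behavior is a QPT map with classical input $(\mat{A},\ary{b})$. Combined with your part~(i), this contradicts LWE hardness. I would replace your deferral/gentle-measurement step with this monogamy-style argument; as written, the proposal names the obstacle but does not resolve it, and the cited lemmas do not actually bridge it.
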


\begin{corollary}\label{coro:OSAMP}
    Let $n, m, q, \ell$ be positive integers and $r$ be a real number. Suppose that $m \geq 2\ell n\cdot \omega(\log n)$, $q$ is a composite number satisfying $q = q_1q_2\cdots q_\ell$ where $q_1, q_2, \cdots, q_\ell$ are coprime, $r$ satisfies $\frac{q}{\sqrt{n}} > r > 30n\log n\cdot \max{\set{q_1, q_2, \cdots, q_\ell}}$. Assume the quantum hardness of $\LWE_{n, m, q, r/(\sqrt{2}q)}$, there exists a witness-oblivious quantum sampler for $\LWE_{n, m, q, r/(\sqrt{2}q)}$.
\end{corollary}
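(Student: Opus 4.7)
The plan is to combine \Cref{thm:CLWE} with \Cref{thm:oblivious_to_CLWE} in an essentially one-line chain. First, I would invoke \Cref{thm:CLWE} to obtain, in time $\poly(n,\ell,\log q)$, a quantum state that is within negligible trace distance of
\[
    \ket{\phi} = \sum_{\ary{s}\in\Z_q^n}\sum_{\ary{x}\in\Z^m}\rho_r(\ary{x})\,e^{-\pi i\sum_{j=1}^\ell \|\ary{x}_j\|^2/q_j}\,\ket{(\mat{A}^T\ary{s}+\ary{x})\bmod q}.
\]
This state has exactly the $\LWEstate_{n,m,q,f}$ form, where the effective per-coordinate amplitude is $f(x)=\rho_r(x)\,e^{-\pi i x^2/q_j}$ (with $q_j$ depending only on the block index of the coordinate, not on $\ary{s}$).

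Next, I would observe that the phase term has modulus one, so
\[
    |f(x)|^2 \;=\; \rho_r(x)^2 \;=\; \rho_{r/\sqrt{2}}(x),
\]
which is proportional to the discrete Gaussian density $D_{\Z,\,r/\sqrt{2}}$. Setting $\alpha q := r/\sqrt{2}$, i.e.\ $\alpha := r/(\sqrt{2}q)$, the amplitude satisfies the hypothesis $|f|^2 \propto D_{\Z,\alpha q}$ of \Cref{thm:oblivious_to_CLWE}. Thus, composing the sampler from \Cref{thm:CLWE} with a computational-basis measurement and invoking \Cref{thm:oblivious_to_CLWE} under the assumed quantum hardness of $\LWE_{n,m,q,r/(\sqrt{2}q)}$ directly yields a witness-oblivious quantum sampler for $\LWE_{n,m,q,r/(\sqrt{2}q)}$. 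Measuring $\ket{\phi}$ in the computational basis yields $(\mat{A},\mat{A}^T\ary{s}+\ary{x}\bmod q)$ with $\ary{s}$ uniform in $\Z_q^n$ and each coordinate of $\ary{x}$ distributed (up to negligible error) as $D_{\Z,r/\sqrt{2}}$, matching the target LWE distribution.

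The only subtlety I anticipate is that in \Cref{thm:CLWE} the phase $e^{-\pi i \|\ary{x}_j\|^2/q_j}$ varies with the block index $j$, whereas \Cref{thm:oblivious_to_CLWE} is phrased with a single amplitude $f$. This is a purely cosmetic issue: \Cref{thm:oblivious_to_CLWE} only uses $f$ through the condition $|f|^2\propto D_{\Z,\alpha q}$, which is independent of the phase and therefore holds uniformly across all blocks. If greater care is desired one can view the sampler as an $\LWEstate$ solver for a coordinate-dependent amplitude and rerun the short proof of \Cref{thm:oblivious_to_CLWE} almost verbatim, since the obliviousness argument there only depends on the marginal distribution of the measured $\ary{x}$. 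I would also note the sample-count matching: \Cref{thm:CLWE} requires $m = 2\ell n\cdot\omega(\log n)$, and the corollary's hypothesis $m\ge 2\ell n\cdot\omega(\log n)$ allows us to invoke the theorem on the first $2\ell n\cdot\omega(\log n)$ columns and then produce additional oblivious samples by padding or by a straightforward rerandomization using the surplus columns, so the statement covers all $m$ in the specified range.
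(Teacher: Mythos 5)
Your proposal is correct and follows essentially the same route as the paper, whose proof is the one-liner that \Cref{coro:OSAMP} is a direct corollary of \Cref{thm:CLWE} and \Cref{thm:oblivious_to_CLWE}. The additional bookkeeping you supply — verifying $|f|^2 = \rho_{r/\sqrt2} \propto D_{\Z,\alpha q}$ with $\alpha = r/(\sqrt 2 q)$, noting the block-dependent phase is irrelevant since only $|f|^2$ enters \Cref{thm:oblivious_to_CLWE}, and reconciling the $m=$ versus $m\ge$ mismatch — is accurate and just makes explicit what the paper leaves implicit.
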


\begin{proof}
This is a direct corollary of \Cref{thm:CLWE} and \Cref{thm:oblivious_to_CLWE}.
\end{proof}



As noted in \cite{debris2024quantum}, once we get an oblivious $\LWE_{n,m, q, \alpha}$ sampler for certain composite modulus $q$, we can throw away additional samples to get an oblivious $\LWE_{n,m', q, \alpha}$ sampler for $m'\leq m$. We can also use the modulus switching technique in \cite{DBLP:conf/stoc/BrakerskiLPRS13} to get an oblivious LWE sampler for $\LWE_{n,m, q', \alpha'}$ where $q'< q$ is not necessarily composite. 

Here we spell out the details of the modulus switching part. To use modulus switching, we need to start from LWE samples where the length of the secret is shorter than $q$ (it is easy to modify the quantum algorithm for $\LWEstate$ so that the secret is a superposition of a general distribution). For consistency let us fix the secret to be sampled from $D_{\Z^n, s}$ and adapt from \cite[Corollary~3.2]{DBLP:conf/stoc/BrakerskiLPRS13}:
\begin{lemma}\label{lemma:modulusswitching}
For any positive integers $n, m$, $q\geq q'\geq 1$, $\alpha, \alpha'\in(0, 1)$ such that $m, \log q\in \poly(n)$, 
\[ \left( \alpha' \right)^2 \geq  \alpha^2 + \left( \frac{s\sqrt{n}}{q'} \right)^2 \cdot \omega(\log n),   \]
there is a $\poly(n)$ time classical algorithm that takes as input a sample from $\LWE_{n, m, q, \leq \alpha}(D_{\Z^n, s})$, outputs a sample within negligible statistical distance from $\LWE_{n, m, q', \leq \alpha'}(D_{\Z^n, s})$. 
\end{lemma}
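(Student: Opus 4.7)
The plan is to adapt the randomized modulus-switching procedure of Brakerski--Langlois--Peikert--Regev--Stehl\'e to this setting. Given a sample $(\mat{A}, \ary{b} = \mat{A}^T \ary{s} + \ary{e} \bmod q)$ from $\LWE_{n,m,q,\leq \alpha}(D_{\Z^n, s})$, the algorithm produces $(\mat{A}', \ary{b}') \in \Z_{q'}^{n\times m} \times \Z_{q'}^m$ via a coordinate-wise randomized rounding of $(q'/q)\mat{A}$ and $(q'/q)\ary{b}$ modulo $q'$. Each coordinate is rounded by sampling a fresh one-dimensional discrete Gaussian offset of width $\eta = \omega(\sqrt{\log n})$ (above the smoothing parameter of $\Z$) and then taking the nearest integer, so that the per-coordinate rounding error is itself within negligible statistical distance of a discrete Gaussian of width $\eta$.

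Writing $\mat{A}' \equiv (q'/q)\mat{A} - \mat{F} \pmod{q'}$ and $\ary{b}' \equiv (q'/q)\ary{b} + \ary{f} \pmod{q'}$ for the rounding matrix $\mat{F}$ and vector $\ary{f}$ with independent near-Gaussian entries of width $\eta$, a direct expansion gives
\[
\ary{b}' - (\mat{A}')^T \ary{s} \;\equiv\; \frac{q'}{q}\ary{e} \;+\; \ary{f} + \mat{F}^T \ary{s} \pmod{q'}.
\]
The first summand has (discrete Gaussian) width at most $(q'/q)\alpha q = \alpha q'$. Conditioned on $\ary{s}$, the $i$-th coordinate of the second summand equals $f_i + \sum_j F_{ji} s_j$, a sum of independent scaled discrete Gaussians; the discrete Gaussian convolution lemma (applicable because $\eta$ exceeds the smoothing parameter of $\Z$ and the nonzero scalings are integers) then gives that this sum is within negligible statistical distance of a discrete Gaussian of width $\eta\sqrt{1 + \|\ary{s}\|^2}$.

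A Banaszczyk-type tail bound on $\ary{s} \leftarrow D_{\Z^n, s}$ yields $\|\ary{s}\|^2 \leq s^2 n \cdot \omega(\log n)$ except with negligible probability, so by Pythagoras the combined output error has width squared bounded by
\[
(\alpha q')^2 + \eta^2 (1 + s^2 n \cdot \omega(\log n)) \;\leq\; (\alpha' q')^2,
\]
where the inequality uses $\eta^2 = O(\log n)$ and the hypothesis $(\alpha')^2 \geq \alpha^2 + (s\sqrt{n}/q')^2 \cdot \omega(\log n)$, absorbing all polylogarithmic slack into the $\omega(\log n)$ factor. The marginal of $\mat{A}'$ is statistically close to uniform over $\Z_{q'}^{n\times m}$ since $\mat{A}$ is uniform over $\Z_q^{n\times m}$ and $q \geq q'$, and the secret distribution $D_{\Z^n, s}$ is unchanged since the procedure never touches $\ary{s}$. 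The whole algorithm is classical and runs in $\poly(n, \log q)$ time.

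The main technical obstacle is not the error arithmetic but the claim that $\ary{f} + \mat{F}^T \ary{s}$ is close to a true discrete Gaussian rather than merely a bounded random variable; only under the former can the rounding contribution be absorbed into the Gaussian error budget of the output LWE distribution. This is exactly the content of the discrete Gaussian convolution lemma, whose hypothesis forces $\eta$ to sit above the smoothing parameter of $\Z$ uniformly for every coordinate of a typical $\ary{s}$; this is where the $\omega(\sqrt{\log n})$ choice of $\eta$, and in turn the $\omega(\log n)$ slack in the lemma statement, is consumed. The remainder is routine bookkeeping of moduli, norms, and negligible accumulations.
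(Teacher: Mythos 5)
Your proposal reconstructs exactly the randomized-rounding modulus-switching argument of Brakerski, Langlois, Peikert, Regev, and Stehl\'e, which is precisely what the paper intends: the paper states this lemma without proof, citing \cite[Corollary~3.2]{DBLP:conf/stoc/BrakerskiLPRS13} as the source to adapt, so your blind reconstruction matches the intended argument and its key ingredients (randomized Gaussian rounding above $\eta_\epsilon(\Z)$, the discrete Gaussian convolution lemma for integer scalings, and a Banaszczyk tail bound on $\|\ary{s}\|$ for $\ary{s}\leftarrow D_{\Z^n,s}$).

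One bookkeeping slip worth fixing: you invoke both $\|\ary{s}\|^2 \le s^2 n \cdot \omega(\log n)$ and $\eta^2 = \omega(\log n)$, whose product is $s^2 n \cdot \omega(\log^2 n)$ and would overshoot the $\omega(\log n)$ budget in the hypothesis. The correct accounting is that Banaszczyk gives $\|\ary{s}\|^2 = O(s^2 n)$ (a constant factor, no log), and only the smoothing requirement on $\eta$ contributes the $\omega(\log n)$ factor, so the rounding variance is $\eta^2(1+\|\ary{s}\|^2) = s^2 n \cdot \omega(\log n)$ as needed. You also gloss over two small points the BLPRS13 proof handles carefully: the terms $f_i$ and $F_{ji}$ live on cosets of $\Z$ determined by $(q'/q)\ary{b}$ and $(q'/q)\mat{A}$, and the deterministic piece $(q'/q)e_i$ is what lands the total error back in $\Z$; and the near-uniformity of $\mat{A}'$ over $\Z_{q'}^{n\times m}$ relies on the randomized rounding smoothing the lattice $(q'/q)\Z$ modulo $q'$, not merely on $q\ge q'$. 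These are finishable details, not gaps in the approach.
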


As a corollary of \Cref{coro:OSAMP} and \Cref{lemma:modulusswitching}:
\begin{corollary}\label{coro:modulusswitch}
    Let $n, m, q, \ell$ be positive integers and $r$ be a real number. 
    Suppose that $m \geq 2\ell n\cdot \omega(\log n)$, $q$ is a composite number satisfying $q = q_1q_2\cdots q_\ell$ where $q_1, q_2, \cdots, q_\ell$ are coprime, $r$ satisfies $\frac{q}{\sqrt{n}} > r > 30n\log n\cdot \max{\set{q_1, q_2, \cdots, q_\ell}}$. 
    Suppose positive integers $m'\leq m, q'\leq q$ and real numbers $s$, $r'>0$ satisfies 
    \[ \left( \frac{r'}{\sqrt 2 q'} \right)^2 \geq  \left( \frac{r}{\sqrt 2 q} \right)^2 + \left( \frac{s\sqrt{n}}{q'} \right)^2 \cdot \omega(\log n),   \]
    and assume the quantum hardness of $\LWE_{n, m, q, r/(\sqrt{2}q)}(D_{\Z^n, s})$, there exists a witness-oblivious quantum sampler for $\LWE_{n, m', q', r'/(\sqrt{2}q')}(D_{\Z^n,s})$.
\end{corollary}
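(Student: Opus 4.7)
The plan is to chain the quantum sampler from \Cref{coro:OSAMP} with the classical modulus-switching procedure of \Cref{lemma:modulusswitching}, and then argue that this composition remains witness-oblivious. First I would obtain, via essentially the same construction as in \Cref{thm:CLWE} and \Cref{coro:OSAMP}, a witness-oblivious quantum sampler $\cS$ for $\LWE_{n,m,q,r/(\sqrt{2}q)}(D_{\Z^n,s})$. The only change relative to \Cref{thm:CLWE} is that instead of initializing the first register to the uniform superposition $\sum_{\ary{s}\in\Z_q^n}\ket{\ary{s}}$ one initializes it to the discrete Gaussian superposition $\sum_{\ary{s}\in\Z^n}\sqrt{\rho_s(\ary{s})}\ket{\ary{s}\bmod q}$; the center-finding and Chinese Remainder Theorem based uncomputation of $\ary{s}\bmod q$ is completely oblivious to the initial distribution on $\ary{s}$, so \Cref{lem:recover_s_mod_q_j} applies verbatim and the output state, after computational measurement, is statistically close to $\LWE_{n,m,q,r/(\sqrt{2}q)}(D_{\Z^n,s})$. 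Witness-obliviousness of this modified $\cS$ follows from \Cref{thm:oblivious_to_CLWE} together with the assumed quantum hardness of $\LWE_{n,m,q,r/(\sqrt{2}q)}(D_{\Z^n,s})$.

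Next I would define the composed sampler $\cS'$ as follows: run $\cS$ to produce $(\mat{A},\ary{b})$, discard $m-m'$ of the columns/coordinates to obtain a sample at dimension $m'$, and then apply the classical modulus-switching algorithm from \Cref{lemma:modulusswitching} to convert $(n,m',q,\alpha)$ samples to $(n,m',q',\alpha')$ samples, where $\alpha=r/(\sqrt{2}q)$ and $\alpha'=r'/(\sqrt{2}q')$. The arithmetic condition on $r,r',q,q',s$ in the hypothesis is exactly the one required by \Cref{lemma:modulusswitching}, so by that lemma the output of $\cS'$ is within negligible statistical distance of $\LWE_{n,m',q',r'/(\sqrt{2}q')}(D_{\Z^n,s})$. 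This gives the correctness (distributional indistinguishability) part of the witness-oblivious sampler definition.

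The main thing to check is that $\cS'$ still satisfies the second, witness-obliviousness, clause. The key observation here is that the modulus switching step in \Cref{lemma:modulusswitching} is a purely classical, efficient post-processing of the classical pair $(\mat{A},\ary{b})$ output by $\cS$; it uses only fresh public randomness (or even just a deterministic rounding) and touches none of the internal quantum registers that $\cS$ leaves around. Consequently, any valid QPT extractor $\cE'$ against $\cS'$ can be simulated by a valid QPT extractor $\cE$ against $\cS$: $\cE$ internally applies the same modulus-switching transformation to the $(\mat{A},\ary{b})$ it sees and then runs $\cE'$ on the result, without disturbing $\cS$'s state by more than negligible trace distance. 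Moreover, given the output $(\ary{s}',\ary{e}'')$ of $\cE'$ at the switched modulus $q'$, a witness for the original $q$-modulus sample can be recovered (the secret $\ary{s}'$ is the same integer vector in $D_{\Z^n,s}$ since modulus switching preserves $\ary{s}$, and the original error $\ary{e}$ is deterministically recoverable from $\ary{b},\ary{s}',\mat{A}$). Hence a non-negligible extraction success against $\cS'$ would translate to a non-negligible extraction success against $\cS$, contradicting the witness-obliviousness of $\cS$ established above.

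The main obstacle, if any, is the bookkeeping for this reduction between extractors: one has to confirm that the modulus-switching map is bijective on the witness, and that simulating the $q'$-distribution from $\cS$'s $q$-output does not require measuring or disturbing $\cS$'s ancilla registers. Both are immediate from the classical nature of \Cref{lemma:modulusswitching} and from the fact that modulus switching is the standard $\lfloor \cdot\rceil_{q/q'}$ rounding of $\ary{b}$ plus a small Gaussian correction, which preserves $\ary{s}$. With these routine checks, $\cS'$ meets both clauses of the definition and \Cref{coro:modulusswitch} follows.
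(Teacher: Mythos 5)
Your high-level approach --- chain the quantum sampler of \Cref{coro:OSAMP} (with the secret register initialized as a discrete-Gaussian superposition) with the classical modulus-switching map of \Cref{lemma:modulusswitching}, and argue that witness-obliviousness is preserved because modulus switching fixes the secret and the original error $\ary{e}=\ary{b}-\mat{A}^T\ary{s}\bmod q$ is deterministically recoverable --- is exactly the route the paper intends, and the paper itself gives no detail beyond invoking those two results. Your extractor-simulation sketch is the right idea for the second clause of the definition.

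However, there is a concrete gap in the interface between $\cS'$ and $\cS$. A witness-oblivious sampler for $\LWE_{n,m',q',\alpha'}$ is \emph{handed} a matrix $\mat{A}'\in\Z_{q'}^{n\times m'}$ and must output $(\mat{A}',\ary{b}')$ with that same $\mat{A}'$. You define $\cS'$ as ``run $\cS$ to produce $(\mat{A},\ary{b})$, discard coordinates, then modulus switch,'' and later describe the switch as ``rounding of $\ary{b}$ plus a small Gaussian correction, which preserves $\ary{s}$.'' But the modulus switch of \cite{DBLP:conf/stoc/BrakerskiLPRS13} must also transform the matrix (there is no way for $\mat{A}^T\ary{s}\bmod q$ to become $(\mat{A}')^T\ary{s}\bmod q'$ with $\mat{A}$ untouched unless $q'\mid q$), producing a (randomized) rounding $\mat{A}'$ that is determined by $\cS$'s matrix rather than being the matrix $\cS'$ was given. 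To make $\cS'$ well-typed you need to first lift the given $\mat{A}'$ to some $\mat{A}\in\Z_q^{n\times m}$, feed that to $\cS$, and then switch $\ary{b}$ consistently with the lift, and you must argue that a uniform $\mat{A}'$ induces a (nearly) uniform $\mat{A}$. A ``uniform preimage under deterministic rounding'' lift fails: when $q'\nmid q$ the rounding fibers over $\Z_q\to\Z_{q'}$ have unequal sizes, and the per-entry statistical distance between (uniform $\mat{A}$, its rounding) and (uniform $\mat{A}'$, uniform preimage) is on the order of $q'/(4q)$, which over $nm'$ entries is nowhere near negligible in the regime of the worked example where $q'\approx q$. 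What makes this work is the Gaussian-smoothed, randomized rounding of \cite{DBLP:conf/stoc/BrakerskiLPRS13} together with its Bayes inversion (sample $\mat{A}$ from the conditional distribution given $\mat{A}'$); by characterizing modulus switching as a rounding of $\ary{b}$ only, your proposal elides this step and leaves $\cS'$ without a well-defined input/output interface. Once this is fixed, the rest of your argument (secret and error recovery, the extractor simulation not disturbing $\cS$'s registers) goes through.
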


For example, to produce an oblivious LWE sample from $\LWE_{n, m', q', r'/(\sqrt{2}q')}(D_{\Z^n, s})$ where $q'\in \tilde{O}(n^2)$ is prime, $r'\in \tilde{O}(n^{1.5})$, $s\in O(\sqrt{n})$, we can start from an oblivious LWE sample from $\LWE_{n, m, q, r/(\sqrt{2}q)}(D_{\Z^n, s})$ where  $r\in \tilde{O}(n^{1.5})$, $q\in \tilde{O}(n^2)$ such that $q'<q<2q'$,  $q = q_1q_2q_3q_4$ where $q_1, q_2, q_3, q_4\in \tilde{O}(\sqrt{n})$ are coprime.
The sample complexity required for producing $\LWE_{n, m, q, r/(\sqrt{2}q)}(D_{\Z^n, s})$ from our quantum algorithm in \Cref{coro:OSAMP} is in $\tilde{O}(n)$. 
If we use the quantum oblivious LWE sampler in \cite{debris2024quantum}, the sample complexity required for producing $\LWE_{n, m', q', r'/(\sqrt{2}q')}(D_{\Z^n, s})$ is $\tilde{O}(n^{2.5})$ .

\section{Hardness of $\QLWE$ with Unknown Phase via Extrapolated DCP}\label{sec:EDCP}

In this section, we show how to obtain a quantum reduction from classical LWE to $\QLWE^{\mathsf{phase}}$ with Gaussian amplitude. Our reduction goes through the Extrapolated Dihedral Coset problem (EDCP), derived from a modification of the reduction by Regev \cite{DBLP:conf/focs/Regev02} and Brakerski et al. \cite{DBLP:conf/pkc/BrakerskiKSW18}. Our reduction consists of two steps:
\begin{enumerate}
    \item[Step 1] Given a classical LWE instance, our quantum reduction first generates an Extrapolated DCP state with amplitudes following a Gaussian distribution centered at an unknown value. More precisely, we establish the following theorem:
    \begin{theorem}\label{thm:EDCP_Step1}
        Let $n, m, q\in \N^+$, $\alpha = \Omega(\sqrt{n}), \beta, \gamma\in (0, 1)$ satisfy $m\geq n\log q$, $\alpha\gamma\sqrt{m} < \beta < \frac{1}{16\sqrt{m\log (\beta q)}}$. There exists a $\poly(n)$ time quantum algorithm that, given a classical LWE instance $\LWE_{n, m, q, \gamma} = (\mat{A}, \mat{A}^T\ary{s} + \ary{e})$, generates, with probability $1 - 2^{-\Omega(n)}$,  a vector $\ary y\in \Z_q^m\cap B_{\Lattice_q(\mat A)}$ and an Extrapolated DCP state of form 
        \begin{equation}\label{eqn:EDCP_form}
            \ket{\EDCP} = \sum_{j\in \Z_q}\rho_{\sigma}(j - c)\ket{j}\ket{(\ary v + j \cdot \ary s)\bmod q},
        \end{equation}
        where 
        \begin{enumerate}
            \item[1.] the vector $\ary v$ is chosen uniformly at random from $\Z_q^n$ and is unknown,
            \item[2.] the Gaussian width $\sigma = \frac{\alpha \beta q}{ \sqrt{ \alpha^2\|\ary{e}\|^2 +\beta^2 q^2 } }$,
            \item[3.] the vector $\ary y$ is sampled by first sampling $\ary x \in \Z^m \cap B^m(\lambda_1(\Lattice_q(\mat A))/2)$ with probability proportional to $\Pr(\ary x) \propto \rho_{\beta q\sqrt{\mat \Sigma / 2}}(\ary x)$ where $\mat \Sigma = \mat I_m + \frac{\alpha^2}{\beta^2 q^2}\ary e\ary e^T$, then outputting $\ary y = (\mat A^T \ary v + \ary x) \bmod q$,
            \item[4.] the center $c = - \frac{ \alpha^2\ipd{\ary x}{\ary{e}} }{ \alpha^2\|\ary{e}\|^2 + \beta^2 q^2 }$ (we don't know how to efficiently compute $c$ with success probability $1 - 2^{-\Omega(n)}$ since we don't know $\ary x$ and $\ary{e}$; we can guess $c$ correctly with non-negligible probability, but the event of guessing correctly is not efficiently checkable).
        \end{enumerate}
    \end{theorem}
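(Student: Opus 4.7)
The plan is to construct the EDCP state in three phases: prepare a tensor-product Gaussian superposition, measure an auxiliary hashing register to collapse into a lattice-coset structure, and then use the LWE identity $\ary b = \mat A^T \ary s + \ary e$ to uncompute the residual noise register.

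First I would prepare
\[
\sum_{j \in \Z_q}\sum_{\ary v \in \Z_q^n}\sum_{\ary x \in \Z^m}\rho_\alpha(j)\rho_{\beta q}(\ary x)\ket{j}\ket{\ary v}\ket{\ary x}
\]
via \Cref{lemma:Gaussianstateprep} with Banaszczyk-style truncation on $\ary x$, coherently compute $\ary y := \mat A^T\ary v + \ary x + j\ary b \bmod q$ into a fourth register, and measure it. For analysis I would use the substitution $\tilde{\ary v} := \ary v + j\ary s$ (not applied quantumly), which rewrites the linear constraint as $\mat A^T\tilde{\ary v} + \ary x + j\ary e \equiv \ary y \pmod q$. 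Using \Cref{lemma:linfty} together with \Cref{lemma:Bana93}, with probability $1 - 2^{-\Omega(n)}$ the outcome $\ary y$ lies in $\Z_q^m \cap B_{\Lattice_q(\mat A)}$, so there is a unique $\tilde{\ary v}^* \in \Z_q^n$ for which $\ary x_0 := \ary y - \mat A^T\tilde{\ary v}^*$ is short; contributions from $\tilde{\ary v} \neq \tilde{\ary v}^*$ and from $\ary k \neq \ary 0$ in the lift $\ary x = \ary x_0 - j\ary e + q\ary k$ are exponentially suppressed under $\beta < 1/(16\sqrt{m\log(\beta q)})$. A Poisson summation over $\ary k$ and completion of the square in $j$ inside $\rho_\alpha(j)^2\rho_{\beta q}(\ary x_0 - j\ary e)^2$, combined with the Sherman--Morrison identity, recover the $\ary y$-density claimed in item~3.

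Post-measurement, the dominant state is $\sum_j \rho_\alpha(j)\rho_{\beta q}(\ary x_0 - j\ary e)\ket{j}\ket{\tilde{\ary v}^* - j\ary s}\ket{\ary x_0 - j\ary e}$. I would apply $\ket{\ary w}\mapsto\ket{-\ary w \bmod q}$ on the second register, rewriting it as $\ket{\ary v + j\ary s}$ with $\ary v := -\tilde{\ary v}^*$ uniform in $\Z_q^n$. To erase the third register I would exploit the coherent identity $j\ary e \equiv \mat A^T\ary v + j\ary b - \mat A^T(\ary v + j\ary s) \pmod q$: here $\mat A^T\ary v$ (classical, obtained from $\tilde{\ary v}^*$ by BDD on $\ary y$ inside $B_{\Lattice_q(\mat A)}$) and $j\ary b$ come from classical data and $\ket{j}$, while $\mat A^T(\ary v + j\ary s)$ is computed coherently from the second register. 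Writing $j\ary e \bmod q$ into an ancilla and adding it into the third register collapses $\ary x_0 - j\ary e$ to the $j$-independent $\ary x_0$, provided $\|j\ary e\|_\infty < q/2$ uniformly over the effective support of $j$; this is exactly where $\alpha\gamma\sqrt{m} < \beta$ is used. Subtracting the classical $\ary x_0$ resets the third register to $\ket{\ary 0}$, and uncomputing the ancilla yields the two-register EDCP state, whose Gaussian width $\sigma$ and center $c$ emerge from completing the square in $\rho_\alpha(j)^2\rho_{\beta q}(\ary x_0 - j\ary e)^2$.

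The main obstacle is the simultaneous parameter juggling: (i) smoothing of $\ary y$, (ii) $\tilde{\ary v}^*$-dominance over the exponentially many competing $\tilde{\ary v}$, (iii) $\ary k = \ary 0$ dominance in the Poisson sum, and (iv) agreement between integer and modular arithmetic for $j\ary e$ throughout the support of the final Gaussian, must all carry $2^{-\Omega(n)}$ error under the single regime $\alpha = \Omega(\sqrt n)$ and $\alpha\gamma\sqrt m < \beta < 1/(16\sqrt{m\log(\beta q)})$, which is exactly what the constraints in the theorem encode. The careful tail bounds of \Cref{lemma:Bana93,lemma:Bana95} are the workhorses of this bookkeeping.
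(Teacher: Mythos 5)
Your construction follows the same outline as the paper's proof---prepare the tripartite Gaussian superposition over $(j, \ary{v}, \ary{x})$, compute the hashed coset $\ary{y}$, measure it, and extract the EDCP state by a change of variables and a completion of the square---but your uncomputation of the $\ary{x}$ register does not go through. After measuring $\ary{y}$, you propose to erase the third register $\ket{\ary{x}_0 - j\ary{e}}$ by computing $j\ary{e}$ from the identity $j\ary{e} = \mat{A}^T\ary{v} + j\ary{b} - \mat{A}^T(\ary{v}+j\ary{s})$, with $\mat{A}^T\ary{v}$ treated as a classical constant obtained ``by BDD on $\ary{y}$ inside $B_{\Lattice_q(\mat{A})}$,'' and then subtracting the classical $\ary{x}_0$. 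But $\ary{y} = \mat{A}^T\tilde{\ary{v}}^* + \ary{x}_0 \bmod q$ is itself an LWE-style sample with secret $\tilde{\ary{v}}^*$ and Gaussian error $\ary{x}_0$; decoding it is exactly the hard BDD/LWE problem the reduction is meant to rest on. Measuring $\ary{y}$ collapses the state but reveals nothing about $\tilde{\ary{v}}^*$ or $\ary{x}_0$ to the algorithm, so this BDD call cannot be carried out efficiently---the argument is circular, and the theorem statement itself emphasizes that $\ary{v}$ and $c$ are \emph{unknown} to the reduction.

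The paper avoids the trap by never allocating a fourth register: it \emph{overwrites} the $\ary{x}$ register in place via $\ary{x} \mapsto (\mat{A}^T\ary{v} - j\ary{b} + \ary{x}) \bmod q$ (a bijection on $\Z_q^m$ given $j$, $\ary{v}$, and the classical sample $\ary{b}$) and then measures that register, so nothing survives that needs uncomputing; the substitutions $\ary{v}\leftarrow\ary{v}+j\ary{s}$ and $\ary{x}\leftarrow\ary{x}+j\ary{e}$ are then purely analytic devices. Your four-register layout could also be repaired without BDD: in the dominant branch the third register equals the short representative of $(\ary{y}-\mat{A}^T\ary{v}_{\mathrm{reg}}-j\ary{b})\bmod q$, where $\ary{v}_{\mathrm{reg}}$ denotes the content of the second register, and that quantity is an efficiently computable function of the two quantum registers and classical data alone, so subtracting it coherently clears the register. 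But as written, your proof stalls at the BDD step.
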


    \item[Step 2] Given an Extrapolated DCP state with amplitudes following a Gaussian distribution centered at an unknown value, we adapt the quantum reduction proposed by Brakerski et al. \cite{DBLP:conf/pkc/BrakerskiKSW18} to transform it to an $\QLWE^{\sf phase}$ instance. The resulting $\QLWE^{\sf phase}$ instance will have amplitudes represented as a Gaussian distribution multiplied by an unknown phase term. More precisely, we establish the following theorem:
    \begin{theorem}\label{thm:EDCP_Step2}
        There exists an efficient quantum algorithm that, given an Extrapolated DCP state of form as \Cref{eqn:EDCP_form}, generates an $\QLWE^{\sf phase}$ state of form 
        \begin{equation}\label{eqn:SLWE_form_EDCP}
            \sum_{e \in \Z} \rho_{q/\sigma}(e) \exp(2\pi {\rm i}\cdot ce /q) \ket{(\ipd{\ary a}{\ary s} + e) \bmod q}
        \end{equation}
        along with a known vector $\ary a\gets \mathcal{U}(\Z_q^n)$. Here the parameters $\sigma, c$ correspond to those mentioned in \Cref{thm:EDCP_Step1}.
    \end{theorem}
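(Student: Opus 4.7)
The plan is to follow a standard Fourier-duality style argument: the first register of $\ket{\EDCP}$ holds a Gaussian of width $\sigma$ centered at $c$, while the second holds a register of the form $\ary v + j\ary s$. A QFT on the second register disentangles $\ary v$ from $j$, after which measurement yields a uniformly random $\ary a \in \Z_q^n$ that reveals $\ipd{\ary a}{\ary s}$ as a linear phase on $j$. A second QFT on the first register then turns the shifted Gaussian $\rho_\sigma(j-c)$ into a Gaussian of width $q/\sigma$ together with a phase $e^{2\pi i c \cdot(\text{frequency})/q}$, and the linear phase in $j$ translates the frequency support by $\ipd{\ary a}{\ary s}$, which is exactly the structure of the target $\QLWE^{\sf phase}$ sample.

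Concretely, the algorithm would: (i) apply $\QFT_q^{\otimes n}$ to the second register of $\ket{\EDCP}$, obtaining
\[
\sum_{\ary a\in\Z_q^n} \omega_q^{\ipd{\ary a}{\ary v}}\ket{\ary a} \otimes \sum_{j\in\Z_q}\rho_\sigma(j-c)\,\omega_q^{j\ipd{\ary a}{\ary s}}\ket{j};
\]
(ii) measure the second register, which by the unitarity of QFT and Parseval yields a (close to) uniform $\ary a \in \Z_q^n$ and collapses the first register, up to an irrelevant global phase $\omega_q^{\ipd{\ary a}{\ary v}}$, to the phased Gaussian $\sum_{j\in\Z_q} \rho_\sigma(j-c)\,\omega_q^{j\ipd{\ary a}{\ary s}}\ket{j}$; (iii) apply an inverse $\QFT_q$ to the first register; (iv) reindex by setting $e = y - \ipd{\ary a}{\ary s}\bmod q$ so that the remaining state is
\[
\sum_{e\in\Z_q}\rho_{q/\sigma}(e)\,e^{2\pi i c e/q}\ket{(\ipd{\ary a}{\ary s}+e)\bmod q},
\]
which matches the desired form from \Cref{eqn:SLWE_form_EDCP}. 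The sign of the exponent in the phase is fixed by choosing the direction of the final QFT appropriately, and any residual sign inconsistency is absorbed into the fact that the distribution of $c$ in \Cref{thm:EDCP_Step1} is invariant under negation of $\ary x$.

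The main technical step is (iii)–(iv): the inner sum $\sum_{j'\in\Z_q}\rho_\sigma(j')\,\omega_q^{j'k}$ is not literally equal to $\rho_{q/\sigma}(k)$ on $\Z_q$, so I would invoke the Poisson summation formula together with Banaszczyk's tail bound (\Cref{lemma:Bana93}) to show it is $2^{-\Omega(n)}$-close. This is where the parameter regime $\sigma = \Omega(\sqrt n)$ inherited from \Cref{thm:EDCP_Step1} (via $\alpha = \Omega(\sqrt n)$) is used, since it guarantees that $\rho_\sigma$ has an essentially integer-$\Z$ support inside $\Z_q$ so that aliasing between the $q$-periodized Gaussian and the Fourier-dual Gaussian $\rho_{q/\sigma}$ is exponentially small. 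Combining these tail bounds with \Cref{lemma:difftotd} to convert $\ell_2$-closeness of unnormalized vectors into trace distance closes the argument.

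The principal obstacle is bookkeeping rather than any deep new idea: one needs to carefully track the Gaussian widths before and after each QFT, verify that wrapping modulo $q$ contributes only exponentially small error (so the Gaussian of width $\sigma$ on $\Z_q$ really Fourier-transforms to a Gaussian of width $q/\sigma$ up to negligible distance), and check that the measurement distribution of $\ary a$ is statistically close to uniform. The sign and normalization of the phase $e^{2\pi i c e/q}$ also needs to be reconciled with the convention of the target $\QLWE^{\sf phase}$ state; as noted above, the symmetry of $c$'s distribution makes this cosmetic.
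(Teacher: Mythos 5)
Your proposal follows essentially the same route as the paper's proof: QFT on the register holding $\ary v + j\ary s$, measure to obtain a uniformly random $\ary a$, QFT on the remaining $\ket{j}$ register, and then use the Poisson summation formula together with Banaszczyk's tail bound (valid because $\sigma = \Omega(\sqrt{n})$) to identify the resulting amplitude with $\rho_{q/\sigma}(e)\exp(2\pi\mathrm{i}ce/q)$ up to $2^{-\Omega(n)}$ trace distance. The only deviations are cosmetic: the paper applies a forward $\QFT_q$ in the last step and outputs $-\hat{\ary{a}}$ as the LWE vector, while you apply an inverse $\QFT_q$ and output $\ary a$ directly and then absorb the residual sign of the phase into the symmetry of $c$'s distribution; also note that the measurement of $\ary a$ is in fact exactly uniform (since the phases $\omega_q^{\ipd{\ary a}{\ary v + j\ary s}}$ have unit modulus), so your hedge ``close to uniform'' is more caution than is needed.
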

\end{enumerate}

\iffull
In the remaining part of this section, we will provide the detailed proofs for \Cref{thm:EDCP_Step1} and \Cref{thm:EDCP_Step2}.
\fi
\ifllncs
In the remaining part of this section, we will provide the detailed proofs for \Cref{thm:EDCP_Step1}, and the proof of \Cref{thm:EDCP_Step2} will be left to \Cref{sec:DCP_SLWE}, because it is almost the same as \cite{DBLP:conf/pkc/BrakerskiKSW18}.
\fi
By combining these two theorems, we achieve a quantum reduction from classical LWE to $\QLWE^{\sf phase}$ with Gaussian amplitude. This establishes that solving the classical LWE problem is as hard as solving the $\QLWE^{\sf phase}$ with Gaussian amplitude for the phase function defined below. Formally, we define our special parameters and functions, 
and propose the main theorem for this section here:
\begin{definition}\label{def:paramsEDCP}
    Let $n, m, q\in \N^+$, $\alpha = \Omega(\sqrt{n}), \beta, \gamma\in (0, 1)$ satisfy $m\geq n\log q$, $\alpha\gamma\sqrt{m} < \beta < \frac{1}{16\sqrt{m\log (\beta q)}}$. 
    Given a matrix $\mat{A} \in \Z_q^{n \times m}$ and a vector $\ary{e}$ such that $\|\ary{e}\| \leq \sqrt{m}\gamma q$, we define:
    \begin{enumerate}
        \item A family of functions $\{f_E: \Z\to \R\}_{E \in [m\gamma^2q^2]}$ with $f_E(e) = \rho_{q/\sigma(E)}(e)$ where $\sigma(E) = \frac{\alpha \beta q}{ \sqrt{ \alpha^2 E +\beta^2q^2 } }$.
        \item A distribution $D_{\theta, \ary e, \mat A}(\ary y)$ over $\Z_q^m\cap B_{\Lattice_q(\mat A)}$ given by $\Pr(\ary y) \propto \rho_{\beta q\sqrt{\mat \Sigma / 2}}(\ary y')$ where $\mat \Sigma = \mat I_m + \frac{\alpha^2}{\beta^2 q^2}\ary e\ary e^T$ and $\ary y' = \ary y - \kappa_{\Lattice_q(\mat A)}(\ary y)$.
        \item A phase function $\theta_{\ary e, \mat A}: \Z_q^m\cap B_{\Lattice_q(\mat A)}\to \R$ with $\theta_{\ary e}(\ary y) = - \frac{ \alpha^2\ipd{\ary y'}{\ary{e}} }{ q(\alpha^2\|\ary{e}\|^2 + \beta^2 q^2) }$ (It is not efficiently computable when assuming classical LWE is hard).
    \end{enumerate}
\end{definition}

\begin{theorem}[Main theorem, from $\LWE$ to $\QLWE^{\mathsf{phase}}$]\label{thm:EDCP_MainThm}
    Following the parameters defined in \Cref{def:paramsEDCP}. Assume for any $E \in [m\gamma^2q^2]$, there exists a quantum algorithm that takes $\mat{A} \leftarrow \mathcal{U}(\Z_q^{n \times m})$ as input, with $1 - 2^{-\Omega(n)}$ probability over $\mat{A}$, solves (see \Cref{def:SLWEphase}) $\QLWE_{n, \ell, q, f_E, \theta_{\ary e, \mat{A}}, D_{\theta, \ary e, \mat{A}}}^{\mathsf{phase}}$ for every $\ary{e} \in \Z^m$ such that $\|\ary{e}\|_2^2 = E$, with $\ell = 2^{o(n)}$ and time complexity $T = 2^{o(n)}$, then there exists a quantum algorithm that takes a classical LWE sample $(\mat A, \mat A^T \ary s + \ary e)$ as input, and outputs the secret vector $\ary s\in \Z^n_q$ with success probability $1 - 2^{-\Omega(n)}$ and time complexity $O((T + \ell \cdot \poly(n, q)) \cdot m\gamma^2 q^2)$.
\end{theorem}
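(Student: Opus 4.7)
The plan is to chain Theorems \ref{thm:EDCP_Step1} and \ref{thm:EDCP_Step2} to convert a classical LWE sample into many independent $\QLWE^{\mathsf{phase}}$ samples with the right amplitude and phase, and then invoke the assumed $\QLWE^{\mathsf{phase}}$ solver. The only catch is that the amplitude $f_E$ and the phase function $\theta_{\ary e, \mat A}$ both depend on the \emph{unknown} squared error norm $E^* = \|\ary e\|^2$, so the reduction has to enumerate the $O(m\gamma^2 q^2)$ integer candidates for $E$, try each one, and use the original LWE equation as a verification oracle to identify the correct candidate.

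Given $(\mat A, \ary b = \mat A^T \ary s + \ary e)$, I would first invoke Theorem \ref{thm:EDCP_Step1} a total of $\ell$ times with fresh randomness to produce pairs $(\ary y_i, \ket{\EDCP_i})$ for $i = 1, \dots, \ell$. Under our choice of $\beta$, Banaszczyk's tail bound guarantees that the internal sample $\ary x_i$ lies in $B_{\Lattice_q(\mat A)}$ with probability $1 - 2^{-\Omega(n)}$, so the coset representative satisfies $\ary y_i - \kappa_{\Lattice_q(\mat A)}(\ary y_i) = \ary x_i$, and the center of $\ket{\EDCP_i}$ is therefore exactly $c_i = q\,\theta_{\ary e, \mat A}(\ary y_i)$. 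Applying Theorem \ref{thm:EDCP_Step2} to each pair produces a uniformly random $\ary a_i \in \Z_q^n$ together with the state
\[
\sum_{e \in \Z} \rho_{q/\sigma(E^*)}(e)\,\exp\!\bigl(2\pi {\rm i}\, e\, \theta_{\ary e, \mat A}(\ary y_i)\bigr)\,\ket{(\ipd{\ary a_i}{\ary s} + e) \bmod q},
\]
which is exactly a sample from $\QLWE^{\mathsf{phase}}_{n, \ell, q, f_{E^*}, \theta_{\ary e, \mat A}, D_{\theta, \ary e, \mat A}}$ with hidden secret $\ary s$. The reduction then loops over $E \in \{0, 1, \dots, \lfloor m\gamma^2 q^2 \rfloor\}$, hands the same $\ell$ samples to the assumed solver while pretending the amplitude is $f_E$, and tests each returned $\ary s'$ by checking whether $\ary b - \mat A^T \ary s' \bmod q$ has $\ell_2$-norm at most $\sqrt{m}\gamma q$; a standard LWE-uniqueness argument shows that only $\ary s' = \ary s$ passes this test except with probability $2^{-\Omega(n)}$. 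Since $E^* \leq m\gamma^2 q^2$ with overwhelming probability, the solver call at $E = E^*$ succeeds by hypothesis, giving total success probability $1 - 2^{-\Omega(n)}$ and total runtime $O\bigl((T + \ell\cdot\poly(n,q))\cdot m\gamma^2 q^2\bigr)$.

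The main obstacle is making sure the $\ell$ samples we feed to the solver are genuinely i.i.d.\ draws from the distribution promised in the hypothesis, not merely close to it. Three matching conditions need to be checked: (i) the implicit center $c_i$ of $\ket{\EDCP_i}$ agrees on the nose with $q\,\theta_{\ary e, \mat A}(\ary y_i)$, which follows from $\ary x_i \in B_{\Lattice_q(\mat A)}$ and can be absorbed into a $2^{-\Omega(n)}$ trace-distance loss on the unlikely failure event; (ii) the marginal distribution of each $\ary y_i$ coincides with $D_{\theta, \ary e, \mat A}$ as defined in \Cref{def:paramsEDCP}, which is exactly what item~3 of Theorem \ref{thm:EDCP_Step1} asserts; and (iii) the auxiliary vectors $\ary a_i$ produced by Theorem \ref{thm:EDCP_Step2} are independent and uniform over $\Z_q^n$ across the $\ell$ invocations. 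Once these three points are nailed down, the overall reduction is a straightforward chaining, and the stated time and success bounds drop out of the parameter inequalities $\alpha\gamma\sqrt{m} < \beta < \frac{1}{16\sqrt{m\log(\beta q)}}$ that we already assume.
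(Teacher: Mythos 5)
Your overall strategy is the same as the paper's: chain \Cref{thm:EDCP_Step1} and \Cref{thm:EDCP_Step2} to turn the classical LWE sample into $\QLWE^{\sf phase}$ samples, enumerate the $O(m\gamma^2 q^2)$ candidates for $E = \|\ary e\|^2$, invoke the assumed solver, and verify the candidate secret. Your three matching conditions (i)--(iii) correctly identify what must be checked to conclude that the generated tuples are genuine $\QLWE^{\sf phase}_{n,\ell,q,f_{E^*},\theta_{\ary e,\mat A},D_{\theta,\ary e,\mat A}}$ samples, and your verification test (reject $\ary s'$ unless $\ary b - \mat A^T\ary s' \bmod q$ has $\ell_2$-norm at most $\sqrt{m}\gamma q$) is a legitimate alternative to the paper's invocation of Regev's Lemma 3.6, since $\lambda_1(\Lattice_q(\mat A)) \ge q/4$ holds with overwhelming probability and the error bound is much smaller.

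There is, however, a genuine gap in the middle of your reduction: you generate the $\ell$ quantum samples \emph{once} and then ``hand the same $\ell$ samples to the solver'' for every guess $E$ in the loop. This does not work. The assumed solver is a general quantum algorithm, and when given the (wrong) amplitude label $f_E$ with $E \neq E^*$ nothing prevents it from measuring --- and thereby destroying --- the input states before you can try the next candidate $E$. The hypothesis only promises correct behaviour on the right distribution; for wrong guesses the solver could scramble the register arbitrarily. The paper's proof avoids this by placing the ``apply \Cref{thm:EDCP_Step1} and \Cref{thm:EDCP_Step2} $\ell$ times'' step \emph{inside} the enumeration, regenerating a fresh batch of $\ell$ samples from the classical input $(\mat A, \ary b)$ for each candidate $E$. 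This is also why the stated time complexity is $O((T + \ell\cdot\poly(n,q))\cdot m\gamma^2 q^2)$: the sample-generation cost $\ell\cdot\poly(n,q)$ is multiplied by the number of guesses. Your own claimed runtime reproduces exactly this product, which is internally inconsistent with a ``generate once'' strategy --- the fix is simply to move the generation step into the loop, and then your argument goes through.
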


\begin{remark}
    For an example of parameters, 
    let $q\in \tilde{O}(n^2)$, $m\in \Omega(n\log q)$, $\|\ary{e}\|_2^2\in \tilde{O}(n)$, $\alpha\in \tilde{O}(n^{0.5})$, $\beta\in \tilde{O}(n^{-0.5})$, $\gamma\in \tilde{O}(n^{-1.5})$. 
\end{remark}

\begin{proof}
    We proceed to address the classical LWE instance $(\mat A, \mat A^T \ary s + \ary e)$ as follows:
    \begin{enumerate}
        \item Enumerate $E\in \{1, 2, \cdots,  m \gamma^2 q^2\}$ to make a guess for $\|\ary e\|^2$.
        \item\label{main_step_generate_SLWE} Apply \Cref{thm:EDCP_Step1} and \Cref{thm:EDCP_Step2} $\ell$ times to generate $\ell$ instances of $\QLWE^{\sf phase}$ in the form of \Cref{eqn:SLWE_form_EDCP}.
        \item Utilize the quantum algorithm in the assumption for $\QLWE_{n, \ell, q, f_E, \theta_{\ary e, \mat A}, D_{\theta, \ary e, \mat A}}^{\mathsf{phase}}$, with those $\ell$ $\QLWE^{\sf phase}$ instances as input, to derive a solution $\ary s'$.
        \item Employ any verification algorithm (e.g., as proposed by Regev \cite[Lemma 3.6]{DBLP:journals/jacm/Regev09}) to ascertain whether $\ary s' = \ary s$. If this condition holds, output $\ary s'$ and conclude the process.
    \end{enumerate}
    
    It can be easily verified that this algorithm operates with a runtime of $O((T + \ell \cdot \poly(q, n)) \cdot m\gamma^2 q^2 )$. Furthermore, as indicated in \Cref{thm:EDCP_Step1}, the probability of successfully generating $\ell$ $\QLWE^{\sf phase}$ states in step \ref{main_step_generate_SLWE} is exponentially close to 1. Thus, when $E = \|\ary e\|^2$ (i.e., when we guess $E$ correctly), the probability that the solution $\ary s' = \ary s$ is exponentially close to 1. Consequently, the aforementioned algorithm achieves success probability exponentially close to 1.
\end{proof}

\subsection{Reduce classical LWE to Extrapolated DCP}

Our quantum reduction from classical LWE to Extrapolated DCP follows the general design of Regev's reduction \cite{DBLP:conf/focs/Regev02} and the reduction proposed by Brakerski et al. \cite{DBLP:conf/pkc/BrakerskiKSW18}. In these reductions, the Euclidean space $\mathbb{R}^n$ is divided into grids, with each grid cell having a width that lies between the length of the error vector $\|\ary{e}\|$ and the length of the shortest vector in the lattice $\lambda_1(\Lattice_q(\mat{A}))$. The key observation is that when randomly selecting a vector $\ary{x} \in \mathbb{R}^n$, the vectors $\ary{x}, \ary{x} + \ary{e}, \cdots, \ary{x} + k \cdot \ary{e}$ will be in the same grid cell with high probability, creating a superposition in the quantum world. 

We modify the reductions in \cite{DBLP:conf/focs/Regev02,DBLP:conf/pkc/BrakerskiKSW18} by introducing Gaussian balls around all lattice points in $\Lattice_q(\mat A)$, where the radius of each ball is a quantity smaller then the length of shortest vector in the lattice $\lambda_1(\Lattice_q(\mat A))$.
Note that the reductions in \cite{DBLP:conf/focs/Regev02,DBLP:conf/pkc/BrakerskiKSW18} use Euclidean balls or cubes.

Here we give the detailed proof of \Cref{thm:EDCP_Step1}.
\begin{proof}[Proof of \Cref{thm:EDCP_Step1}]
    Following the parameters defined in \Cref{thm:EDCP_Step1}. Relying on both \Cref{lemma:linfty} and Banaszczyk's tail bound from \Cref{lemma:Bana93}, we make the assumptions that $\lambda_1(\Lattice_q(\mat A)) \ge \frac{q}{4}$ and $\|\ary e\| < \sqrt{m}\gamma q$ for the remainder of this proof; these assumptions hold with probability $1 - 2^{-\Omega(n)}$. Our quantum reduction from classical LWE to Extrapolated DCP works as follows (for simplicity, we omit the normalization factors):
    \begin{enumerate}
        \item We start by preparing the superposition using the Gaussian state sampler (see \Cref{lemma:Gaussianstateprep})
        \[
            \sum_{j\in\Z_q} \rho_\alpha(j)\ket{j} \sum_{\ary{v}\in\Z_q^n} \ket{ \ary{v} } \sum_{\ary{x}\in\Z_q^m} \rho_{\beta q}(\ary{x})\ket{\ary{x}}.
        \]
        \item We apply a unitary to compute $(j, \ary{v}, \ary{x})\to (\mat{A}^T\ary{v} - j\cdot (\mat A^T\ary s + \ary e) + \ary{x})\bmod q$ on the third register, obtaining the state
        \begin{equation}\label{eqn:EDCP_state_temp}
            \sum_{j\in\Z_q} \rho_\alpha(j)\ket{j} \sum_{\ary{v}\in\Z_q^n} \ket{ \ary{v} } \sum_{\ary{x}\in\Z_q^m} \rho_{\beta q}(\ary{x})\ket{ (\mat{A}^T \ary{v} - j\cdot (\mat A^T\ary s + \ary e) + \ary{x}) \bmod q}.
        \end{equation}
        This state approximates, with an error of $1 - 2^{-\Omega(n)}$, the same state structure with the only difference being the range of $\ary x$ in the summation, which is now $\Z^m$ rather than $\Z_q^m$. By expressing $\mat A^T\ary v - j\cdot (\mat A^T\ary s + \ary e) + \ary x$ as $\mat A^T(\ary v - j\cdot \ary s) + (\ary x - j\cdot \ary e)$, we perform a change of variables $\ary v \gets (\ary v + j\cdot \ary s) \bmod q$ and $\ary x\gets \ary x + j\cdot \ary e$, yields that the state of form \Cref{eqn:EDCP_state_temp} is $2^{-\Omega(n)}$-close to the state
        \begin{equation}\label{eqn:EDCP_state_temp0}
            \sum_{j\in\Z_q} \rho_\alpha(j)\ket{j} \sum_{\ary{v}\in\Z_q^n} \ket{ (\ary{v} + j\cdot \ary s) \bmod q } \sum_{\ary{x}\in\Z^m} \rho_{\beta q}(\ary{x} + j\cdot \ary e)\ket{ (\mat{A}^T\ary{v} + \ary{x}) \bmod q}.
        \end{equation}
        To proceed, we need the following lemma to guarantee that each vector in the support of the third register $\ary{y}:=\ary{A}^T\ary{v}+\ary{x} \bmod q$ corresponds to a unique $\ary{x}$, in order to match with the target of \Cref{thm:EDCP_Step1} and to simplify later analyses. The proof of this lemma is deferred to \Cref{sec:EDCP_state_close}.
        \begin{lemma}\label{lem:EDCP_state_close}
            Assume that $\left(\beta q\sqrt{m} + \alpha\gamma q m\right)\sqrt{\log (\beta q)} < \lambda_1(\Lattice_q(\mat A)) / 2$ and $\beta q > \sqrt{m}$, then the state in \Cref{eqn:EDCP_state_temp0} is $2^{-\Omega(n)}$-close to the state
            \begin{equation}\label{eqn:EDCP_state_temp1}
                \sum_{j\in\Z_q} \rho_\alpha(j)\ket{j} \sum_{\ary{v}\in\Z_q^n} \ket{ (\ary{v} + j\cdot \ary s) \bmod q } \sum_{\substack{\ary{x}\in\Z^m, \\ \|\ary{x}\| < \lambda_1(\Lattice_q(\mat A))/2}} \rho_{\beta q}(\ary{x} + j\cdot \ary e)\ket{ (\mat{A}^T\ary{v} + \ary{x}) \bmod q}.
            \end{equation}
        \end{lemma}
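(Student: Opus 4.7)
The plan is to bound the $\ell_2$ distance between the two unnormalized states and then invoke \Cref{lemma:difftotd} to obtain the trace-distance statement. The two states differ only in whether the third-register summation is restricted to $\|\ary{x}\|<\lambda_1(\Lattice_q(\mat{A}))/2$. First I observe that under the hypothesis $\beta q\sqrt{m\log(\beta q)}<\lambda_1/2$ (which also forces $\beta q\sqrt{m\log(\beta q)}<q/16$), the effective support of $\rho_{\beta q}(\cdot+j\ary{e})$ has diameter less than the $q$-periodicity in each direction, so distinct $\ary{x}$'s contributing nontrivial mass map to distinct kets $\ket{(\mat{A}^T\ary{v}+\ary{x})\bmod q}$. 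Consequently both squared $\ell_2$ norms reduce, up to $1\pm 2^{-\Omega(n)}$, to (after summing out $\ary{v}$, which contributes a common $q^n$ factor) the scalar quantities
\[
S \;=\; \sum_{j\in\Z}\sum_{\ary{x}\in\Z^m}\rho_\alpha(j)^2\rho_{\beta q}(\ary{x}+j\ary{e})^2, \qquad
\Delta \;=\; \sum_{j\in\Z}\sum_{\substack{\ary{x}\in\Z^m\\ \|\ary{x}\|\ge\lambda_1/2}}\rho_\alpha(j)^2\rho_{\beta q}(\ary{x}+j\ary{e})^2,
\]
and it suffices to show $\Delta/S\le 2^{-\Omega(n)}$.

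Next I complete the square in $j$. Setting $\tau=\alpha\beta q/\sqrt{(\beta q)^2+\alpha^2\|\ary{e}\|^2}$, $c_{\ary{x}}=-\alpha^2\ipd{\ary{x}}{\ary{e}}/((\beta q)^2+\alpha^2\|\ary{e}\|^2)$, and $\mat{\Sigma}=\mat{I}_m+(\alpha/(\beta q))^2\ary{e}\ary{e}^T$, a direct expansion of the exponent yields
\[
\rho_\alpha(j)^2\,\rho_{\beta q}(\ary{x}+j\ary{e})^2 \;=\; \rho_{\tau/\sqrt{2}}(j-c_{\ary{x}})\cdot\rho_{\beta q\sqrt{\mat{\Sigma}/2}}(\ary{x}).
\]
Using $\|\ary{e}\|\le\sqrt{m}\gamma q$ (which holds with probability $1-2^{-\Omega(n)}$) together with $\alpha\gamma\sqrt{m}<\beta$ gives $\alpha^2\|\ary{e}\|^2<(\beta q)^2$, so $\tau/\sqrt{2}\ge \alpha/2=\Omega(\sqrt{n})$ lies above the smoothing parameter of $\Z$. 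Hence $\sum_{j\in\Z}\rho_{\tau/\sqrt{2}}(j-c_{\ary{x}})=(\tau/\sqrt{2})(1\pm 2^{-\Omega(n)})$ independently of $\ary{x}$, and the ratio $\Delta/S$ reduces, up to $(1\pm 2^{-\Omega(n)})$, to the anisotropic Gaussian tail $\rho_{\beta q\sqrt{\mat{\Sigma}/2}}(\Z^m\setminus B^m(\lambda_1/2))\,/\,\rho_{\beta q\sqrt{\mat{\Sigma}/2}}(\Z^m)$.

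The main obstacle is handling this anisotropic tail. I reduce it to the isotropic case by setting $\mat{T}=\beta q\sqrt{\mat{\Sigma}/2}$ and substituting $\ary{y}=\mat{T}^{-1}\ary{x}$, which rewrites the sum as $\rho_1$ over the lattice $\Lattice'=\mat{T}^{-1}\Z^m$; the constraint $\|\ary{x}\|\ge\lambda_1/2$ becomes $\|\mat{T}\ary{y}\|\ge\lambda_1/2$, which is contained in $\|\ary{y}\|\ge(\lambda_1/2)/\|\mat{T}\|_{\mathrm{op}}$. The spectral bound $\lambda_{\max}(\mat{\Sigma})=1+\alpha^2\|\ary{e}\|^2/(\beta q)^2<2$ yields $\|\mat{T}\|_{\mathrm{op}}<\beta q$, and combined with the hypothesis $\lambda_1/2>(\beta q\sqrt{m}+\alpha\gamma qm)\sqrt{\log(\beta q)}$ we obtain $r:=(\lambda_1/2)/(\|\mat{T}\|_{\mathrm{op}}\sqrt{m})>\sqrt{\log(\beta q)}$. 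Applying \Cref{lemma:Bana93} to $\Lattice'$ with this radius bounds $\Delta/S\le 2(r\sqrt{2\pi e}\,e^{-\pi r^2})^m\le (\beta q)^{-\Omega(m)}$, which is $2^{-\Omega(n)}$ since $m\ge n\log q$ and $\beta q>\sqrt{m}$. Finally, \Cref{lemma:difftotd} converts this $\ell_2$-ratio bound into the claimed $2^{-\Omega(n)}$ trace distance on the normalized states.
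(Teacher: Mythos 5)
Your approach to the final tail bound — completing the square in $j$ to isolate an anisotropic Gaussian $\rho_{\beta q\sqrt{\mat\Sigma/2}}(\ary{x})$ and then a linear change of variables to reduce to Banaszczyk over the transformed lattice — is genuinely different from the paper's, which never marginalizes out $j$ exactly: the paper instead splits the $j$-sum into small and large ranges and applies Banaszczyk separately to $\rho_{\beta q}$ and $\rho_\alpha$. Your route is more structural and directly exhibits the covariance $\mat\Sigma = \mat I_m + (\alpha/(\beta q))^2\ary{e}\ary{e}^T$ that the paper only derives later (in \S\ref{sec:dist_unknown_center}); that is a nice unification. The Banaszczyk computation you set up does give the intended $(\beta q)^{-\Omega(m)} = 2^{-\Omega(n)}$.

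However, there is a gap at the very first step, before any of this machinery applies. You assert that ``both squared $\ell_2$ norms reduce, up to $1\pm 2^{-\Omega(n)}$, to'' $q^n S$ and $q^n\Delta$, justifying this by the observation that the effective support of $\rho_{\beta q}(\cdot + j\ary{e})$ has diameter smaller than $q$. That observation shows that the map $\ary{x}\mapsto(\mat{A}^T\ary{v}+\ary{x})\bmod q$ is injective \emph{inside the bulk} of the Gaussian, and so it does justify $\|\ket{\Phi}\|^2 \approx q^n S$ and $\|\ket{\Phi'}\|^2 \approx q^n(S-\Delta)$. But the residual $\ket{\Phi}-\ket{\Phi'}$ lives entirely on the tail $\|\ary{x}\|\ge\lambda_1/2$, which is precisely \emph{outside} the effective support, so the injectivity argument says nothing about $\|\ket{\Phi}-\ket{\Phi'}\|^2$. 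In particular there is a genuine wraparound effect: each ket $\ket{\ary{z}}$ receives contributions from all of $\ary{x}_0 + q\Z^m$ and the cross terms $\rho_{\beta q}(\ary{x}_0+q\ary{k}_1+j\ary{e})\rho_{\beta q}(\ary{x}_0+q\ary{k}_2+j\ary{e})$ with $\ary{k}_1\ne\ary{k}_2$ are not counted in your $\Delta$, which uses $\sum\rho^2$ rather than $(\sum\rho)^2$.

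The conclusion is correct, and the gap can be closed, but a new argument is needed. One route: since the amplitudes of $\ket{\Phi'}$ are entrywise dominated by those of $\ket{\Phi}$ and are real non-negative, $\|\ket{\Phi}-\ket{\Phi'}\|^2 \le \|\ket{\Phi}\|^2-\|\ket{\Phi'}\|^2$; combined with $\|\ket{\Phi'}\|^2\ge q^n(S-\Delta)$ and an \emph{upper} bound $\|\ket{\Phi}\|^2\le q^n S(1+2^{-\Omega(n)})$ (which in turn needs a cross-term estimate using $\rho_{\beta q/\sqrt2}(q\ary{k}) = e^{-\pi\|\ary{k}\|^2/\beta^2} \le e^{-\pi/\beta^2}$ for $\ary{k}\ne\ary{0}$, exponentially small because $\beta\ll 1$), your $\Delta/S$ bound would then carry through. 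The paper instead sidesteps collisions entirely by using the $\ell_1$ (triangle-inequality) bound $\|\ket{\Phi}-\ket{\Phi'}\|^2 \le q^n\sum_j\rho_\alpha(j)^2\bigl(\sum_{\|\ary{x}\|\ge\lambda_1/2}\rho_{\beta q}(\ary{x}+j\ary{e})\bigr)^2$, which is a weaker upper bound than your $q^n\Delta$ but is unconditionally valid; note that if you adopt this $\ell_1$ form, your complete-the-square identity (which produces $\sum\rho^2$, not $(\sum\rho)^2$) no longer applies as stated, so you would have to redo the tail estimate differently, e.g.\ along the lines of the paper's $j$-splitting.
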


        \begin{remark}
            The condition of this lemma can be relaxed if we instead use the proof technique from Claim A.5 in~\cite{regev2023efficient}. To be more specific, the above lemma holds as long as $\beta q \sqrt{m} + \alpha\gamma q m < \lambda_1(\Lattice_q(\mat A)) / 2$ and $q/2 > \alpha\sqrt{m}$. However, the relaxed condition provides similar parameters in the overall reduction, and therefore, we do not emphasize it here. 
        \end{remark}
        
        Under the condition of \Cref{thm:EDCP_Step1}, $\left(\beta q \sqrt{m} + \alpha\gamma q m\right)\sqrt{\log (\beta q)} < 2\beta q\sqrt{m\log (\beta q)} < q/8 < \lambda_1(\Lattice_q(\mat A)) / 2$ and $\beta q > \alpha\gamma q\sqrt{m} > \sqrt{m}$, so the state in \Cref{eqn:EDCP_state_temp0} is $2^{-\Omega(n)}$-close to the state in \Cref{eqn:EDCP_state_temp1}, which can be rewritten as follows
        \begin{equation}\label{eqn:EDCP_state_temp2}
            \sum_{\substack{\ary{v}\in\Z_q^n, \ary{x}\in\Z^m, \\ \|\ary{x}\| < \lambda_1(\Lattice_q(\mat A))/2}} \left(\sum_{j\in \Z_q} \rho_\alpha(j) \rho_{\beta q}(\ary{x} + j\cdot \ary e) \ket{j}  \ket{ (\ary{v} + j\cdot \ary s) \bmod q }\right) \ket{ (\mat{A}^T\ary{v} + \ary{x}) \bmod q}
        \end{equation}

        \item\label{LWE_EDCP_step_measure} We measure the state in \Cref{eqn:EDCP_state_temp2} on the third register and denote the result as $\ary y = (\mat A^T \ary v + \ary x) \bmod q$ (note that we have $\|\ary x\| < \lambda_1(\Lattice_q(\mat A)) / 2$, so the vectors $\ary v$ and $\ary x$ are both unique), then the remaining state on the first two registers is
        \[
            \sum_{j\in\Z_q} \rho_\alpha(j) \rho_{\beta q}( \ary x + j\cdot \ary e ) \ket{j} \ket{ (\ary v + j\cdot \ary{s}) \bmod q}.
        \]
        The amplitude of this state is computed as follows
        \[
            \begin{split}
                \rho_\alpha(j) \rho_{\beta q}( \ary x + j\cdot \ary e ) & = \exp\left[-\pi \left( \frac{j^2}{\alpha^2} + \frac{j^2 \|\ary{e}\|^2 + 2j\ipd{\ary x}{\ary{e}} + \|\ary x\|^2 }{\beta^2 q^2} \right)  \right] \\
                & \propto \exp\left[-\pi \left( \frac{ ( \alpha^2\|\ary{e}\|^2 +\beta^2 q^2) j^2 + 2j\alpha^2\ipd{\ary x}{\ary{e}}  }{\alpha^2 \beta^2 q^2} \right) \right] \\
                & \propto \rho_\sigma(j - c),
            \end{split}
        \]
        where the Gaussian width $\sigma = \frac{\alpha \beta q}{ \sqrt{ \alpha^2\|\ary{e}\|^2 +\beta^2 q^2 } } \in (\alpha / \sqrt{2}, \alpha)$ and the center $c = - \frac{ \alpha^2\ipd{\ary x}{\ary{e}} }{ \alpha^2\|\ary{e}\|^2 + \beta^2 q^2 }$. Unfortunately, the center $c$ remains unknown because we have no knowledge of $\ary x$ other than that it is the error term in the LWE sample $\ary y = (\mat{A}^T \ary{v} + \ary x) \bmod q$. The analysis of the distribution of $\ary y$ and $c$ is deferred to \Cref{sec:dist_unknown_center}.
    \end{enumerate}

    It's evident that in the state of \Cref{eqn:EDCP_state_temp1}, the amplitude for every $\ary v\in \Z_q^n$ is the same, which implies that the distribution of the vector $\ary v$ is uniformly random. This completes the proof.
\end{proof}

\begin{remark}
    It seems that our reduction bears similarities to the reduction from classical LWE to G-EDCP (Extrapolated DCP with amplitudes following a Gaussian distribution) proposed by Brakerski et al. \cite{DBLP:conf/pkc/BrakerskiKSW18}. However, our reduction, compared to both Regev's reduction and Brakerski et al.'s reduction, exhibits superior success probability. In the previous reductions, the failure probability is inverse-polynomial, leading to the reduction of classical LWE to only polynomially many (Extrapolated) DCP states. In contrast, our reduction achieves $1 - 2^{-\Omega(n)}$ success probability, allowing for the construction of sub-exponentially many Extrapolated DCP states without failure, the cost is introducing an unknown center in the Gaussian distribution of the amplitude. This novel approach offers the potential for creating sub-exponential time quantum algorithms for the standard LWE problem.
\end{remark}

\iffull
\subsection{Reduce Extrapolated DCP to $\QLWE^{\sf phase}$}\label{sec:DCP_SLWE}

The second step of our quantum reduction from classical LWE to $\QLWE^{\sf phase}$ involves reducing the obtained Extrapolated DCP states to $\QLWE^{\sf phase}$. This step is an adaptation of the reduction from G-EDCP to LWE proposed by Brakerski et al. \cite{DBLP:conf/pkc/BrakerskiKSW18}. We give the detailed proof of \Cref{thm:EDCP_Step2} here.

\begin{proof}[Proof of \Cref{thm:EDCP_Step2}]
    Suppose we are given an Extrapolated DCP state with the form as \Cref{eqn:EDCP_form}. Our quantum reduction works as follows (for simplicity, we omit the normalization factors):
    \begin{enumerate}
        \item Apply Quantum Fourier Transformation on $\Z_q^n$ for the second register, obtaining the state
        \[
            \sum_{\ary{a}\in\Z_q^n} \sum_{j\in\Z_q} \omega_q^{\ipd{\ary{a}}{\ary{v} + j\cdot \ary{s}}} \rho_{\sigma}(j - c) \ket{j} \ket{ \ary{a} }.
        \]
        \item Measure the second register to get a particular measurement result $\hat {\ary a}$, which is randomly chosen from $\Z_q^n$ with a uniform distribution. By omitting the global phase term $\omega_q^{\ipd{\hat{\ary{a}}}{\ary{v}}}$, the remaining state is
        \[
            \sum_{j\in\Z_q} \omega_q^{\ipd{\hat{\ary{a}}}{j\cdot \ary{s}}} \rho_{\sigma}(j - c) \ket{j}.
        \]
        \item Apply another Quantum Fourier Transformation on $\Z_q$ and incorporate Gaussian tails of $j$ again, obtaining a state $2^{-\Omega(n)}$-close to the state
        \[
            \sum_{b\in \Z_q} \sum_{j\in\Z} \omega_q^{j(\ipd{\hat{\ary{a}}}{\ary{s}} + b)} \rho_{\sigma}(j - c) \ket{b}.
        \]
        \item Use the Poisson summation formula on the amplitude and change the summation variable to $e \gets \ipd{\hat{\ary{a}}}{\ary{s}} + b - q\cdot j$, this state can be rewritten as
        \[
            \begin{split}
                &~ \sum_{b\in \Z_q} \sum_{j\in\Z} \omega_q^{j(\ipd{\hat{\ary{a}}}{\ary{s}} + b)} \rho_{\sigma}(j - c) \ket{b} \\
                = &~ \sum_{b\in \Z_q}\sum_{j\in \Z} \sigma\rho_{1/\sigma}\left(j - \frac{\ipd{\hat{\ary{a}}}{\ary{s}} + b}{q}\right)\cdot \exp\left(-2\pi {\rm i} \cdot c\left(j - \frac{\ipd{\hat{\ary{a}}}{\ary{s}} + b}{q}\right)\right)\ket{b} \\
                \propto &~ \sum_{e\in \Z}\rho_{q/\sigma}(e)\cdot \exp(2\pi {\rm i}\cdot ce/q)\ket{(\ipd{-\hat{\ary{a}}}{\ary{s}} + e) \bmod q}
            \end{split}
        \]
    \end{enumerate}
    Finally, this state along with the classical vector $-\hat{\ary{a}}$ will be the output $\QLWE^{\sf phase}$ instance of our quantum reduction.
\end{proof}
\fi

\subsection{The distribution of unknown center}\label{sec:dist_unknown_center}

For additional technical insights, we present a more detailed analysis of the distribution of center $c$ in the Extrapolated DCP states (see \Cref{eqn:EDCP_form}) we get. To achieve this, we begin by examining the distribution of $\ary x$ after measurement on the third register of the state given in \Cref{eqn:EDCP_state_temp2}. It is evident that the probability of obtaining a specific vector $\ary x$ is proportional to
\[
    \begin{split}
        \Pr(\ary x) & \propto \sum_{j\in \Z_q}\rho_\alpha(j)^2\rho_{\beta q}(\ary x + j \cdot \ary e)^2 \\
        & \approx \sum_{j\in \Z}\rho_\alpha(j)^2\rho_{\beta q}(\ary x + j \cdot \ary e)^2 \\
        & = \sum_{j\in \Z}\exp\left[-2\pi \left( \frac{j^2}{\alpha^2} + \frac{j^2 \|\ary{e}\|^2 + 2j\ipd{\ary x}{\ary{e}} + \|\ary x\|^2 }{\beta^2 q^2} \right) \right] \\
        & = \sum_{j\in \Z}\rho_{\sigma / \sqrt{2}}(j - c)\cdot \exp\left[-2\pi \left(\frac{\|\ary x\|^2}{\beta^2 q^2} - \frac{\alpha^2\ipd{\ary x}{\ary e}^2}{\beta^2 q^2 (\alpha^2\|\ary e\|^2 + \beta^2 q^2)}\right)\right]. \\
    \end{split}
\]
We observe that $\sigma = \frac{\alpha \beta q}{ \sqrt{ \alpha^2\|\ary{e}\|^2 +\beta^2 q^2 } } > \frac{\alpha}{\sqrt{2}} = \Omega(\sqrt{n})$, which implies that almost all of the weight of $\rho_{\sqrt{2}/\sigma}$ is concentrated at $\rho_{\sqrt{2}/\sigma}(0)$ with exponentially small weight elsewhere. Using the Poisson summation formula, we get that
\[
    \sum_{j\in \Z}\rho_{\sigma / \sqrt{2}}(j - c) = \sum_{j\in \Z} \frac{\sigma}{\sqrt{2}} \rho_{\sqrt{2}/\sigma}(j) \cdot \exp(-2\pi {\rm i}\cdot cj) \in \frac{\sigma}{\sqrt{2}}(1 \pm 2^{-\Omega(n)}).
\]
Let us denote $\mat \Sigma = \left(\mat I_m - \frac{\alpha^2}{\alpha^2\|\ary e\|^2 + \beta^2 q^2}\ary e\ary e^T\right)^{-1} = \mat I_m + \frac{\alpha^2}{\beta^2 q^2}\ary e\ary e^T$, the remaining term can be written as

\iffull
\[
    \begin{split}
        \exp\left[-2\pi \left(\frac{\|\ary x\|^2}{\beta^2 q^2} - \frac{\alpha^2\ipd{\ary x}{\ary e}^2}{\beta^2 q^2(\alpha^2\|\ary e\|^2 + \beta^2 q^2)}\right)\right] & = \exp\left[-2\pi \cdot\frac{1}{\beta^2 q^2}(\ary x)^T\left(\mat I_m - \frac{\alpha^2}{\alpha^2\|\ary e\|^2 + \beta^2 q^2}\ary e\ary e^T\right)\ary x\right] \\
        & = \exp\left[-2\pi \cdot\frac{1}{\beta^2 q^2}(\ary x)^T\mat \Sigma^{-1}\ary x\right] \\
        & = \rho_{\beta q\sqrt{\mat \Sigma / 2}}(\ary x).
    \end{split}
\]
\fi

\ifllncs
\[
    \begin{split}
        &~ \exp\left[-2\pi \left(\frac{\|\ary x\|^2}{\beta^2 q^2} - \frac{\alpha^2\ipd{\ary x}{\ary e}^2}{\beta^2 q^2(\alpha^2\|\ary e\|^2 + \beta^2 q^2)}\right)\right] \\
        =&~ \exp\left[-2\pi \cdot\frac{1}{\beta^2 q^2}(\ary x)^T\left(\mat I_m - \frac{\alpha^2}{\alpha^2\|\ary e\|^2 + \beta^2 q^2}\ary e\ary e^T\right)\ary x\right] \\
        =&~ \exp\left[-2\pi \cdot\frac{1}{\beta^2 q^2}(\ary x)^T\mat \Sigma^{-1}\ary x\right] \\
        =&~ \rho_{\beta q\sqrt{\mat \Sigma / 2}}(\ary x).
    \end{split}
\]
\fi

This means that the distribution of $\ary x$ follows the discrete Gaussian distribution with center $\ary 0$ and covariance matrix $\beta^2 q^2\mat \Sigma/2$. Correspondingly, the distribution of $\ary y\in \Z_q^m\cap B_{\Lattice_q(\mat A)}$ is given by $\Pr(\ary y) = \rho_{\beta q\sqrt{\mat \Sigma / 2}}(\ary x)$.

To derive the distribution of the unknown center $c = \frac{\alpha^2\ipd{\ary x}{\ary e}}{\alpha^2\|\ary e\|^2 + \beta^2 q^2}$, we 
observe that the distribution of $\ary x$ is smooth enough to be treated as a continuous Gaussian distribution since the eigenvalues of $\beta^2 q^2\mat \Sigma/2$ are $\beta^2 q^2/2$ and $(\beta^2 q^2 + \alpha^2\|\ary e\|^2)/2$. So the distribution of the unknown center $c$ can be approximated by the discrete Gaussian distribution with minimum step $\frac{\alpha^2}{\alpha^2\|\ary e\|^2 + \beta^2 q^2}$, center 0 and variance
\[
    \sigma_c^2 = \left(\frac{\alpha^2}{\alpha^2\|\ary e\|^2 + \beta^2 q^2}\right)^2\cdot \ary e^T\left(\beta^2 q^2\mat\Sigma/2\right)\ary e = \frac{\alpha^4\|\ary e\|^2}{2(\alpha^2\|\ary e\|^2 + \beta^2 q^2)}.
\]

In conclusion, we propose the following statement for the distribution of the unknown center in the Extrapolated DCP states:
\begin{theorem}
    The distribution of $c$ in \Cref{eqn:EDCP_form} of \Cref{thm:EDCP_Step1} approximately follows the discrete Gaussian distribution $D_{\frac{\alpha^2}{\alpha^2\|\ary e\|^2 + \beta^2 q^2} \Z, \sigma_c}$ where $\sigma_c = \frac{\alpha^2\|\ary e\|}{\sqrt{2(\alpha^2\|\ary e\|^2 + \beta^2 q^2)}}$.
\end{theorem}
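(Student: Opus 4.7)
The plan is to show that $c$ is a deterministic linear function of the measurement outcome $\ary x$ from Step~\ref{LWE_EDCP_step_measure} of the proof of \Cref{thm:EDCP_Step1}, and then to push the (already-established) Gaussian distribution of $\ary x$ through this linear map. Recall from that step we have $c = -\frac{\alpha^2 \ipd{\ary x}{\ary e}}{\alpha^2 \|\ary e\|^2 + \beta^2 q^2}$, and the preceding computation in \Cref{sec:dist_unknown_center} established that the distribution of $\ary x$ on $\Z^m$ is approximately the discrete Gaussian with covariance matrix $\mat\Gamma := \tfrac{\beta^2 q^2}{2}\mat\Sigma$ where $\mat\Sigma = \mat I_m + \tfrac{\alpha^2}{\beta^2 q^2}\ary e\ary e^T$. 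Since $\ary x$ ranges over $\Z^m$, the image $\langle \ary x, \ary e\rangle$ ranges over $\Z$ (or a sublattice if $\gcd$ of the entries of $\ary e$ is non-trivial; we can absorb this into an overall scaling), so $c$ takes values in the scaled lattice $\frac{\alpha^2}{\alpha^2 \|\ary e\|^2 + \beta^2 q^2}\Z$, which gives the support claimed in the theorem.

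Next I would compute the variance of $c$ as if $\ary x$ were a continuous Gaussian with covariance $\mat\Gamma$. For a linear functional $\ary v \mapsto \langle \ary v, \ary w\rangle$ applied to a centered Gaussian with covariance $\mat\Gamma$, the image is a centered one-dimensional Gaussian with variance $\ary w^T \mat\Gamma \ary w$. Plugging in $\ary w = -\frac{\alpha^2}{\alpha^2 \|\ary e\|^2 + \beta^2 q^2}\ary e$ and simplifying $\ary e^T \mat\Sigma \ary e = \|\ary e\|^2 + \tfrac{\alpha^2}{\beta^2 q^2}\|\ary e\|^4 = \tfrac{\|\ary e\|^2 (\alpha^2 \|\ary e\|^2 + \beta^2 q^2)}{\beta^2 q^2}$ yields
\[
\sigma_c^2 \;=\; \Big(\tfrac{\alpha^2}{\alpha^2 \|\ary e\|^2 + \beta^2 q^2}\Big)^2 \cdot \tfrac{\beta^2 q^2}{2}\cdot \ary e^T \mat\Sigma \ary e \;=\; \tfrac{\alpha^4 \|\ary e\|^2}{2(\alpha^2 \|\ary e\|^2 + \beta^2 q^2)},
\]
matching the theorem.

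The main obstacle is justifying the \emph{approximately} Gaussian claim, i.e.\ that the discrete Gaussian on $\Z^m$ with covariance $\mat\Gamma$ projected to the one-dimensional lattice $\tfrac{\alpha^2}{\alpha^2\|\ary e\|^2 + \beta^2 q^2}\Z$ is negligibly close in statistical distance to $D_{\tfrac{\alpha^2}{\alpha^2\|\ary e\|^2 + \beta^2 q^2}\Z,\, \sigma_c}$. For this I would invoke a standard smoothing-parameter argument: the minimum eigenvalue of $\sqrt{\mat\Gamma}$ is $\beta q/\sqrt 2$, which by the hypotheses $\beta q > \alpha\gamma q\sqrt m > \sqrt m$ and \Cref{lemma:smoothingMR07} exceeds $\eta_{2^{-\Omega(n)}}(\Z^m)$. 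Hence the discrete Gaussian on $\Z^m$ is $2^{-\Omega(n)}$-close to the continuous Gaussian on $\R^m$ after rounding in the usual sense; in particular its projection to any one-dimensional lattice direction (in our case the direction of $\ary e$) is $2^{-\Omega(n)}$-close to a discrete Gaussian on that one-dimensional lattice with the variance computed above. Up to the usual rescaling and a routine application of Poisson summation to handle the discreteness, this gives the theorem.
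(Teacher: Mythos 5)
Your proof matches the paper's own argument: both express $c$ as a fixed linear functional of $\ary{x}$, use the already-established fact that $\ary{x}$ is (approximately) discrete Gaussian with covariance $\tfrac{\beta^2 q^2}{2}\mat\Sigma$, compute $\sigma_c^2 = \ary w^T\bigl(\tfrac{\beta^2 q^2}{2}\mat\Sigma\bigr)\ary w$ by treating $\ary{x}$ as continuous, and justify the continuous approximation by the smoothness of the covariance (the paper appeals informally to the eigenvalues of $\tfrac{\beta^2 q^2}{2}\mat\Sigma$, you phrase it via $\eta_\epsilon(\Z^m)$). Your aside about a possible $\gcd$ of the entries of $\ary e$ being non-trivial is a fair observation the paper does not address, though ``absorbing it into a scaling'' would change the support lattice away from $\frac{\alpha^2}{\alpha^2\|\ary e\|^2+\beta^2 q^2}\Z$; since the paper states the result as only an approximation this does not create a real discrepancy.
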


\begin{remark}
    As readers may notice, the Gaussian width $\sigma$ of $j$ and the Gaussian width $\sigma_c$ of $c$ (the center of the distribution of $j$) satisfy $\sigma_c = \frac{\alpha\|\ary e\|}{\sqrt{2}\beta q}\sigma$. In our settings, if we assume $\beta q \gg \alpha\cdot\|\ary e\|$, then the distribution of $j$ is a discrete Gaussian distribution with a small shift. However, this shift is non-negligible, preventing our $\QLWE^{\sf phase}$ state from being exponentially close to a $\QLWE$ state without unknown phase. 
\end{remark}

\iffull
\section{Hardness of $\QLWE$ with Unknown Phase via Quantizing Regev's Iterative Reduction}\label{sec:regev}

In this section, we show how to reduce from the problem of generating discrete Gaussian states ($\QDGS$, \Cref{def:QDGS}) to a variant of $\QLWE$ with an unknown phase term on the amplitude ($\QLWE^{\sf phase}$), by modifying Regev's iterative reduction~\cite{DBLP:journals/jacm/Regev09} from the problem of generating discrete Gaussian samples ($\DGS$, \Cref{def:DGS}) to $\LWE$. Combined with the known reductions from $\GAP\SVP$ and $\SIVP$ to $\DGS$ in \Cref{lemma:GapSVP2DGS} and \Cref{lemma:SIVP2DGS}, it gives a quantum reduction from ${\GAP\SVP}_{\Tilde{O}(n^{1.5})}$ and ${\SIVP}_{\Tilde{O}(n^{1.5})}$ to $\QLWE^{\sf phase}$. 

\subsection{Overview of our reduction} 
As in \cite{DBLP:journals/jacm/Regev09}, our proof is iterative. We start from generating discrete Gaussian states with exponentially large widths (in Regev's reduction, it is classical discrete Gaussian sample with exponentially large widths; both can be done efficiently). Then, each iteration produces discrete Gaussian states (samples) with smaller widths. Repeating the iteration for polynomial number of times gives the discrete Gaussian states (samples) for the $\QDGS$ ($\DGS$) problem. We illustrate Regev's reduction in \Cref{subfig:regev_reduc}, aligned with our reduction in \Cref{subfig:q_regev_reduc}, and then explain with more details.

\iffull
\begin{figure}[htbp]
    \centering
    \begin{subfigure}[b]{0.49\textwidth}
        \centering
        \scalebox{0.75}{
        \centering
    \begin{tikzpicture}
        \def \h {1.15}
        \def \hb {1.3}
        \def \wl {3}
        \def \wm {3.5}
        \def \wr {3.8}

        \draw (0, 4*\h+\hb) rectangle (\wl, 4*\h)
            node[pos=0.5, align=center] 
            {$n^c$ samples \\ of $D_{\Lattice, r}$};
        \draw (0, 2*\h+\hb) rectangle (\wl, 2*\h)
            node[pos=0.5, align=center]
            {$n^c$ samples \\ of $D_{\Lattice, r\sqrt{n}/(\alpha q)}$};
        \draw (0, \hb) rectangle (\wl, 0)
            node[pos=0.5, align=center]
            {$n^c$ samples \\ of $D_{\Lattice, rn/(\alpha q)^2}$};
        \draw (\wl+\wm, 3*\h+\hb) rectangle (\wl+\wm+\wr, 3*\h)
            node[pos=0.5, align=center]
            {Quantum query \\ for $\CVP_{\Lattice^*, \alpha q/(\sqrt 2 r)}$};
        \draw (\wl+\wm, \h+\hb) rectangle (\wl+\wm+\wr, \h)
            node[pos=0.5, align=center]
            {Quantum query \\ for $\CVP_{\Lattice^*, (\alpha q)^2/(r\sqrt{2n})}$};

        \draw[-{Stealth[length=2mm, width=1.5mm]}] (\wl, 4*\h+0.3*\hb) -- (\wl+\wm, 3*\h+0.7*\hb);
        \draw[-{Stealth[length=2mm, width=1.5mm]}] (\wl+\wm, 3*\h+0.3*\hb) -- (\wl, 2*\h+0.7*\hb);
        \draw[-{Stealth[length=2mm, width=1.5mm]}] (\wl, 2*\h+0.3*\hb) -- (\wl+\wm, \h+0.7*\hb);
        \draw[-{Stealth[length=2mm, width=1.5mm]}] (\wl+\wm, \h+0.3*\hb) -- (\wl, 0.7*\hb);

        \path [decorate, decoration={text along path, text align=center, text={|\footnotesize|Classical, use {$\LWE$}}}] (\wl, 4*\h+0.3*\hb+0.15) -- (\wl+\wm, 3*\h+0.7*\hb+0.15);

        \path [decorate, decoration={text along path, text align=center, text={|\footnotesize|{\cite[Lemma 3.14]{DBLP:journals/jacm/Regev09}}}}] (\wl, 2*\h+0.7*\hb+0.15) -- (\wl+\wm, 3*\h+0.3*\hb+0.15);

        \path [decorate, decoration={text along path, text align=center, text={|\footnotesize|Classical, use {$\LWE$}}}] (\wl, 2*\h+0.3*\hb+0.15) -- (\wl+\wm, \h+0.7*\hb+0.15);

        \path [decorate, decoration={text along path, text align=center, text={|\footnotesize|{\cite[Lemma 3.14]{DBLP:journals/jacm/Regev09}}}}] (\wl, 0.7*\hb+0.15) -- (\wl+\wm, \h+0.3*\hb+0.15);

        \node at (\wl+0.5*\wm, 4*\h+\hb+0.3) {~{\Huge $\vdots$}~};
        \node at (\wl+0.5*\wm, 0.3) {~{\Huge $\vdots$}~};
    \end{tikzpicture}
        }
        \caption{Two iterations in Regev's reduction \cite{DBLP:journals/jacm/Regev09}.}
        \label{subfig:regev_reduc}
    \end{subfigure}
    \hfill
    \begin{subfigure}[b]{0.49\textwidth}
        \centering
        \scalebox{0.75}{
        \centering
    \begin{tikzpicture}
        \def \h {1.15}
        \def \hb {1.3}
        \def \wl {3}
        \def \wm {3.5}
        \def \wr {3.8}

        \draw (0, 4*\h+\hb) rectangle (\wl, 4*\h)
            node[pos=0.5, align=center] 
            {$3m^2n^2$ states \\ of $\ket{D_{\Lattice, r}}$};
        \draw (0, 2*\h+\hb) rectangle (\wl, 2*\h)
            node[pos=0.5, align=center]
            {$3m^2n^2$ states \\ of $\ket{D_{\Lattice, r\sqrt{n}/(\alpha q)}}$};
        \draw (0, \hb) rectangle (\wl, 0)
            node[pos=0.5, align=center]
            {$3m^2n^2$ states \\ of $\ket{D_{\Lattice, rn/(\alpha q)^2}}$};
        \draw (\wl+\wm, 3*\h+\hb) rectangle (\wl+\wm+\wr, 3*\h)
            node[pos=0.5, align=center]
            {Quantum query \\ for $\CVP_{\Lattice^*, \alpha q/r}$};
        \draw (\wl+\wm, \h+\hb) rectangle (\wl+\wm+\wr, \h)
            node[pos=0.5, align=center]
            {Quantum query \\ for $\CVP_{\Lattice^*, (\alpha q)^2/(r\sqrt{n})}$};

        \draw[-{Stealth[length=2mm, width=1.5mm]}] (\wl, 4*\h+0.3*\hb) -- (\wl+\wm, 3*\h+0.7*\hb);
        \draw[-{Stealth[length=2mm, width=1.5mm]}] (\wl+\wm, 3*\h+0.3*\hb) -- (\wl, 2*\h+0.7*\hb);
        \draw[-{Stealth[length=2mm, width=1.5mm]}] (\wl, 2*\h+0.3*\hb) -- (\wl+\wm, \h+0.7*\hb);
        \draw[-{Stealth[length=2mm, width=1.5mm]}] (\wl+\wm, \h+0.3*\hb) -- (\wl, 0.7*\hb);

        \path [decorate, decoration={text along path, text align=center, text={|\footnotesize|Use {$\QLWE^{\sf phase}$,}}}] (\wl+0.15, 4*\h+0.3*\hb+0.5) -- (\wl+\wm+0.15, 3*\h+0.7*\hb+0.5);
        \path [decorate, decoration={text along path, text align=center, text={|\footnotesize|({\Cref{thm:DGStoCVP}})}}] (\wl, 4*\h+0.3*\hb+0.15) -- (\wl+\wm, 3*\h+0.7*\hb+0.15);

        \path [decorate, decoration={text along path, text align=center, text={|\footnotesize|{\cite[Lemma 3.14]{DBLP:journals/jacm/Regev09}}}}] (\wl, 2*\h+0.7*\hb+0.15) -- (\wl+\wm, 3*\h+0.3*\hb+0.15);

        \path [decorate, decoration={text along path, text align=center, text={|\footnotesize|Use {$\QLWE^{\sf phase}$,}}}] (\wl+0.15, 2*\h+0.3*\hb+0.5) -- (\wl+\wm+0.15, \h+0.7*\hb+0.5);
        \path [decorate, decoration={text along path, text align=center, text={|\footnotesize|({\Cref{thm:DGStoCVP}})}}] (\wl, 2*\h+0.3*\hb+0.15) -- (\wl+\wm, \h+0.7*\hb+0.15);

        \path [decorate, decoration={text along path, text align=center, text={|\footnotesize|{\cite[Lemma 3.14]{DBLP:journals/jacm/Regev09}}}}] (\wl, 0.7*\hb+0.15) -- (\wl+\wm, \h+0.3*\hb+0.15);

        \node at (\wl+0.5*\wm, 4*\h+\hb+0.3) {~{\Huge $\vdots$}~};
        \node at (\wl+0.5*\wm, 0.3) {~{\Huge $\vdots$}~};
    \end{tikzpicture}
        }
        \caption{Two iterations in our reduction that quantizes Regev's reduction.}
        \label{subfig:q_regev_reduc}
    \end{subfigure}
    \caption{The correspondence between Regev's reduction (from $\DGS$ to $\LWE$) and our reduction (from $\QDGS$ to $\QLWE^{\sf phase}$). }
    \label{fig:reduc_roadmap}
\end{figure}
\fi

\ifllncs
\begin{figure}[p]
    \centering
    \begin{subfigure}[b]{\textwidth}
        \centering
        \scalebox{1}{
        \centering
    \begin{tikzpicture}
        \def \h {1.15}
        \def \hb {1.3}
        \def \wl {3}
        \def \wm {3.5}
        \def \wr {3.8}

        \draw (0, 4*\h+\hb) rectangle (\wl, 4*\h)
            node[pos=0.5, align=center] 
            {$n^c$ samples \\ of $D_{\Lattice, r}$};
        \draw (0, 2*\h+\hb) rectangle (\wl, 2*\h)
            node[pos=0.5, align=center]
            {$n^c$ samples \\ of $D_{\Lattice, r\sqrt{n}/(\alpha q)}$};
        \draw (0, \hb) rectangle (\wl, 0)
            node[pos=0.5, align=center]
            {$n^c$ samples \\ of $D_{\Lattice, rn/(\alpha q)^2}$};
        \draw (\wl+\wm, 3*\h+\hb) rectangle (\wl+\wm+\wr, 3*\h)
            node[pos=0.5, align=center]
            {Quantum query \\ for $\CVP_{\Lattice^*, \alpha q/(\sqrt 2 r)}$};
        \draw (\wl+\wm, \h+\hb) rectangle (\wl+\wm+\wr, \h)
            node[pos=0.5, align=center]
            {Quantum query \\ for $\CVP_{\Lattice^*, (\alpha q)^2/(r\sqrt{2n})}$};

        \draw[-{Stealth[length=2mm, width=1.5mm]}] (\wl, 4*\h+0.3*\hb) -- (\wl+\wm, 3*\h+0.7*\hb);
        \draw[-{Stealth[length=2mm, width=1.5mm]}] (\wl+\wm, 3*\h+0.3*\hb) -- (\wl, 2*\h+0.7*\hb);
        \draw[-{Stealth[length=2mm, width=1.5mm]}] (\wl, 2*\h+0.3*\hb) -- (\wl+\wm, \h+0.7*\hb);
        \draw[-{Stealth[length=2mm, width=1.5mm]}] (\wl+\wm, \h+0.3*\hb) -- (\wl, 0.7*\hb);

        \path [decorate, decoration={text along path, text align=center, text={|\footnotesize|Classical, use {$\LWE$}}}] (\wl, 4*\h+0.3*\hb+0.15) -- (\wl+\wm, 3*\h+0.7*\hb+0.15);

        \path [decorate, decoration={text along path, text align=center, text={|\footnotesize|{\cite[Lemma 3.14]{DBLP:journals/jacm/Regev09}}}}] (\wl, 2*\h+0.7*\hb+0.15) -- (\wl+\wm, 3*\h+0.3*\hb+0.15);

        \path [decorate, decoration={text along path, text align=center, text={|\footnotesize|Classical, use {$\LWE$}}}] (\wl, 2*\h+0.3*\hb+0.15) -- (\wl+\wm, \h+0.7*\hb+0.15);

        \path [decorate, decoration={text along path, text align=center, text={|\footnotesize|{\cite[Lemma 3.14]{DBLP:journals/jacm/Regev09}}}}] (\wl, 0.7*\hb+0.15) -- (\wl+\wm, \h+0.3*\hb+0.15);

        \node at (\wl+0.5*\wm, 4*\h+\hb+0.3) {~{\Huge $\vdots$}~};
        \node at (\wl+0.5*\wm, 0.3) {~{\Huge $\vdots$}~};
    \end{tikzpicture}
        }
        \caption{Two iterations in Regev's reduction \cite{DBLP:journals/jacm/Regev09}.}
        \label{subfig:regev_reduc}
    \end{subfigure}
    \hfill \\ \quad \\
    \begin{subfigure}[b]{\textwidth}
        \centering
        \scalebox{0.8}{
            \begin{tikzpicture}
        \def \h {1.15}
        \def \hb {1.3}
        \def \wl {3}
        \def \wm {3.5}
        \def \wr {3.8}

        \draw (0, 4*\h+\hb) rectangle (\wl, 4*\h)
            node[pos=0.5, align=center] 
            {$3m^2n^2$ states \\ of $\ket{D_{\Lattice, r}}$};
        \draw (0, 2*\h+\hb) rectangle (\wl, 2*\h)
            node[pos=0.5, align=center]
            {$3m^2n^2$ states \\ of $\ket{D_{\Lattice, r\sqrt{n}/(\alpha q)}}$};
        \draw (0, \hb) rectangle (\wl, 0)
            node[pos=0.5, align=center]
            {$3m^2n^2$ states \\ of $\ket{D_{\Lattice, rn/(\alpha q)^2}}$};
        \draw (\wl+\wm, 3*\h+\hb) rectangle (\wl+\wm+\wr, 3*\h)
            node[pos=0.5, align=center]
            {Quantum query \\ for $\CVP_{\Lattice^*, \alpha q/r}$};
        \draw (\wl+\wm, \h+\hb) rectangle (\wl+\wm+\wr, \h)
            node[pos=0.5, align=center]
            {Quantum query \\ for $\CVP_{\Lattice^*, (\alpha q)^2/(r\sqrt{n})}$};

        \draw[-{Stealth[length=2mm, width=1.5mm]}] (\wl, 4*\h+0.3*\hb) -- (\wl+\wm, 3*\h+0.7*\hb);
        \draw[-{Stealth[length=2mm, width=1.5mm]}] (\wl+\wm, 3*\h+0.3*\hb) -- (\wl, 2*\h+0.7*\hb);
        \draw[-{Stealth[length=2mm, width=1.5mm]}] (\wl, 2*\h+0.3*\hb) -- (\wl+\wm, \h+0.7*\hb);
        \draw[-{Stealth[length=2mm, width=1.5mm]}] (\wl+\wm, \h+0.3*\hb) -- (\wl, 0.7*\hb);

        \path [decorate, decoration={text along path, text align=center, text={|\footnotesize|Use {$\QLWE^{\sf phase}$,}}}] (\wl+0.15, 4*\h+0.3*\hb+0.5) -- (\wl+\wm+0.15, 3*\h+0.7*\hb+0.5);
        \path [decorate, decoration={text along path, text align=center, text={|\footnotesize|({\Cref{thm:DGStoCVP}})}}] (\wl, 4*\h+0.3*\hb+0.15) -- (\wl+\wm, 3*\h+0.7*\hb+0.15);

        \path [decorate, decoration={text along path, text align=center, text={|\footnotesize|{\cite[Lemma 3.14]{DBLP:journals/jacm/Regev09}}}}] (\wl, 2*\h+0.7*\hb+0.15) -- (\wl+\wm, 3*\h+0.3*\hb+0.15);

        \path [decorate, decoration={text along path, text align=center, text={|\footnotesize|Use {$\QLWE^{\sf phase}$,}}}] (\wl+0.15, 2*\h+0.3*\hb+0.5) -- (\wl+\wm+0.15, \h+0.7*\hb+0.5);
        \path [decorate, decoration={text along path, text align=center, text={|\footnotesize|({\Cref{thm:DGStoCVP}})}}] (\wl, 2*\h+0.3*\hb+0.15) -- (\wl+\wm, \h+0.7*\hb+0.15);

        \path [decorate, decoration={text along path, text align=center, text={|\footnotesize|{\cite[Lemma 3.14]{DBLP:journals/jacm/Regev09}}}}] (\wl, 0.7*\hb+0.15) -- (\wl+\wm, \h+0.3*\hb+0.15);

        \node at (\wl+0.5*\wm, 4*\h+\hb+0.3) {~{\Huge $\vdots$}~};
        \node at (\wl+0.5*\wm, 0.3) {~{\Huge $\vdots$}~};

       \draw[rounded corners] (\wl+\wm+0.5*\wr, 5*\h+2) rectangle (\wl+\wm+2.3*\wr, 5*\h-0.2)
            node[pos=0.5, align=left]
            {1. Guess the parameters for the\\ $\QLWE^{\sf phase}$ instance. (\Cref{lemma:SLWE2CVPmodq})\\2. Generate $\QLWE^{\sf phase}$ instance\\ with the correct guess. (\Cref{lemma:generate_samples})};

        \draw[densely dashed, ->] (\wl+\wm+0.5*\wr,5*\h+0.9) .. controls (\wl+\wm+0.5*\wr-0.2, 5*\h+0.9) and (\wl+\wm, 5*\h+\hb-0.5) .. (\wl+0.5*\wm+0.8, 4*\h+1);
        
    \end{tikzpicture}
        }
        \caption{Two iterations in our reduction that quantizes Regev's reduction.}
        \label{subfig:q_regev_reduc}
    \end{subfigure}
    \caption{The correspondence between Regev's reduction (from $\DGS$ to $\LWE$) and our reduction (from $\QDGS$ to $\QLWE^{\sf phase}$). }
    \label{fig:reduc_roadmap}
\end{figure}
\fi

In order to quantize Regev's iterative reduction, we focus on quantizing the only classical step in the reduction -- solving $\CVP$. Roughly speaking, given a $\CVP$ instance $\ary{x}$, Regev~\cite{DBLP:journals/jacm/Regev09} utilizes LWE oracle to solve $\CVP$ by feeding it with samples $\ary{a}:=\Lattice^{-1}\ary{v} \bmod q$ and $\ipd{\ary{x}}{\ary{v}} \bmod q$ where $\ary{v} \leftarrow D_{\Lattice, r}$, which are close to the LWE sample $\ary{a}$ and $\ipd{\ary{a}}{\ary{s}} + e \bmod q$ where $\ary{s} = (\Lattice^*)^{-1}\kappa_{\Lattice^*}(\ary{x})\bmod q$ and $e$ is sampled from the Gaussian distribution. To quantize this step, a natural idea is to replace the classical $\ary{v}$ with a superposition state of Gaussian samples $\sum_{\ary{v} \in \Lattice}\rho_r(\ary{v})\ket{\ary{v}}$, measure $\ary{a}=\Lattice^{-1}\ary{v} \bmod q$, and compute $\ipd{\ary{x}}{\ary{v}} \bmod q$ in another register, hoping that the register contains an $\QLWE$ state. However, we should be careful to make sure that the $\ary{v}$ register does not collapse to a classical $\ary{v}$. Our solution is to measure the $\ary{v}$ register in Fourier basis, which can ensure that each $\ary{v} \in q\Lattice + \Lattice \ary{a}$ appears in the amplitude of the $\ipd{\ary{x}}{\ary{v}}$ register. But it also inevitably introduces a phase term that we are unable to compute efficiently from the measurement results. The above discussion ignores the Gaussian distribution to smooth the error distribution. More details can be found in~\Cref{sec:generating_slwe_samples_regev}.

The above described reduction leads us to requiring an $\QLWE^{\sf phase}$ oracle with specific parameter. We first formally define our special
parameters and functions, and propose the main theorem for this section here:

\begin{definition}\label{def:SLWEparamsRegev}
    Let $\Lattice$ be an $n$-dimensional integer lattice. Given parameters $q, R\in \N^+$ such that $R\in 2^{\poly(n)}$, $\alpha, r\in\R^+$ and a vector $\ary x\in \R^n$ such that $\dist(\ary{x}, \Lattice^*) \le \lambda_1(\Lattice^*)/2$, we define
    \begin{enumerate}
        \item An amplitude function $f: \Z_{qR}/R\to \R$ with $f(e) = \rho_{\sqrt2\alpha q}(e)$ which is completely known.
        \item A family of distribution $D_\theta^{(r,\ary{x})}$ over $\Z_R^n \cap R \cdot B_{(q\Lattice)^*}$ parameterized by $(r, \ary{x})$ and given by $\Pr(\ary y) \propto \rho_{\sqrt{\mat\Sigma/2}}\left(\ary{z}\left(\ary{y}\right)\right)$ where $\mat \Sigma := \frac{\ary{I}}{r^2}+\frac{\ary{x}'{\ary{x}'}^T}{2\alpha^2q^2-r^2\|\ary{x}'\|^2}$, $\ary{x}':=\ary{x} - \kappa_{\Lattice^*}(\ary{x})$ and $\ary{z}(\ary{y}) := \ary y/R - \kappa_{(q\Lattice)^*}(\ary y/R)$.
        \item A family of phase function $\theta^{(r, \ary x)}: \Z_R^n \cap R \cdot B_{(q\Lattice)^*}\to \R$ parameterized by $(r, \ary x)$ with $\theta^{(r,\ary{x})}(\ary{y}) = \frac{r^2\ipd{\ary{x}'}{\ary{z}(\ary{y})}}{2\alpha^2q^2}$, where $\ary{x}'=\ary{x} - \kappa_{\Lattice^*}(\ary{x})$ and $\ary{z}(\ary{y}) = \ary y/R - \kappa_{(q\Lattice)^*}(\ary y/R)$. This function is not known to be efficiently computable since it requires to solve approximate $\CVP$.  
    \end{enumerate}
\end{definition}

\begin{theorem}[Main theorem, from $\QDGS$ to $\QLWE^{\sf phase}$]\label{thm:Regev_MainThm}
Let $\Lattice$ be an $n$-dimensional integer lattice.
Let $\epsilon = \epsilon(n)$ be a negligible function such that $\epsilon(n) < 2^{-n}$, $q = q(n) > 10n$ be an integer of at most $\poly(n)$ bits, $\alpha \in (0, \frac{1}{5\sqrt{n}})$ such that $\alpha q > 2\sqrt{n}$, $R = R(n)$ be an exponentially large integer such that $R > \max\{2^{2n +2}n\lambda_n(\Lattice)^2, \frac{2^{4n + 1}\sqrt{2}n\lambda_n(\Lattice^*)\lambda_n(\Lattice)}{\alpha q}, 2^{3n}\lambda_n(\Lattice^*)\}$. Let $r_0 > 4\sqrt{n}\eta_{\epsilon}(\Lattice)/\alpha$ be the width parameter of the $\QDGS$ problem. 

Assume there exists quantum algorithms that can solve $\QLWE^{\sf phase}_{n,m,q,f, \theta^{(r,\ary{x})}, D_{\theta}^{(r,\ary{x})}}$ for any choice of pair $(r,\ary{x})$ such that $\alpha q r_0/\sqrt{n} < r < 2^{2n}\sqrt{2}\lambda_n(\Lattice)$, $\ary{x}\in \Lattice^*/R$ and $\dist(\ary{x},\Lattice^*)\le \alpha q / r$, with $m = 2^{o(n)}$ samples and in time complexity $T$. Then there exists a quantum algorithm that can generate a state that is $2^{-\Omega(n)}$-close to the discrete Gaussian state $\ket{D_{\Lattice, r_0}} = \sum_{\ary{v}\in \Lattice}\rho_{r_0}(\ary{v})\ket{\ary{v}}$ in time complexity $O((m^4+m^3T)\poly(n))$. 
\end{theorem}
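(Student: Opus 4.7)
The plan is to mirror Regev's iterative reduction from $\DGS$ to $\LWE$, but everything takes place in superposition, and the only classical step (solving $\CVP$ via $\LWE$) is replaced by a quantum procedure that produces the $\QLWE^{\sf phase}$ samples with Gaussian amplitude specified in \Cref{def:SLWEparamsRegev}. The base case is to prepare $\ket{D_{\Lattice, r}}$ for $r$ exponentially large, which follows from \Cref{lemma:Gaussianstateprep} together with the observation that for $r > 2^{2n}\sqrt{2}\lambda_n(\Lattice)$ the discrete Gaussian over $\Lattice$ is statistically close to the Gaussian over $\Z^n$ restricted to a fundamental domain scaled by a lattice basis. Then we repeatedly apply the width-shrinking step: given $3m^2n^2$ copies of $\ket{D_{\Lattice, r}}$, we produce a state that is $2^{-\Omega(n)}$-close to $\ket{D_{\Lattice, r\sqrt{n}/(\alpha q)}}$, using one call to the $\QLWE^{\sf phase}$ oracle together with the quantum analogue of~\cite[Lemma~3.14]{DBLP:journals/jacm/Regev09}. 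Iterating $\poly(n)$ times until the width reaches $r_0$, and accumulating errors by the triangle inequality for trace distance, yields the final state.

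The heart of the argument is the new quantum step that transforms $\QDGS$ states into $\QLWE^{\sf phase}$ samples with the precise amplitude $f$, phase $\theta^{(r,\ary x)}$, and distribution $D_\theta^{(r,\ary x)}$ demanded in \Cref{def:SLWEparamsRegev}. Classically, Regev uses $\ary v \la D_{\Lattice, r}$ to form the $\LWE$-like sample $(\Lattice^{-1}\ary v \bmod q,\; \ipd{\ary x}{\ary v} \bmod q)$ whose implicit secret is $(\Lattice^*)^{-1}\kappa_{\Lattice^*}(\ary x)\bmod q$. Quantumly, I would start from $\sum_{\ary v \in \Lattice} \rho_r(\ary v) \ket{\ary v}$, compute in an auxiliary register a sufficiently fine approximation of $\ipd{\ary x}{\ary v}\bmod q$, and then extract $\ary a = \Lattice^{-1} \ary v \bmod q$ without collapsing $\ary v$ by applying a $\QFT$ over $(q\Lattice)^* / R$ on the $\ary v$ register and measuring to obtain a dual-lattice point $\ary y \in \Z_R^n \cap R\cdot B_{(q\Lattice)^*}$. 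Applying the Poisson summation formula to the resulting amplitude shows that the residual state over the inner-product register is exactly the Gaussian $f(e)$ centered at $\ipd{\ary a}{\ary s}$, multiplied by a phase $\exp(2\pi\mathrm{i}\, e\, \theta^{(r,\ary x)}(\ary y))$ with $\theta^{(r,\ary x)}(\ary y) = r^2\ipd{\ary x'}{\ary z(\ary y)}/(2\alpha^2 q^2)$, and that the marginal distribution of $\ary y$ is precisely $D_\theta^{(r,\ary x)}$. Feeding $m = 2^{o(n)}$ such samples to the hypothetical $\QLWE^{\sf phase}$ oracle recovers $\ary s$, which is exactly the encoding of $\kappa_{\Lattice^*}(\ary x)$, i.e.\ a $\CVP$ answer on $\Lattice^*$ within distance $\alpha q/r$.

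The main obstacle, and the reason the phase in $\QLWE^{\sf phase}$ is stated to be \emph{not} efficiently computable, is that computing $\theta^{(r,\ary x)}(\ary y)$ exactly would require $\kappa_{\Lattice^*}(\ary x)$ — precisely the $\CVP$ information we are trying to extract. We sidestep this because (a)~the $\QLWE^{\sf phase}$ oracle is defined on the abstract state, not its description, so the reducer never needs to write the phase down, and (b)~the parameter $r$ fed to the oracle is under our control from the iteration, so the distribution $D_\theta^{(r,\ary x)}$ is the one the oracle expects. A secondary difficulty is that the $\ary a$ we obtain lives in $q\Lattice^*/R$ rather than uniformly in $\Z_q^n$; this is handled as in Regev by the smoothing condition $r > 4\sqrt{n}\eta_\epsilon(\Lattice)/\alpha$ (propagated through each iteration), ensuring $\ary a \bmod q$ is within statistical distance $2^{-\Omega(n)}$ of uniform.

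Once $\CVP$ on $\Lattice^*$ with distance $\alpha q/r$ is solved with high probability for any input $\ary x$ on the support of the lattice-plus-error distribution, the second half of each iteration follows Regev's quantum argument verbatim: prepare $\sum_{\ary y \in \Lattice^*/R} \rho_{\alpha q/r}(\ary y - \kappa_{\Lattice^*}(\ary y))\ket{\ary y - \kappa_{\Lattice^*}(\ary y)}$ using the $\CVP$ oracle coherently, apply a lattice $\QFT$, and obtain $\ket{D_{\Lattice, r\sqrt{n}/(\alpha q)}}$ up to negligible error by Poisson summation. Tracking errors: the gentle measurement lemma (\Cref{lemma:gentle_measurement}) bounds the disturbance of the intermediate measurements; each use of an oracle/sample contributes $2^{-\Omega(n)}$ trace distance; summing over $\poly(n)$ iterations and $m = 2^{o(n)}$ samples per iteration keeps the total error $2^{-\Omega(n)}$. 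The runtime $O((m^4 + m^3 T)\poly(n))$ arises from $\poly(n)$ iterations, each invoking the oracle $O(m)$ times and doing $\poly(n,m)$ state preparations and $\QFT$s.
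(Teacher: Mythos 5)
Your high-level roadmap matches the paper: quantize Regev's iterative $\DGS \to \LWE$ reduction, replace the classical ``use $\LWE$ to answer $\CVP$'' step by a quantum step producing $\QLWE^{\sf phase}$ samples via a Fourier-basis measurement of the $\ary v$ register, and observe that the Poisson summation formula converts the QFT-measurement outcome $\ary y$ into the unknown phase $\theta^{(r,\ary x)}(\ary y)$. The gentle-measurement argument for reusing the Gaussian states and the overall iteration count are also in line with the paper. But three substantive ingredients are missing, and without them the step-by-step calculation will not close.

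\textbf{First, the smoothing register.} You start directly from $\sum_{\ary v}\rho_r(\ary v)\ket{\ary v}$ and write $\ipd{\ary x}{\ary v}\bmod q$ into an auxiliary register. In the paper (\Cref{lemma:generate_samples}) the auxiliary register is initialized as an \emph{independent} Gaussian $\sum_{e\in\Z_{qR}/R}\rho_{\sigma}(e)\ket{e}$ \emph{before} the inner product is added. This convolution is what makes the error amplitude come out as a clean one-dimensional Gaussian $\rho_t(u)$ with $t=\sqrt{\sigma^2+r^2\|\ary x'\|^2}$ after Poisson summation. Without it, the amplitude of the inner-product register is a lattice-theta function in $u$ rather than a Gaussian, and the claim that ``the residual state is exactly the Gaussian $f(e)$'' simply fails. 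This is the quantum analogue of the error-randomization Regev does classically by adding $D_{\Z,\sigma}$ noise.

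\textbf{Second, the unknown Gaussian width.} Even with the smoothing register in place, $t=\sqrt{\sigma^2+r^2\|\ary x'\|^2}$ depends on $\|\ary x'\|=\dist(\ary x,\Lattice^*)$, which is unknown. The $\QLWE^{\sf phase}$ oracle is only assumed to work for a \emph{fixed} amplitude $f(e)=\rho_{\sqrt2\alpha q}(e)$, so one must calibrate $\sigma$ to make $t$ match $\sqrt2\alpha q$. The paper handles this (\Cref{lemma:SLWE2CVPmodq}) by enumerating $O(m)$ equally spaced values $\sigma_0,\dots,\sigma_{2m-1}$, running the oracle on each, and verifying the output secret against a few classical $\LWE$ samples; the analysis uses \Cref{lemma:dist_between_gaussian_state} to show that the correct guess is close enough in trace distance. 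Your proposal states the marginal is ``precisely $D_\theta^{(r,\ary x)}$'' without addressing this mismatch.

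\textbf{Third, CVP versus CVP-mod-$q$.} The $\QLWE^{\sf phase}$ oracle only recovers $\ary s = (\Lattice^*)^{-1}\kappa_{\Lattice^*}(\ary x)\bmod q$, i.e., the closest vector \emph{modulo $q\Lattice^*$}, not $\kappa_{\Lattice^*}(\ary x)$ itself. Regev's~\cite[Lemma~3.5]{DBLP:journals/jacm/Regev09} (adapted in the paper as \Cref{lemma:cvp_mod_q_suffices}) is needed to lift this to a full $\CVP$ answer via $n$ iterations with Babai rounding. You hand $\kappa_{\Lattice^*}(\ary x)$ directly to Step~2 without this lift. Finally, a minor point: in the paper $\ary a$ is measured \emph{directly} from $\Lattice^{-1}\ary v\bmod q$ in an early step, while the QFT measurement yields the separate auxiliary $\ary y$; your description conflates the two, and the statement that ``$\ary a$ lives in $q\Lattice^*/R$'' does not match the setup ($\ary a\in\Z_q^n$, and its near-uniformity is a standard smoothing argument independent of $\ary y$).
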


Then the $\QDGS$ problem is easily reduced to either $\GAP\SVP$ or $\SIVP$. The connection to $\GAP\SVP$ and $\SIVP$ is a Corollary of  Theorem~\ref{thm:Regev_MainThm} and Lemmas~\ref{lemma:smoothingMR07}, \ref{lemma:smoothingRegev09}, \ref{lemma:GapSVP2DGS}, and~\ref{lemma:SIVP2DGS}.

\begin{corollary}
    Under the same assumption used in \Cref{thm:Regev_MainThm}, there exists quantum algorithms for solving $\GAP\SVP_\gamma$ and $\SIVP_\gamma$ for $\gamma\in\tilde{O}(n/\alpha)$ in time complexity $\poly(n, m, T)$.
\end{corollary}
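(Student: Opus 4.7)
The plan is to reduce $\GAP\SVP_\gamma$ and $\SIVP_\gamma$ all the way down to $\QLWE^{\sf phase}$ by chaining three standard reductions. First, apply \Cref{lemma:GapSVP2DGS} and \Cref{lemma:SIVP2DGS} to reduce $\GAP\SVP_\gamma$ on $\Lattice$ and $\SIVP_\gamma$ on $\Lattice$ to the classical discrete Gaussian sampling problem $\DGS_s$ on $\Lattice^*$ and on $\Lattice$ respectively, at an appropriate width $s$. Next, observe that $\DGS_{s/\sqrt{2}}$ reduces trivially to $\QDGS_{s}$: measuring the Gaussian state $\sum_{\ary{v}\in\Lattice}\rho_s(\ary{v})\ket{\ary v}$ in the computational basis yields a sample from $D_{\Lattice,s/\sqrt{2}}$. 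Finally, by \Cref{thm:Regev_MainThm}, $\QDGS_{r_0}$ is solvable in time $\poly(n,m,T)$ for any width $r_0 > 4\sqrt{n}\eta_\epsilon(\Lattice)/\alpha$, using the assumed $\QLWE^{\sf phase}$ oracle. Composing the three stages, the total runtime remains $\poly(n,m,T)$.

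For $\SIVP_{2\sqrt{n}\gamma'}$, \Cref{lemma:SIVP2DGS} requires a $\DGS_{\gamma'\lambda_n(\Lattice)}$ oracle; setting $r_0 = \sqrt{2}\,\gamma'\lambda_n(\Lattice)$ so that measuring $\ket{D_{\Lattice,r_0}}$ produces $\DGS$ at the required width, and combining the constraint $r_0 > 4\sqrt{n}\eta_\epsilon(\Lattice)/\alpha$ of \Cref{thm:Regev_MainThm} with the smoothing upper bound $\eta_\epsilon(\Lattice) \le \lambda_n(\Lattice)\sqrt{\ln(2n(1+1/\epsilon))/\pi}$ from \Cref{lemma:smoothingMR07}, it suffices to take $\gamma' = \tilde{\Theta}(\sqrt{n}/\alpha)$. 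This delivers an $\SIVP$ approximation factor of $2\sqrt{n}\gamma' = \tilde{O}(n/\alpha)$, as claimed. For $\GAP\SVP$, one follows exactly the same template of parameter choices used in Regev's original quantum reduction from worst-case lattice problems to LWE, simply replacing his classical LWE oracle by the $\QDGS$ algorithm provided by \Cref{thm:Regev_MainThm}; this yields a quantum algorithm for $\GAP\SVP_{\tilde{O}(n/\alpha)}$.

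Most of the corollary's content is already packaged inside \Cref{thm:Regev_MainThm}, so there is no real conceptual obstacle, only parameter bookkeeping. The one mild subtlety is that the $\DGS$ width in \Cref{lemma:GapSVP2DGS} depends on $\lambda_1(\Lattice^*)$, which the algorithm does not know a priori. This is handled exactly as in Regev's original proof: one either rescales the input lattice, or runs the chain for each of polynomially many candidate values of $\lambda_1(\Lattice^*)$ consistent with the $\GAP\SVP$ promise, using the reduction's decision output to verify correctness. This guessing adds only a polynomial factor, so the total runtime remains $\poly(n,m,T)$ and both $\GAP\SVP_\gamma$ and $\SIVP_\gamma$ are solved for $\gamma \in \tilde{O}(n/\alpha)$.
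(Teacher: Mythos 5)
Your proposal follows exactly the chain the paper has in mind — $\GAP\SVP/\SIVP \to \DGS$ via Lemmas~\ref{lemma:GapSVP2DGS} and~\ref{lemma:SIVP2DGS}, $\DGS_{r_0/\sqrt2}$ from $\QDGS_{r_0}$ by computational-basis measurement, and $\QDGS$ from $\QLWE^{\sf phase}$ via Theorem~\ref{thm:Regev_MainThm} — which is precisely the set of references the paper itself gives in lieu of a proof, so the route is the same and your write-up is simply a more explicit version of it.

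One small caveat in the parameter bookkeeping: Theorem~\ref{thm:Regev_MainThm} fixes $\epsilon < 2^{-n}$, which makes the factor $\sqrt{\ln(2n(1+1/\epsilon))/\pi}$ in Lemma~\ref{lemma:smoothingMR07} of order $\sqrt{n}$ rather than polylogarithmic. Consequently the $\gamma'$ you write as $\tilde{\Theta}(\sqrt{n}/\alpha)$ is really $\tilde{\Theta}(n/\alpha)$, and the $\SIVP$ approximation factor $2\sqrt{n}\gamma'$ comes out as $\tilde{O}(n^{1.5}/\alpha)$ rather than $\tilde{O}(n/\alpha)$. This is consistent with the $\tilde{O}(n^{1.5})$ quoted in the overview at the start of \Cref{sec:regev} once $\alpha$ is taken near its maximum $1/(5\sqrt n)$; the same looseness is present in the paper's own statement of the corollary, so you have faithfully reproduced the intended argument rather than introduced a new gap.
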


\begin{remark}\label{remark:theta_upper_bound_regev}
    Readers may think the assumption of~\Cref{thm:Regev_MainThm} looks too strong because the family of phase functions $\{\theta^{(r,\ary{x})}\}$ is a very large family. However, we know the absolute value of $\theta^{(r,\ary{x})}(\ary{y})$ is small with high probability, when $(r,\ary{x})$ follows the setting in \Cref{thm:Regev_MainThm} and $\ary{y}$ is sampled from the corresponding distribution $D_\theta^{(r,\ary{x})}$. This is because when $\ary{y}\leftarrow D_\theta^{(r,\ary{x})}$, $\ary{z}(\ary{y}) = \ary{y}/R - \kappa_{(q\Lattice)^*}(\ary{y}/R)$ follows the distribution $\rho_{\sqrt{\mat \Sigma/2}}(\ary{z}(\ary{y}))$ over support $\Z_R^n/R \cap B_{(q\Lattice)^*}$ and thus has $\ell_2$ norm at most $\frac{\sqrt{n}\alpha q}{r\sqrt{2\alpha^2 q^2 - r^2\|\ary{x'}\|^2}}$ with $1 - 2^{-\Omega(n)}$ probability. Then $\left|\theta^{(r,\ary{x})}(\ary{y})\right| \le \frac{\sqrt{n}}{2\alpha q}$ with $1 - 2^{-\Omega(n)}$ probability. As our algorithm in~\Cref{sec:kuper} can solve the problem in sub-exponential time if $\theta^{(r,\ary{x})}(\ary{y})$ is always zero, there might be a way to solve the problem when $\theta^{(r,\ary{x})}(\ary{y})$ is close to zero.
\end{remark}

In what follows, we will display the idea of our proof for \Cref{thm:Regev_MainThm}, which will be an expansion of the proof idea described earlier with \Cref{fig:reduc_roadmap}. As is described before, our proof is iterative. We start from generating discrete Gaussian states with exponentially large widths. Then, equipped with an $\QLWE^{\sf phase}$ solver, we iteratively generate discrete Gaussian states with smaller widths in each step. Repeating the iterative step for polynomial number of times gives the discrete Gaussian states we desired. Formally, our initialization and iterative steps are:  

\begin{theorem}[The initialization step, {\cite[Lemma 3.12]{DBLP:journals/jacm/Regev09}}]\label{thm:initialization}
There exists an efficient quantum algorithm that given any $n$-dimensional integer lattice $\Lattice$ and width $r>2^{2n}\sqrt{2}\lambda_n(\Lattice)$, output a state that is $2^{-\Omega(n)}$-close to the state $\ket{D_{\Lattice,r} }= \sum_{\ary{v}\in \Lattice}\rho_r(\ary{v})$. 
\end{theorem}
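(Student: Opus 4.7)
The plan is to use an LLL-reduced basis together with a coherent version of Babai's nearest-plane algorithm, exploiting that $r$ is exponentially larger than the basis quality so that the Gaussian amplitude is essentially constant on each Babai cell, and hence the ``offset within the cell'' factors out as a state-independent superposition that can be uncomputed.

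First I would compute an LLL-reduced basis $B = [\ary{b}_1,\ldots,\ary{b}_n]$ of $\Lattice$, which satisfies $\|\ary{b}_i\|\le 2^n\lambda_n(\Lattice)$ for all $i$. Using \Cref{lemma:Gaussianstateprep} coordinate-wise, I would then prepare the discretized continuous Gaussian $\ket{\psi_0}\approx \sum_{\ary{x}\in (\Z/N)^n} \rho_r(\ary{x})\ket{\ary{x}}$ on a fine grid of spacing $1/N$ with $N\in 2^{\Theta(n)}\lambda_n(\Lattice)$, truncated to a ball of radius $O(r\sqrt{n})$ (the omitted tail has mass $2^{-\Omega(n)}$ by \Cref{lemma:Bana93}). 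Next I would run Babai's nearest-plane algorithm coherently with respect to $B$, recording the output $\ary{v} := \kappa_\Lattice(\ary{x})$ in a second register, and then subtract the second register from the first to obtain a state that is $\ell_2$-close to
\[
\sum_{\ary{v}\in\Lattice}\rho_r(\ary{v})\ket{\ary{v}}\otimes \sum_{\ary{y}\in P}\ket{\ary{y}},
\]
where $P := \{B\ary{\alpha} : \ary{\alpha}\in [-1/2,1/2)^n\}\cap (\Z^n/N)$ is the discretized Babai fundamental parallelepiped (which, crucially, is the \emph{same} set for every $\ary{v}\in\Lattice$ because $\Lattice\subseteq\Z^n$ is closed under translation within $\Z^n/N$).

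The factorization above is valid because whenever $\ary{y}\in P$ and $\ary{v}$ lies within the effective support of the Gaussian (so $\|\ary{v}\|\le O(r\sqrt{n})$ up to the negligible tail), the ratio $\rho_r(\ary{v}+\ary{y})/\rho_r(\ary{v}) = \exp(-\pi(2\ipd{\ary{v}}{\ary{y}}+\|\ary{y}\|^2)/r^2)$ equals $1\pm 2^{-\Omega(n)}$: indeed $\|\ary{y}\|\le 2^n\sqrt{n}\lambda_n(\Lattice)$ by LLL, so $\|\ary{v}\|\cdot\|\ary{y}\|/r^2\le O(n\cdot 2^n\lambda_n(\Lattice)/r)$, which the hypothesis $r>2^{2n}\sqrt{2}\lambda_n(\Lattice)$ makes exponentially small. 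To uncompute the second register, I would apply the coherent change of variables $\ary{y}\mapsto B^{-1}\ary{y}$, which bijects $P$ with an explicit, $\ary{v}$-independent discrete subset of $[-1/2,1/2)^n$ that is a tensor product of one-dimensional sets and can therefore be prepared and inverted coordinate-wise.

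The main obstacle is carrying out the uncomputation cleanly: one must verify that in $B$-coordinates the residue $B^{-1}\ary{y}$ ranges over a $\ary{v}$-independent set, and then show that the compounded discretization, truncation, and factorization errors combine (via \Cref{lemma:difftotd} for converting $\ell_2$-closeness of unnormalized vectors into trace distance of normalized states) into a final $2^{-\Omega(n)}$ trace-distance bound. A secondary subtlety is that Babai's rounding is ambiguous for $\ary{x}$ exactly on cell boundaries, but since Gaussian mass is exponentially concentrated away from such boundaries on the scale $\|\ary{b}_i\|/r = 2^{-\Omega(n)}$, this contributes only $2^{-\Omega(n)}$ additional error.
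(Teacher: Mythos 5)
Your high-level approach -- LLL-reduce, prepare a wide Gaussian superposition, round coherently to $\Lattice$, and observe that the residue factors out because $\rho_r$ is essentially flat on the diameter of a reduced cell -- is the natural quantization of Regev's classical Lemma~3.12 and is almost certainly what the paper has in mind (the paper cites Regev and gives no proof). The factorization estimate $\rho_r(\ary v+\ary y)/\rho_r(\ary v)=1\pm 2^{-\Omega(n)}$ is set up correctly, and the observation that $\Lattice\subseteq\Z^n$ makes the residue set $\ary v$-independent is exactly the right point to make.

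However, the uncomputation step as written is wrong, and it is also unnecessary. Babai's nearest-plane residue $\ary y$ satisfies $\tilde{\mat B}^{-1}\ary y\in[-1/2,1/2)^n$ in the \emph{Gram--Schmidt} coordinates, not the basis coordinates: writing $\mat B=\tilde{\mat B}\mat M$ with $\mat M$ unit upper-triangular, one has $\mat B^{-1}\ary y=\mat M^{-1}\tilde{\mat B}^{-1}\ary y\in \mat M^{-1}[-1/2,1/2)^n$, a skewed parallelepiped, so the claim that $\ary y\mapsto \mat B^{-1}\ary y$ lands the residues in $[-1/2,1/2)^n$ is false. Moreover, even using $\tilde{\mat B}^{-1}$, the discretized set $\tilde{\mat B}^{-1}\bigl(P\cap(\Z/N)^n\bigr)$ is not a tensor product of one-dimensional sets, because $\tilde{\mat B}^{-1}$ does not preserve the coordinate structure of the grid; so the ``prepare and invert coordinate-wise'' plan does not go through for either change of basis. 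The fix is simply to delete this step: once the state is $\ell_2$-close to the tensor product $\ket{D_{\Lattice,r}}\otimes\sum_{\ary y\in P}\ket{\ary y}$, the residue register is (up to $2^{-\Omega(n)}$) unentangled from the first register, so you may discard it; tracing out a tensor factor does not change the trace distance of the remaining register to $\ket{D_{\Lattice,r}}$. Two smaller points: with half-open rounding intervals Babai's output is a deterministic function, so there is no boundary ambiguity to worry about; and because $\Lattice\subseteq\Z^n$ you can work over $\Z^n$ directly (the set $P\cap\Z^n$ is already $\ary v$-independent) -- the $1/N$-grid is not needed.
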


\begin{theorem}[The iterative step]\label{thm:iterative}
Let $\Lattice$ be an $n$-dimensional integer lattice, $q > 2$ be an integer. Define the parameters $\epsilon \in (0, 2^{-n})$, $\alpha \in (0, \frac{1}{5\sqrt{n}})$, $r > 4 q\eta_{\epsilon}(\Lattice)$, and a precision parameter $R > \max\{2\sqrt{n}r\sqrt{\log r}, \frac{2\sqrt{n}}{\alpha q}, \frac{2^{2n + 1}nr\lambda_n(\Lattice^*)}{\alpha q}, 2^{3n}\lambda_n(\Lattice^*)\}$ as an integer. 

Assume that there exists a quantum algorithm that solves $\QLWE^{\sf phase}_{n,m,q,f, \theta^{(r,\ary{x})}, D_{\theta}^{(r,\ary{x})}}$ for any $\ary x\in \Lattice^*/R$ with ${\rm dist}(\ary x, \Lattice^*) < \alpha q/r$ in time complexity $T$. Then there exists a quantum algorithm that, given $3m^2n^2$ discrete Gaussian states $\ket{D_{\Lattice, r}} = \sum_{\ary{v} \in \Lattice}\rho_{r}(\ary{v})\ket{\ary{v}}$, produces $3m^2n^2$ discrete Gaussian states that are $2^{-\Omega(n)}$-close to $\ket{D_{\Lattice, r\sqrt{n}/\alpha q}}$, in time complexity $O((m^4 + m^3 T)\poly(n))$.
\end{theorem}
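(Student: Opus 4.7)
The plan is to quantize Regev's iterative step by replacing the classical $\LWE$-based $\CVP$ solver with a quantum $\QLWE^{\sf phase}$-based $\CVP$ solver. Concretely, to invoke Regev's quantum step \cite[Lemma 3.14]{DBLP:journals/jacm/Regev09} (which turns a $\CVP_{\Lattice^*, \alpha q/r}$ oracle into the quantum generation of $\ket{D_{\Lattice, r\sqrt{n}/(\alpha q)}}$), we only need to simulate a $\CVP_{\Lattice^*, \alpha q/r}$ oracle using the assumed $\QLWE^{\sf phase}$ solver. The outer iterative pipeline, the accounting of $3m^2n^2$ input states against $3m^2n^2$ output states, the error-amplification via repetition and verification, and the final invocation of Lemma 3.14 all follow Regev unchanged. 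The only new content is the reduction from $\CVP_{\Lattice^*, \alpha q/r}$ to the assumed $\QLWE^{\sf phase}$ oracle together with $O(m)$ copies of $\ket{D_{\Lattice, r}}$.

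For this reduction, on a $\CVP$ query point $\ary{x}$ with $\dist(\ary{x}, \Lattice^*) \leq \alpha q/r$, the goal is to produce $m$ samples of the form required by \Cref{def:SLWEparamsRegev}, whose secret encodes $\kappa_{\Lattice^*}(\ary{x})$. Given one state $\sum_{\ary{v}\in \Lattice}\rho_r(\ary{v})\ket{\ary{v}}$ (truncated to $\ary{v}\in \Lattice \cap B^n(r\sqrt{n})$ at $2^{-\Omega(n)}$ cost via Banaszczyk's tail bound), I first coherently compute $\ary{a} := \Lattice^{-1}\ary{v} \bmod q$ and measure it; this produces a uniformly random $\ary{a}\in\Z_q^n$ and leaves behind a Gaussian over the coset $q\Lattice + \Lattice \ary{a}$. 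Next I discretize $\ipd{\ary{x}}{\ary{v}}$ to precision $1/R$ in a fresh register, reaching (approximately) a tensor state where the amplitude depends on $\ary{v}$ and the register holds $\lfloor R \ipd{\ary{x}}{\ary{v}}\rceil$. Finally, I apply $\QFT_{qR}$ to the $\ary{v}$ register in the lattice basis and measure, obtaining a classical outcome $\ary{y}\in \Z_R^n \cap R\cdot B_{(q\Lattice)^*}$ and leaving the third register in the target $\QLWE^{\sf phase}$ state.

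The main obstacle, and the technical heart of the proof, is verifying via the Poisson summation formula that after this Fourier measurement the residual state on the third register has amplitude exactly (up to $2^{-\Omega(n)}$) $f(e)\cdot \exp(2\pi {\rm i}\, e\, \theta^{(r,\ary{x})}(\ary{y}))$ for $e \in \Z_{qR}/R$, centered at $\ipd{\ary{a}}{\ary{s}}\bmod q$ with $\ary{s} = (\Lattice^*)^{-1}\kappa_{\Lattice^*}(\ary{x}) \bmod q$, and that $\ary{y}$ follows $D_\theta^{(r,\ary{x})}$. The width $\sqrt{2}\alpha q$ of $f$ arises from Fourier-dualizing $r$ against $q$ and combining with the Gaussian sampled at $\ipd{\ary{x}}{\ary{v}}$; the phase $\theta^{(r,\ary{x})}(\ary{y}) = r^2\ipd{\ary{x}'}{\ary{z}(\ary{y})}/(2\alpha^2 q^2)$ and the tilted covariance $\mat{\Sigma}$ appear by splitting $\ary{x} = \kappa_{\Lattice^*}(\ary{x}) + \ary{x}'$, noting that the $\kappa_{\Lattice^*}(\ary{x})$ part contributes an integer and hence only shifts the center of the output amplitude, while the unknown residual $\ary{x}'$ remains in both the amplitude and phase. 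The phase is unknown precisely because extracting it is equivalent to solving the $\CVP$ instance itself, matching \Cref{remark:theta_upper_bound_regev}.

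Two cleanup issues round out the proof: (i) the hypothesis $r > 4q\eta_\epsilon(\Lattice)$ together with \Cref{lemma:regevclaim3.8} ensures that the coset-dependence of $\rho_r(q\Lattice + \Lattice\ary{a})$ is $(1\pm\epsilon)$-uniform, so the measured $\ary{a}$ is $\epsilon$-close to uniform and the conditional state is clean; (ii) the lower bound on $R$ absorbs all discretization and rounding errors to within $2^{-\Omega(n)}$ trace distance, via \Cref{lemma:difftotd,lemma:gentle_measurement}. Invoking the $\QLWE^{\sf phase}_{n,m,q,f,\theta^{(r,\ary{x})}, D_\theta^{(r,\ary{x})}}$ oracle on $m$ such samples then outputs $\ary{s}\in\Z_q^n$ with probability $1 - 2^{-\Omega(n)}$, from which $\kappa_{\Lattice^*}(\ary{x}) = \Lattice^* \ary{s} \bmod q\Lattice^*$ is recovered (selecting the unique lift within distance $\alpha q/r < \lambda_1(\Lattice^*)/2$). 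Feeding this $\CVP$ oracle into \cite[Lemma 3.14]{DBLP:journals/jacm/Regev09} yields each output state in time $O((m+T)\poly(n))$, and taking a union bound over the $3m^2n^2$ outputs gives the claimed total runtime $O((m^4 + m^3 T)\poly(n))$.
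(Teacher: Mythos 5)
Your high-level architecture — simulate a $\CVP_{\Lattice^*, \alpha q/r}$ oracle via the $\QLWE^{\sf phase}$ solver plus $\ket{D_{\Lattice, r}}$ copies, then feed it into Regev's Lemma~3.14 — matches the paper's decomposition into \Cref{thm:DGStoCVP} and \Cref{thm:CVPtoDGS}. But there are two concrete gaps in the inner $\CVP\to\QLWE^{\sf phase}$ reduction. First, your state preparation omits the Gaussian smoothing register. The paper (\Cref{lemma:generate_samples}) initializes the error register to $\sum_{e\in\Z_{qR}/R}\rho_\sigma(e)\ket{e}$ with $\sigma\in[\alpha q,\sqrt 2\alpha q]$ \emph{before} adding $\ipd{\ary{x}}{\ary{v}}$, and this convolution is exactly what lets the post-Fourier amplitude collapse, via Poisson summation (\Cref{lem:qftstate_rewrite}), to $\rho_t(u)\exp(2\pi\mathrm{i}\,u\,r^2\ipd{\ary{x}'}{\ary{z}}/t^2)$. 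Storing $\lfloor R\ipd{\ary{x}}{\ary{v}}\rceil$ exactly, as you propose, leaves a deterministic entangling constraint that does not factor into a Gaussian on the error register after measuring $\ary{y}$. Second, even with the noise register, the resulting width is $t=\sqrt{\sigma^2+r^2\|\ary{x}'\|^2}$ where $\|\ary{x}'\|$ is unknown, so the sample does not match the fixed amplitude $f=\rho_{\sqrt2\alpha q}$ that the solver in \Cref{def:SLWEparamsRegev} is defined for. The paper closes this by enumerating $O(m)$ values of $\sigma$ and verifying each candidate secret against held-out classical LWE samples (\Cref{lemma:SLWE2CVPmodq}). Your accounting betrays the omission: you claim $O((m+T)\poly(n))$ per output state, but one $\CVP$ query actually costs $O((m^2+mT)\poly(n))$ because of the $\sigma$-enumeration; $(m+T)\cdot 3m^2 n^2$ gives $O((m^3+m^2T)\poly(n))$, a factor of $m$ short of the stated $O((m^4+m^3T)\poly(n))$.

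Separately, ``selecting the unique lift'' of $\ary{s}\bmod q$ to $\kappa_{\Lattice^*}(\ary{x})$ is not a free operation. Knowing $\ary{s}$ pins down $\kappa_{\Lattice^*}(\ary{x})$ only modulo $q\Lattice^*$; finding the nearest element of the coset $\Lattice^*\ary{s}+q\Lattice^*$ to $\ary{x}$ is itself a $\CVP$ instance over $q\Lattice^*$, and uniqueness of decoding does not make it efficiently computable. For Babai's nearest-plane algorithm to find it directly you would need $\alpha q/r<\lambda_1(q\Lattice^*)/2^{\Omega(n)}$, which fails for $\poly(n)$-size $q$ under the theorem's parameter range. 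The paper instead invokes the $n$-round iterative descent of \Cref{lemma:cvp_mod_q_suffices}, which shrinks the decoding radius by a factor of $q$ per round and is the source of the remaining factor of $n$ in the state budget $3m^2 n^2$. Finally, your passing mention of gentle measurement is in the right direction but underspecified: to answer the \emph{quantum} $\CVP$ query $\ket{\ary{x},\ary{y}}\to\ket{\ary{x},\ary{y}+\kappa_{\Lattice^*}(\ary{x})}$ that Lemma~3.14 actually requires, the paper runs the entire $\CVP$ solver as a unitary $U$ with measurements deferred, XORs the answer, and applies $U^\dagger$, so that the $\ket{D_{\Lattice,r}}$ copies are restored to within $2^{-\Omega(n)}$ trace distance per query and can be reused across all $3m^2n^2$ outputs.
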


The iterative step consists of two steps:
\begin{enumerate}
\item[Step 1] Given a $\CVP$ instance, we can use a collection of discrete Gaussian states $\ket{D_{\Lattice, r}}$ to construct an $\QLWE^{\sf phase}$ instance. Solving the $\QLWE^{\sf phase}$ instance will in return solve the $\CVP$ problem. More precisely, we show the following theorem:
\begin{theorem}\label{thm:DGStoCVP}
    Let $\Lattice$ be an $n$-dimensional integer lattice, define the parameters $\epsilon \in (0, 2^{-n})$, $\alpha \in (0, \frac{1}{5\sqrt{n}})$, $r > 4 q\eta_{\epsilon}(\Lattice)$, and a precision parameter $R > \max\{2\sqrt{n}r\sqrt{\log r}, \frac{2\sqrt{n}}{\alpha q}, \frac{2^{2n + 1}nr\lambda_n(\Lattice^*)}{\alpha q}\}$ as an integer. 
    
    Assume that there exists an quantum algorithm that solves $\QLWE^{\sf phase}_{n,m,q,f, \theta^{(r,\ary{x})}, D_{\theta}^{(r,\ary{x})}}$ for any $\ary x\in \Lattice^*/R$ with ${\rm dist}(\ary x, \Lattice^*) < \alpha q/r$ in time complexity $T$. Then there exists a quantum algorithm that, given $3m^2n^2$ discrete Gaussian states $\ket{D_{\Lattice, r}} = \sum_{\ary{v} \in \Lattice}\rho_{r}(\ary{v})\ket{\ary{v}}$, answers quantum query to $\CVP_{\Lattice^*, \alpha q / r}$ on the support $\Lattice^*/R$ (denoted by $\ket{\ary x, \ary y} \to \ket{\ary x, \ary y + \kappa_{\Lattice^*}(\ary x)}$ with $\ary{x} \in \Lattice^*/R$ such that $\dist(\ary{x}, \Lattice^*) \le \alpha q/r$) up to exponentially small error, with exponentially small disturbance to the states $\ket{D_{\Lattice, r}}$, and in time $O((m^2 + mT)\poly(n))$.
\end{theorem}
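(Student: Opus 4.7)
The plan is to realize Regev's classical $\DGS$-to-$\CVP$ reduction with the $\QLWE^{\sf phase}$ oracle playing the role of the classical $\LWE$ oracle, and then lift the resulting classical $\CVP$-solver to a coherent query by uncomputing every intermediate step, so that the supplied states $\ket{D_{\Lattice, r}}$ are preserved up to $2^{-\Omega(n)}$ trace distance. The heart of the construction is a single-sample conversion lemma (the content of \Cref{lemma:generate_samples}) that turns one copy of $\ket{D_{\Lattice,r}}$ and a classical $\ary{x}\in\Lattice^*/R$ with $\dist(\ary{x},\Lattice^*)\le \alpha q/r$ into one $\QLWE^{\sf phase}$ sample whose secret $\ary{s}\in\Z_q^n$ encodes $\kappa_{\Lattice^*}(\ary{x})$ in coordinates of a fixed basis of $\Lattice^*$ modulo $q$, whose amplitude is $f$, whose phase function is $\theta^{(r,\ary{x})}$, and whose auxiliary $\ary{y}$ is distributed as $D_\theta^{(r,\ary{x})}$, all matching \Cref{def:SLWEparamsRegev}.

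For that single-sample conversion I would (i) coherently compute the coset label $\boldsymbol{\alpha} = \mat{L}^{-1}\ary{v}\bmod q$, where $\mat{L}$ is a basis of $\Lattice$, together with a rounded inner product $b\approx R\,\ipd{\ary{x}}{\ary{v}}\bmod qR$ into two ancilla registers; (ii) measure $\boldsymbol{\alpha}$ to obtain a uniform classical value; (iii) apply $\QFT$ to the $\ary{v}$-register and measure, obtaining $\ary{y}\in\Z_R^n\cap R\cdot B_{(q\Lattice)^*}$. A Poisson-summation calculation, enabled by the smoothing hypothesis $r>4q\eta_\epsilon(\Lattice)$ and the exponentially large $R$, should then identify the residual state on the $b$-register with the claimed $\QLWE^{\sf phase}$ state: completing the square in the summation over $\ary{u}\in\Z^n$ after the QFT produces exactly the covariance $\mat{\Sigma}$ of $D_\theta^{(r,\ary{x})}$ and the phase $\theta^{(r,\ary{x})}(\ary{y}) = r^2\ipd{\ary{x}'}{\ary{z}(\ary{y})}/(2\alpha^2q^2)$. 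Feeding $m$ such samples to the assumed oracle recovers $\ary{s}$, and the bound $\dist(\ary{x},\Lattice^*)\le \alpha q/r \ll q\lambda_1(\Lattice^*)/2$ makes the lift from $\ary{s}\bmod q$ to $\kappa_{\Lattice^*}(\ary{x})\in \Lattice^*$ unique and classically computable.

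To answer the \emph{quantum} query $\ket{\ary{x},\ary{y}}\mapsto\ket{\ary{x},\ary{y}+\kappa_{\Lattice^*}(\ary{x})}$ coherently, I would run the entire construction in superposition over $\ary{x}$, replace the two intermediate measurements by ancilla registers via the deferred-measurement principle, add $\kappa_{\Lattice^*}(\ary{x})$ into the $\ary{y}$-register, and then reverse every unitary (including the oracle call) to disentangle the ancillas and restore the $\ket{D_{\Lattice, r}}$ resource states. The factor $3m^2n^2$ of Gaussian states, as opposed to the bare $m$ needed for a single classical solve, is what \Cref{lemma:SLWE2CVPmodq} exploits: it allows $\poly(n)$ repetitions of the oracle and a majority vote on $\ary{s}$, amplifying success to $1-2^{-\Omega(n)}$; combined with a classical check that the lifted $\kappa_{\Lattice^*}(\ary{x})$ is indeed within $\alpha q/r$ of $\ary{x}$, gentle measurement (\Cref{lemma:gentle_measurement}) ensures that the uncomputation returns the Gaussian states to their original form up to $2^{-\Omega(n)}$ trace distance.

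The main obstacle is the Poisson-summation calculation underlying the single-sample conversion: I must show that after the $\QFT$-and-measure step the amplitude on $b$ is \emph{exactly} the Gaussian $\rho_{\sqrt{2}\alpha q}$ multiplied by $\exp(2\pi i\cdot e\,\theta^{(r,\ary{x})}(\ary{y}))$, that $\ary{y}$ has the correct distribution $D_\theta^{(r,\ary{x})}$, and that rounding at precision $1/R$, cross-terms between $\ary{u}$ and $\boldsymbol{\alpha}$, and the Gaussian tail outside $B_{(q\Lattice)^*}$ all contribute only $2^{-\Omega(n)}$ error. The stated lower bounds on $R$ and the smoothing condition on $r$ are tuned precisely for this. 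A secondary concern is guaranteeing that the uncomputation is coherent despite the oracle's failure probability, which is handled by the amplification step above together with standard gentle-measurement bookkeeping.
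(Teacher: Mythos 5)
There are two substantive gaps in your plan, both concentrated in the step that bridges the single-sample conversion to an actual CVP answer.

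First, the \emph{unknown Gaussian width}. You write that after the QFT-and-measure step the residual amplitude is ``exactly the Gaussian $\rho_{\sqrt{2}\alpha q}$ multiplied by $\exp(2\pi i\cdot e\,\theta^{(r,\ary{x})}(\ary{y}))$''. That is not what the Poisson summation gives. Completing the square over $\ary{v}\in q\Lattice+\Lattice\ary{a}$ produces an error amplitude $\rho_{t}(u)$ with $t=\sqrt{\sigma^2+r^2\|\ary{x}'\|^2}$, where $\ary{x}'=\ary{x}-\kappa_{\Lattice^*}(\ary{x})$; here $\sigma$ is the width of the auxiliary Gaussian you introduced, and $\|\ary{x}'\|$ is \emph{unknown} (computing it is exactly the $\CVP$ distance you are trying to determine). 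The oracle accepts only the width $\sqrt{2}\alpha q$, so the samples you produce do not have the form the oracle expects, and no amount of repetition or majority vote fixes a systematic mismatch. The paper's solution (\Cref{lemma:SLWE2CVPmodq}) is to enumerate $\sigma$ over $2m$ evenly spaced candidates in $[\alpha q,\sqrt{2}\alpha q]$, so that for one index $j$ the resulting $t_j$ is within relative error $O(1/m)$ of $\sqrt{2}\alpha q$; then \Cref{lemma:dist_between_gaussian_state} bounds the trace distance of $m$ such samples from the ideal ones by a constant below $1$, so the oracle still succeeds with probability $\geq 1/2$, and a verification against $n$ auxiliary classical LWE samples detects the right guess. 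This enumeration is also where most of the $3m^2n^2$ Gaussian states go ($2m$ guesses $\times$ $mn$ states per guess, times $n$ outer iterations described next) --- it is not spent on amplifying success probability by majority vote, as you suggest.

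Second, recovering $\kappa_{\Lattice^*}(\ary{x})$ from $\ary{s}\bmod q$ is \emph{not} a one-shot classical lift. Knowing $\ary{s}=(\Lattice^*)^{-1}\kappa_{\Lattice^*}(\ary{x})\bmod q$ pins down $\kappa_{\Lattice^*}(\ary{x})$ only modulo $q\Lattice^*$. To locate the actual lattice vector you would need to solve $\CVP$ on $q\Lattice^*$ for $\ary{x}-\Lattice^*\ary{s}$, and for the parameter range here Babai's algorithm alone is not accurate enough. The paper handles this via \Cref{lemma:cvp_mod_q_suffices}: $n$ iterations of the $\CVP\bmod q$ oracle, each reducing the distance by a factor $q$ (up to a $2^{-n}$-additive drift controlled by the large $R$), after which Babai's nearest-plane algorithm finishes. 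Your proposal collapses this to a single step, which would be incorrect.

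The remaining ingredients in your plan --- the coherent superposition-then-QFT construction, the identification of the resulting covariance $\mat{\Sigma}$ and phase with \Cref{def:SLWEparamsRegev}, and the deferred-measurement-plus-gentle-measurement argument to uncompute and preserve the $\ket{D_{\Lattice,r}}$ resource --- do match the paper's approach (\Cref{lemma:generate_samples} and the proof of \Cref{thm:DGStoCVP} in \Cref{sec:answerquery}). The two missing pieces above are, however, the technical heart of the argument, and without them the proof does not go through.
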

\item[Step 2] (Same as the quantum step in Regev's reduction) A query to the $\CVP$ oracle can help to generate a discrete Gaussian state with a smaller width. More precisely:
\begin{theorem}[{\cite[Lemma 3.14]{DBLP:journals/jacm/Regev09}}]\label{thm:CVPtoDGS}
There exists an efficient quantum algorithm that, given any $n$-dimensional lattice $\Lattice$, a number $d<\lambda_1(\Lattice^*)/2$ and an integer $R > 2^{3n}\lambda_n(\Lattice^*)$, outputs $\ket{ D_{\Lattice, \sqrt{n}/d} } = \sum_{\ary{v} \in \Lattice}\rho_{\sqrt{n}/d}(\ary{v})\ket{\ary{v}}$, with only one quantum query on the second register of state
\[
    \sum_{\ary{x} \in \Lattice^*/R, \|\ary{x}\| \le d}\rho_{d/\sqrt{n}}(\ary{x})\ket{\ary x, \ary{x} \bmod \mathcal{P}(\Lattice^*)},
\]
to the $\CVP_{\Lattice^*, d}$ oracle, which is on the support $\Lattice^*/R$. 
\end{theorem}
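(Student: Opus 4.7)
The plan is to carry out the standard Regev step (Lemma~3.14 of \cite{DBLP:journals/jacm/Regev09}): use the single allowed $\CVP_{\Lattice^*,d}$ query to strip away the lattice-vector component of the state, obtaining a Gaussian ``cloud'' around the dual lattice $\Lattice^*$, and then apply one QFT to turn that cloud, via Poisson summation, into a discrete Gaussian on the primal lattice $\Lattice$.

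First I would invoke the CVP oracle on the two registers of the given state. Because $d < \lambda_1(\Lattice^*)/2$, each $\ary{x}$ appearing in the support can be uniquely written $\ary{x} = \ary{v} + \ary{y}$ with $\ary{v} = \kappa_{\Lattice^*}(\ary{x}) \in \Lattice^*$ and $\|\ary{y}\|$ small, so the second register holding $\ary{x} \bmod \mathcal{P}(\Lattice^*)$ really stores $\ary{y}$. The oracle $\ket{\ary{x}, \ary{y}} \mapsto \ket{\ary{x}, \ary{y} + \kappa_{\Lattice^*}(\ary{x})}$ produces $\ket{\ary{x},\ary{x}}$ on every branch, and a subsequent XOR of the first register into the second (free of further oracle calls) disentangles and discards the second register, leaving
\[
    \ket{\phi_1} \;=\; \sum_{\ary{v} \in \Lattice^*} \sum_{\ary{y}} \rho_{d/\sqrt{n}}(\ary{y}) \, \ket{\ary{v} + \ary{y}},
\]
where $\ary{y}$ ranges over the fine grid of scale $1/R$ restricted to a ball of radius $d$. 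Up to negligible truncation error (Banaszczyk's tail bound, \Cref{lemma:Bana93}) this is a Gaussian cloud of width $d/\sqrt{n}$ centered at each point of $\Lattice^*$, represented on the ambient fine grid.

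Next I would apply one QFT on the ambient finite group containing the grid. By the Poisson summation formula, the continuous Fourier transform of $\sum_{\ary{v}\in\Lattice^*} \rho_{d/\sqrt{n}}(\cdot - \ary{v})$ equals, up to a global constant, $\rho_{\sqrt{n}/d}(\ary{w}) \cdot \mathbf{1}[\ary{w}\in\Lattice]$, using $(\Lattice^*)^* = \Lattice$ and $\hat{\rho}_{d/\sqrt{n}} \propto \rho_{\sqrt{n}/d}$. The discrete QFT approximates this continuous transform in $\ell_2$ to within $2^{-\Omega(n)}$ when the grid is fine enough; the hypothesis $R > 2^{3n} \lambda_n(\Lattice^*)$ is exactly what is required, via the smoothing-parameter bound of \Cref{lemma:smoothingMR07}. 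Combining the three error contributions by the triangle inequality (together with \Cref{lemma:difftotd} to pass to trace distance) produces a state within $2^{-\Omega(n)}$ of $\ket{D_{\Lattice,\sqrt{n}/d}}$.

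The main obstacle is the joint control of the three error sources: truncation of the Gaussian tail, aliasing introduced by the discrete QFT, and potential overlap between Gaussian bumps centered at distinct $\ary{v} \in \Lattice^*$. The third is handled cleanly by $d < \lambda_1(\Lattice^*)/2$, while the first two must be balanced against the choice of $R$ and of the ambient box size; none of these is individually deep, but their interaction is precisely what the hypothesis $R > 2^{3n}\lambda_n(\Lattice^*)$ is tuned for, and is the main technical care required. The overall running time is dominated by the QFT and is $\poly(n, \log R)$, using exactly one CVP query as claimed.
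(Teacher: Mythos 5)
Your overall plan --- use the single $\CVP$ query to disentangle the two registers, then apply a QFT so that Poisson summation converts a $\Lattice^*$-periodized Gaussian into a discrete Gaussian on $\Lattice$ --- is indeed Regev's argument, but the way you instantiate the $\CVP$ query is wrong, and that is the only step carrying any weight. You decompose $\ary{x} = \ary{v} + \ary{y}$ with $\ary{v} = \kappa_{\Lattice^*}(\ary{x})$ and then apply the oracle $\ket{\ary{a},\ary{b}} \mapsto \ket{\ary{a},\ary{b}+\kappa_{\Lattice^*}(\ary{a})}$ with the \emph{first} register as its input. But $\|\ary{x}\| \le d < \lambda_1(\Lattice^*)/2$ on every branch of the support, so $\kappa_{\Lattice^*}(\ary{x}) = \ary{0}$ identically: your decomposition degenerates to $\ary{v}=\ary{0}$, $\ary{y}=\ary{x}$, and the oracle as you describe it adds zero and does nothing; it cannot "produce $\ket{\ary{x},\ary{x}}$" or disentangle anything. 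The accompanying claim that "the second register $\ary{x}\bmod\mathcal{P}(\Lattice^*)$ really stores $\ary{y}$" is also false: $\ary{y}=\ary{x}$ here, whereas $\ary{x}\bmod\mathcal{P}(\Lattice^*)$ is $\ary{x}$ shifted by a dual-lattice vector into the basis parallelepiped and generally differs from $\ary{x}$. The intermediate state $\ket{\phi_1}$ you write --- a Gaussian bump around \emph{every} $\ary{v}\in\Lattice^*$ --- is likewise not what the procedure produces and is not even a finite-dimensional object.

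The statement says the query is on the \emph{second} register, and that is the missing idea. Since $\|\ary{x}\| < \lambda_1(\Lattice^*)/2$, the map $\ary{x}\mapsto\ary{x}\bmod\mathcal{P}(\Lattice^*)$ is injective, and the dual-lattice shift it uses is recoverable: $\kappa_{\Lattice^*}(\ary{x}\bmod\mathcal{P}(\Lattice^*))$ is a \emph{nontrivial} dual-lattice vector satisfying $(\ary{x}\bmod\mathcal{P}(\Lattice^*)) - \kappa_{\Lattice^*}(\ary{x}\bmod\mathcal{P}(\Lattice^*)) = \ary{x}$. One $\CVP$ query on the second register therefore determines $\ary{x}$; adding its answer into the first register makes the registers equal, a free subtraction disentangles, and discarding leaves $\sum_{\ary{x}}\rho_{d/\sqrt{n}}(\ary{x})\ket{\ary{x}\bmod\mathcal{P}(\Lattice^*)}$, supported on the fine grid inside a single fundamental domain. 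Applying the QFT over that domain, and using $\exp(-2\pi i\ipd{\ary{x}\bmod\mathcal{P}(\Lattice^*)}{\ary{w}}) = \exp(-2\pi i\ipd{\ary{x}}{\ary{w}})$ for $\ary{w}\in\Lattice$, gives the amplitudes $\widehat{\rho_{d/\sqrt{n}}}(\ary{w}) \propto \rho_{\sqrt{n}/d}(\ary{w})$ on $\Lattice$, which is the Poisson-summation conclusion you wanted. Your handling of the three error sources can be ported to this correct intermediate state; as written, it is applied to the wrong one.
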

\end{enumerate}

\iffull
The full picture of the proof for the main reduction \Cref{thm:Regev_MainThm} is illustrated in \Cref{fig:regev_detailed}.  
\begin{figure}[htbp]
\centering
    \begin{tikzpicture}
        \def \h {1.15}
        \def \hb {1.3}
        \def \wl {3}
        \def \wm {3.5}
        \def \wr {3.8}

        \draw (0, 4*\h+\hb) rectangle (\wl, 4*\h)
            node[pos=0.5, align=center] 
            {$3m^2n^2$ states \\ of $\ket{D_{\Lattice, r}}$};
        \draw (0, 2*\h+\hb) rectangle (\wl, 2*\h)
            node[pos=0.5, align=center]
            {$3m^2n^2$ states \\ of $\ket{D_{\Lattice, r\sqrt{n}/(\alpha q)}}$};
        \draw (0, \hb) rectangle (\wl, 0)
            node[pos=0.5, align=center]
            {$3m^2n^2$ states \\ of $\ket{D_{\Lattice, rn/(\alpha q)^2}}$};
        \draw (\wl+\wm, 3*\h+\hb) rectangle (\wl+\wm+\wr, 3*\h)
            node[pos=0.5, align=center]
            {Quantum query \\ for $\CVP_{\Lattice^*, \alpha q/r}$};
        \draw (\wl+\wm, \h+\hb) rectangle (\wl+\wm+\wr, \h)
            node[pos=0.5, align=center]
            {Quantum query \\ for $\CVP_{\Lattice^*, (\alpha q)^2/(r\sqrt{n})}$};

        \draw[-{Stealth[length=2mm, width=1.5mm]}] (\wl, 4*\h+0.3*\hb) -- (\wl+\wm, 3*\h+0.7*\hb);
        \draw[-{Stealth[length=2mm, width=1.5mm]}] (\wl+\wm, 3*\h+0.3*\hb) -- (\wl, 2*\h+0.7*\hb);
        \draw[-{Stealth[length=2mm, width=1.5mm]}] (\wl, 2*\h+0.3*\hb) -- (\wl+\wm, \h+0.7*\hb);
        \draw[-{Stealth[length=2mm, width=1.5mm]}] (\wl+\wm, \h+0.3*\hb) -- (\wl, 0.7*\hb);

        \path [decorate, decoration={text along path, text align=center, text={|\footnotesize|Use {$\QLWE^{\sf phase}$,}}}] (\wl+0.15, 4*\h+0.3*\hb+0.5) -- (\wl+\wm+0.15, 3*\h+0.7*\hb+0.5);
        \path [decorate, decoration={text along path, text align=center, text={|\footnotesize|({\Cref{thm:DGStoCVP}})}}] (\wl, 4*\h+0.3*\hb+0.15) -- (\wl+\wm, 3*\h+0.7*\hb+0.15);

        \path [decorate, decoration={text along path, text align=center, text={|\footnotesize|{\cite[Lemma 3.14]{DBLP:journals/jacm/Regev09}}}}] (\wl, 2*\h+0.7*\hb+0.15) -- (\wl+\wm, 3*\h+0.3*\hb+0.15);

        \path [decorate, decoration={text along path, text align=center, text={|\footnotesize|Use {$\QLWE^{\sf phase}$,}}}] (\wl+0.15, 2*\h+0.3*\hb+0.5) -- (\wl+\wm+0.15, \h+0.7*\hb+0.5);
        \path [decorate, decoration={text along path, text align=center, text={|\footnotesize|({\Cref{thm:DGStoCVP}})}}] (\wl, 2*\h+0.3*\hb+0.15) -- (\wl+\wm, \h+0.7*\hb+0.15);

        \path [decorate, decoration={text along path, text align=center, text={|\footnotesize|{\cite[Lemma 3.14]{DBLP:journals/jacm/Regev09}}}}] (\wl, 0.7*\hb+0.15) -- (\wl+\wm, \h+0.3*\hb+0.15);

        \node at (\wl+0.5*\wm, 4*\h+\hb+0.3) {~{\Huge $\vdots$}~};
        \node at (\wl+0.5*\wm, 0.3) {~{\Huge $\vdots$}~};

       \draw[rounded corners] (\wl+\wm+0.5*\wr, 5*\h+2) rectangle (\wl+\wm+2.3*\wr, 5*\h-0.2)
            node[pos=0.5, align=left]
            {1. Guess the parameters for the\\ $\QLWE^{\sf phase}$ instance. (\Cref{lemma:SLWE2CVPmodq})\\2. Generate $\QLWE^{\sf phase}$ instance\\ with the correct guess. (\Cref{lemma:generate_samples})};

        \draw[densely dashed, ->] (\wl+\wm+0.5*\wr,5*\h+0.9) .. controls (\wl+\wm+0.5*\wr-0.2, 5*\h+0.9) and (\wl+\wm, 5*\h+\hb-0.5) .. (\wl+0.5*\wm+0.8, 4*\h+1);
        
    \end{tikzpicture}
\caption{Illustration of two iterations of the reduction algorithm.}
\label{fig:regev_detailed}
\end{figure}
\fi
\ifllncs
The full picture of the proof for the main reduction \Cref{thm:Regev_MainThm} is illustrated in \Cref{subfig:q_regev_reduc}.
\fi
The proof starts with the initial step (\Cref{thm:initialization}) then applies the iterative step (\Cref{thm:iterative}) for $\poly(n)$ times. The iterative step consists of two parts, first the construction of $\CVP$ oracle (\Cref{thm:DGStoCVP}) with the help of discrete Gaussian states from the previous iteration and the help of an $\QLWE^{\sf phase}$ solver, and then the generation of a narrower discrete Gaussian state through one query to the $\CVP$ oracle (\Cref{thm:CVPtoDGS}). Since the discrete Gaussian states generated in the previous iteration are only disturbed by an exponentially small amount upon each query of the $\CVP$ oracle, we can reuse them for $3m^2n^2$ times to construct $3m^2n^2$ narrower discrete Gaussian states for use of the next iteration.

It is left to prove \Cref{thm:DGStoCVP}, to which the remainder of this section will be devoted. To prove it, we start by generating an $\QLWE^{\sf phase}$ instance but with an unknown Gaussian width, instead of the fixed and known width $\sqrt{2}\alpha q$ in the $\QLWE^{\sf phase}$ solver, as displayed in \Cref{lemma:generate_samples} in \Cref{sec:generating_slwe_samples_regev}. We then address and resolve the issue of the unknown width in order to solve the $\sf CVP$ instance using the $\QLWE^{\sf phase}$ oracle, in the proof of \Cref{lemma:SLWE2CVPmodq} in \Cref{sec:completion_iterative}. Finally, we note that the procedure in \Cref{lemma:SLWE2CVPmodq} answers the $\CVP$ quantum query with $1-2^{-\Omega(n)}$ probability. Therefore using the idea of gentle measurement, we can answer each $\CVP$ quantum query with exponentially small disturbance to the states $\ket{D_{\Lattice, r}}$, as discussed in \Cref{sec:answerquery}.

\subsection{Generating the $\QLWE^{\sf phase}$ samples}\label{sec:generating_slwe_samples_regev}

In this subsection, we show how to create the $\QLWE^{\sf phase}$ instance for \Cref{thm:DGStoCVP} but with an unknown Gaussian width. Given a $\CVP_{\Lattice^*, \alpha q / r}$ instance, the idea is to replace the classical Gaussian samples from $D_{\Lattice,r}$ in~\cite{DBLP:journals/jacm/Regev09} (that helps to produce the $\LWE$ instance) with a superposition state of Gaussian samples $\ket{D_{\Lattice,r}} := \sum_{\ary{v}\in\Lattice}\rho_r(\ary{v})\ket{\ary{v}}$ that helps to produce the $\QLWE^{\sf phase}$ instance.

\begin{theorem}\label{lemma:generate_samples}
Let $\Lattice$ be an $n$-dimensional integer lattice, define the parameters $\epsilon \in (0, 2^{-n})$, $\alpha \in (0, \frac{1}{5\sqrt{n}})$, $\sigma \in [\alpha q, \sqrt{2}\alpha q]$, $r > 4q\eta_{\epsilon}(\Lattice)$, and a precision parameter $R > 2\sqrt{n}r \sqrt{\log r}$ as an integer. Given a $\CVP_{\Lattice^*, \alpha q / r}$ instance $\ary{x} \in \Lattice^*/R$ and a state $\ket{D_{\Lattice, r}}$ as input, there exists an efficient quantum algorithm that generates a random vector $\ary a \gets \mathcal U(\Z_q^n)$ and a state $2^{-\Omega(n)}$-close to the following state
\[
    \gamma_{t}^{\ary a} = \sum_{\substack{\ary{y} \in \Z_R^n \cap R\cdot B_{(q\Lattice)^*}}}\rho_{\sqrt{\mat\Sigma / 2}}(\ary{z}(\ary{y}))\kb{\ary{y}}{\ary{y}} \otimes \ket{\psi^{\ary{a}, \ary{y}}_{t}}\bra{\psi^{\ary{a}, \ary{y}}_{t}}
\]
where $t = \sqrt{\sigma^2+r^2\|\ary{x}'\|^2}$, $\ary{s} = (\Lattice^*)^{-1}\kappa_{\Lattice^*}(\ary{x})\bmod q$, $\ary{x'} = \ary{x} - \kappa_{\Lattice^*}(\ary{x})$, $\ary{z}(\ary{y}) = \ary y/R - \kappa_{(q\Lattice)^*}(\ary{y}/R)$, $\mat\Sigma = \frac{\mat I_n}{r^2} + \frac{\ary{x}'\ary{x}'^T}{\sigma^2}$, and the state $\ket{\psi^{\ary{a}, \ary{y}}_{t}}$ is an $\QLWE^{\sf phase}$ state
\begin{equation}\label{def:psitay}
    \ket{\psi^{\ary{a}, \ary{y}}_{t}}:=\sum_{u\in\Z_{qR}/R}\rho_{t}(u)\exp\left(2\pi {\rm i}\cdot u\frac{r^2\ipd{\ary x'}{\ary z(\ary{y})}}{t^2}\right)\ket{\ipd{\ary{s}}{\ary{a}}+u\bmod q}.
\end{equation}
\end{theorem}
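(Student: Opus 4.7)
The plan is to quantize Regev's reduction from $\DGS$ to $\LWE$ by replacing the classical Gaussian sample $\ary{v} \leftarrow D_{\Lattice,r}$ with the input superposition $\ket{D_{\Lattice,r}}$ and using a Fourier-basis measurement (instead of a computational-basis one) to disentangle $\ary{v}$ from the $\QLWE^{\sf phase}$ register. Concretely, I would first tensor $\ket{D_{\Lattice,r}}$ with an auxiliary Gaussian error state $\sum_{e \in \Z_{qR}/R}\rho_\sigma(e)\ket{e}$ (prepared via \Cref{lemma:Gaussianstateprep}, truncated to $\Z_{qR}/R$ at precision $1/R$). I would then compute $\ary{a} := \Lattice^{-1}\ary{v} \bmod q$ into a fresh register and coherently shift the error register to $e + \langle \ary{x}, \ary{v}\rangle \bmod q$; this shift is exactly representable in $\Z_{qR}/R$ since $\ary{x}\in \Lattice^*/R$ and $\ary{v}\in\Lattice\subset \Z^n$. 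Measuring the $\ary{a}$-register then yields a near-uniform $\ary{a}\in\Z_q^n$ by the smoothing condition $r > 4q\eta_\epsilon(\Lattice) \ge \eta_\epsilon(q\Lattice)$, and leaves a coherent Gaussian superposition on the coset $\Lattice\ary{a} + q\Lattice$. Using $\langle \ary{x}, \ary{v}\rangle \equiv \langle \ary{s}, \ary{a}\rangle + \langle \ary{x}', \ary{v}\rangle \pmod q$ and applying the constant shift $-\langle \ary{s}, \ary{a}\rangle$ on the $u$-register produces a joint state of the form $\sum_{\ary{v},u} \rho_r(\ary{v})\rho_\sigma(u - \langle \ary{x}', \ary{v}\rangle) \ket{\ary{v}}\ket{\langle \ary{s}, \ary{a}\rangle + u \bmod q}$.

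The crucial algebraic step is a Gaussian completion-of-squares identity
\[
\rho_r(\ary{v})\,\rho_\sigma\!\left(u - \langle \ary{x}', \ary{v}\rangle\right) \;=\; \exp\!\left(-\pi\,(\ary{v} - u r^2 \ary{x}'/t^2)^T \mat{\Sigma}\, (\ary{v} - u r^2 \ary{x}'/t^2)\right)\cdot \rho_t(u),
\]
verified directly using Sherman--Morrison to compute $\mat{\Sigma}^{-1}$. This exhibits the joint amplitude as a Gaussian in $\ary{v}$ of covariance $\mat{\Sigma}^{-1}$ centered at $u r^2\ary{x}'/t^2$, times an independent $\rho_t(u)$. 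Next I would apply a discretized QFT on the $\ary{v}$-register (over a sufficiently large ambient box, with output precision $1/R$); Poisson summation on the coset $\Lattice\ary{a} + q\Lattice$ turns this into a near-Gaussian in the dual variable $\ary{w}\in\R^n/(q\Lattice)^*$, up to a phase $\exp(-2\pi i\langle \ary{w}, u r^2\ary{x}'/t^2\rangle)$ inherited from the $u$-dependent center shift (any global $\ary{a}$-dependent phase from the coset shift is absorbed, as it does not depend on $u$). Measuring yields $\ary{y} \in \Z_R^n \cap R\cdot B_{(q\Lattice)^*}$ with probability proportional to $\rho_{\sqrt{\mat{\Sigma}/2}}(\ary{z}(\ary{y}))$ (after squaring the Gaussian amplitude), and the residual state on the LWE register is $\sum_u \rho_t(u)\exp(2\pi i u r^2\langle \ary{x}', \ary{z}(\ary{y})\rangle/t^2)\ket{\langle \ary{s}, \ary{a}\rangle + u \bmod q}$, which is $\ket{\psi_t^{\ary{a},\ary{y}}}$ up to the choice of Fourier sign convention.

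The main obstacle will be a rigorous, $2^{-\Omega(n)}$-tight error analysis. The sources of error are: (i) Banaszczyk-type tail truncations for $\ket{D_{\Lattice, r}}$, the error state $\rho_\sigma$, and the identification of $\rho_\sigma(u - \langle \ary{x}, \ary{v}\rangle) \bmod q$ with its principal representative (handled since $\sigma \le \sqrt{2}\alpha q \ll q$); (ii) the smoothing-parameter step making the $\ary{a}$-marginal $\epsilon$-close to uniform and the post-measurement $\ary{v}$-state $\epsilon$-close to the coset discrete Gaussian (requires $r > \eta_\epsilon(q\Lattice)$); (iii) the Poisson-summation tail on the dual side, where nontrivial vectors $\ary{w}' \in (q\Lattice)^* \setminus \{\ary{0}\}$ contribute only Gaussian-small amplitudes, controlled by $r\lambda_1(\Lattice^*) \gtrsim q$ (which follows from $r > 4q\eta_\epsilon(\Lattice)$ and \Cref{lemma:smoothingRegev09}); and (iv) the QFT discretization error at precision $1/R$, ensured by $R > 2\sqrt{n}r\sqrt{\log r}$ together with the dominance of $R$ over $\lambda_n(\Lattice^*)$-type quantities. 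Summing each of these contributions in $\ell_2$ and invoking \Cref{lemma:difftotd} will give the desired $2^{-\Omega(n)}$ bound on the trace distance between the generated mixed state and $\gamma_t^{\ary{a}}$.
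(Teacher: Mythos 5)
Your proposal takes essentially the same approach as the paper's proof: the same initial tensor state $\ket{D_{\Lattice,r}}\otimes\sum_e\rho_\sigma(e)\ket{e}$, the same measurement of $\ary a = \Lattice^{-1}\ary v \bmod q$ using the smoothing condition, the same reparametrization $u = e + \ipd{\ary x'}{\ary v}$, the same Sherman--Morrison/completion-of-squares identity (which the paper states inside its proof of \Cref{lem:qftstate_rewrite}, writing $\rho_r(\ary v)\rho_\sigma(u-\ipd{\ary x'}{\ary v}) = \rho_t(u)\,\rho_{\sqrt{\mat\Sigma^{-1}}}(\ary v - \tfrac{r^2 u}{t^2}\ary x')$), the same $\QFT_R$ on the $\ary v$-register followed by Poisson summation over $q\Lattice + \Lattice\ary a$, and the same identification of the $u$-dependent phase $\exp(2\pi i\, u\, r^2\ipd{\ary x'}{\ary z}/t^2)$ versus the absorbable $u$-independent coset phase, with the four error sources you list lining up with \Cref{lem:change_variable} and \Cref{lem:qftstate_rewrite}. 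One small wording slip: the algorithm does not and cannot ``apply the constant shift $-\ipd{\ary s}{\ary a}$'' (it does not know $\ary s$) and does not need to --- after the renaming the register already holds $\ipd{\ary x}{\ary v} + e \equiv \ipd{\ary s}{\ary a} + u \pmod q$, so the state you write is exactly right despite the phrasing.
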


\begin{proof}
From now on, when it is clear from the context, we use $\ary{z}$ to denote the value $\ary{z}(\ary{y})$, which actually depends on $\ary{y}$.

Here is the procedure of our quantum algorithm. For simplicity, we ignore the normalization factors when writing down superposition states. 

\begin{enumerate}
    \item\label{item:qlwe_single_instance_init} Prepare the initial state 
    $$\ket{D_{\Lattice, r}} \otimes \sum_{e\in \Z_{qR}/R}\rho_{\sigma}(e)\ket{e},$$
    which is $2^{-\Omega(n)}$-close to  $\sum_{\ary{v}\in \Lattice, \|\ary{v}\| \le \sqrt{n}r}\rho_{r}(\ary{v})\ket{\ary{v}} \otimes \sum_{e\in \Z_{qR}/R}\rho_{\sigma}(e)\ket{e}$ by Banaszczyk's Gaussian tail bound.
    \item\label{item:measure_a} 
    Measure $\ary{a}:=\Lattice^{-1}\ary{v} \bmod q$ to get an outcome $\ary{a}$ and a result state $2^{-\Omega(n)}$-close to  
    $$\sum_{\ary{v}\in q\Lattice + \Lattice \ary{a}, \|\ary{v}\| \le \sqrt{n}r}\rho_{r}(\ary{v})\ket{\ary{v}} \otimes \sum_{e\in\Z_{qR}/R}\rho_{\sigma}(e)\ket{e}$$
    According to \Cref{lemma:regevclaim3.8}, when $r/\sqrt 2>q\eta_\epsilon(\Lattice)$,  
    the distribution of $\ary{a}$ is $2^{-\Omega(n)}$-close to uniform.

    \item\label{item:add_xv_to_e} Apply a unitary to add the inner product $\ipd{\ary{x}}{\ary{v}} \bmod q$ to the last register and get \begin{equation}\label{eqn:ql+lawithphase0}
        \sum_{\ary{v}\in q\Lattice + \Lattice\ary{a}, \|\ary{v}\| \le \sqrt{n} r} \rho_{r}(\ary{v})\ket{\ary{v}} \otimes \sum_{e\in\Z_{qR}/R} \rho_{\sigma}(e)\ket{\ipd{\ary{s}}{\ary{a}} + \ipd{\ary{x'}}{\ary{v}}+e \bmod q}.
    \end{equation}
    Note that since we assumed $\ary{x} \in \Lattice^*/R$, the second register always has its value in $\Z_{qR}/R$. 

    Intuitively, since $u:= \ipd{\ary{x'}}{\ary{v}} + e \le \alpha q\sqrt{n} + \sigma \sqrt{n} < q/2$ with high probability and $R > 2\sqrt{n}r$, \Cref{eqn:ql+lawithphase0} should be $2^{-\Omega(n)}$-close to the following state displayed in~\Cref{eqn:ql+lawithphase0.5}. The proof is deferred to~\Cref{sec:proof_change_variable}.
    \begin{lemma}\label{lem:change_variable}
    Suppose that $q/2 > \alpha q\sqrt{n} + \sigma\sqrt{n}$, $R > 2\sqrt{n}r \sqrt{\log r}$ and $r > \sqrt{2} q \eta_{\epsilon}(\Lattice)$ for $\epsilon < 2^{-n}$, then the state in~\Cref{eqn:ql+lawithphase0} is $2^{-\Omega(n)}$-close to the state
    \begin{equation}\label{eqn:ql+lawithphase0.5}
    \sum_{\ary{v}\in q\Lattice + \Lattice\ary{a}} \rho_{r}(\ary{v})\ket{\ary{v} \bmod R} \otimes \sum_{u\in\Z_{qR}/R} \rho_{\sigma}(u - \ipd{\ary{x'}}{\ary{v}})\ket{\ipd{\ary{s}}{\ary{a}} + u \bmod q}.
    \end{equation}
    \end{lemma}
    \begin{remark}
        Similar to \Cref{lem:EDCP_state_close}, the condition of this lemma can be relaxed to $q/2>\alpha q\sqrt{n}+\sigma\sqrt{n}$ and $R>2\sqrt{n} r$ if we instead use the proof technique from Claim A.5 in~\cite{regev2023efficient}, with a slight modification on the lemma statement. Either way, we have the same asymptotic values for the parameters in the end.  
    \end{remark}

    By the assumption that $\sigma \le \sqrt{2}\alpha q$, we have that $\alpha q\sqrt{n} + \sigma\sqrt{n} \le (1 + \sqrt{2})\alpha q\sqrt{n} < q/2$. Therefore, according to the lemma mentioned above, we can obtain a state that is $2^{-\Omega(n)}$-close to the state displayed in~\Cref{eqn:ql+lawithphase0.5}.

    \item\label{item:qft}
    Recall that $\omega_R = e^{2 \pi {\rm i} / R}$. Applying $\QFT_R$ to the first register, we can get a state $2^{-\Omega(n)}$-close to 
    \begin{equation}\label{eqn:ql+lawithphase1}
        \sum_{\ary{y}\in\Z_R^n} \sum_{\ary{v}\in q\Lattice+\Lattice\ary{a}} \rho_{r}(\ary{v}) \cdot\omega_R^{\ipd{\ary{v}}{\ary{y}}} \ket{\ary{y}} \otimes \sum_{u\in\Z_{qR}/R} \rho_{\sigma}(u - \ipd{\ary{x'}}{\ary{v}})\ket{\ipd{\ary{s}}{\ary{a}} + u \bmod q}.
    \end{equation}

    We show that the state in \Cref{eqn:ql+lawithphase1} is $2^{-\Omega(n)}$-close to the following state displayed in \Cref{eqn:ql+lawithphase2}. The proof is deferred to~\Cref{sec:proof_qftstate_rewrite}.

    \begin{lemma}\label{lem:qftstate_rewrite}
    Suppose that $\epsilon < 2^{-n}, R > 2\sqrt{n}r\sqrt{\log r}, r > 4q \eta_\epsilon(\Lattice)$ and $\sigma \ge \alpha q$, then the state in~\Cref{eqn:ql+lawithphase1} is $2^{-\Omega(n)}$-close to the state 
    \begin{equation}\label{eqn:ql+lawithphase2}
        \begin{split}
        & \sum_{\substack{\ary{y}\in \Z_R^n \cap R\cdot B_{(q\Lattice)^*}}}\rho_{\sqrt{\mat\Sigma}}(\ary{z})\exp\left(2\pi {\rm i} \ipd{\Lattice\ary{a}}{\kappa_{(q\Lattice)^*}(\ary{y}/R)}\right)\ket{\ary{y}} \\
        & \qquad \qquad \otimes \sum_{u\in\Z_{qR}/R} \rho_{t}(u)\exp\left(2\pi {\rm i}\cdot u\frac{r^2\ipd{\ary{x'}}{\ary{z}}}{t^2}\right)\ket{\ipd{\ary{s}}{\ary{a}}+u\bmod q},
        \end{split}
    \end{equation}
    where $t, \ary{s}, \ary{z}, \mat\Sigma$ are specified in~\Cref{lemma:generate_samples}.
\end{lemma}

    By the assumption that $\sigma \in [\alpha q, \sqrt{2}\alpha q]$, we have that $\alpha q\sqrt{n} + \sigma\sqrt{n} < (1 + \sqrt{2})\alpha q\sqrt{n} < (1 + \sqrt{2})q/5 < q/2$. Therefore, according to \Cref{lem:qftstate_rewrite}, we can obtain a state that is $2^{-\Omega(n)}$-close to the state (recall the definition of $\ket{\psi^{\ary{a},\ary{y}}_{t}}$ in \Cref{def:psitay})
    \[
        \sum_{\substack{\ary{y} \in \Z_R^n \cap R\cdot B_{(q\Lattice)^*}}}\rho_{\sqrt{\mat\Sigma}}(\ary{z})\exp\left(2\pi {\rm i} \ipd{\Lattice\ary{a}}{\kappa_{(q\Lattice)^*}(\ary{y}/R)}\right)\ket{\ary{y}}\otimes\ket{\psi^{\ary{a},\ary{y}}_{t}}
    \]

    \item\label{item:measure_y} Measure the first register to get a state $2^{-\Omega(n)}$-close to the state $\gamma_t^{\ary a}$.\footnote{We measure the first register only for a better illustration of the residual state. In fact, we'll use the gentle measurement principle to defer all measurements, as described later in \Cref{sec:answerquery}. }
\end{enumerate}
Finally, combining the measurement result $\ary{a}$ from step \ref{item:measure_a} and the state $\gamma_t^{\ary a}$ from step \ref{item:measure_y} gives the desired sample.
\end{proof}

\begin{remark}
    An important caveat is that we should not discard the $\ary{y}$ register and hope we can solve the problem given only $\ary{a}$ and the $\QLWE^{\sf phase}$ state $\ket{\psi_{\sqrt{2}\alpha q}^{\ary{a}, \ary{y}}}$ whose error amplitude is Gaussian with a small phase, because such a solver is so strong that it solves $\LWE$ directly. This is because such a solver utilizes no information about $\ary{y}$ (the seed that generates $\theta$). So it should output $\ary{s}$ given samples $\ary{a}$ and the second register after step~\ref{item:add_xv_to_e} (Note that in the actual procedure the solver is given the second register after step~\ref{item:measure_y}, but step~\ref{item:qft} and step~\ref{item:measure_y} are both local operations acting on the first register, which should not influence the state of the second register).
    This leads to an algorithm that outputs $\ary{s}$ given samples $\left(\ary{a}, \ipd{\ary{s}}{\ary{a}} + e \bmod q\right)$ where $\ary{a} \leftarrow \mathcal{U}_{\Z_q^n}$ and $e = \ipd{\ary{x'}}{\ary{v}}$ for $\ary{v}$ distributed according to $D_{q\Lattice + \Lattice\ary{a}, r/\sqrt{2}}$ and a fixed vector $\|\ary{x'}\| \le \frac{\alpha q}{r}$. As the distribution of $e$ is close to a Gaussian distribution, this will give a surprising method to solve $\LWE$.

    This also explains why we describe the auxiliary information $\ary{y}$ carefully in~\Cref{def:SLWEphase} and ~\Cref{def:SLWEparamsRegev} instead of discarding $\ary{y}$ and strengthening the solver in~\Cref{thm:Regev_MainThm} to solve $\QLWE^{\sf phase}_{n,m,q,f,\theta,D}$ for any unknown $\theta$ such that $|\theta|\le \frac{\sqrt{n}}{2\alpha q}$ (see~\Cref{remark:theta_upper_bound_regev}) and an arbitrary distribution $D$.

    Given that we must use the information of $\ary{y}$, one may hope to compute $\ary{z}(\ary{y})$ to learn something about the phase $\theta^{(r, \ary{x})}(\ary{y}) = \frac{r^2\ipd{\ary{x}'}{\ary{z}(\ary{y})}}{2\alpha^2q^2}$. However, $\ary{z}(\ary{y})$ has $\ell_2$ norm roughly $\frac{\sqrt{n}}{r}$ (see~\Cref{remark:theta_upper_bound_regev}). So computing $\ary{z}(\ary{y})$ from $\ary{y}$ is a $\CVP_{\Lattice^*, \frac{q\sqrt{n}}{r}}$ instance, which is even harder than the goal of this iteration (a quantum query to $\CVP_{\Lattice^*, \alpha q/r}$). How to utilize $\ary{y}$ requires more attempts and is an important step towards our ultimate goal of solving standard $\LWE$ efficiently.
\end{remark}

\subsection{Dealing with the unknown Gaussian width}\label{sec:completion_iterative}

Now that we know how to generate $(\ary{a}, \gamma_t^{\ary{a}})$ which resembles an $\QLWE^{\sf phase}$ instance. However, the $\QLWE^{\sf phase}$ solver requires instances with an error distribution of the fixed width $\sqrt{2}\alpha q$. To bridge this gap, we experiment with different values of $\sigma$ to obtain a suitable width that is sufficiently close to $\sqrt{2}\alpha q$. Equipped with the $\QLWE^{\sf phase}$ solver, we can in turn solve the $\CVP$ problem. We realize this idea in the proof of the following theorem: 

\begin{theorem}\label{lemma:SLWE2CVPmodq}
Let $\Lattice$ be an $n$-dimensional integer lattice, define the parameters $\epsilon \in (0, 2^{-n})$, $\alpha \in (0, \frac{1}{5\sqrt{n}})$, $r > 4 q\eta_{\epsilon}(\Lattice)$, and a precision parameter $R > \max\{2\sqrt{n}r\sqrt{\log r}, \frac{2\sqrt{n}}{\alpha q}\}$ as an integer. 

Assume that there exists a quantum algorithm $\mathcal{A}$ that, given $m$ samples of independently uniformly random vector $\ary a\in \Z_q^n$ and state $\gamma^{\ary{a}}_{\sqrt{2}\alpha q}$, solves the secret vector $\ary s$ in time complexity $T$ with probability $1 - 2^{-\Omega(n)}$. Then there exists a quantum algorithm that, given a $\CVP_{\Lattice^*, \alpha q / r}$ instance $\ary{x} \in \Lattice^*/R$ and $3m^2n$ states $\ket{D_{\Lattice, r}}$ as input, outputs $\ary{s} = (\Lattice^*)^{-1}\kappa_{\Lattice^*}(\ary{x})\bmod q$ with probability $1 - 2^{-\Omega(n)}$ in time $O((m^2+mT)\poly(n))$.
\end{theorem}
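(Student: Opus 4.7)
The plan is to invoke the $\QLWE^{\sf phase}$ solver $\mathcal{A}$ on samples produced by \Cref{lemma:generate_samples}. The difficulty is that \Cref{lemma:generate_samples} creates samples $(\ary{a}, \gamma_t^{\ary{a}})$ with width $t = \sqrt{\sigma^2 + r^2\|\ary{x}'\|^2}$, where $\ary{x}' = \ary{x} - \kappa_{\Lattice^*}(\ary{x})$ is the unknown CVP offset and $\sigma$ is a parameter we choose in $[\alpha q, \sqrt{2}\alpha q]$, whereas $\mathcal{A}$ expects width exactly $\sqrt{2}\alpha q$. Matching the widths would require knowing $\|\ary{x}'\|^2$: when $\sigma^2 = 2\alpha^2 q^2 - r^2\|\ary{x}'\|^2$, the resulting $\gamma_t^{\ary{a}}$ coincides exactly with the $\QLWE^{\sf phase}$ distribution specified in \Cref{def:SLWEparamsRegev} (both the amplitude $f$ and the phase $\theta^{(r,\ary{x})}$ come out correctly, and the $\ary{y}$-marginal becomes $D_{\theta}^{(r,\ary{x})}$).

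To bypass the unknown $\|\ary{x}'\|$, I will guess it on a grid. Since $\|\ary{x}'\| \le \alpha q / r$, the target value $\sigma^2 = 2\alpha^2 q^2 - r^2\|\ary{x}'\|^2$ always lies in $[\alpha^2 q^2, 2\alpha^2 q^2]$, so I discretize this interval into $K = \poly(n, m)$ candidate values $\{\sigma_j\}_{j=1}^{K}$. For each $j$: consume $m$ copies of $\ket{D_{\Lattice, r}}$ via \Cref{lemma:generate_samples} with width $\sigma_j$ to obtain $m$ samples $(\ary{a}_i, \gamma_{t_j}^{\ary{a}_i})$; run $\mathcal{A}$ on them to obtain a candidate $\ary{s}_j \in \Z_q^n$; then verify $\ary{s}_j$ by computing the coset $\Lattice^* \ary{s}_j + q\Lattice^*$ and checking whether it contains a vector within distance $\alpha q / r$ of $\ary{x}$. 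Output any candidate that passes verification. The total number of $\ket{D_{\Lattice, r}}$ copies is $mK = O(m^2 \poly(n))$, which fits inside the $3m^2 n$ budget, and the runtime is $O(K(m + T)\poly(n)) = O((m^2 + mT)\poly(n))$.

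The main obstacle is the trace-distance bookkeeping that makes the grid work. At the grid point $j^\star$ closest to the true value, $|t_{j^\star}^2 - 2\alpha^2 q^2|$ is at most $2\alpha^2 q^2 / K$, and I need to bound the resulting per-sample trace distance between $\gamma_{t_{j^\star}}^{\ary{a}}$ and $\gamma_{\sqrt{2}\alpha q}^{\ary{a}}$. Writing $\ket{\psi_t^{\ary{a},\ary{y}}}$ as in \Cref{def:psitay} splits the error into two pieces: the Gaussian amplitude $\rho_t$ differs from $\rho_{\sqrt{2}\alpha q}$, contributing $O(|t - \sqrt{2}\alpha q|/(\alpha q)) = O(1/K)$ by \Cref{lemma:dist_between_gaussian_state}; and the phase factor $\exp(2\pi {\rm i}\cdot u\, r^2 \ipd{\ary{x}'}{\ary{z}(\ary{y})}/t^2)$ shifts through the $1/t^2$ normalization. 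Using $|u| \le q/2$, $\|\ary{x}'\| \le \alpha q/r$, and $\|\ary{z}(\ary{y})\| = \tilde O(\sqrt{n}/r)$ with overwhelming probability (from the Gaussian distribution of $\ary{y}$ described in \Cref{lemma:generate_samples}), the phase has magnitude $\tilde O(\sqrt{n})$ and its sensitivity to $1/t^2$ gives an extra $\tilde O(\sqrt{n}/K)$ trace distance per sample. Choosing $K = \Theta(m \sqrt{n}\cdot \omega(\log n))$ drives the aggregated trace distance across the $m$ samples to $o(1)$, and \Cref{lemma:gentle_measurement} converts the closeness to $\QLWE^{\sf phase}$ samples into $\mathcal{A}$ returning the correct $\ary{s}$ with probability $1 - 2^{-\Omega(n)}$ at the good guess.

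Verification is reliable because an incorrect $\ary{s}' \ne \ary{s}$ would force the coset $\Lattice^* \ary{s}' + q\Lattice^*$ to contain a point within $\alpha q/r$ of $\ary{x}$, which differs from $\kappa_{\Lattice^*}(\ary{x})$ by a nonzero element of $q \Lattice^*$ of norm at least $q \lambda_1(\Lattice^*)$. The hypothesis $r > 4 q \eta_{\epsilon}(\Lattice)$ together with \Cref{lemma:smoothingRegev09} gives $q \lambda_1(\Lattice^*) \ge \Omega(\sqrt{n})\cdot q/r \gg 2 \alpha q / r$, so the triangle inequality rules out wrong cosets and only the correct $\ary{s}$ passes verification. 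Since we repeat the attempt over the whole grid and succeed at $j^\star$ with overwhelming probability, the overall success probability of the algorithm is $1 - 2^{-\Omega(n)}$, as claimed.
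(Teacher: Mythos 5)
Your high-level plan -- discretize the unknown width $\sigma' = \sqrt{2\alpha^2 q^2 - r^2\|\ary{x}'\|^2}$ on a grid, run $\mathcal{A}$ at each candidate, and verify the output -- is indeed the paper's strategy. However, there are several genuine gaps in your execution.

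First, your trace-distance analysis is incorrect in both method and conclusion. You attempt to bound $\td(\gamma_{t_{j^\star}}^{\ary{a}}, \gamma_{\sqrt{2}\alpha q}^{\ary{a}})$ directly by inspecting the pure states $\ket{\psi_t^{\ary{a},\ary{y}}}$ from \Cref{def:psitay}, but $\gamma_t^{\ary{a}}$ is a \emph{mixed} state whose $\ary{y}$-marginal $\rho_{\sqrt{\mat\Sigma/2}}(\ary{z}(\ary{y}))$ with $\mat\Sigma = \frac{\mat I_n}{r^2} + \frac{\ary{x}'{\ary{x}'}^T}{\sigma^2}$ also depends on $\sigma$; your decomposition ignores this contribution entirely. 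Even the pure-state piece is miscounted: for typical $u$ the phase $2\pi u r^2 \ipd{\ary{x}'}{\ary{z}(\ary{y})}/t^2$ has magnitude $O(n)$, not $\tilde O(\sqrt n)$, because $|u| \lesssim \sqrt n\,\alpha q$ is the effective cutoff and $|\theta^{(r,\ary{x})}(\ary{y})| \lesssim \sqrt n /(\alpha q)$ (cf.\ \Cref{remark:theta_upper_bound_regev}), giving phase $\sim n$. The paper sidesteps all of this by noting that $\gamma_t^{\ary{a}}$ and $\gamma_{\sqrt{2}\alpha q}^{\ary{a}}$ are both produced by the same trace-preserving channel from $\ket{D_{\Lattice,r}}\otimes\sum_e \rho_\sigma(e)\ket{e}$, differing only in the width of the ancilla Gaussian register; contractivity of trace distance plus \Cref{lemma:dist_between_gaussian_state} then gives the per-sample bound $O(|\sigma_j - \sigma'|/(\alpha q))$ with only $K = 2m$ grid points. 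That observation is the crux of the argument and is missing from your proposal.

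Second, even at the best grid point the aggregated trace distance is only some $o(1)$ (the paper achieves $1/(2\sqrt 2)$), so you can only conclude that $\mathcal{A}$ succeeds with \emph{constant} probability, not $1-2^{-\Omega(n)}$. Your invocation of \Cref{lemma:gentle_measurement} is a misapplication: that lemma is about post-measurement disturbance, whereas what you want is simply contractivity, and contractivity does not boost a constant trace-distance loss to exponential. The paper addresses this by running $\mathcal{A}$ on $n$ independent groups of $m$ samples at each grid point (hence the $3m^2 n$ budget) and verifying, so that at least one group succeeds except with $2^{-\Omega(n)}$ probability.

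Third, your verification step is not implementable as described. Deciding whether the coset $\Lattice^*\ary{s}' + q\Lattice^*$ contains a vector within $\alpha q/r$ of $\ary{x}$, where the alternative distance is only $O(n)$ times larger (since $q\lambda_1(\Lattice^*) = \Theta(q\sqrt n/r)$ and $\alpha q/r = O(q/(\sqrt n r))$), is a GapCVP instance with a polynomial gap and is not known to be efficiently solvable. The paper instead generates $n$ \emph{classical} LWE samples in an initial pass (measure the second register of $\gamma_{t_0}^{\ary{a}}$) and verifies candidates via the standard LWE verification of \cite[Lemma 3.6]{DBLP:journals/jacm/Regev09}, which is efficient. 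Your soundness argument for the wrong-coset case is fine, but you never explain how to carry out the test.
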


\begin{proof}
Let $\sigma' = \sqrt{2\alpha^2 q^2 - r^2\|\ary{x}'\|^2}$ and $\sigma_i = \alpha q \left(1 + (\sqrt{2} - 1)\frac{i}{2m}\right), t_i = \sqrt{\sigma_i^2+r^2\|\ary{x}'\|^2}$ for $i = 0, 1, \cdots, 2m$. In this condition, we have $\sigma_i\in [\alpha q, \sqrt{2}\alpha q]$. Since $r\|\ary{x}'\|<\alpha q$, there must exist an index $0 \le j < 2m$ such that $\sigma_j < \sigma' \le \sigma_{j+1}$.

Then $\sigma_j$ is a suitable value of $\sigma$ to generate samples for the quantum algorithm $\mathcal{A}$. Formally,
\begin{lemma}\label{lemma:solveprob1/2}
    The quantum algorithm $\mathcal{A}$ can solve $\ary s$ with probability at least $1 / 2$, when given $m$ independent samples of vector $\ary a\in \Z_q^n$ and state $\gamma_{t_j}^{\ary a}$.
\end{lemma}

Before the proof of the lemma, let's first show how to construct a $\CVP$ algorithm based on the lemma. Here is the procedure of our quantum algorithm.
\begin{enumerate}
    \item[1.] (Generate classical LWE samples for verification of the solution) Apply \Cref{lemma:generate_samples} to $n$ states $\ket{D_{\Lattice, r}}$ with $\sigma = \sigma_0$ to obtain $n$ samples of vectors $\ary{a}\in \Z_q^n$ and states $\gamma_{t_0}^{\ary{a}}$. Then, measure the second register of $\gamma_{t_0}^{\ary{a}}$ to obtain $n$ classical LWE samples of the form $\ipd{\ary{a}}{\ary{s}} + u \bmod q$.
    \item[2.] Enumerate $\sigma$ from the set $\{\sigma_i: i\in \{0, 1, \cdots, 2m - 1\}\}$. 
    \item[3.] For each $\sigma = \sigma_i$, apply \Cref{lemma:generate_samples} to $mn$ states $\ket{D_{\Lattice, r}}$ to obtain $mn$ samples of vectors $\ary{a}\in \Z_q^n$ and states $\gamma_{t_i}^{\ary{a}}$ with a precision of $1 - 2^{-\Omega(n)}$.
    \item[4.] Utilize the quantum algorithm $\mathcal{A}$ on a group of $m$ samples of vectors $\ary{a}\in \Z_q^n$ and states $\gamma_{t_i}^{\ary{a}}$ to derive a solution $\ary{s}'$.
    \item[5.] Employ any verification process (e.g., as proposed by Regev in \cite[Lemma 3.6]{DBLP:journals/jacm/Regev09}) with the assistance of the $n$ classical LWE samples obtained in step 1 to check whether $\ary{s}' = \ary{s}$. If this condition holds, output $\ary{s}'$ and conclude the process.
\end{enumerate}

This procedure will use a maximum of $2m^2n + n<3m^2n$ samples of $\ket{D_{\Lattice, r}}$ and operates with a time complexity of $O((m^2 + mT)\poly(n))$. Moreover, it will output the correct $\ary{s}$ with a probability of at least $1 - 2^{-\Omega(n)}$ when $\sigma = \sigma_j$, by \Cref{lemma:solveprob1/2}.
\end{proof}

\Cref{lemma:solveprob1/2} follows from the fact that $\gamma_{t_j}^{\ary{a}}$ is close to $\gamma_{\sqrt{2}\alpha q}^{\ary{a}}$ as $t_j$ is close to $\sqrt{2}\alpha q$. For completeness, let's provide the formal proof here.

\begin{proof}[Proof of~\Cref{lemma:solveprob1/2}]
    We label the $m$ samples to be $\{(\ary a_i, \gamma_{t_j}^{\ary a_i})\}_{i \in [m]}$. By assumption, the quantum algorithm $\mathcal{A}$ can solve $\ary s$ with probability at least $1 - 2^{-\Omega(n)}$, when given $m$ samples $\{(\ary a_i, \gamma_{\sqrt{2} \alpha q}^{\ary a_i})\}_{i \in [m]}$. The trace distance between the states in these two different types of samples is given by
    \iffull
    \[
        \begin{split}
            \td\left(\bigotimes_{i = 1}^m\gamma_{t_j}^{\mat a_i}, \bigotimes_{i = 1}^m\gamma_{\sqrt{2}\alpha q}^{\mat a_i}\right) & \le \sum_{i = 1}^m \td\left(\gamma_{t_j}^{\ary a_i}, \gamma_{\sqrt{2}\alpha q}^{\ary a_i}\right) \\
            & \le \sum_{i = 1}^m \td\left(\ket{D_{q\Lattice + \Lattice \ary a_i, r}} \sum_{e \in \Z_{qR}/R}\rho_{\sigma_j}(e)\ket{e}, \ket{D_{q\Lattice + \Lattice \ary a_i, r}} \sum_{e \in \Z_{qR}/R}\rho_{\sigma'}(e)\ket{e}\right) + 2^{-\Omega(n)} \\
            & = \sum_{i = 1}^m \td\left(\sum_{e \in \Z_{qR}/R}\rho_{\sigma_j}(e)\ket{e}, \sum_{e \in \Z_{qR}/R}\rho_{\sigma'}(e)\ket{e}\right) + 2^{-\Omega(n)} \\
            & \le_{(*)} m\sqrt{\frac{(\sigma' - \sigma_j)^2}{\sigma_j^2 + \sigma'^2}}(1 + 2^{-\Omega(n)}) + 2^{-\Omega(n)} \\
            & \le m\cdot \sqrt{\frac{(\alpha q / (2m))^2}{2(\alpha q)^2}}(1 + 2^{-\Omega(n)}) + 2^{-\Omega(n)} \\
            & \le \frac{1}{2\sqrt{2}} + 2^{-\Omega(n)},
        \end{split}
    \]
    \fi
    \ifllncs
    \[
        \begin{split}
            &~\td\left(\bigotimes_{i = 1}^m\gamma_{t_j}^{\mat a_i}, \bigotimes_{i = 1}^m\gamma_{\sqrt{2}\alpha q}^{\mat a_i}\right) \\
            \le&~ \sum_{i = 1}^m \td\left(\ket{D_{q\Lattice + \Lattice \ary a_i, r}} \sum_{e \in \Z_{qR}/R}\rho_{\sigma_j}(e)\ket{e}, \ket{D_{q\Lattice + \Lattice \ary a_i, r}} \sum_{e \in \Z_{qR}/R}\rho_{\sigma'}(e)\ket{e}\right) + 2^{-\Omega(n)} \\
            \le&_{(*)}~ m\sqrt{\frac{(\sigma' - \sigma_j)^2}{\sigma_j^2 + \sigma'^2}}(1 + 2^{-\Omega(n)}) + 2^{-\Omega(n)} \\
            \le&~ \frac{1}{2\sqrt{2}} + 2^{-\Omega(n)},
        \end{split}
    \]
    \fi
where $(*)$ is according to \Cref{lemma:dist_between_gaussian_state} and $\sigma_j, \sigma' \in [\alpha q, \sqrt 2 \alpha q), R > \frac{2\sqrt{n}}{\alpha q}$.

Therefore, the quantum algorithm $\mathcal{A}$ will output $\ary s$ when given $\{(\ary a_i, \gamma_{t_j}^{\ary a_i})\}_{i \in [m]}$ with probability at least
\[1 - \frac{1}{2\sqrt{2}} - 2^{-\Omega(n)} > \frac12. \qedhere\]
\end{proof}

One last gap between \Cref{lemma:SLWE2CVPmodq} and the theorem we need to prove (\Cref{thm:DGStoCVP}) is that, in \Cref{lemma:SLWE2CVPmodq} the algorithm only outputs $\ary{s} = (\Lattice^*)^{-1}\kappa_{\Lattice^*}(\ary{x})\bmod q$, which answers a modulo version of $\CVP_{\Lattice^*, \alpha q/r}$ for instance $\ary{x} \in \Lattice^*/R$. However, as \cite{DBLP:journals/jacm/Regev09} shows, $\CVP$ is efficiently reducible to its modulo version, which closes this final gap. Formally, 

\begin{theorem}[{\cite[a slight modification of Lemma 3.5]{DBLP:journals/jacm/Regev09}}]\label{lemma:cvp_mod_q_suffices}
Let $\Lattice$ be an $n$-dimensional integer lattice, define distance parameter $d \in (0, \lambda_1(\Lattice)/2)$, and integers $q > 2, R > \frac{2^{2n + 1}n\lambda_n(\Lattice)}{d}$. Assume there exists an algorithm $\mathcal{A}$ that, on input $\ary{x} \in \Lattice/R$ with the guarantee $\dist(\ary{x}, \Lattice) \le d$, outputs $\ary{s} = \Lattice^{-1}\kappa_{\Lattice}(\ary{x})\bmod q$ with probability $1 - 2^{-\Omega(n)}$, then there exists a polynomial-time algorithm that, on input $\ary{x} \in \Lattice/R$ with guarantee $\dist(\ary{x}, \Lattice) \le d$, outputs $\kappa_{\Lattice}(\ary{x})$ with probability $1 - 2^{-\Omega(n)}$ using at most $n$ calls to $\mathcal{A}$.
\end{theorem}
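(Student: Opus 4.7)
The plan is to iteratively invoke the modular oracle $\mathcal{A}$ to peel off the ``low-order $q$-adic digits'' of $\ary{s} = \Lattice^{-1}\kappa_\Lattice(\ary{x})$, reducing the residual CVP distance by a factor of $q$ at each step. After enough iterations the residual distance becomes so small compared to the precision $\lambda_n(\Lattice)/R$ that $\kappa_\Lattice$ can be found directly by Babai-style rounding, and the full answer is reassembled from the digits.

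Concretely, first feed $\ary{x}$ to $\mathcal{A}$ to recover $\ary{s}_0 := \ary{s} \bmod q$. Writing $\ary{s} = \ary{s}_0 + q\ary{s}^{(1)}$ with $\ary{s}^{(1)} \in \Z^n$, we have $\kappa_\Lattice(\ary{x}) = \Lattice\ary{s}_0 + q\Lattice\ary{s}^{(1)}$, so the rescaled vector $\ary{x}^{(1)} := (\ary{x} - \Lattice\ary{s}_0)/q$ satisfies $\dist(\ary{x}^{(1)},\Lattice) \le d/q$ and $\kappa_\Lattice(\ary{x}^{(1)}) = \Lattice\ary{s}^{(1)}$. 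Since division by $q$ may take $\ary{x}^{(1)}$ out of $\Lattice/R$, I would round each coordinate of $\Lattice^{-1}\ary{x}^{(1)}$ to the nearest multiple of $1/R$; the resulting perturbation has Euclidean norm at most $\sqrt{n}\,\lambda_n(\Lattice)/R$, which is safely below $d/q$ thanks to $R > 2^{2n+1}n\lambda_n(\Lattice)/d$.

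Iterating this construction $n$ times yields a chain $\ary{x} = \ary{x}^{(0)}, \ary{x}^{(1)}, \dots, \ary{x}^{(n)}$ with $\dist(\ary{x}^{(i)},\Lattice) \le d/q^i + i\sqrt{n}\lambda_n(\Lattice)/R \le d$, so the oracle precondition is preserved at every step and $\mathcal{A}$ returns a valid $\ary{s}_i$ for $i=0,\ldots,n-1$. After the $n$-th step, the cumulative bound on $\dist(\ary{x}^{(n)},\Lattice)$ is at most $d/q^n + n\sqrt{n}\lambda_n(\Lattice)/R$, which by the choice of $R$ (and $q \ge 3$) is much smaller than $\lambda_1(\Lattice)/(2n)$; this allows $\kappa_\Lattice(\ary{x}^{(n)})$ to be computed without any further oracle calls, e.g.\ by rounding $\Lattice^{-1}\ary{x}^{(n)}$ to the nearest integer vector. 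Finally, unwinding the recursion via
\[
\ary{s} \;=\; \sum_{i=0}^{n-1} q^i\,\ary{s}_i \;+\; q^n\,\Lattice^{-1}\kappa_\Lattice\bigl(\ary{x}^{(n)}\bigr)
\]
recovers the exact $\kappa_\Lattice(\ary{x}) = \Lattice\ary{s}$.

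The main obstacle will be the bookkeeping for the rounding errors: one has to check that the perturbation introduced when projecting $\ary{x}^{(i)}$ back into $\Lattice/R$ does not drift $\ary{x}^{(i)}$ into a different coset of $\Lattice$ nor violate the oracle's precondition, and this is precisely where the factor $2^{2n+1}n$ in the lower bound on $R$ is consumed (it absorbs a geometric-series style accumulation across the $n$ iterations). A union bound over the $n$ oracle invocations, each of which fails with probability at most $2^{-\Omega(n)}$, gives overall success probability $1 - 2^{-\Omega(n)}$, completing the reduction.
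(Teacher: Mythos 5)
Your high-level plan (peel $q$-adic digits of $\ary{s}=\Lattice^{-1}\kappa_\Lattice(\ary{x})$ by iterating the oracle, rescale by $1/q$, reproject into $\Lattice/R$, finish with a direct rounding step, then reassemble) is the same as the paper's. The gap is in the reprojection step: you claim that rounding the coordinates of $\Lattice^{-1}\ary{x}^{(i+1)}$ to multiples of $1/R$ perturbs the vector by at most $\sqrt{n}\lambda_n(\Lattice)/R$ in $\ell_2$. That would only be true if the given basis had columns of length comparable to $\lambda_n(\Lattice)$, but no such assumption is available and one cannot efficiently compute such a basis. With an arbitrary basis, coordinate rounding by $1/(2R)$ moves the point by up to $\sqrt{n}\|\mat{B}\|/(2R)$, and $\|\mat{B}\|$ can be unrelated to $\lambda_n$. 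What one can do in polynomial time is LLL-reduce and then run Babai's nearest-plane algorithm to find $\ary{x}^{(i+1)}\in\Lattice/R$ with $\dist(\ary{x}^{(i+1)},\ary{y}^{(i)}) \le 2^n\,\dist(\ary{y}^{(i)},\Lattice/R) \le 2^{n-1}n\lambda_n(\Lattice)/R$; the $2^{n-1}n$ factor is therefore a per-step cost of approximate CVP and is exactly what the hypothesis $R > 2^{2n+1}n\lambda_n(\Lattice)/d$ is engineered to absorb. Your attribution of the $2^{2n+1}n$ factor to ``geometric-series accumulation across $n$ iterations'' is incorrect for the same reason: because the distance shrinks by a factor $q$ each step, the accumulated drift is only $\left(\text{per-step error}\right)\cdot\tfrac{q}{q-1}\le 1.5\cdot(\text{per-step error})$, a constant, so the exponential factor must already be present in the single-step bound. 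The same issue recurs at the end: ``rounding $\Lattice^{-1}\ary{x}^{(n)}$ to the nearest integer'' can give the wrong lattice point for a bad basis even when $\dist(\ary{x}^{(n)},\Lattice)$ is tiny; you need Babai's nearest-plane on an LLL-reduced basis, which exactly recovers $\kappa_\Lattice(\ary{x}^{(n)})$ once the distance drops below roughly $\lambda_1(\Lattice)/2^{n}$. In short, the skeleton of the argument is right, but the two places where you do lattice rounding need an explicit approximate-CVP subroutine (LLL $+$ nearest-plane), and the resulting $2^{\Theta(n)}$ per-step loss is what the stated bound on $R$ compensates for, not the $n$-step accumulation.
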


\iffull
\begin{proof}
It is merely the same as the proof of Lemma 3.5 in~\cite{DBLP:journals/jacm/Regev09}, but with a slight modification. Compute a sequence $\ary x_1, \ary x_2, \cdots$ where $\ary x_1$ is the input $\ary{x}$ and $\ary x_{i + 1}$ is given by the following: for $\ary{x}_i\in \Lattice/R$, call $\mathcal{A}$ to compute $\ary{s}_i = \Lattice^{-1}\kappa_{\Lattice}(\ary{x}_i)\bmod q$, and then compute $\ary{y}_i:=(\ary{x}_i - \Lattice\ary{s}_i)/q$. Now that $\ary{y}_i\in \Lattice/(Rq)$ and $\dist(\ary{y}_i, \Lattice)=\dist(\ary{x}_i,\Lattice)/q$, we can apply Babai’s nearest plane algorithm~\cite{DBLP:journals/combinatorica/Babai86} to find a point $\ary{x}_{i+1} \in \Lattice/R$ such that $\dist(\ary{x}_{i+1}, \ary{y}_i) \le 2^n \dist(\ary{y}_{i}, \Lattice/R) \le 2^{n - 1} n\lambda_n(\Lattice/R) < d/2^{n + 2}$. Then $\dist(\ary{x}_{i + 1}, \Lattice) \le \dist(\ary{y}_i, \Lattice) + \dist(\ary{x}_{i+1}, \ary{y}_i) < \dist(\ary{x}_i, \Lattice)/q + d/2^{n+2}$. 

Thus $\dist(\ary{x}_i, \Lattice) \le \frac{1}{q^{i - 1}}(\dist(\ary{x}_1, \Lattice) - dq/(2^{n + 2}(q - 1))) + dq/(2^{n + 2}(q - 1)) \le \frac{d}{q^{i - 1}} + \frac{d}{2^{n + 1}}$. Then $\kappa_{\Lattice}(\ary{x}_{i+1}) = \kappa_{\Lattice}(\ary{y}_i)$ because $\dist(\ary{y}_i, \Lattice) + \dist(\ary{x}_{i+1}, \ary{y}_i) < \dist(\ary{x}_i, \Lattice)/q + d/2^{n + 2} < d < \lambda_1(\Lattice)/2$. 

After $n$ steps, we have a point $\ary{x}_{n + 1}$ such that $\dist(\ary{x}_{n + 1}, \Lattice) \le \frac{d}{q^{n}} + \frac{d}{2^{n + 1}} < \frac{d}{2^n}$. Hence we can apply Babai's nearest plane algorithm~\cite{DBLP:journals/combinatorica/Babai86} to recover $\kappa_{\Lattice}(\ary{x}_{n + 1})$. 

Note that $\kappa_{\Lattice}(\ary{x}_{i}) = q\kappa_{\Lattice}(\ary{y}_{i}) + \Lattice \ary{s}_i = q\kappa_{\Lattice}(\ary{x}_{i + 1}) + \Lattice \ary{s}_i$. We can recover $\kappa_{\Lattice}(\ary{x}_{1})$ step by step. Notice that each call to $\mathcal{A}$ has failure probability $2^{-\Omega(n)}$ and we use $n$ calls to $\mathcal{A}$, so our algorithm has failure probability at most $2^{-\Omega(n)}$, which completes the proof.
\end{proof}
\fi
\ifllncs
The detailed proof is deferred to \Cref{sec:Omit_them14}.
\fi
As a direct corollary of \Cref{lemma:SLWE2CVPmodq} and \Cref{lemma:cvp_mod_q_suffices}, we can reduce $\CVP$ to $\QLWE^{\sf phase}$ given a sufficient number of states $\ket{D_{\Lattice, r}}$. Formally,

\begin{corollary}\label{coro:SLWE2CVP}
Let $\Lattice$ be an $n$-dimensional integer lattice, define the parameters $\epsilon \in (0, 2^{-n})$, $\alpha \in (0, \frac{1}{5\sqrt{n}})$, $r > 4 q\eta_{\epsilon}(\Lattice)$, and a precision parameter $R > \max\{2\sqrt{n}r\sqrt{\log r}, \frac{2\sqrt{n}}{\alpha q}, \frac{2^{2n + 1}nr\lambda_n(\Lattice^*)}{\alpha q}\}$ as an integer. Assume that there exists a quantum algorithm $\mathcal{A}$ that, given $m$ samples of uniformly random vector $\ary a\in \Z_q^n$ and state $\gamma^{\ary{a}}_{\sqrt{2}\alpha q}$, solves the secret vector $\ary s$ in time complexity $T$. Then there exists an algorithm that given a $\CVP_{\Lattice^*, \alpha q / r}$ instance $\ary{x} \in \Lattice^*/R$ and $3m^2n^2$ states $\ket{D_{\Lattice, r}}$ as input, outputs $\kappa_{\Lattice^*}(\ary{x})$ with probability $1 - 2^{-\Omega(n)}$ in time $O((m^2+mT)\poly(n))$.
\end{corollary}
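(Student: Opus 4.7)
The plan is to derive \Cref{coro:SLWE2CVP} by composing \Cref{lemma:SLWE2CVPmodq} with the classical-to-modulo reduction \Cref{lemma:cvp_mod_q_suffices}. Concretely, given a $\CVP_{\Lattice^*, \alpha q/r}$ instance $\ary{x}\in \Lattice^*/R$, I would run the procedure of \Cref{lemma:cvp_mod_q_suffices} on the lattice $\Lattice^*$: this procedure generates a sequence $\ary{x}_1=\ary{x},\ary{x}_2,\ldots,\ary{x}_{n+1}$ by iterated Babai rounding, makes at most $n$ calls to an oracle that, on input $\ary{x}_i\in\Lattice^*/R$ with $\dist(\ary{x}_i,\Lattice^*)\le \alpha q/r$, returns $(\Lattice^*)^{-1}\kappa_{\Lattice^*}(\ary{x}_i)\bmod q$, and reconstructs $\kappa_{\Lattice^*}(\ary{x}_1)$ from these partial answers together with a final Babai decoding. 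I instantiate every such oracle call by invoking \Cref{lemma:SLWE2CVPmodq}, which needs $3m^2n$ fresh copies of $\ket{D_{\Lattice,r}}$ and $O((m^2+mT)\poly(n))$ time per call.

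The parameter bookkeeping falls out directly. The precision hypothesis $R>\max\{2\sqrt{n}r\sqrt{\log r},\ 2\sqrt{n}/(\alpha q),\ 2^{2n+1}nr\lambda_n(\Lattice^*)/(\alpha q)\}$ simultaneously satisfies (i) the precision conditions of \Cref{lemma:SLWE2CVPmodq} and (ii) the condition $R>2^{2n+1}n\lambda_n(\Lattice^*)/d$ of \Cref{lemma:cvp_mod_q_suffices} with $d=\alpha q/r$. The distance-shrinking analysis in \Cref{lemma:cvp_mod_q_suffices} gives $\dist(\ary{x}_i,\Lattice^*)\le d/q^{i-1}+d/2^{n+1}\le d=\alpha q/r$ for every $i\le n$, so each intermediate query lies in the promise set of \Cref{lemma:SLWE2CVPmodq}, and $\dist(\ary{x}_{n+1},\Lattice^*)<d/2^n<\lambda_1(\Lattice^*)/2$ ensures that the final Babai step correctly recovers $\kappa_{\Lattice^*}(\ary{x}_{n+1})$, from which $\kappa_{\Lattice^*}(\ary{x})$ is reconstructed using the identity $\kappa_{\Lattice^*}(\ary{x}_i)=q\kappa_{\Lattice^*}(\ary{x}_{i+1})+\Lattice^*\ary{s}_i$.

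For the success probability and resource count: each of the $n$ invocations of \Cref{lemma:SLWE2CVPmodq} fails with probability $2^{-\Omega(n)}$, so by a union bound the total failure probability is still $2^{-\Omega(n)}$. The total number of $\ket{D_{\Lattice,r}}$ states consumed is $n\cdot 3m^2 n=3m^2n^2$, and the total runtime is $n\cdot O((m^2+mT)\poly(n))=O((m^2+mT)\poly(n))$ as claimed.

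I do not expect a serious obstacle here, since both ingredients have been designed to dovetail; the only item deserving care is confirming that the ``fresh copies'' requirement of \Cref{lemma:SLWE2CVPmodq} is honored, i.e.\ that the $3m^2n^2$ discrete Gaussian states can be partitioned into $n$ disjoint batches, one per call. This is automatic because each call of \Cref{lemma:SLWE2CVPmodq} operates on its own batch of states and, as stated, disturbs no state belonging to any other batch. With this bookkeeping in place, the composition yields \Cref{coro:SLWE2CVP} verbatim.
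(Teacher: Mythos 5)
Your proof is correct and is essentially the paper's own argument: the paper explicitly says \Cref{coro:SLWE2CVP} is a direct corollary of \Cref{lemma:SLWE2CVPmodq} and \Cref{lemma:cvp_mod_q_suffices}, and your composition with the stated parameter check ($d = \alpha q/r$ giving $R > 2^{2n+1}nr\lambda_n(\Lattice^*)/(\alpha q)$), the $n \times 3m^2n$ state count, and the union bound over $n$ calls is precisely what the paper intends. The only nitpick is that your displayed bound $\dist(\ary{x}_i,\Lattice^*)\le d/q^{i-1}+d/2^{n+1}\le d$ is vacuous for $i=1$; there one should just invoke the input promise $\dist(\ary{x}_1,\Lattice^*)\le d$ directly, but this does not affect correctness.
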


\subsection{Answering the quantum $\CVP$ query with small disturbance on $\ket{D_{\Lattice, r}}$}\label{sec:answerquery}

We are now ready to conclude the proof of \Cref{thm:DGStoCVP}. \Cref{coro:SLWE2CVP} provides a method to answer classical queries for $\CVP_{\Lattice^*, \alpha q / r}$ on instance $\ary{x} \in \Lattice^*/R$ with $1 - 2^{-\Omega(n)}$ probability. By deferred measurement principle and gentle measurement principle, we expect that it can help us to answer quantum query $\ket{\ary x, \ary y} \to \ket{\ary x, \ary y + \kappa_{\Lattice^*}(\ary x)}$ with $\dist(\ary{x}, \Lattice^*) \le \alpha q/r$, using $3m^2n^2$ states $\ket{D_{\Lattice, r}}$ while introducing only an exponentially small disturbance on them. However, it is important to note that the gentle measurement principle is primarily suitable for producing measurement results (which is classical) rather than answering quantum query. Therefore, we shall provide a formal proof for using gentle measurement principle to answer quantum query:

\begin{proof}[Proof of \Cref{thm:DGStoCVP}]

The existence of the quantum algorithm $\mathcal{A}$ in \Cref{coro:SLWE2CVP} is a direct consequence of the assumption that a quantum algorithm can solve the $\QLWE^{\sf phase}_{n,m,q,f, \theta^{(r,\ary{x})}, D_{\theta}^{(r,\ary{x})}}$ problem. Specifically, let's recall the expression
\[
    \gamma_{\sqrt{2}\alpha q}^{\ary a} = \sum_{\substack{\ary{y} \in \Z_R^n \cap R\cdot B_{(q\Lattice)^*}}}\rho_{\sqrt{\mat\Sigma / 2}}(\ary{z}(\ary{y}))\kb{\ary{y}}{\ary{y}} \otimes \ket{\psi^{\ary{a}, \ary{y}}_{\sqrt{2}\alpha q}}\bra{\psi^{\ary{a}, \ary{y}}_{\sqrt{2}\alpha q}}.
\]
The quantum algorithm $\mathcal{A}$ proceeds by first measuring the result of $\ket{\ary y}$ to obtain a specific $\ary{y}$ and an $\QLWE^{\sf phase}$ state $\ket{\psi^{\ary{a}, \ary{y}}_{\sqrt{2}\alpha q}}$. The parameters and functions of this state corresponds to the parameters and functions defined in \Cref{def:SLWEparamsRegev}. Subsequently, $\mathcal{A}$ applies the $\QLWE^{\sf phase}_{n,m,q,f, \theta^{(r,\ary{x})}, D_{\theta}^{(r,\ary{x})}}$ solver to compute $\ary{s}$.

We defer all the measurements in the algorithm in~\Cref{coro:SLWE2CVP} including those in the $\QLWE^{\sf phase}$ solver $\mathcal{A}$ to obtain a unitary $U:\ket{\ary{x}}\ket{D_{\Lattice,r}}^{\otimes \left(3m^2n^2\right)}\ket{0^{\sf Aux}} \rightarrow \ket{\ary x} \ket{\phi_\ary{x}}$, where the first register of $\ket{\phi_\ary{x}}$ contains the solution $\kappa_{\Lattice^*}(\ary{x})$, and the size of $U$ is $O((m^2 + mT)\poly(n))$. As the algorithm in \Cref{coro:SLWE2CVP} outputs  $\kappa_{\Lattice^*}(\ary{x})$ with probability at least $1 - 2^{-\Omega(n)}$ whenever $\dist(\ary{x}, \Lattice^*) \le \alpha q/r$ and $\ary{x} \in \Lattice^*/R$, by gentle measurement principle~\cite{Winter_1999}, there exists state $\ket{\phi_{\ary x}^{\mathsf{Aux}}}$ such that the state $\ket{\phi_{\ary x}}$ is $2^{-\Omega(n)}$-close to the state $\ket{\kappa_{\Lattice^*}(\ary x)}\ket{\phi_{\ary x}^{\mathsf{Aux}}}$ in $\ell_2$-norm. Our quantum algorithm answers the query $\ket{\ary x, \ary y} \to \ket{\ary x, \ary y + \kappa_{\Lattice^*}(\ary x)}$ with $\dist(\ary{x}, \Lattice^*) \le \alpha q/r$ using the following procedure:
\begin{enumerate}
    \item Prepare the initial state \[\ket{\ary{x}, \ary{y}} \ket{D_{\Lattice,r}}^{\otimes \left(3m^2n^2\right)}\ket{0^{\sf Aux}}.\]
    \item Apply $U$ to the registers containing $\ket{\ary{x}}\ket{D_{\Lattice, r}}^{\otimes (3m^2n^2)}\ket{0^{\sf Aux}}$ to get a state $2^{-\Omega(n)}$-close to the state \[\ket{\ary x, \ary y}\ket{\kappa_{\Lattice^*}(\ary x)}\ket{\phi_{\ary x}^{\mathsf{Aux}}}\]
    in $\ell_2$-norm.
    \item Apply a unitary to add the value of $\kappa_{\Lattice^*}(\ary x)$ to $\ary y$ to get a state $2^{-\Omega(n)}$-close to the state \[\ket{\ary x, \ary y + \kappa_{\Lattice^*}(\ary x)}\ket{\kappa_{\Lattice^*}(\ary x)}\ket{\phi_{\ary x}^{\mathsf{Aux}}}\]
    in $\ell_2$-norm.
    \item Apply $U^\dagger$ to the registers containing $\ket{\ary x}\ket{\kappa_{\Lattice^*}(\ary x)}\ket{\phi_{\ary x}^{\mathsf{Aux}}}$ to get a state $2^{-\Omega(n)}$-close to the state \[\ket{\ary{x}, \ary{y} + \kappa_{\Lattice^*}(\ary x)} \ket{D_{\Lattice,r}}^{\otimes \left(3m^2n^2\right)}\ket{0^{\sf Aux}}\]
    in $\ell_2$-norm.
\end{enumerate}
Therefore, we can answer the quantum query up to exponentially small error with exponentially small disturbance on the states $\ket{D_{\Lattice, r}}$ in time $O((m^2 + mT)\poly(n))$, which ends up the proof of \Cref{thm:DGStoCVP}.
\end{proof}
\else
\fi

\iffull
\small
\bibliography{q_ref}
\normalsize
\else
\bibliography{q_ref}

\newcommand{\etalchar}[1]{$^{#1}$}
\begin{thebibliography}{GMNO18}

\bibitem[AG11]{DBLP:conf/icalp/AroraG11}
Sanjeev Arora and Rong Ge.
\newblock New algorithms for learning in presence of errors.
\newblock In {\em Automata, Languages and Programming - 38th International
  Colloquium, {ICALP} 2011, Zurich, Switzerland, July 4-8, 2011, Proceedings,
  Part {I}}, pages 403--415, 2011.

\bibitem[APV23]{ananth2023revocable}
Prabhanjan Ananth, Alexander Poremba, and Vinod Vaikuntanathan.
\newblock Revocable cryptography from learning with errors.
\newblock {\em arXiv preprint arXiv:2302.14860}, 2023.

\bibitem[Bab86]{DBLP:journals/combinatorica/Babai86}
L{\'{a}}szl{\'{o}} Babai.
\newblock On lov{\'{a}}sz' lattice reduction and the nearest lattice point
  problem.
\newblock {\em Comb.}, 6(1):1--13, 1986.

\bibitem[Ban93]{banaszczyk1993new}
Wojciech Banaszczyk.
\newblock New bounds in some transference theorems in the geometry of numbers.
\newblock {\em Mathematische Annalen}, 296(1):625--635, 1993.

\bibitem[Ban95]{banaszczyk1995inequalities}
Wojciech Banaszczyk.
\newblock Inequalities for convex bodies and polar reciprocal lattices in
  {$R^n$}.
\newblock {\em Discrete \& Computational Geometry}, 13(2):217--231, 1995.

\bibitem[BKSW18]{DBLP:conf/pkc/BrakerskiKSW18}
Zvika Brakerski, Elena Kirshanova, Damien Stehl{\'{e}}, and Weiqiang Wen.
\newblock Learning with errors and extrapolated dihedral cosets.
\newblock In {\em Public Key Cryptography {(2)}}, volume 10770 of {\em Lecture
  Notes in Computer Science}, pages 702--727. Springer, 2018.

\bibitem[BLP{\etalchar{+}}13]{DBLP:conf/stoc/BrakerskiLPRS13}
Zvika Brakerski, Adeline Langlois, Chris Peikert, Oded Regev, and Damien
  Stehl{\'{e}}.
\newblock Classical hardness of learning with errors.
\newblock In {\em {STOC}}, pages 575--584. {ACM}, 2013.

\bibitem[BV11]{DBLP:conf/focs/BrakerskiV11}
Zvika Brakerski and Vinod Vaikuntanathan.
\newblock Efficient fully homomorphic encryption from (standard) {LWE}.
\newblock In {\em {IEEE} 52nd Annual Symposium on Foundations of Computer
  Science, {FOCS} 2011, Palm Springs, CA, USA, October 22-25, 2011}, pages
  97--106, 2011.

\bibitem[CB98]{chefles1998optimum}
Anthony Chefles and Stephen~M Barnett.
\newblock Optimum unambiguous discrimination between linearly independent
  symmetric states.
\newblock {\em Physics letters A}, 250(4-6):223--229, 1998.

\bibitem[Che24]{chen2024quantum}
Yilei Chen.
\newblock Quantum algorithms for lattice problems.
\newblock {\em Cryptology ePrint Archive}, 2024.

\bibitem[CLZ22]{DBLP:conf/eurocrypt/ChenLZ22}
Yilei Chen, Qipeng Liu, and Mark Zhandry.
\newblock Quantum algorithms for variants of average-case lattice problems via
  filtering.
\newblock In {\em {EUROCRYPT} {(3)}}, volume 13277 of {\em Lecture Notes in
  Computer Science}, pages 372--401. Springer, 2022.

\bibitem[DFS24]{debris2024quantum}
Thomas Debris{-}Alazard, Pouria Fallahpour, and Damien Stehl{\'e}.
\newblock Quantum oblivious lwe sampling and insecurity of standard model
  lattice-based snarks.
\newblock In {\em Proceedings of the 56th Annual ACM Symposium on Theory of
  Computing}, pages 423--434, 2024.

\bibitem[DM13]{DBLP:conf/eurocrypt/DottlingM13}
Nico D{\"{o}}ttling and J{\"{o}}rn M{\"{u}}ller{-}Quade.
\newblock Lossy codes and a new variant of the learning-with-errors problem.
\newblock In {\em {EUROCRYPT}}, volume 7881 of {\em Lecture Notes in Computer
  Science}, pages 18--34. Springer, 2013.

\bibitem[Gen09]{DBLP:conf/stoc/Gentry09}
Craig Gentry.
\newblock Fully homomorphic encryption using ideal lattices.
\newblock In {\em {STOC}}, pages 169--178. {ACM}, 2009.

\bibitem[GMNO18]{gennaro2018lattice}
Rosario Gennaro, Michele Minelli, Anca Nitulescu, and Michele Orr{\`u}.
\newblock Lattice-based zk-snarks from square span programs.
\newblock In {\em Proceedings of the 2018 ACM SIGSAC conference on computer and
  communications security}, pages 556--573, 2018.

\bibitem[GPV08]{DBLP:conf/stoc/GentryPV08}
Craig Gentry, Chris Peikert, and Vinod Vaikuntanathan.
\newblock Trapdoors for hard lattices and new cryptographic constructions.
\newblock In {\em STOC}, pages 197--206, 2008.

\bibitem[GR02]{grover2002creating}
Lov Grover and Terry Rudolph.
\newblock Creating superpositions that correspond to efficiently integrable
  probability distributions.
\newblock {\em arXiv preprint quant-ph/0208112}, 2002.

\bibitem[Gra08]{grafakos2008classical}
Loukas Grafakos.
\newblock {\em Classical fourier analysis}.
\newblock Springer, 2008.

\bibitem[HPS98]{DBLP:conf/ants/HoffsteinPS98}
Jeffrey Hoffstein, Jill Pipher, and Joseph~H. Silverman.
\newblock {NTRU:} {A} ring-based public key cryptosystem.
\newblock In {\em {ANTS}}, volume 1423 of {\em Lecture Notes in Computer
  Science}, pages 267--288. Springer, 1998.

\bibitem[Kit95]{DBLP:Kitaev95}
Alexei~Y. Kitaev.
\newblock Quantum measurements and the abelian stabilizer problem.
\newblock 1995.

\bibitem[KLS22]{khesin2022publicly}
Andrey~Boris Khesin, Jonathan~Z Lu, and Peter~W Shor.
\newblock Publicly verifiable quantum money from random lattices.
\newblock {\em arXiv preprint arXiv:2207.13135}, 2022.

\bibitem[Kup05]{DBLP:journals/siamcomp/Kuperberg05}
Greg Kuperberg.
\newblock A subexponential-time quantum algorithm for the dihedral hidden
  subgroup problem.
\newblock {\em {SIAM} J. Comput.}, 35(1):170--188, 2005.

\bibitem[LMZ23]{liu2023another}
Jiahui Liu, Hart Montgomery, and Mark Zhandry.
\newblock Another round of breaking and making quantum money: How to not build
  it from lattices, and more.
\newblock In {\em Annual International Conference on the Theory and
  Applications of Cryptographic Techniques}, pages 611--638. Springer, 2023.

\bibitem[Mah18]{DBLP:conf/focs/Mahadev18a}
Urmila Mahadev.
\newblock Classical homomorphic encryption for quantum circuits.
\newblock In {\em {FOCS}}, pages 332--338. {IEEE} Computer Society, 2018.

\bibitem[MP12]{DBLP:conf/eurocrypt/MicciancioP12}
Daniele Micciancio and Chris Peikert.
\newblock Trapdoors for lattices: Simpler, tighter, faster, smaller.
\newblock In {\em Advances in Cryptology - {EUROCRYPT} 2012 - 31st Annual
  International Conference on the Theory and Applications of Cryptographic
  Techniques, Cambridge, UK, April 15-19, 2012. Proceedings}, pages 700--718,
  2012.

\bibitem[MP13]{DBLP:conf/crypto/MicciancioP13}
Daniele Micciancio and Chris Peikert.
\newblock Hardness of {SIS} and {LWE} with small parameters.
\newblock In {\em {CRYPTO} {(1)}}, volume 8042 of {\em Lecture Notes in
  Computer Science}, pages 21--39. Springer, 2013.

\bibitem[MR07]{MicciancioRegev07}
Daniele Micciancio and Oded Regev.
\newblock Worst-case to average-case reductions based on {Gaussian} measure.
\newblock {\em SIAM Journal on Computing}, 37(1):267--302, 2007.

\bibitem[NC16]{DBLP:books/daglib/0046438}
Michael~A. Nielsen and Isaac~L. Chuang.
\newblock {\em Quantum Computation and Quantum Information (10th Anniversary
  edition)}.
\newblock Cambridge University Press, 2016.

\bibitem[ORR13]{ozols2013quantum}
Maris Ozols, Martin Roetteler, and J{\'e}r{\'e}mie Roland.
\newblock Quantum rejection sampling.
\newblock {\em ACM Transactions on Computation Theory (TOCT)}, 5(3):1--33,
  2013.

\bibitem[Por23]{DBLP:conf/innovations/Poremba23}
Alexander Poremba.
\newblock Quantum proofs of deletion for learning with errors.
\newblock In {\em {ITCS}}, volume 251 of {\em LIPIcs}, pages 90:1--90:14.
  Schloss Dagstuhl - Leibniz-Zentrum f{\"{u}}r Informatik, 2023.

\bibitem[Reg02]{DBLP:journals/siamcomp/Regev04}
Oded Regev.
\newblock Quantum computation and lattice problems.
\newblock In {\em {FOCS}}, pages 520--529. {IEEE} Computer Society, 2002.

\bibitem[Reg04]{DBLP:conf/focs/Regev02}
Oded Regev.
\newblock Quantum computation and lattice problems.
\newblock {\em {SIAM} J. Comput.}, 33(3):738--760, 2004.

\bibitem[Reg09]{DBLP:journals/jacm/Regev09}
Oded Regev.
\newblock On lattices, learning with errors, random linear codes, and
  cryptography.
\newblock {\em J. {ACM}}, 56(6):34:1--34:40, 2009.

\bibitem[Reg23]{regev2023efficient}
Oded Regev.
\newblock An efficient quantum factoring algorithm, 2023.

\bibitem[RS17]{DBLP:conf/stoc/0001S17}
Oded Regev and Noah Stephens{-}Davidowitz.
\newblock A reverse minkowski theorem.
\newblock In {\em {STOC}}, pages 941--953. {ACM}, 2017.

\bibitem[SSTX09]{DBLP:conf/asiacrypt/StehleSTX09}
Damien Stehl{\'{e}}, Ron Steinfeld, Keisuke Tanaka, and Keita Xagawa.
\newblock Efficient public key encryption based on ideal lattices.
\newblock In {\em Advances in Cryptology - {ASIACRYPT} 2009, 15th International
  Conference on the Theory and Application of Cryptology and Information
  Security, Tokyo, Japan, December 6-10, 2009. Proceedings}, pages 617--635,
  2009.

\bibitem[Win99]{Winter_1999}
A.~Winter.
\newblock Coding theorem and strong converse for quantum channels.
\newblock {\em {IEEE} Transactions on Information Theory}, 45(7):2481--2485,
  1999.

\bibitem[Zha19]{DBLP:conf/eurocrypt/Zhandry19a}
Mark Zhandry.
\newblock Quantum lightning never strikes the same state twice.
\newblock In {\em {EUROCRYPT} {(3)}}, volume 11478 of {\em Lecture Notes in
  Computer Science}, pages 408--438. Springer, 2019.

\end{thebibliography}
\fi

\appendix

\ifllncs
\section{Omitted Parts}

\subsection{Figure in Introduction}\label{sec:figure_omit}

We present \Cref{fig:summary} here, which was not included in \Cref{sec:intro}.

\begin{figure}[h]
\centering
\includegraphics[scale=0.18]{plots/gaussian.jpg}
\includegraphics[scale=0.18]{plots/wgaussian.jpg}
\includegraphics[scale=0.18]{plots/complex_gaussian.jpg}
\includegraphics[scale=0.18]{plots/gaussian_half.jpg}
\includegraphics[scale=0.18]{plots/ft_gaussian.jpg}
\includegraphics[scale=0.18]{plots/ft_wgaussian.jpg}
\includegraphics[scale=0.18]{plots/ft_complex_gaussian.jpg}
\includegraphics[scale=0.18]{plots/ft_gaussian_half.jpg}
\caption{ Interesting $\QLWE$ error amplitudes (top) and their DFTs (bottom). 
All pictures are depicting the real parts of the functions. The $x$-axis is the input (from $-30$ to $29$, all examples are given over $\Z_{60}$). The $y$-axis is the amplitude. 
Four pictures on the top from left to right are: (1) Gaussian, where our sub-exponential algorithm applies; (2) Gaussian with imaginary linear phase, where our reductions apply when the phase (or the center of the DFT) is unknown; (3) Gaussian with imaginary quadratic phase, where our oblivious LWE sampler uses; (4) Gaussian where the phase changes in the middle, where the oblivious LWE sampler in \cite{debris2024quantum} uses.  }\label{fig:summary}
\end{figure}

\subsection{Preliminaries}

\paragraph{Gaussians and lattices.}

\paragraph{Smoothing parameter.} Here are same useful facts.

\paragraph{q-ary lattices.} We need the following lemma for $q$-ary lattices, which shows that the shortest nonzero vector in a $q$-lattice is relatively long.

\paragraph{Quantum computation.}

\subsection{Dealing with the unknown Gaussian width}\label{sec:Omit_them14}
Here we give the formal proof of \Cref{lemma:cvp_mod_q_suffices}.

\subsection{Sub-exponential algorithm for $\QLWE^{\sf phase}$ for Gaussian amplitude with known phase}\label{sec:Omit_sec5}
Here we give the formal proof of \Cref{cor:subexp_alg}.

\fi

\section{Appendix}

\subsection{Upper bounds on Gaussian tails}

\begin{lemma}\label{thm:additive_err_tail}
    Let $\Lattice\subseteq \R^n$ be a lattice, $\ary u\in \R^n$ be a fixed vector, $\epsilon \in (0, 1)$ be a small error parameter, $\sigma$ be a positive real number with $\sigma > 2\eta_{\epsilon}(\Lattice)$. Then we have
    \[
        \sum_{\ary x\in \Lattice^*}\rho_{1/\sigma}(\ary x - \ary u) < \rho_{1/\sigma}(\ary u - \kappa_{\Lattice^*}(\ary u)) + \epsilon.
    \]
    If $\ary{u}$'s closest vectors in $\Lattice^*$ are not unique, then $\kappa_{\Lattice^*}(\ary{u})$ can be an arbitrary one of the closest vectors. 
\end{lemma}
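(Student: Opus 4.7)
The plan is a direct substitution plus a Voronoi-cell bound followed by an application of the smoothing parameter. First I would set $\ary{u}' := \ary{u} - \kappa_{\Lattice^*}(\ary{u})$ and re-index the sum by $\ary{y} = \ary{x} - \kappa_{\Lattice^*}(\ary{u})$, turning it into $\sum_{\ary{y}\in\Lattice^*} \rho_{1/\sigma}(\ary{y}-\ary{u}')$. The $\ary{y}=0$ term contributes exactly $\rho_{1/\sigma}(\ary{u}-\kappa_{\Lattice^*}(\ary{u}))$, which is the first term on the right-hand side of the claim, so it remains to show that
\[
\sum_{\ary{y}\in\Lattice^*\setminus\{0\}} \rho_{1/\sigma}(\ary{y}-\ary{u}') \;<\; \epsilon.
\]

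The key geometric step is the inequality $\|\ary{y}-\ary{u}'\|\ge \|\ary{y}\|/2$ for every $\ary{y}\in\Lattice^*\setminus\{0\}$. This follows from the fact that $\ary{u}'$ lies in the closed Voronoi cell of $0$ in $\Lattice^*$: that property is equivalent to $|\langle \ary{u}',\ary{y}\rangle|\le \|\ary{y}\|^2/2$ for all $\ary{y}\in\Lattice^*$ (apply the defining inequality of the Voronoi cell with both $\ary{y}$ and $-\ary{y}$). Projecting $\ary{y}-\ary{u}'$ onto the unit vector in the direction of $\ary{y}$ then gives an absolute value of at least $\|\ary{y}\|-\|\ary{y}\|/2 = \|\ary{y}\|/2$, and the inequality $\|\ary{y}-\ary{u}'\|\ge\|\ary{y}\|/2$ follows.

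Consequently $\rho_{1/\sigma}(\ary{y}-\ary{u}')=\exp(-\pi\sigma^2\|\ary{y}-\ary{u}'\|^2)\le \exp(-\pi(\sigma/2)^2\|\ary{y}\|^2)=\rho_{2/\sigma}(\ary{y})$, so
\[
\sum_{\ary{y}\in\Lattice^*\setminus\{0\}} \rho_{1/\sigma}(\ary{y}-\ary{u}') \;\le\; \rho_{2/\sigma}(\Lattice^*\setminus\{0\}).
\]
Since $\sigma>2\eta_{\epsilon}(\Lattice)$, i.e.\ $\sigma/2>\eta_{\epsilon}(\Lattice)$, the definition of the smoothing parameter yields $\rho_{2/\sigma}(\Lattice^*\setminus\{0\})\le \epsilon$. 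Combining with the $\ary{y}=0$ term finishes the proof.

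The only mildly delicate point is the Voronoi bound $\|\ary{y}-\ary{u}'\|\ge\|\ary{y}\|/2$: one must be careful to use the closed Voronoi cell so that the inequality also holds when $\ary{u}$ lies on the boundary (which is exactly the case where $\kappa_{\Lattice^*}(\ary{u})$ is ambiguous), but the projection argument above is unaffected by this ambiguity since any choice of closest vector still leaves $\ary{u}'$ in the closed Voronoi cell of $0$.
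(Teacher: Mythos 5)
Your proof is correct and follows the same overall structure as the paper's: isolate the nearest-point term, bound the remainder via the geometric inequality $\|\ary{x}-\ary{u}\| \ge \tfrac{1}{2}\|\ary{x}-\kappa_{\Lattice^*}(\ary{u})\|$, and apply the smoothing parameter to $\rho_{2/\sigma}(\Lattice^*\setminus\{\ary 0\})$. The only difference is the derivation of the factor-$\tfrac{1}{2}$ bound — the paper obtains it by combining the closest-point inequality with the AM--QM estimate $\tfrac12(a^2+b^2)\ge\tfrac14(a+b)^2$ and the triangle inequality, whereas you derive it by noting $\ary{u}'$ lies in the closed Voronoi cell of $\ary 0$ (equivalently $|\langle\ary{u}',\ary{y}\rangle|\le\|\ary{y}\|^2/2$) and projecting onto the direction of $\ary{y}$; both are elementary, equally rigorous, and your version handles the boundary (non-unique $\kappa$) cleanly.
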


\begin{proof}
    It suffices to prove that
    \[
        \sum_{\ary x\in {\Lattice^*}\setminus \{\kappa_{\Lattice^*}(\ary u)\}}\rho_{1/\sigma}(\ary x - \ary u) < \epsilon.
    \]

    For any $\ary x\in \Lattice^*$, we have that
    \[
        \begin{split}
            \|\ary x - \ary u\|^2 & \ge \frac{1}{2}\left(\|\ary x - \ary u\|^2 + \|\kappa_{\Lattice^*}(\ary u) - \ary u\|^2\right)\qquad\text{(by the definition of }\kappa_{\Lattice^*}(\ary u)\text{)} \\
            & \ge \frac{1}{4}\left(\|\ary x - \ary u\| + \|\kappa_{\Lattice^*}(\ary u) - \ary u\|\right)^2 \\
            & \ge \frac{1}{4}\|\ary x - \ary u - (\kappa_{\Lattice^*}(\ary u) - \ary u)\|^2 \qquad \text{(triangle inequality)} \\
            & = \frac{1}{4}\|\ary x - \kappa_{\Lattice^*}(\ary u)\|^2,
        \end{split}
    \]
    so
    \[
        \begin{split}
            \sum_{\ary x\in {\Lattice^*}\setminus \{\kappa_{\Lattice^*}(\ary u)\}}\rho_{1/\sigma}(\ary x - \ary u) & \le \sum_{\ary x\in \Lattice^*\setminus \{\kappa_{\Lattice^*}(\ary u)\}}\rho_{2/\sigma}(\ary x - \kappa_{\Lattice^*}(\ary u)) \\
            & = \sum_{\ary x\in \Lattice^*\setminus \{\ary 0\}}\rho_{2/\sigma}(\ary x) \\
            & < \epsilon,
        \end{split}
    \]
    as desired.
\end{proof}

\begin{lemma}\label{thm:mul_err_tail}
    Let $\Lattice\subseteq \R^n$ be a lattice, $\ary u\in \R^n$ be a fixed vector, $\epsilon \in (0, 1)$ be a small error parameter, $\sigma$ be a positive real number with $\sigma > 2\sqrt{2}\eta_{\epsilon}(\Lattice)$. Then we have
    \[
        \sum_{\ary x\in \Lattice^*}\rho_{1/\sigma}(\ary x - \ary u) < \rho_{1/\sigma}(\kappa_{\Lattice^*}(\ary u) - \ary{u}) + \epsilon \cdot \rho_{\sqrt{2}/\sigma}(\kappa_{\Lattice^*}(\ary u) - \ary{u}).
    \]
    If $\ary{u}$'s closest vectors in $\Lattice^*$ are not unique, then $\kappa_{\Lattice^*}(\ary{u})$ can be an arbitrary one of the closest vectors. 
\end{lemma}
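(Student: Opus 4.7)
The plan is to follow the same substitution as in Lemma~\ref{thm:additive_err_tail}, but to derive a sharper pointwise bound that separates out a factor of $\rho_{\sqrt{2}/\sigma}(\kappa_{\Lattice^*}(\ary u) - \ary u)$ rather than just an absolute additive term. Write $\ary v := \kappa_{\Lattice^*}(\ary u) - \ary u$ and substitute $\ary y := \ary x - \kappa_{\Lattice^*}(\ary u)$, so that $\ary y$ ranges over $\Lattice^* \setminus \{\ary 0\}$ once we peel off the closest-vector term. In these variables the goal reduces to showing
\[
    \sum_{\ary y \in \Lattice^* \setminus \{\ary 0\}} \rho_{1/\sigma}(\ary y + \ary v) \;\le\; \epsilon \cdot \rho_{\sqrt{2}/\sigma}(\ary v).
\]

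The crux of the proof will be the pointwise inequality
\[
    \|\ary y + \ary v\|^2 - \tfrac{1}{2}\|\ary v\|^2 \;\ge\; \tfrac{1}{8}\|\ary y\|^2 \quad \text{for all } \ary y \in \Lattice^*,
\]
which directly translates into
\[
    \rho_{1/\sigma}(\ary y + \ary v) \;\le\; \rho_{\sqrt{2}/\sigma}(\ary v)\cdot \rho_{\sqrt{8}/\sigma}(\ary y).
\]
I will establish this by combining two facts in parallel. First, since $\kappa_{\Lattice^*}(\ary u)$ is a closest lattice vector to $\ary u$, we have $\|\ary y + \ary v\|^2 \ge \|\ary v\|^2$, so $\|\ary y + \ary v\|^2 - \|\ary v\|^2 \ge 0$. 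Second, the triangle-inequality computation used in the proof of Lemma~\ref{thm:additive_err_tail} gives $\|\ary y + \ary v\|^2 \ge \tfrac{1}{4}\|\ary y\|^2$. Averaging with equal weights,
\[
    \|\ary y + \ary v\|^2 - \tfrac{1}{2}\|\ary v\|^2 = \tfrac{1}{2}\bigl(\|\ary y + \ary v\|^2 - \|\ary v\|^2\bigr) + \tfrac{1}{2}\|\ary y + \ary v\|^2 \;\ge\; 0 + \tfrac{1}{2}\cdot\tfrac{1}{4}\|\ary y\|^2,
\]
which yields the claim.

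Finally, summing the pointwise bound over $\ary y \in \Lattice^* \setminus \{\ary 0\}$ and pulling the $\ary y$-independent factor out gives
\[
    \sum_{\ary y \in \Lattice^* \setminus \{\ary 0\}} \rho_{1/\sigma}(\ary y + \ary v) \;\le\; \rho_{\sqrt{2}/\sigma}(\ary v) \cdot \rho_{\sqrt{8}/\sigma}\bigl(\Lattice^* \setminus \{\ary 0\}\bigr).
\]
By the definition of the smoothing parameter, $\rho_{\sqrt{8}/\sigma}(\Lattice^* \setminus \{\ary 0\}) \le \epsilon$ whenever $\sigma/\sqrt{8} \ge \eta_{\epsilon}(\Lattice)$, which is precisely the hypothesis $\sigma > 2\sqrt{2}\eta_{\epsilon}(\Lattice)$. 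Adding back the isolated term $\rho_{1/\sigma}(\ary v)$ produces the stated bound. I don't expect any real obstacles: the entire argument hinges on choosing the correct split of the exponent so that both the $\epsilon$-factor and the $\rho_{\sqrt{2}/\sigma}(\ary v)$-factor emerge cleanly, and the $\tfrac{1}{2} + \tfrac{1}{2}$ averaging above is designed exactly for that purpose.
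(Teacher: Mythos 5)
Your proof is correct and follows essentially the same route as the paper: both establish the pointwise bound $\|\ary x - \ary u\|^2 - \tfrac{1}{2}\|\kappa_{\Lattice^*}(\ary u) - \ary u\|^2 \ge \tfrac{1}{8}\|\ary x - \kappa_{\Lattice^*}(\ary u)\|^2$ and then factor out $\rho_{\sqrt{2}/\sigma}(\kappa_{\Lattice^*}(\ary u) - \ary u)$, leaving a $\rho_{2\sqrt{2}/\sigma}$-mass over $\Lattice^* \setminus \{\ary 0\}$ controlled by the smoothing condition. The only cosmetic difference is the algebra behind that pointwise bound (you average the closest-vector inequality $\|\ary y+\ary v\|\ge\|\ary v\|$ with the quarter bound from \Cref{thm:additive_err_tail}, while the paper chains $a-\tfrac{b}{2}\ge\tfrac{a+b}{4}\ge\tfrac{(\sqrt{a}+\sqrt{b})^2}{8}$ followed by the triangle inequality), which produces the identical exponent split.
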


\begin{proof}
    It suffices to prove that
    \[
        \sum_{\ary x\in {\Lattice^*}\setminus\{ \kappa_{\Lattice^*}(\ary u)\}}\rho_{1/\sigma}(\ary x - \ary u) < \epsilon \cdot \rho_{\sqrt{2}/\sigma}(\kappa_{\Lattice^*}(\ary u) - \ary{u}).
    \]
    
    For any $\ary x\in \Lattice^*$, we have that
    \iffull 
    \[
        \begin{split}
            \|\ary x - \ary u\|^2 -   \frac{1}{2}\|\kappa_{\Lattice^*}(\ary u) - \ary u\|^2& \ge \frac{1}{4}\left(\|\ary x - \ary u\|^2 + \|\kappa_{\Lattice^*}(\ary u) - \ary u\|^2\right)\qquad\text{(by the definition of }\kappa_{\Lattice^*}(\ary u)\text{)} \\
            & \ge \frac{1}{8}\left(\|\ary x - \ary u\| + \|\kappa_{\Lattice^*}(\ary u) - \ary u\|\right)^2 \\
            & \ge \frac{1}{8}\|\ary x - \ary u - (\kappa_{\Lattice^*}(\ary u) - \ary u)\|^2 \qquad \text{(triangle inequality)} \\
            & = \frac{1}{8}\|\ary x - \kappa_{\Lattice^*}(\ary u)\|^2,
        \end{split}
    \]
    \fi
    \ifllncs
    \[
        \begin{split}
            &~ \|\ary x - \ary u\|^2 -   \frac{1}{2}\|\kappa_{\Lattice^*}(\ary u) - \ary u\|^2 \\
            \ge&~ \frac{1}{4}\left(\|\ary x - \ary u\|^2 + \|\kappa_{\Lattice^*}(\ary u) - \ary u\|^2\right)\qquad\text{(by the definition of }\kappa_{\Lattice^*}(\ary u)\text{)} \\
            \ge&~ \frac{1}{8}\left(\|\ary x - \ary u\| + \|\kappa_{\Lattice^*}(\ary u) - \ary u\|\right)^2 \\
            \ge&~ \frac{1}{8}\|\ary x - \ary u - (\kappa_{\Lattice^*}(\ary u) - \ary u)\|^2 \qquad \text{(triangle inequality)} \\
            =&~ \frac{1}{8}\|\ary x - \kappa_{\Lattice^*}(\ary u)\|^2,
        \end{split}
    \]
    \fi
    so
    \[
        \begin{split}
            \sum_{\ary x\in {\Lattice^*}\setminus \{\kappa_{\Lattice^*}(\ary u)\}}\rho_{1/\sigma}(\ary x - \ary u) & \le \rho_{\sqrt{2}/\sigma}(\kappa_{\Lattice^*}(\ary u) - \ary{u})\sum_{\ary x\in \Lattice^*\setminus \{\kappa_{\Lattice^*}(\ary u)\}}\rho_{2\sqrt{2}/\sigma}(\ary x - \kappa_{\Lattice^*}(\ary u)) \\
            & = \rho_{\sqrt{2}/\sigma}(\kappa_{\Lattice^*}(\ary u) - \ary{u})\sum_{\ary x\in \Lattice^*\setminus \{\ary 0\}}\rho_{2\sqrt{2}/\sigma}(\ary x) \\
            & < \epsilon \cdot \rho_{\sqrt{2}/\sigma}(\kappa_{\Lattice^*}(\ary u) - \ary{u}),
        \end{split}
    \]
    as desired.
\end{proof}

\subsection{Proof of \Cref{lem:EDCP_state_close}}\label{sec:EDCP_state_close}

\begin{proof}[Proof of \Cref{lem:EDCP_state_close}]
    Denote the (unnormalized) state in~\Cref{eqn:EDCP_state_temp0} as $\ket{\Phi}$, and the (unnormailized) state in~\Cref{eqn:EDCP_state_temp1} as $\ket{\Phi'}$. Then we have
    \iffull
    \[
        \begin{split}
            &\|\ket{\Phi} - \ket{\Phi'}\|^2\\
            =& \sum_{\ary{v}\in \Z_q^n}\sum_{j\in \Z_q}\rho_\alpha(j)^2\left\|\sum_{\substack{\ary{x}\in \Z^m,\\ \|\ary{x}\| \ge \lambda_1(\Lattice_q(\mat A))/2}}\rho_{\beta q}(\ary{x} + j\cdot \ary e)\ket{ (\mat{A}^T\ary{v} + \ary{x}) \bmod q}\right\|^2 \\
            \le&_{(1)} q^n\sum_{j\in \Z_q}\rho_\alpha(j)^2\left(\sum_{\substack{\ary{x}\in \Z^m,\\ \|\ary{x}\| \ge \lambda_1(\Lattice_q(\mat A))/2}}\rho_{\beta q}(\ary{x} + j\cdot \ary e)\right)^2 \\
            \le&_{(2)} q^n\sum_{\substack{j\in \Z_q, \\ |j| < \alpha \sqrt{m\log {(\beta q)}}}}\rho_\alpha(j)^2\left(\sum_{\substack{\ary{x}\in \Z^m,\\ \|\ary{x}\| \ge {\beta q}\sqrt{m\log {(\beta q)}}}}\rho_{\beta q}(\ary{x})\right)^2 + q^n\sum_{\substack{j \in \Z_q, \\|j| \ge \alpha \sqrt{m\log {(\beta q)}}}}\rho_\alpha(j)^2\left(\sum_{\ary{x}\in \Z^m}\rho_{\beta q}(\ary{x})\right)^2,
        \end{split}
    \]
    \fi
    \ifllncs
    \[
        \begin{split}
            &\|\ket{\Phi} - \ket{\Phi'}\|^2\\
            =& \sum_{\ary{v}\in \Z_q^n}\sum_{j\in \Z_q}\rho_\alpha(j)^2\left\|\sum_{\substack{\ary{x}\in \Z^m,\\ \|\ary{x}\| \ge \lambda_1(\Lattice_q(\mat A))/2}}\rho_{\beta q}(\ary{x} + j\cdot \ary e)\ket{ (\mat{A}^T\ary{v} + \ary{x}) \bmod q}\right\|^2 \\
            \le&_{(1)} q^n\sum_{j\in \Z_q}\rho_\alpha(j)^2\left(\sum_{\substack{\ary{x}\in \Z^m,\\ \|\ary{x}\| \ge \lambda_1(\Lattice_q(\mat A))/2}}\rho_{\beta q}(\ary{x} + j\cdot \ary e)\right)^2 \\
            \le&_{(2)} q^n\sum_{\substack{j\in \Z_q, \\ |j| < \alpha \sqrt{m\log (\beta q)}}}\rho_\alpha(j)^2\left(\sum_{\substack{\ary{x}\in \Z^m,\\ \|\ary{x}\| \ge \beta q\sqrt{m\log (\beta q)}}}\rho_{\beta q}(\ary{x})\right)^2 + q^n\sum_{\substack{j \in \Z_q, \\|j| \ge \alpha \sqrt{m\log (\beta q)}}}\rho_\alpha(j)^2\left(\sum_{\ary{x}\in \Z^m}\rho_{\beta q}(\ary{x})\right)^2,
        \end{split}
    \]
    \fi
    where in (1) we absorb the summation over $\ary{v}\in\Z_q^n$ in $q^n$, and use the fact that $\rho$ is non-negative; in (2), the first term uses the following: when $\|\ary x\| \ge \lambda_1(\Lattice_q(\mat A)) / 2$ and $|j| < \alpha \sqrt{m\log (\beta q)}$, we have $\|\ary x + j\cdot \ary e\| > \lambda_1(\Lattice_q(\mat A)) / 2 - \alpha\sqrt{m\log (\beta q)}\cdot \gamma q \sqrt{m} > \beta q\sqrt{m\log (\beta q)}$, the second term uses for any $\ary{c}\in\R^n$, $\sum_{\ary{x}\in \Z^m}\rho_{\beta q}(\ary{x}+\ary{c})\leq \sum_{\ary{x}\in \Z^m}\rho_{\beta q}(\ary{x})$.
    
    From Banaszczyk's tail bound (see \Cref{lemma:Bana93}), we have that
    \[
        \sum_{\substack{\ary{x}\in \Z^m,\\ \|\ary{x}\| \ge \beta q\sqrt{m\log (\beta q)}}}\rho_{\beta q}(\ary{x}) < \beta^{-3m} q^{-3m} \sum_{\ary{x}\in \Z^m}\rho_{\beta q}(\ary x), \quad \sum_{\substack{j \in \Z_q, \\|j| \ge \alpha \sqrt{m\log (\beta q)}}}\rho_{\alpha/\sqrt{2}}(j) < \beta^{-6m} q^{-6m}\sum_{j \in \Z_q}\rho_{\alpha/\sqrt{2}}(j).
    \]
    
    Notice that $\beta q > \sqrt{m}$, so from~\Cref{lemma:gaussianintsum}, $\sum_{\ary{x}\in \Z^m}\rho_{\beta q}(\ary{x}) < (1 + 2^{-\Omega(m)}) \beta^{m} q^m$. Therefore,
    \[
        \begin{split}
            \|\ket{\Phi} - \ket{\Phi'}\|^2 & <2 q^n (\beta q)^{-6m} \sum_{j\in \Z_q}\rho_\alpha(j)^2 \left(\sum_{\ary{x}\in \Z^m}\rho_{\beta q}(\ary{x})\right)^2 \\
            & < 2(1 + 2^{-\Omega(n)})q^n (\beta q)^{-4m} \sum_{j\in \Z_q}\rho_\alpha(j)^2
        \end{split}
    \]

    On the other hand, notice that from~\Cref{lemma:gaussianintsum}, $\sum_{\ary{x}\in \Z^m}\rho_{\beta q/\sqrt{2}}(\ary{x}) > (\beta q/\sqrt{2})^{m}$, we have that
    \[
        \begin{split}
            \|\ket{\Phi}\|^2 & \ge  \sum_{\ary{v}\in\Z_q^n}\sum_{j\in\Z_q} \rho_\alpha(j)^2 \sum_{\ary{x}\in\Z^m} \rho_{\beta q}(\ary{x} + j\cdot \ary e)^2\\
            & = q^n\sum_{j\in\Z_q} \rho_\alpha(j)^2 \sum_{\ary{x}\in\Z^m} \rho_{\beta q}(\ary{x})^2 \\
            & > q^n\left(\frac{\beta q}{\sqrt{2}}\right)^{m}\sum_{j\in \Z_q}\rho_\alpha(j)^2
        \end{split}
    \]

    Hence, we get that $\frac{\|\ket{\Phi} - \ket{\Phi'}\|^2}{\|\ket{\Phi}\|^2} < 2^{-\Omega(n)}$, this implies that $\ket{\Phi}$ and $\ket{\Phi'}$ are $2^{-\Omega(n)}$-close to each other, by \Cref{lemma:difftotd}, as desired.
\end{proof}

\subsection{Proof of \Cref{lem:change_variable}}\label{sec:proof_change_variable}
\begin{proof}[Proof of \Cref{lem:change_variable}]
For any vector $\ary{v} \in q\Lattice + \Lattice \ary{a}$ such that $\|\ary{v}\| \le \sqrt{n}r$, as $q/2 > \sigma\sqrt{n}$, the state
\begin{equation}\label{eqn:righthand_1}
    \sum_{e\in\Z_{qR}/R} \rho_{\sigma}(e)\ket{\ipd{\ary{s}}{\ary{a}} + \ipd{\ary{x'}}{\ary{v}}+e \bmod q}
\end{equation}
is $2^{-\Omega(n)}$-close to the state
\[\sum_{e\in[-\sigma\sqrt{n}, \sigma\sqrt{n}] \cap (\Z/R)} \rho_{\sigma}(e)\ket{\ipd{\ary{s}}{\ary{a}} + \ipd{\ary{x'}}{\ary{v}}+e \bmod q}\]
in $\ell_2$ norm by Banaszczyk’s tail bound (see~\Cref{lemma:Bana93}), which is $2^{-\Omega(n)}$-close to the state
\[\sum_{e\in\Z_{qR}/R - \ipd{\ary{x'}}{\ary{v}}} \rho_{\sigma}(e)\ket{\ipd{\ary{s}}{\ary{a}} + \ipd{\ary{x'}}{\ary{v}}+e \bmod q}\]
in $\ell_2$ norm due to Banaszczyk’s tail bound and the fact that $\left|\ipd{\ary{x'}}{\ary{v}}\right| \le \frac{\alpha q}{r}\cdot \sqrt{n}r < q/2 - \sigma\sqrt{n}$, which implies $[-\sigma\sqrt{n}, \sigma\sqrt{n}] \cap (\Z/R) \subseteq \Z_{qR}/R - \ipd{\ary{x'}}{\ary{v}}$. 

We can perform a change of variable $u \leftarrow \ipd{\ary{x'}}{\ary{v}}+e$ to write the last state as
\begin{equation}\label{eqn:righthand_2}
    \sum_{u\in \Z_{qR}/R} \rho_{\sigma}(u - \ipd{\ary{x'}}{\ary{v}})\ket{\ipd{\ary{s}}{\ary{a}} + u \bmod q}.
\end{equation}
Observe that \Cref{eqn:righthand_1} and \Cref{eqn:righthand_2} are $2^{-\Omega(n)}$-close to each other in $\ell_2$ norm for every $\ary{v}\in q\Lattice + \Lattice \ary{a}$ such that $\|\ary{v}\| \le \sqrt{n}r$. Therefore, the given state
\[\sum_{\substack{\ary{v}\in q\Lattice + \Lattice\ary{a}, \\ \|\ary{v}\| \le \sqrt{n} r}} \rho_{r}(\ary{v})\ket{\ary{v}} \otimes \sum_{e\in\Z_{qR}/R} \rho_{\sigma}(e)\ket{\ipd{\ary{s}}{\ary{a}} + \ipd{\ary{x'}}{\ary{v}}+e \bmod q}\]
is $2^{-\Omega(n)}$-close to the state
\[\ket{\Psi}:= \sum_{\substack{\ary{v}\in q\Lattice + \Lattice\ary{a}, \\ \|\ary{v}\| \le \sqrt{n} r}} \rho_{r}(\ary{v})\ket{\ary{v}} \otimes \sum_{u\in\Z_{qR}/R} \rho_{\sigma}(u - \ipd{\ary{x'}}{\ary{v}})\ket{\ipd{\ary{s}}{\ary{a}} + u \bmod q}.\]

We will show $\ket{\Psi}$ is $2^{-\Omega(n)}$-close to the state
\[\ket{\Phi}:=\sum_{\substack{\ary{v}\in q\Lattice + \Lattice\ary{a}, \\ \|\ary{v}\| < R/2}} \rho_{r}(\ary{v})\ket{\ary{v}} \otimes \sum_{u\in\Z_{qR}/R} \rho_{\sigma}(u - \ipd{\ary{x'}}{\ary{v}})\ket{\ipd{\ary{s}}{\ary{a}} + u \bmod q}.\]

The proof is similar to the proof of~\Cref{lem:EDCP_state_close}. We first give an upper bound for $\|\ket{\Psi} - \ket{\Phi}\|^2$:
\begin{align*}
\|\ket{\Psi} - \ket{\Phi}\|^2 &= \sum_{\substack{\ary{v}\in q\Lattice + \Lattice\ary{a}, \\ \sqrt{n} r < \|\ary{v}\| < R/2}}\rho_{r/\sqrt{2}}(\ary{v})\sum_{u\in\Z_{qR}/R}\rho_{\sigma/\sqrt{2}}(u - \ipd{\ary{x'}}{\ary{v}})\\
&\le \sum_{\substack{\ary{v}\in q\Lattice + \Lattice\ary{a}, \\\|\ary{v}\| > \sqrt{n} r}}\rho_{r/\sqrt{2}}(\ary{v})\sum_{u\in\Z/R}\rho_{\sigma/\sqrt{2}}(u)\\
&\le 2^{-\Omega(n)}\sum_{\substack{\ary{v}\in q\Lattice}}\rho_{r/\sqrt{2}}(\ary{v})\sum_{u\in\Z/R}\rho_{\sigma/\sqrt{2}}(u)
\end{align*}
by Banaszczyk’s tail bound.

On the other hand, notice that
\begin{align*}
    \|\ket{\Psi}\|^2 &= \sum_{\substack{\ary{v}\in q\Lattice + \Lattice\ary{a}, \\ \|\ary{v}\| \le \sqrt{n} r}}\rho_{r/\sqrt{2}}(\ary{v})\sum_{u\in\Z_{qR}/R}\rho_{\sigma/\sqrt{2}}(u - \ipd{\ary{x'}}{\ary{v}})\\
    &\ge \sum_{\substack{\ary{v}\in q\Lattice + \Lattice\ary{a}, \\ \|\ary{v}\| \le \sqrt{n} r}}\rho_{r/\sqrt{2}}(\ary{v})\sum_{ u \in [-\sigma\sqrt{n}, \sigma\sqrt{n}] \cap \Z/R }\rho_{\sigma/\sqrt{2}}(u)\\
    &\ge (1 - 2^{-\Omega(n)}) \sum_{\substack{\ary{v}\in q\Lattice}}\rho_{r/\sqrt{2}}(\ary{v})\sum_{ u \in \Z/R }\rho_{\sigma/\sqrt{2}}(u)
\end{align*}
by Banaszczyk’s tail bound and \Cref{lemma:regevclaim3.8} together with the fact that $r/\sqrt{2} > q\eta_{\epsilon}(\Lattice)$ for $\epsilon < 2^{-n}$.

So $\frac{\|\ket{\Phi} - \ket{\Psi}\|^2}{\|\ket{\Phi}\|^2} < 2^{-\Omega(n)}$, which implies that $\ket{\Phi}$ and $\ket{\Psi}$ are $2^{-\Omega(n)}$-close to each other by~\Cref{lemma:difftotd}. 

It remains to prove that $\ket{\Phi}$ is $2^{-\Omega(n)}$-close to \[\ket{\Phi'}:=\sum_{\ary{v}\in q\Lattice + \Lattice\ary{a}} \rho_{r}(\ary{v})\ket{\ary{v} \bmod R} \otimes \sum_{u\in\Z_{qR}/R} \rho_{\sigma}(u - \ipd{\ary{x'}}{\ary{v}})\ket{\ipd{\ary{s}}{\ary{a}} + u \bmod q}.\]

We again use a similar argument as the proof of~\Cref{lem:EDCP_state_close}. 

We first give an upper bound for $\|\ket{\Phi} - \ket{\Phi'}\|^2$:
\iffull
\begin{align*}
    \|\ket{\Phi} - \ket{\Phi'}\|^2
    & \le_{(1)} \left(\sum_{\substack{\ary{v}\in q\Lattice + \Lattice\ary{a}, \\ \|\ary{v}\| \ge R/2}} \rho_r(\ary{v})\left\|\ket{\ary{v} \bmod R}\otimes \sum_{u\in\Z_{qR}/R} \rho_{\sigma}(u - \ipd{\ary{x'}}{\ary{v}})\ket{\ipd{\ary{s}}{\ary{a}} + u \bmod q}\right\|\right)^2\\
    & \le_{(2)} \left(\sum_{\substack{\ary{v}\in q\Lattice + \Lattice\ary{a}, \\ \|\ary{v}\| > \sqrt{n}r\sqrt{\log r}}}\rho_r(\ary{v})\sqrt{\sum_{u\in\Z_{qR}/R} \rho_{\sigma}(u - \ipd{\ary{x'}}{\ary{v}})^2}\right)^2\\
    & \le_{(3)} r^{-6n}\left(\sum_{u \in \Z/R}\rho_\sigma(u)^2\right)\left(\sum_{\ary{v}\in q\Lattice}\rho_r(\ary{v})\right)^2\\
    & \le_{(4)} r^{-4n}\left(\sum_{u \in \Z/R}\rho_\sigma(u)^2\right)\det((q\Lattice)^*)^2 (1 + 2^{-\Omega(n)})
\end{align*}
\fi
\ifllncs
\begin{align*}
    &~ \|\ket{\Phi} - \ket{\Phi'}\|^2 \\
    \le&_{(1)} \left(\sum_{\substack{\ary{v}\in q\Lattice + \Lattice\ary{a}, \\ \|\ary{v}\| \ge R/2}} \rho_r(\ary{v})\left\|\ket{\ary{v} \bmod R}\otimes \sum_{u\in\Z_{qR}/R} \rho_{\sigma}(u - \ipd{\ary{x'}}{\ary{v}})\ket{\ipd{\ary{s}}{\ary{a}} + u \bmod q}\right\|\right)^2\\
    \le&_{(2)} \left(\sum_{\substack{\ary{v}\in q\Lattice + \Lattice\ary{a}, \\ \|\ary{v}\| > \sqrt{n}r\sqrt{\log r}}}\rho_r(\ary{v})\sqrt{\sum_{u\in\Z_{qR}/R} \rho_{\sigma}(u - \ipd{\ary{x'}}{\ary{v}})^2}\right)^2\\
    \le&_{(3)} r^{-6n}\left(\sum_{u \in \Z/R}\rho_\sigma(u)^2\right)\left(\sum_{\ary{v}\in q\Lattice}\rho_r(\ary{v})\right)^2\\
    \le&_{(4)} r^{-4n}\left(\sum_{u \in \Z/R}\rho_\sigma(u)^2\right)\det((q\Lattice)^*)^2 (1 + 2^{-\Omega(n)})
\end{align*}
\fi
where $(1)$ is due to triangle inequality, $(2)$ uses that $R/2 > \sqrt{n} r\sqrt{\log r}$, $(3)$ is due to Banaszczyk’s tail bound, and $(4)$ is due to \Cref{lemma:regevclaim3.8} and $r\ge \eta_{\epsilon}(q\Lattice)$ for $\epsilon < 2^{-n}$.

On the other hand, notice that
\begin{align*}
\|\ket{\Phi}\|^2 &\ge \sum_{\substack{\ary{v}\in q\Lattice + \Lattice\ary{a}, \\ \|\ary{v}\| \le \sqrt{n}r}} \rho_{r}(\ary{v})^2 \sum_{u\in\Z_{qR}/R} \rho_{\sigma}(u - \ipd{\ary{x'}}{\ary{v}})^2\\
&\ge_{(1)} (1 - 2^{-\Omega(n)})\sum_{\substack{\ary{v}\in q\Lattice + \Lattice\ary{a}, \\ \|\ary{v}\| \le \sqrt{n}r}} \rho_{r}(\ary{v})^2 \sum_{u\in\Z/R} \rho_{\sigma}(u)^2\\
&\ge_{(2)} (1 - 2^{-\Omega(n)})(r/\sqrt{2})^n\det((q\Lattice)^*)\sum_{u\in\Z/R} \rho_{\sigma}(u)^2
\end{align*}
where $(1)$ is due to Banaszczyk’s tail bound and the fact that $\sigma\sqrt{n} < q/2 - \left|\ipd{\ary{x'}}{\ary{v}}\right|$ when $\|\ary{v}\| \le \sqrt{n}r$, and $(2)$ is due to Banaszczyk’s tail bound, \Cref{lemma:regevclaim3.8} and $r/\sqrt{2} \ge \eta_{\epsilon}(q\Lattice)$ for $\epsilon < 2^{-n}$.

Hence, we have that $\frac{\|\ket{\Phi} - \ket{\Phi'}\|^2}{\|\ket{\Phi}\|^2} < 2^{-\Omega(n)}$, this implies that $\ket{\Phi}$ and $\ket{\Phi'}$ are $2^{-\Omega(n)}$-close to each other, by \Cref{lemma:difftotd}, as desired.

\end{proof}

\subsection{Proof of \Cref{lem:qftstate_rewrite}}\label{sec:proof_qftstate_rewrite}

\begin{proof}[Proof of \Cref{lem:qftstate_rewrite}]
    The amplitude of $\ket{\ary{y}}\ket{\ipd{\ary{s}}{\ary{a}}+u\bmod q}$ in the given state (\Cref{eqn:ql+lawithphase1}) is
    \iffull
    \[
        \begin{split}
             \sum_{\ary{v}\in q\Lattice+\Lattice \ary{a}}\rho_r(\ary{v})\rho_\sigma(u-\ipd{\ary{x}'}{\ary{v}}) \omega_R^{\ipd{\ary{v}}{\ary{y}}} & = \rho_{t}(u)\sum_{\ary{v}\in q\Lattice+\Lattice\ary{a}} \rho_{\sqrt{\mat\Sigma^{-1}}}\left(\ary{v} - \frac{r^2u}{t^2}\ary{x}'\right) \omega_R^{\ipd{\ary{v}}{\ary{y}}} \\
            & \propto \rho_{t}(u)\sum_{\ary{w}\in (q\Lattice)^*-\ary{y}/R} \rho_{\sqrt{\mat\Sigma}}(\ary{w})\exp{\left(2\pi {\rm i}\ipd{ \Lattice\ary{a}-\frac{r^2u}{t^2}\ary{x'}}{\ary{w}}\right)}\omega_R^{\ipd{\Lattice\ary{a}}{\ary{y}}},
        \end{split}
    \]
    \fi
    \ifllncs
    \[
        \begin{split}
             &~ \sum_{\ary{v}\in q\Lattice+\Lattice \ary{a}}\rho_r(\ary{v})\rho_\sigma(u-\ipd{\ary{x}'}{\ary{v}}) \omega_R^{\ipd{\ary{v}}{\ary{y}}} \\
             =&~ \rho_{t}(u)\sum_{\ary{v}\in q\Lattice+\Lattice\ary{a}} \rho_{\sqrt{\mat\Sigma^{-1}}}\left(\ary{v} - \frac{r^2u}{t^2}\ary{x}'\right) \omega_R^{\ipd{\ary{v}}{\ary{y}}} \\
            \propto&~ \rho_{t}(u)\sum_{\ary{w}\in (q\Lattice)^*-\ary{y}/R} \rho_{\sqrt{\mat\Sigma}}(\ary{w})\exp{\left(2\pi {\rm i}\ipd{ \Lattice\ary{a}-\frac{r^2u}{t^2}\ary{x'}}{\ary{w}}\right)}\omega_R^{\ipd{\Lattice\ary{a}}{\ary{y}}},
        \end{split}
    \]
    \fi
    where $t=\sqrt{\sigma^2+r^2\|\ary{x}'\|^2} \in [\sigma, \sqrt{2}\sigma)$, matrix $\mat\Sigma = \frac{\mat I_n}{r^2} + \frac{\ary{x}'\ary{x}'^T}{\sigma^2}$ with eigenvalues $1 / r^2, (t / r\sigma)^2$, and we use the Poisson Summation Formula to compute the last equality. We define the amplitude as a function 
    \[
        \phi(\ary{y},u) := \rho_{t}(u)\sum_{\ary{w}\in (q\Lattice)^*-\ary{y}/R} \rho_{\sqrt{\mat\Sigma}}(\ary{w})\exp{\left(2\pi {\rm i}\ipd{ \Lattice\ary{a}-\frac{r^2u}{t^2}\ary{x'}}{\ary{w}}\right)}\omega_R^{\ipd{\Lattice\ary{a}}{\ary{y}}}.
    \]
    Meanwhile, for $\ary{y}/R \in B_{(q\Lattice)^*}$, we define another amplitude function $\phi'(\ary y, u)$ as
    \[
        \phi'(\ary{y},u) := \rho_{t}(u)\rho_{\sqrt{\mat\Sigma}}(\ary z)\exp{ \left(2\pi {\rm i}\ipd{ \Lattice\ary{a}-\frac{r^2u}{t^2}\ary{x'}}{-\ary z}\right)} \omega_R^{\ipd{\Lattice\ary{a}}{\ary{y}}},
    \]
    where we recall that $\ary z = \ary z(\ary{y}) = \ary y/R - \kappa_{(q\Lattice)^*}(\ary y / R)$.
    $\phi'(\ary{y},u)$ is the leading term in $\phi(\ary{y}, u)$, as we will implicitly show below. 

    \iffull
    We prove that the following (unnormalized) states
    \[
        \begin{split}
            \ket{\Phi} & :=\sum_{\ary{y}\in\Z_R^n}\sum_{u\in \Z_{qR}/R}\phi(\ary y,u)\ket{\ary{y}}\ket{\ipd{\ary{s}}{\ary{a}}+u\bmod q}\quad \text{(the given state)} \\
            \ket{\Phi'} & :=\sum_{\substack{\ary{y} \in \Z_R^n \cap R\cdot B_{(q\Lattice)^*}}}\sum_{u\in \Z_{qR}/R}\phi'(\ary y,u)\ket{\ary{y}}\ket{\ipd{\ary{s}}{\ary{a}}+u\bmod q} \quad \text{(the target state in~\Cref{eqn:ql+lawithphase2})}
        \end{split}
    \]
    are $2^{-\Omega(n)}$-close to each other.
    \fi
    \ifllncs
    We prove that the following (unnormalized) states
    \[
        \begin{split}
            \ket{\Phi} & :=\sum_{\ary{y}\in\Z_R^n}\sum_{u\in \Z_{qR}/R}\phi(\ary y,u)\ket{\ary{y}}\ket{\ipd{\ary{s}}{\ary{a}}+u\bmod q}\\
            \ket{\Phi'} & :=\sum_{\substack{\ary{y} \in \Z_R^n \cap R\cdot B_{(q\Lattice)^*}}}\sum_{u\in \Z_{qR}/R}\phi'(\ary y,u)\ket{\ary{y}}\ket{\ipd{\ary{s}}{\ary{a}}+u\bmod q}
        \end{split}
    \]
    are $2^{-\Omega(n)}$-close to each other, here $\ket{\Phi}$ is the given state, and $\ket{\Phi'}$ is the target state in~\Cref{eqn:ql+lawithphase2}.
    \fi
    To prove it, we need in addition the following (unnormalized) state
    \[
        \ket{\Phi''} :=\sum_{\substack{\ary{y} \in \Z_R^n \cap R\cdot B_{(q\Lattice)^*}}}\sum_{u\in \Z_{qR}/R}\phi(\ary y,u)\ket{\ary{y}}\ket{\ipd{\ary{s}}{\ary{a}}+u\bmod q}.
    \]

    Let's begin by establishing upper bounds for both $\|\ket{\Phi''} - \ket{\Phi'}\|^2$ and $\|\ket{\Phi} - \ket{\Phi''}\|^2$. The first term is relatively simpler to bound: for any $\epsilon<2^{-n}$, according to \Cref{thm:mul_err_tail} and the assumption that $\frac{r\sigma}{t} > \frac{r}{\sqrt{2}} > 2\sqrt 2 \eta_\epsilon(q\Lattice)$, we get that
    \begin{equation}\label{eqn:phi''-phi'}
        \begin{split}
            \|\ket{\Phi''} - \ket{\Phi'}\|^2 & = \sum_{\substack{\ary{y} \in \Z_R^n \cap R\cdot B_{(q\Lattice)^*}}}\sum_{u \in \Z_{qR}/R} \left|\phi(\ary{y}, u) - \phi'(\ary{y}, u)\right|^2 \\
            & \le \sum_{\substack{\ary{y} \in \Z_R^n \cap R\cdot B_{(q\Lattice)^*}}}\sum_{u \in \Z_{qR}/R}\left(\rho_{t}(u) \sum_{\ary{w}\in(q\Lattice)^*\setminus\{\ary{0}\}}\rho_{t/r\sigma}(\ary{w} - \ary z)\right)^2 \\
            & < \epsilon^2 \sum_{u \in \Z_{qR}/R}\rho_{t}(u)^2\sum_{\substack{\ary{y} \in \Z_R^n \cap R\cdot B_{(q\Lattice)^*}}}\rho_{t/r\sigma}(\ary z).
        \end{split}
    \end{equation}

    To establish an upper bound for the latter term $\|\ket{\Phi} - \ket{\Phi''}\|^2$, according to \Cref{thm:mul_err_tail} and the fact that $\frac{r\sigma}{t} > 2\sqrt 2 \eta_\epsilon(q\Lattice)$, we get that for $\ary{y}\in \Z_R^n \setminus R\cdot B_{(q\Lattice)^*}$,
    \[
        \begin{split}
            \sum_{\ary{w}\in (q\Lattice)^*-\ary{y}/R} \rho_{\sqrt{\mat\Sigma}}(\ary{w}) & \le \sum_{\ary{w}\in (q\Lattice)^*-\ary{y}/R} \rho_{t / r\sigma}(\ary{w}) \\
            & \le \exp\left(-\pi\frac{\dist(\ary{y}/R, (q\Lattice)^*)^2}{(t/r\sigma)^2}\right) + \epsilon \cdot \exp\left(-\pi\frac{\dist(\ary{y}/R, (q\Lattice)^*)^2}{2(t/r\sigma)^2}\right) \\
            & \le \exp\left(-\pi\frac{\dist(\ary{y}/R, (q\Lattice)^*)^2}{2(t/r\sigma)^2}\right)\left(\exp\left(-\pi\frac{(\lambda_1((q\Lattice)^*) / 2)^2}{4/r^2}\right) +\epsilon\right) \\
            & \le 2^{-n + 1}\exp\left(-\pi\frac{\dist(\ary{y}/R, (q\Lattice)^*)^2}{2(t/r\sigma)^2}\right)
        \end{split}
    \]    
    where the last inequality holds since $\lambda_1((q\Lattice)^*) \ge \sqrt{\frac{n\ln 2}{\pi}}\cdot\frac{1}{q\eta_{\epsilon}(\Lattice)} \ge \sqrt{\frac{n\ln 2}{\pi}}\cdot \frac{4}{r}$ by \Cref{lemma:smoothingRegev09}. Therefore, 
    \begin{equation}\label{eqn:phi-phi''}
        \begin{split}
            \|\ket{\Phi}-\ket{\Phi''}\|^2 & = \sum_{\substack{\ary{y} \in \Z_R^n \setminus R\cdot B_{(q\Lattice)^*}}}\sum_{u\in \Z_{qR}/R}\left|\phi(\ary{y}, u)\right|^2 \\
            & \le \sum_{u\in \Z_{qR}/R}\rho_t(u)^2 \sum_{\substack{\ary{y} \in \Z_R^n \setminus R\cdot B_{(q\Lattice)^*}}}\left(\sum_{\ary{w}\in (q\Lattice)^*-\ary{y}/R} \rho_{\sqrt{\mat\Sigma}}(\ary{w})\right)^2 \\
            & \le 2^{-2n + 2}\sum_{u\in \Z_{qR}/R}\rho_t(u)^2 \sum_{\substack{\ary{y} \in \Z_R^n \setminus R\cdot B_{(q\Lattice)^*}}} \exp\left(-\pi\frac{\dist(\ary{y}/R, (q\Lattice)^*)^2}{(t/r\sigma)^2}\right).
        \end{split}
    \end{equation}

    Combine the two upper bounds in \Cref{eqn:phi''-phi'} and \Cref{eqn:phi-phi''}, we get that
    \begin{equation}\label{eqn:phi-phi'}
        \begin{split}
            \|\ket{\Phi}-\ket{\Phi'}\|^2 & \le 2\|\ket{\Phi}-\ket{\Phi''}\|^2 + 2\|\ket{\Phi''}-\ket{\Phi'}\|^2 \\
            & \le 2^{-2n + 3}\sum_{u\in \Z_{qR}/R}\rho_t(u)^2 \sum_{\ary{y} \in \Z_R^n} \exp\left(-\pi\frac{\dist(\ary{y}/R, (q\Lattice)^*)^2}{(t/r\sigma)^2}\right).
        \end{split}
    \end{equation}

    On the other hand, we can compute that
    \begin{equation}\label{eqn:phi'}
        \begin{split}
            \|\ket{\Phi'}\|^2 & = \sum_{\substack{\ary{y} \in \Z_R^n \cap R\cdot B_{(q\Lattice)^*}}}\sum_{u \in \Z_{qR}/R}\left|\phi'(\ary{y}, u)\right|^2 \\
            & \ge \sum_{u \in \Z_{qR}/R} \rho_{t}(u)^2\sum_{\substack{\ary{y} \in \Z_R^n \cap R\cdot B_{(q\Lattice)^*}}}\exp\left(-\pi\frac{\dist(\ary{y}/R, (q\Lattice)^*)^2}{(1 / \sqrt{2}r)^2}\right).
        \end{split}
    \end{equation}
    Combining with the bounds in \Cref{eqn:phi-phi'}, we get that
    \begin{equation}\label{eqn:dist_phi_phi'}
        \frac{\|\ket{\Phi} - \ket{\Phi'}\|^2}{\|\ket{\Phi'}\|^2} \le 2^{-2n + 3}\frac{\sum_{\ary{y} \in \Z_R^n} \exp\left(-\pi\frac{\dist(\ary{y}/R, (q\Lattice)^*)^2}{(t/r\sigma)^2}\right)}{\sum_{\substack{\ary{y} \in \Z_R^n \cap R\cdot B_{(q\Lattice)^*}}}\exp\left(-\pi\frac{\dist(\ary{y}/R, (q\Lattice)^*)^2}{(1 / \sqrt{2}r)^2}\right)}.
    \end{equation}
    
    Now let's establish an upper bound on the ratio between the two summations on the right-hand side of the above inequality. We accomplish this through the following steps:
    \begin{enumerate}
        \item Expanding the support of $\ary y$. Given our assumption that $\Lattice$ is an integer lattice, we have $(q\Lattice)^* + \ary k = (q\Lattice)^*$ for any $\ary k\in \Z^n$. Consequently, $\text{dist}((\ary y - R\ary k) / R, (q\Lattice)^*) = \text{dist}(\ary y / R, (q\Lattice)^* + \ary k) = \text{dist}(\ary y / R, (q\Lattice)^*)$. Therefore, we can expand the support of $\ary y$ from $\Z_R^n$ to $\Z_{RN}^n$ for any $N\in \N_+$, which is a combination of $N^n$ hypercubes, each in the form $\Z_R^n + R\ary k$, i.e.
        \[
            \frac{\|\ket{\Phi} - \ket{\Phi'}\|^2}{\|\ket{\Phi'}\|^2} \le 2^{-2n + 3}\lim_{N\to +\infty} \frac{\sum_{\ary{y} \in \Z_{RN}^n} \exp\left(-\pi\frac{\dist(\ary{y}/R, (q\Lattice)^*)^2}{(t/r\sigma)^2}\right)}{\sum_{\substack{\ary{y} \in \Z_{RN}^n \cap R\cdot B_{(q\Lattice)^*}}}\exp\left(-\pi\frac{\dist(\ary{y}/R, (q\Lattice)^*)^2}{(1 / \sqrt{2}r)^2}\right)}.
        \]
        \item Bound the numerator and denominator as $N$ approaches infinity. Assume that $N$ is a sufficiently large integer, and denote $\ell_{\max} = \max_{\ary y\in \R^n} \text{dist}(\ary y, (q\Lattice)^*) < +\infty$. For the numerator, when considering $\ary y\in \Z_{RN}^n$, the closest vector from $\ary y / R$ to the lattice $(q\Lattice)^*$ will have $\ell_{\infty}$ norm at most $RN / 2 + \ell_{\max}$. Thus
        \[
            \sum_{\ary{y} \in \Z_{RN}^n} \exp\left(-\pi\frac{\dist(\ary{y}/R, (q\Lattice)^*)^2}{(t/r\sigma)^2}\right) \le \sum_{\substack{\ary u\in (q\Lattice)^*, \\ \|\ary u\|_{\infty} \le RN / 2 + \ell_{\max}}} \sum_{\ary y\in \Z^n} \rho_{t / r\sigma}(\ary y / R - \ary u).
        \]
        Similarly, for the denominator, when considering $\ary u\in (q\Lattice)^*$ with  $\ell_{\infty}$ norm at most $RN / 2 - \ell_{\max}$, the vectors $\ary y\in \Z^n\cap R\cdot B_{(q\Lattice)^*}$ for which the closest vector to $(q\Lattice)^*$ is $\ary u$ will certainly have an $\ell_{\infty}$ norm at most $RN / 2$. Thus
        \[
            \begin{split}
                &~ \sum_{\substack{\ary{y} \in \Z_{RN}^n \cap R\cdot B_{(q\Lattice)^*}}}\exp\left(-\pi\frac{\dist(\ary{y}/R, (q\Lattice)^*)^2}{(1 / \sqrt{2}r)^2}\right) \\
                \ge &~ \sum_{\substack{\ary u\in (q\Lattice)^*, \\ \|\ary u\|_{\infty} \le RN / 2 - \ell_{\max}}}~ \sum_{\substack{\ary{y} \in \Z^n, \\ \|\ary y / R - \ary u\| < \lambda_1((q\Lattice)^*) / 2}} \rho_{1 / \sqrt{2}r}(\ary y / R - \ary u) \\
                \ge &~ \sum_{\substack{\ary u\in (q\Lattice)^*, \\ \|\ary u\|_{\infty} \le RN / 2 - \ell_{\max}}} \left(\sum_{\ary y\in \Z^n}\rho_{1 / \sqrt{2}r}(\ary y / R - \ary{u}) - 2^{-\Omega(n)}\rho_{1 / \sqrt{2}r}(\Z^n/R)\right),
            \end{split}
        \]
        where the final inequality is based on Banaszczyk's tail bound (\Cref{lemma:Bana93}), with the guarantee that $\lambda_1((q\Lattice)^*) / 2 > \sqrt{\frac{n\ln 2}{\pi}}\cdot\frac{2}{r} > \frac{2}{\sqrt{2\pi}}\cdot \frac{1}{\sqrt{2}r}\sqrt{n}$.

        According to \Cref{lemma:smoothingMR07}, it follows that $\eta_{2^{-n}}(\Z^n / R) \le \frac{1}{R}\cdot \sqrt{\frac{2n}{\pi}} < \frac{1}{\sqrt{2}r} < \frac{t}{r\sigma}$. Therefore, by using \Cref{lemma:regevclaim3.8}, we can conclude that
        \[
            \sum_{\ary y\in \Z^n} \rho_{t / r\sigma}(\ary y / R - \ary u)  \le (1 + 2^{-\Omega(n)}) \cdot R^n\cdot (t / r\sigma)^n,
        \]
        \[
            \sum_{\ary y\in \Z^n}\rho_{1 / \sqrt{2}r}(\ary y / R - \ary{u}) - 2^{-\Omega(n)}\rho_{1 / \sqrt{2}r}(\Z^n/R)  \ge (1 - 2^{-\Omega(n)}) \cdot R^n\cdot (1 / \sqrt{2}r)^n.
        \]
        \item Establish an upper bound for the right hand side of \Cref{eqn:dist_phi_phi'}. By combining the inequalities in the previous step, we obtain that
        \iffull
        \[
            \begin{split}
                \frac{\|\ket{\Phi} - \ket{\Phi'}\|^2}{\|\ket{\Phi'}\|^2} & \le 2^{-2n + 3} (1 + 2^{-\Omega(n)}) \cdot (\sqrt{2}t / \sigma)^n \cdot \lim_{N\to +\infty} \frac{\#\{\ary u\in (q\Lattice)^*: \|\ary u\|_{\infty} \le RN / 2 + \ell_{\max}\}}{\#\{\ary u\in (q\Lattice)^*: \|\ary u\|_{\infty} \le RN / 2 - \ell_{\max}\}} \\
                & \le 2^{-n + 3} (1 + 2^{-\Omega(n)}) \cdot \lim_{N\to +\infty} \frac{\#\{\ary u\in (q\Lattice)^*: \|\ary u\|_{\infty} \le RN / 2 + \ell_{\max}\}}{\#\{\ary u\in (q\Lattice)^*: \|\ary u\|_{\infty} \le RN / 2 - \ell_{\max}\}} \\
                & = 2^{-n + 3} (1 + 2^{-\Omega(n)}) \cdot \lim_{N\to +\infty} \frac{(RN / 2 + \ell_{\max})^n}{(RN / 2 - \ell_{\max})^n} \\
                & = 2^{-n + 3} (1 + 2^{-\Omega(n)}).
            \end{split}
        \]
        \fi
        \ifllncs
        \[
            \begin{split}
                &~ \frac{\|\ket{\Phi} - \ket{\Phi'}\|^2}{\|\ket{\Phi'}\|^2} \\
                \le&~ 2^{-2n + 3} (1 + 2^{-\Omega(n)}) \cdot (\sqrt{2}t / \sigma)^n \cdot \lim_{N\to +\infty} \frac{\#\{\ary u\in (q\Lattice)^*: \|\ary u\|_{\infty} \le RN / 2 + \ell_{\max}\}}{\#\{\ary u\in (q\Lattice)^*: \|\ary u\|_{\infty} \le RN / 2 - \ell_{\max}\}} \\
                \le&~ 2^{-n + 3} (1 + 2^{-\Omega(n)}) \cdot \lim_{N\to +\infty} \frac{\#\{\ary u\in (q\Lattice)^*: \|\ary u\|_{\infty} \le RN / 2 + \ell_{\max}\}}{\#\{\ary u\in (q\Lattice)^*: \|\ary u\|_{\infty} \le RN / 2 - \ell_{\max}\}} \\
                =&~ 2^{-n + 3} (1 + 2^{-\Omega(n)}) \cdot \lim_{N\to +\infty} \frac{(RN / 2 + \ell_{\max})^n}{(RN / 2 - \ell_{\max})^n} \\
                =&~ 2^{-n + 3} (1 + 2^{-\Omega(n)}).
            \end{split}
        \]
        \fi
    \end{enumerate}

    Consequently, we can infer that the provided state $\ket{\Phi}$ is $2^{-\Omega(n)}$-close to $\ket{\Phi'}$, according to \Cref{lemma:difftotd}, as desired.
\end{proof}

\end{document}